\documentclass[11pt]{article}

\usepackage[margin=1in]{geometry}
\usepackage{setspace}
\usepackage{float}
\usepackage{array, makecell} %
\usepackage{xcolor,colortbl}
\usepackage{framed,color}
\usepackage{xcolor}
\usepackage{bbm}

\usepackage{amsthm,amsmath,amssymb}
\usepackage[bookmarks=true,hypertexnames=false,pagebackref]{hyperref}
\hypersetup{colorlinks=true, citecolor=blue, linkcolor=red,
  urlcolor=blue}
\usepackage{newpxtext}
\usepackage[libertine,vvarbb]{newtxmath}

\usepackage[T1]{fontenc} 
\usepackage{textcomp} 

\usepackage[scr=rsfso]{mathalfa}
\usepackage{nicefrac}
\usepackage{cleveref}
\usepackage{graphicx}
\usepackage{aliascnt}
\usepackage[section]{placeins}
\usepackage{tcolorbox}
\usepackage{epigraph}
\usepackage{xifthen}
\usepackage{latexsym}
\usepackage{framed}
\usepackage{fullpage}
\usepackage{bm}
\usepackage{braket}
\usepackage{enumitem}
\usepackage{thmtools,thm-restate}

\usepackage{algorithm}
\usepackage{algorithmic}

\newtheorem{theorem}{Theorem}[section]

\newtheorem{fact}[theorem]{Fact}
\newtheorem{corollary}[theorem]{Corollary}
\newtheorem{lemma}[theorem]{Lemma}
\newtheorem{definition}[theorem]{Definition}
\newtheorem{conjecture}[theorem]{Conjecture}
\newtheorem{problem}[theorem]{Problem}
\newtheorem{example}[theorem]{Example}

\newtheorem{openproblem}{Open Problem}
\newtheorem{remark}[theorem]{Remark}
\newtheorem*{remark*}{Remark}

\usepackage{mleftright}

\newcommand{\As}{\mathcal{A}}
\newcommand{\Bs}{\mathcal{B}}

\newcommand{\Es}{\mathcal{E}}

\newcommand{\A}{\mathsf{A}}
\newcommand{\B}{\mathsf{B}}

\renewcommand{\kappa}{\ell}

\newcommand{\poly}{{\sf poly}}

\DeclareMathOperator{\Tr}{Tr}

\newcommand{\sdotfill}{\textcolor[rgb]{0.8,0.8,0.8}{\dotfill}} 

\newcommand{\ID}{\mathrm{ID}}

\newcommand{\V}{\mathrm{vec}}

\newcommand{\tom}[1]{{\color{teal} Tom: #1}}

\title{Impossibility of One-Way One-Round Quantum 4-Coloring via Matrix-Space Stability}
\author{
Tom Gur\thanks{University of Cambridge. Email: \texttt{tom.gur@cl.cam.ac.uk}. Supported by ERC Starting Grant 101163189 and UKRI Future Leaders Fellowship MR/X023583/1.}\and
Longcheng Li\thanks{University of Cambridge. Email: \texttt{lilongcheng116@gmail.com}. Supported by ERC Starting Grant 101163189.}
}

\date{}
\begin{document}

\maketitle

\begin{abstract}
    We show that one-way one-round quantum-LOCAL algorithms cannot $4$-color directed cycles with high probability. 
    This is the first lower bound in the high-probability quantum LOCAL setting that goes beyond the non-signaling and bounded-dependence models, exploiting the structure of distributed quantum algorithms.
    
    Our proof establishes a bidirectional connection between distributed quantum computing and extremal combinatorics.
    We obtain our lower bound by proving a Mantel-type stability theorem for weighted matrix spaces.
\end{abstract}


\section{Introduction}

The LOCAL model \cite{linial1992locality,peleg2000distributed} abstracts a
basic question in distributed computing: what can be computed in a network
where each node sees only its local neighborhood? In this model, each
node has a unique identifier and communicates with its
neighbors in synchronous rounds. In each round, a node may exchange a message
with each of its neighbors, and perform
local computation. Both the local computation and the message size are
\emph{unbounded}, and the complexity of a LOCAL algorithm is measured solely by the
number of rounds of communication, referred to as its \emph{locality}.

Over the last decades, a wide range of distributed graph problems have been
studied extensively in the LOCAL model \cite{suomela2013survey}. A central focus
has been on problems governed by local constraints, such as graph coloring,
maximal independent set, and maximal matching in bounded-degree graphs. Rather
than studying these problems case by case, substantial recent effort has been
devoted to understanding their complexity as a whole, revealing a 
clean landscape of locality \cite{chang2020complexity,balliu2025localproblems}.

It is natural to ask whether quantum information changes this locality landscape. In
the quantum-LOCAL model, each node is a quantum computer that is connected to its neighbors via a quantum channel \cite{gavoille2009can}, and the nodes do not share randomness or entanglement. Analogously to the classical model, each node performs arbitrary local quantum operations and exchanges a quantum state with
its neighbors in each round.

A recent line of work \cite{legall2019quantum,balliu2025localproblems,balliu2025distributed} has demonstrated
quantum advantage for several specially constructed problems
using quantum nonlocal games.
At the same time, other results have ruled out quantum
advantage for many natural problems, including 
$2$-coloring even cycles \cite{gavoille2009can,fraigniaud2026distributed},
approximate graph coloring \cite{coiteuxroy2024approximate},
distributed linear programming \cite{balliu2025dequantizing}, and more \cite{dhar2024local,brandt2026post}.

\paragraph{Coloring directed cycles.}
Despite the extensive study discussed above, the quantum locality of the basic symmetry-breaking problem of coloring
a directed $n$-cycle with a small number of colors remains poorly understood. In
this problem, each node outputs one of $q$ colors and the goal is to obtain a
proper $q$-coloring of the cycle. This problem plays an important role in the
LOCAL model, as fast cycle coloring algorithms can be used as a primitive for
many other symmetry-breaking problems, including maximal independent set,
maximal matching, edge coloring, and vertex coloring in bounded-degree graphs
\cite{chang2019exponential}.


For every fixed $q\geq 3$, the locality of $q$-coloring directed cycles is known
to be $\Theta(\log^* n)$ in both deterministic and randomized classical LOCAL
models \cite{cole1986deterministic,linial1992locality,naor1991lower}. In the
quantum-LOCAL model, however, the only nontrivial lower bounds for $3$-coloring
directed cycles rule out \emph{one-way one-round} algorithms
\cite{holroyd2017finitary}, where each node sends one quantum message
to its successor and receives one quantum message from its predecessor in the
directed cycle before computing its color. For more than three colors, even this
restricted case remains open.

This limited understanding reflects a barrier to proving lower bounds in the quantum-LOCAL model \cite{akbari2025online}. Namely, all known quantum-LOCAL lower bounds
\cite{gavoille2009can,arfaoui2014without,balliu2025dequantizing,fraigniaud2026distributed}
establish impossibility in stronger causality-based models, such
as the non-signaling and bounded-dependence models. The latter abstracts away the round-based structure of the algorithm and
retains only the requirement that
the outputs of sufficiently distant nodes should be 
independent, as their light-cones do not intersect. 
Since these dependence constraints also hold for quantum-LOCAL algorithms,
lower bounds in the bounded-dependence model imply quantum-LOCAL ones.

In particular, proper $4$-colorings of cycles with minimal dependence are known 
to exist in the bounded-dependence model \cite{holroyd2018finitely}, and 
hence such arguments \emph{cannot} rule out the possibility of quantum algorithms. Motivated by this, we study the following open problem: 
\begin{center}
    \emph{
    Can a one-way one-round quantum-LOCAL algorithm  
    $4$-color directed cycles with
    high probability?
    }
\end{center}
By the discussion above, resolving this problem requires new techniques that specifically exploit the
structure of quantum algorithms, rather than bounded dependence alone.

\subsection{Our Results}

Our main contribution is a negative answer to the question above.
More precisely, we prove the following theorem.

\begin{theorem} \label{thm:q4-informal}
    There exists a universal constant $C>0$ such that for any
    directed $n$-cycle and any
    one-way one-round anonymous quantum algorithm that 
    assigns each node $v_i$ a color $c_i\in[4]$,
    we have
    \[
    \Pr[c_i=c_{i+1}]\geq C \quad \forall i\in \mathbb{Z}_n.
    \]
\end{theorem}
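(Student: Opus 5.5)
We would first put a one-way one-round anonymous quantum algorithm into a normal form. Since the algorithm is anonymous and nodes share no entanglement, every node prepares the same bipartite pure state $\ket{\psi}\in \As\otimes\Bs$, keeps register $\A$, and sends register $\B$ to its successor; any local randomness or mixedness is purified into the registers. Node $v_{i+1}$ then holds $\A_{i+1}\otimes \B_i$ and applies a fixed POVM $\{M_c\}_{c\in[4]}$ on $\As\otimes\Bs$. Write $\ket{\psi}=\sum_{a,b} X_{ab}\ket{a}\ket{b}$, so the state is encoded by a matrix $X$ with $\|X\|_F=1$. The probability that two consecutive nodes both output $c$ is a trilinear expression in $X$, $X^*$, $M_c$, taken over the three-node window $\A_i\B_{i-1}$, $\A_{i+1}\B_i$. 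After vectorizing via $\V$, we would write it as $\Pr[c_i=c_{i+1}=c]=\Tr\bigl(M_c\otimes M_c\cdot (\text{a fixed operator built from } XX^\dagger, X^\dagger X)\bigr)$. Because the cycle is invariant under the rotation and the algorithm is anonymous, it then suffices to lower bound $\sum_c \Pr[c_1=c_2=c]$ by a constant, uniformly in $n\ge 3$ (the contributions of distant nodes factor out, since the reduced state on two adjacent nodes is a product of three copies of $\ket\psi$ with the outer halves traced out).

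The second step is to recast this as an extremal problem on matrix spaces. We would decompose each $M_c$ spectrally and think of the four measurement outcomes as four weighted subspaces $V_1,\dots,V_4$ of $\As\otimes\Bs$, whose weights sum to the identity. The monochromatic-edge probability becomes a weighted count of ``edges'' between $V_c$ and itself. Here $V_c\otimes V_c$ is paired against the operator induced by $X$ linking the $\B$-part of one node to the $\A$-part of the next. The analogy is Mantel/Tur\'an: a $4$-partite structure that is nearly edge-free on each part must be nearly bipartite-like, with each part essentially a rectangle $P_c\otimes Q_c$ where $X$ maps $Q_c$ into something orthogonal to $P_c$. We would prove a stability theorem of the following form. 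If $\sum_c \Pr[c_1=c_2=c]\le\varepsilon$, then there are orthogonal projections $P_c$ on $\As$ and $Q_c$ on $\Bs$ with $M_c\approx P_c\otimes Q_c$ in the $\ket\psi$-weighted norm, up to error $O(\sqrt\varepsilon)$, and with $Q_c X^\top P_c\approx 0$.

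The third step converts the approximate structure into a contradiction. The approximate rectangles force each of $X$'s two ``halves'' to be nearly block-structured: color $c$ is assigned essentially according to a classical label read from $\A$ together with a classical label read from $\B$. This yields an approximately classical, one-round, one-way, shared-nothing protocol that $4$-colors with error $O(\sqrt\varepsilon)$. Such a protocol is a finitary $1$-dependent-like factor of i.i.d.\ of range $1$, and we would refute it combinatorially. With a color determined by $(a_i,b_{i-1})$, where the pairs are independent across nodes, a proper coloring with probability near $1$ would require the map $f(a,b)$ to satisfy $f(a,b)\ne f(a',b'')$ whenever the relevant pairs co-occur. Projecting onto the label sets, this gives a $4$-coloring of a complete-bipartite-type ``line graph'' of the label space, which is impossible for small error by a Mantel-type counting argument on the weights $\|P_cX\|$ and $\|XQ_c\|$. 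Constants are tracked so that $\varepsilon$ below a universal $C$ gives a contradiction.

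I expect the main obstacle to be the stability theorem in the second step. Exact zero-error rigidity in matrix spaces is plausible, but obtaining it robustly, with dimension-free error, for arbitrary POVMs that are not projective and not aligned with any product basis, is hard. I would attack it first in the projective case using Naimark dilation, then apply an averaging/rounding step that replaces $M_c$ by the spectral projection onto its large eigenvalues and controls the loss via Cauchy--Schwarz against $XX^\dagger$.
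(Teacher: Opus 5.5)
Your Step 1 is correct and matches the paper: the collision probability equals $\sum_c \mathcal{E}_\Lambda(P_c)$, with $\Lambda$ the Schmidt-coefficient weight. The gaps are in Steps 2 and 3.

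\textbf{Step 2.} This step is only asserted. As stated, it also needs an ingredient you never mention. For an element of mass below $1/4$, small energy gives only approximate \emph{containment} in a cut space $\Pi_U\otimes\bar{\Pi}_{U^\perp}$, not $M_c\approx P_c\otimes Q_c$. To force every color to be nearly a full balanced rectangle, you need supersaturation: mass $1/4+\eta$ forces energy $\Omega(\eta)$. (The paper imports this from the Gandolfi--Keane--Russo bound for stationary $1$-dependent processes.) That pins all four masses near $1/4$.

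More importantly, your suggested attack (Naimark dilation, truncating $M_c$ to its large eigenvalues, Cauchy--Schwarz) addresses non-projectivity, which is not where the difficulty lies. The natural candidate $U$, obtained by thresholding the spectrum of the weighted out-degree operator $B=\sum_k M_k\Lambda^2M_k^\dagger$ at $\sqrt{\epsilon}$, need not be invariant under the weight $\Lambda$. Without invariance, the block structure says nothing about weighted products. Your condition $Q_cX^\top P_c\approx 0$ is exactly this invariance, but you give no mechanism for it. The paper proves $\|\Pi_U\Lambda\Pi_{U^\perp}\|_F^2=O(\sqrt{\epsilon})$ in two moves:
\begin{itemize}
\item It recognizes this quantity as half the pretty-good-measurement error for discriminating $\Lambda\Pi_{U^\perp}\Lambda$ from $\Lambda\Pi_U\Lambda$.
\item It exhibits an explicit low-error test $\Phi^*(\Pi_U\Lambda^2\Pi_U)/\Tr(\Pi_U\Lambda^2)$ built from $P$.
\end{itemize}
Even then, it must replace $\Lambda$ by a block-diagonal, rebalanced $\Lambda'$ to obtain an exact extremizer.

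\textbf{Step 3.} This step does not follow even granting Step 2. The four subspaces come from four separate stability applications, and nothing forces their projectors to commute with each other or with $\Lambda$. So there is no common basis from which a classical label on $\mathsf{A}$ and on $\mathsf{B}$ could be read off. The reduction to a classical block factor is therefore unjustified, and the closing ``Mantel-type counting'' is not specified.

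The paper avoids needing structure for all four colors. It works with the heaviest color only:
\begin{itemize}
\item It rounds that color to an exact cut projector $\Pi_{\mathrm{L}(U^\perp,U)}$ for a nearby $\Lambda'$. This requires a POVM-perturbation lemma in the $\Lambda$-norm and Lipschitz continuity of the energy in both $\Lambda$ and $P$.
\item It compresses the other three elements to the diagonal block $\mathrm{L}(U,U)$. This gives a $3$-outcome POVM on $U\otimes\bar U$, with the restricted weight renormalized by $\sqrt{2}$.
\item It uses exact invariance to show the compression loses at most a factor $8$ in energy.
\item It finishes with three-color supersaturation: some element has relative mass at least $1/3$.
\end{itemize}
That one-color-then-recurse reduction is the idea your plan is missing.
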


Here, an \emph{anonymous} quantum algorithm is simply a quantum-LOCAL algorithm in
which all nodes are initially identical without unique identifiers. In the
high-probability setting, anonymity is without loss of generality, as each
node can independently sample an identifier from a sufficiently large set
$[n^{C_0}]$, which yields pairwise distinct identifiers with probability at
least $1-1/n^{C_0-2}$.

Moreover, the output of any one-way one-round anonymous quantum algorithm is
\emph{$1$-dependent}: the colors of two vertex sets $U$ and $V$ are independent
whenever the graph distance between $U$ and $V$ exceeds $1$.
Hence the collision
events on a collection of edges are jointly independent if any two edges are
separated by at least one vertex. By choosing $\lfloor n/3\rfloor$ such edges
$\{(3i, 3i+1):0\leq i < \lfloor n/3\rfloor\}$ and applying
\Cref{thm:q4-informal}, we find that the probability of producing a proper
coloring is at most $(1-C)^{\lfloor n/3\rfloor}$, which is exponentially small in $n$. We
thus obtain the following corollary.

\begin{corollary}
    No one-way one-round quantum-LOCAL algorithm can $4$-color directed cycles
    with high probability.
\end{corollary}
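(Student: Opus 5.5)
The statement to prove is the Corollary: no one-way one-round quantum-LOCAL algorithm can $4$-color directed cycles with high probability. I would derive it from \Cref{thm:q4-informal} together with the $1$-dependence of one-way one-round outputs, in three steps.

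\textbf{Step 1: reduction to anonymous algorithms.} Suppose some algorithm with unique identifiers succeeds with high probability. Let each node sample an identifier uniformly from $[n^{C_0}]$ and then run the given algorithm. By a union bound over pairs, the identifiers are pairwise distinct except with probability at most $\binom{n}{2}n^{-C_0}\le n^{2-C_0}$. On that good event the algorithm behaves as on a valid identifier assignment. So the resulting anonymous algorithm still produces a proper coloring with probability at least the original success probability minus $n^{2-C_0}$. This remains close to $1$ for large $n$.

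\textbf{Step 2: $1$-dependence.} In a one-way one-round algorithm, node $v_i$ prepares a product of its own fresh local state and sends one register to $v_{i+1}$. Its output is a measurement on its retained register together with the message from $v_{i-1}$. Hence $c_i$ depends only on the initial local states of $v_{i-1}$ and $v_i$. Since there is no shared entanglement or randomness, these initial states are in tensor product.

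Consider edges $(3j,3j+1)$ for $0\le j<\lfloor n/3\rfloor$. The collision event on edge $(3j,3j+1)$ is determined by the local states of $v_{3j-1},v_{3j},v_{3j+1}$, and these triples are disjoint for distinct $j$. The joint output distribution is that of a product measurement, one per triple, on a product state. Therefore the collision events are mutually independent. (When $n$ is not divisible by $3$, the wrap-around triple still does not overlap, because we take only $\lfloor n/3\rfloor$ edges.)

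\textbf{Step 3: combine.} By \Cref{thm:q4-informal}, each collision has probability at least $C$. By independence,
\[
\Pr[\text{proper}]\le \Pr\Bigl[\bigwedge_j c_{3j}\ne c_{3j+1}\Bigr]\le (1-C)^{\lfloor n/3\rfloor}.
\]
This tends to $0$ and contradicts the success probability from Step 1 being near $1$.

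The main obstacle is not in this derivation but in \Cref{thm:q4-informal} itself, which I take as given. The only delicate point here is justifying independence rigorously. I would formalize the algorithm as a product of isometries followed by local POVMs and check that the reduced joint distribution factorizes over the disjoint triples.
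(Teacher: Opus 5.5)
Your proposal is correct and follows essentially the same route as the paper: you reduce to anonymous algorithms by sampling identifiers from $[n^{C_0}]$, observe that the collision events on the edges $(3j,3j+1)$ are independent, and combine this with the per-edge bound $C$ from \Cref{thm:q4-informal} to get $(1-C)^{\lfloor n/3\rfloor}$. The paper states the independence step as $1$-dependence of the output. Your argument, which factorizes over disjoint triples of product local states, is a slightly more explicit version of that same step.
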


Combined with the construction of $1$-dependent proper $4$-colorings of cycles
\cite{holroyd2018finitely}, our impossibility result gives the first natural separation
between the bounded-dependence model and one-way one-round quantum-LOCAL algorithms.

\subsection{Matrix-Space Stability}

We prove \Cref{thm:q4-informal} by expressing one-way one-round
quantum-LOCAL $q$-coloring into an extremal-combinatorial stability problem in
matrix spaces. 

The key notion underlying this connection is the \emph{multiplicative
energy} of a matrix space, which captures the total collision probability in the noncommutative quantum setting. Loosely speaking, we associate each color $a\in[q]$ with a matrix space
$W_a\subseteq\mathbb C^{N\times N}$, where $N$ is the local dimension.
These spaces together form an orthogonal decomposition of
$\mathbb C^{N\times N}$. We encode the quantum message state via a PSD weight matrix
$\Lambda\in\mathbb C^{N\times N}$, and
define the $\Lambda$-energy of a matrix space as follows.

\begin{definition}[$\Lambda$-energy of a matrix space]
\label{def:energy-projective}
Let $W\subseteq\mathbb C^{N\times N}$ be a matrix space and let
$\Lambda\in\mathbb C^{N\times N}$ be a PSD matrix. For any
Frobenius-orthonormal basis $\{M_k\}_k$ of $W$, define the $\Lambda$-energy of
$W$ by
\[
    \mathcal E_\Lambda(W)
    :=\sum_{k,\ell}\|\Lambda M_k\Lambda M_\ell\Lambda\|_F^2.
\]
\end{definition}

Note that this definition is independent of the chosen orthonormal basis. The term
\emph{energy} reflects its form as a sum of squared norms, measuring the total
strength of weighted products between basis elements of $W$.
In particular, the energy is zero exactly when the weighted product of any two
matrices in $W$ is zero. For more details about this notion of energy and the intuition behind it, see \Cref{sec:classical-overview}.


In the following theorem, we show that the collision probability in 
one-way one-round quantum $q$-coloring is indeed captured by the total energy of matrix-space
decompositions.

\begin{theorem}[informal, see \Cref{thm:reduction}] \label{thm:reduction-informal}
    For every fixed integer $q\geq 2$, the following are equivalent:
    \begin{enumerate}
        \item[(i)]
        For every normalized PSD $\Lambda$ and every orthogonal decomposition of $\mathbb{C}^{N\times N}$ into $q$ subspaces $\bigoplus_{a\in[q]} W_a$,
        their total $\Lambda$-energy  
        $\sum_{a\in[q]}\mathcal E_\Lambda(W_a)=\Omega_q(1)$.
        \item[(ii)] Every one-way one-round anonymous quantum algorithm that outputs 
        a $q$-coloring $(c_i\in[q])_{i\in\mathbb{Z}_n}$ of a directed cycle has collision probability 
        $\Pr[c_i=c_{i+1}]=\Omega_q(1)$. 
        Consequently, one-way one-round $q$-coloring is impossible in the quantum-LOCAL model.
    \end{enumerate}
\end{theorem}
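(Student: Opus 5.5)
The plan is to model a general one-way one-round anonymous algorithm and compute the collision probability on one edge. Each node $v_i$ starts in the same state. By Stinespring dilation and purification, we may assume it prepares a pure state $|\psi\rangle\in\mathcal H_{\mathrm{keep}}\otimes\mathcal H_{\mathrm{send}}$, sends the second register to $v_{i+1}$, and then measures $\mathcal H_{\mathrm{keep}}\otimes\mathcal H_{\mathrm{recv}}$ with a projective measurement $\{P_a\}_{a\in[q]}$. Naimark dilation justifies taking the measurement projective, at the cost of enlarging the local space.

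Write $|\psi\rangle=\sum_{x,y}\Psi_{xy}|x\rangle|y\rangle$ for a matrix $\Psi$. The relevant subsystem for edge $(v_i,v_{i+1})$ consists of the message $v_{i-1}\to v_i$, the state $\psi$ at $v_i$, and the state $\psi$ at $v_{i+1}$. The outcome $c_i=a$ corresponds to $P_a$ acting on (keep$_i$, recv$_i$), and $c_{i+1}=a$ corresponds to $P_a$ acting on (keep$_{i+1}$, send$_i$). Rotating to the Schmidt bases of $\Psi=U\sqrt{\Lambda}V^*$, the state at each node is determined by the weight matrix $\Lambda$ (up to absorbing the unitaries into the measurement). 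The unitaries are absorbed by a change of basis at each register. Padding with zeros lets all spaces have common dimension $N$.

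Next, vectorize each projector: $\mathrm{range}(P_a)\subseteq\mathbb C^N\otimes\mathbb C^N\cong\mathbb C^{N\times N}$ becomes a matrix space $W_a$. Since $\sum_a P_a=I$ with orthogonal ranges, these spaces form an orthogonal decomposition $\bigoplus_a W_a=\mathbb C^{N\times N}$. Fix a Frobenius-orthonormal basis $\{M_k\}$ of $W_a$, so that $P_a=\sum_k|\V(M_k)\rangle\langle\V(M_k)|$. Then
\[
\Pr[c_i=c_{i+1}=a]=\sum_{k,\ell}\big\|(\langle M_k|\otimes\langle M_\ell|)\,|\text{joint state}\rangle\big\|^2 .
\]
The key computation is to contract the tensor network. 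Each maximally correlated pair with Schmidt weights $\sqrt\Lambda$ contributes a factor of $\sqrt\Lambda$ when an index is traced through it. The wire from $v_{i-1}$ into $M_k$, the shared wire between $M_k$ and $M_\ell$ (the message $i\to i+1$), and the outgoing wire of $M_\ell$ give a chain $\Lambda^{1/2}M_k\Lambda^{1/2}\cdots$. Summing over the free indices then yields $\|\Lambda' M_k\Lambda' M_\ell\Lambda'\|_F^2$, where $\Lambda'$ is a suitable normalized PSD matrix (a square root of the Schmidt spectrum, normalized so that $\Tr\Lambda'^2=1$, or $\Tr$ of the appropriate power). Transposes and conjugates can be absorbed by replacing $W_a$ with $\overline{W_a}$ or $W_a^{T}$, which is still an orthogonal decomposition. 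This gives the identity
\[
\Pr[c_i=c_{i+1}]=\sum_a\mathcal E_{\Lambda}(W_a),
\]
with the normalization of $\Lambda$ fixed by $\|\psi\|=1$.

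With the identity in hand, both directions are immediate. For (i)$\Rightarrow$(ii), every algorithm produces some $(\Lambda,\{W_a\})$, so the lower bound on the total energy transfers to the collision probability. For (ii)$\Rightarrow$(i), every normalized $\Lambda$ and decomposition is realized by an algorithm: take $|\psi\rangle=\V(\sqrt\Lambda)$ (suitably) and $P_a$ the projector onto $\V(W_a)$. Low energy would then give a low-collision algorithm, contradicting (ii). The "consequently" clause follows from 1-dependence, exactly as in the corollary above.

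The main obstacle is the index bookkeeping in the tensor contraction. One has to check that the weights appear exactly as $\Lambda M_k\Lambda M_\ell\Lambda$, with the same $\Lambda$ in all three slots and the right power/normalization, rather than as mixed products with transposes. I would first do the computation with $\Psi$ diagonal, then verify the normalization convention against the trivial case $q=1$, where $W=\mathbb C^{N\times N}$ and the energy should equal $1$.
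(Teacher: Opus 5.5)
Your proposal is correct and follows essentially the same route as the paper's proof of \Cref{thm:reduction}. Both reduce the algorithm to an identical pure bipartite state plus a local measurement, rotate to Schmidt form by absorbing local unitaries into the measurement, and identify the two-node collision probability with $\sum_a \mathcal{E}_\Lambda(W_a)$; the paper packages your tensor contraction as \Cref{lem:op-energy}, works with POVMs rather than Naimark-dilated projectors, and keeps $\Lambda$ non-diagonal via $\ket{\Psi}=\V(\bar\Lambda)$. The one thing to fix is the bookkeeping you flagged yourself: the weight in the energy is the matrix of Schmidt coefficients itself, normalized by $\|\Lambda\|_F=1$, not the reduced state. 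So in the converse direction you should prepare a state whose coefficient matrix is $\Lambda$ (e.g.\ $\V(\bar\Lambda)$, depending on your vectorization convention), not $\V(\sqrt{\Lambda})$.
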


We remark that \Cref{thm:reduction-informal} provides the first lower-bound technique that operates directly
within the quantum-LOCAL model and is not captured by bounded dependence.
Using this technique, our four-color impossibility  
in \Cref{thm:q4-informal} reduces to proving a
dimension-independent lower bound on the total energy of any orthogonal
decomposition of $\mathbb C^{N\times N}$ into four subspaces.

\begin{theorem}[informal, see \Cref{thm:q4}] \label{thm:q4-energy-informal}
    There exists a universal constant $C>0$ such that,
    for every dimension $N$, every normalized PSD
    matrix $\Lambda\in\mathbb C^{N\times N}$ and every orthogonal decomposition
    \[
        \mathbb C^{N\times N}=\bigoplus_{i=1}^4W_i,
    \]
    we have
    \[
    \sum_{i=1}^4\mathcal E_\Lambda(W_i) \geq C.
    \]
\end{theorem}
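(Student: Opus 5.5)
We would prove \Cref{thm:q4-energy-informal} by contradiction via a stability argument, in analogy with Mantel's theorem. The classical analogue is this. If $\Lambda$ is diagonal with weights $p_1,\dots,p_N$ and each $W_i$ is spanned by matrix units $E_{xy}$, then $\mathcal E_\Lambda(W_i)$ is a weighted count of directed paths $x\to y\to z$ whose two edges both carry colour $i$. Small total energy then says the ``colour-$i$ graphs'' are nearly triangle-free-like line structures. Mantel/stability in this setting bounds how much weighted mass a $4$-edge-coloured complete directed graph (with loops) can avoid.

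First we normalise. Write $\Lambda=\sum_x \sqrt{p_x}\,|x\rangle\langle x|$, normalised so that $\sum p_x=1$ (or $\Tr\Lambda^2=1$). Let $\Pi_i$ be the orthogonal projector onto $W_i$, viewed as an operator on $\mathbb C^N\otimes\mathbb C^N$ via vectorisation $\V$. Then the energy becomes a quartic expression
\[
\mathcal E_\Lambda(W_i)=\Tr\bigl[(\Pi_i\otimes\Pi_i)\,T_\Lambda\bigr]
\]
for an explicit PSD operator $T_\Lambda$ built from contracting the middle indices with $\Lambda$. The key identity is that $\sum_i\Pi_i=I$. This gives the conservation law $\sum_{i,j}\Tr[(\Pi_i\otimes\Pi_j)T_\Lambda]=\Tr T_\Lambda=\|\Lambda\|_F^2\cdot\text{const}$, so the total mass is fixed and only its distribution between diagonal pairs $i=j$ and off-diagonal pairs $i\neq j$ matters.

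Next we apply an averaging step. Using the $1$-dependence structure, we reduce to showing that $\sum_{i\neq j}\Tr[(\Pi_i\otimes\Pi_j)T_\Lambda]$ cannot be close to the whole mass. Suppose all diagonal energies are at most $\epsilon$. We then show that each $W_i$ is $O(\sqrt\epsilon)$-close, in the $\Lambda$-weighted sense, to a space $W_i'$ with $\Lambda W_i'\Lambda W_i'\Lambda=0$ exactly. Such ``nilpotent-type'' spaces have a rigid structure. Up to a unitary, a space with zero weighted square maps a subspace $A_i$ into an orthogonal subspace $B_i$, and its weighted dimension is at most that of $\mathrm{Hom}(B_i^\perp\to B_i)$. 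The latter gives the matrix analogue of Mantel's bound of $1/4$ of the mass per colour: $\dim(A)(N-\dim A)\le N^2/4$, weighted by $\Lambda$.

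With four colours, the total is at most $4\cdot\tfrac14=1$. So the count alone gives no contradiction, and the stability part is needed. Equality forces each $W_i'$ to be essentially $\mathrm{Hom}(S_i^\perp, S_i)$ with $S_i$ of half weight. But then the identity direction, or more precisely the weighted diagonal block $\mathrm{Hom}(S_i,S_i)$ containing $\Lambda$-mass on the projector onto $S_i$, cannot be covered by any $W_j$ with $j\neq i$ without contributing energy. Concretely, the matrix $\Lambda^{-1/2}$-twisted identity has a weighted square equal to itself, and its overlap with some $W_j$ must be $\Omega(1)$ by pigeonhole. This yields $\mathcal E_\Lambda(W_j)=\Omega(1)$, contradicting small $\epsilon$.

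The main obstacle is the robust stability step: proving with dimension-free constants that a space with small weighted self-energy is close to an exact zero-square space, and that near-extremal decompositions must be near the half-half $\mathrm{Hom}$ structure, when $\Lambda$ may have very spread-out or tiny eigenvalues. I would first attempt this via a spectral argument. Consider the operator $X\mapsto \Pi_i(\Lambda X\Lambda)$ restricted to $W_i$, use energy to bound its Hilbert–Schmidt norm, and extract $S_i$ as a spectral subspace. A dyadic decomposition of the spectrum of $\Lambda$ handles the weights, at the cost of constants.
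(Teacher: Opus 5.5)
Your outline (a weighted Mantel bound of $1/4$, stability near the extremizer, then a contradiction on the diagonal block of the heaviest colour) has the right shape. The intuition that this block cannot be covered by the other colours without energy is also the paper's. However, the mechanism you give for the contradiction fails, because no single direction carries dimension-free energy. Take a unit $M\in W_j$ and complete it to an orthonormal basis $\{M_k\}$. Using $\sum_\ell M_\ell\Lambda^2M_\ell^\dagger\preceq I$ and $\sum_k M_k^\dagger\Lambda^2M_k\preceq I$, all terms of $\mathcal E_\Lambda(W_j)$ involving $M$ sum to at most $2\|\Lambda M\Lambda\|_F^2\le 2\|\Lambda\|_\infty^4$, which is $2/N^2$ for $\Lambda=I/\sqrt N$. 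So a large overlap of a twisted identity (say $\Lambda^{+}$, whose weighted square is $\Lambda$) with some $W_j$ gives nothing uniform in $N$; classically, the loops are only $N$ of the $N^2$ edges. The block $\mathrm{L}(U,U)$ carries mass $1/4$ spread over $\Theta(N^2)$ directions, so you need a collective statement about it. The paper compresses $P_2,P_3,P_4$ to $\mathrm{L}(U,U)$; they form a $3$-outcome POVM there because $P_1=\Pi_{\mathrm{L}(U^\perp,U)}$ annihilates the block. It then renormalises the weight to $\hat\Lambda=\sqrt2\,\Lambda|_U$ and applies the $q\le 3$ bound, which comes from supersaturation via the $1$-dependent inequality (\Cref{lem:1-dep}, \Cref{coro:weighted-large-energy}), to get a compressed element with $\Omega(1)$ energy (\Cref{lem:special}). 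The three-colour case never appears in your proposal, and that is the missing idea.

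Second, the compression only transfers energy back if $U$ is $\Lambda$-invariant. In that case the $UU$ block of $\Lambda M\Lambda M'\Lambda$ is $\Lambda_U\hat M\Lambda_U\hat M'\Lambda_U$, which gives $\mathcal E_\Lambda(P_j)\ge\frac18\mathcal E_{\hat\Lambda}(\hat P_j)$. With $\Lambda$ held fixed, an exact zero-energy space only lies in $\mathrm{L}((\Lambda U)^\perp,U)$, and $U$ need not be invariant; your phrase ``up to a unitary, maps $A_i$ into an orthogonal subspace'' hides exactly this. This is why the paper proves stability only for the heaviest colour at mass exactly $1/4$, after shifting excess mass to another colour using supersaturation. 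It outputs a nearby pair $(P',\Lambda')$ with $\Lambda'$ block-diagonal for $U\oplus U^\perp$, where $U$ is the top spectral subspace of $B=\sum_k M_k\Lambda^2M_k^\dagger$ thresholded at $\sqrt\epsilon$. The crux is the approximate invariance $\|\Pi_U\Lambda\Pi_{U^\perp}\|_F^2=O(\sqrt\epsilon)$ (\Cref{lem:delta-bound}). The paper obtains it by building an explicit test between $\Lambda\Pi_{U^\perp}\Lambda$ and $\Lambda\Pi_U\Lambda$ and invoking the pretty-good-measurement bound. A POVM rounding lemma (\Cref{lem:project-povm}) and Lipschitz continuity in $\Lambda$ (\Cref{lem:weighted-lip-psd}) then carry the bound back. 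Your robust step claims that every $W_i$, of any mass, is $O(\sqrt\epsilon)$-close to an exact zero-energy space for the same $\Lambda$. That is stronger than anything the paper proves, your spectral/dyadic sketch does not justify it, and it does not address invariance. You also do not need all four colours to be near-extremal, only one.
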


To prove this bound, we first ask how large a matrix space can be if its energy
is zero. Under this weighted setting, the appropriate measure of size is its \emph{$\Lambda$-mass},
\[
    m_\Lambda(W):=\sum_k\|\Lambda M_k\Lambda\|_F^2,
\]
where $\{M_k\}_k$ is a Frobenius-orthonormal basis of $W$. For $\Lambda=I$,
this mass is simply the dimension of $W$. For any normalized $\Lambda$, it is a weighted
notion of density, since the masses of the spaces in any orthogonal decomposition
of $\mathbb C^{N\times N}$ sum to $1$. Zero-energy spaces have mass at most
$1/4$, and we call a space \emph{maximal} if its mass
reaches this extremal threshold.

A key ingredient in our proof is a \emph{stability theorem} for these maximal
spaces. In extremal combinatorics, a stability theorem captures the phenomenon that an object
close to an extremal bound must also be \emph{close in structure} to an object
attaining that bound. We show that a maximal matrix space with nearly
zero energy is close to a maximal space with exactly zero energy.

\begin{theorem}[informal, see \Cref{thm:weighted-stability}] 
If $W$ is maximal and has $\mathcal E_\Lambda(W)=o(1)$, then there
is a nearby pair $(\Lambda',W')$ of a PSD weight matrix and a
maximal matrix space, such that $\mathcal E_{\Lambda'}(W')=0$.
\end{theorem}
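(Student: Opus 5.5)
Throughout, "maximal" means $m_\Lambda(W)=1/4$ exactly, and the goal is to construct a nearby pair $(\Lambda',W')$ with $m_{\Lambda'}(W')=1/4$ and $\mathcal E_{\Lambda'}(W')=0$. The plan is to build the zero-energy pair directly from the structure of $(\Lambda,W)$ by a rounding argument, rather than to identify a canonical extremizer first. Write $\Lambda=\sum_j \lambda_j |e_j\rangle\langle e_j|$ and conjugate $W$ into the weighted frame $\widetilde W:=\Lambda W\Lambda$. The energy then reads $\sum_{k,\ell}\|\widetilde M_k \Lambda^{-1}\widetilde M_\ell\|_F^2$, where $\Lambda^{-1}$ is understood on the support of $\Lambda$. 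The mass is the trace of the Gram operator $G$ of $\{\widetilde M_k\}$. Since $G\preceq \mathrm{id}$ in the appropriate sense, this trace is at most the sum of its eigenvalues.

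\textbf{Step 1: find the approximate subspace.} Let $P$ be a spectral projection of $\Lambda$, or of the "row space" operator $R:=\sum_k \Lambda M_k\Lambda^2 M_k^*\Lambda$, at a threshold chosen by pigeonhole so that a gap appears. Small energy says that the column space of $W$ is almost killed by $\Lambda$-weighted multiplication from the row space. Concretely, $\sum_k \operatorname{Tr}(R_{\mathrm{col}} \Lambda R_{\mathrm{row}}\Lambda)$ is of order $\mathcal E_\Lambda(W)=o(1)$. From this I extract a subspace $U$ such that $W$ is $o(1)$-close, in weighted Frobenius norm, to $\mathrm{Hom}(U^{\perp},U)$. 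The $1/4$ threshold arises here through the inequality $\operatorname{Tr}(\Lambda^2P)\operatorname{Tr}(\Lambda^2(I-P))\le 1/4$. Maximality therefore forces near-equality: $\operatorname{Tr}(\Lambda^2P)\approx 1/2$, and $W$ captures nearly all of the corresponding block.

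\textbf{Step 2: construct the pair.} Set $W':=\mathrm{Hom}(U^\perp,U)$. It has exactly zero energy for every $\Lambda'$ that commutes with $P_U$, because $M\Lambda'M'=0$ for block off-diagonal $M,M'$ of this type. Next, set $\Lambda'$ to be $\Lambda$ with its off-diagonal blocks relative to $U\oplus U^\perp$ removed. Then rescale its two diagonal blocks so that $\operatorname{Tr}(\Lambda'^2P_U)=\operatorname{Tr}(\Lambda'^2(I-P_U))=1/2$, while keeping the normalization. This gives $m_{\Lambda'}(W')=\tfrac12\cdot\tfrac12=1/4$ exactly, so $W'$ is maximal with zero energy. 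Closeness then follows from two bounds. The removed off-diagonal blocks of $\Lambda$ are small, because they contribute to the energy cross terms. The rescaling factors are $1+o(1)$ by the near-equality from Step 1.

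\textbf{Main obstacle.} The hardest step is Step 1: producing a subspace $U$ whose distance bound is dimension-free and depends only on $\mathcal E_\Lambda(W)$. Spectral thresholding can lose mass to eigenvalues near the cut. I would first try averaging the threshold over a dyadic range, which costs only a logarithmic factor in the $o(1)$ rate and is acceptable for a qualitative statement.
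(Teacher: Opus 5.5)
The gap is in one step.

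\textbf{What matches the paper.} Your Step 2 is exactly the paper's candidate. You threshold an out-degree operator to obtain $U$, set $W'=\mathrm{L}(U^\perp,U)$, and let $\Lambda'$ be the block-diagonal part of $\Lambda$ with both blocks rescaled to weight $1/2$.

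\textbf{The gap.} It is the one sentence asserting that the discarded off-diagonal block is small, $\delta:=\|\Pi_U\Lambda\Pi_{U^\perp}\|_F^2=o(1)$, ``because it contributes to the energy cross terms.'' Energy alone does not control $\delta$. Take any $U$ that is not $\Lambda$-invariant and set $W=\mathrm{L}((\Lambda U)^\perp,U)$. Then $X\Lambda Y=0$ for all $X,Y\in W$, so $\mathcal{E}_\Lambda(W)=0$. Its out-degree matrix is $B=\sum_kM_k\Lambda^2M_k^\dagger=\Tr(\Lambda^2\Pi_{(\Lambda U)^\perp})\,\Pi_U$, so thresholding returns exactly this $U$. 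Yet $\delta>0$ and $W\neq\mathrm{L}(U^\perp,U)$.

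Such examples are excluded only by maximality of the mass. For $\Lambda\succ0$, a zero-energy space has mass at most $\alpha(1-\beta)$, where $\alpha:=\Tr(\Lambda^2\Pi_U)\le\beta:=\Tr(\Lambda^2\Pi_{\Lambda U})$. Mass $1/4$ forces $\beta=\alpha$, i.e.\ $\Lambda$-invariance of $U$. What you need is therefore a robust version of ``mass $1/4$ forces $\Lambda$-invariance.''

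The other reading of your sentence is circular. That reading says: once $W$ nearly fills $\mathrm{L}(U^\perp,U)$, its energy is about $\alpha(1-\alpha)\delta$. But Step 1 itself already needs $\delta$ small. The weighted leakage of $W$ outside $\mathrm{L}(U^\perp,U)$ is bounded only by $2\sqrt{\epsilon}+\delta$, because the out-mass of the $U^\perp$ directions picks up $\Tr(\Pi_UB\Pi_U\cdot\Pi_U\Lambda\Pi_{U^\perp}\Lambda\Pi_U)$. Likewise, your rescaling factors involve $\Tr((\Pi_U\Lambda\Pi_U)^2)=\alpha-\delta$.

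\textbf{How the paper closes it.} The paper breaks the circle by state discrimination.
\begin{enumerate}
\item $2\delta$ is exactly the pretty-good-measurement error for $\sigma_0=\Lambda\Pi_{U^\perp}\Lambda$ versus $\sigma_1=\Lambda\Pi_U\Lambda$.
\item The POVM element $V=\sum_kM_k^\dagger\Pi_U\Lambda^2\Pi_UM_k/\Tr(\Pi_U\Lambda^2)$, built from $W$, discriminates them with error $O(\sqrt{\epsilon})$. Bounding the error of $I-V$ is precisely where $m_\Lambda(W)=1/4$ and $\alpha(1-\alpha)\le1/4$ enter.
\item The bound $\mathrm{PGM}\le2\,\mathrm{OPT}$ then gives $\delta=O(\sqrt{\epsilon})$.
\end{enumerate}

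\textbf{Two further problems in Step 1.} First, a spectral projection of $\Lambda$ carries no information about $W$; for $\Lambda=I/\sqrt N$ it is trivial. Second, thresholding $\Lambda B\Lambda$ instead of $B$ loses the identity that drives the source--sink bound.
\begin{itemize}
\item In the eigenbasis $\{v_j\}$ of $B$, one has $\mathcal{E}_\Lambda=\sum_j\rho_j\,v_j^\dagger\Lambda A\Lambda v_j$ with $A=\sum_kM_k^\dagger\Lambda^2M_k$. So the in-mass into high-degree directions is at most $\mathcal{E}_\Lambda/\theta$.
\item In the eigenbasis $\{w_j\}$ of $\Lambda B\Lambda$, the energy is $\sum_j\mu_j\,w_j^\dagger Aw_j$. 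This does not control the relevant column sums $w_j^\dagger\Lambda A\Lambda w_j$. Moreover, the discarded out-mass $\sum_{\mu_j<\theta}\mu_j$ has no small dimension-free bound; naively it is at most $N\theta$.
\end{itemize}
Relatedly, in your weighted frame the energy is $\Tr(\Lambda A\Lambda\cdot\Lambda^{-1}(\Lambda B\Lambda)\Lambda^{-1})$, with $\Lambda^{-1}$ rather than $\Lambda$ as you wrote.

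\textbf{Dimension-freeness.} With threshold $\sqrt{\epsilon}$ on $B$ itself, dimension-freeness comes from $\Tr(\Lambda^2)=1$, e.g.\ $\Tr(\Pi_{U^\perp}B\Pi_{U^\perp}\Lambda^2)\le\sqrt{\epsilon}$. No spectral gap or dyadic averaging is needed. An averaging loss of $\log N$ would in fact destroy the dimension-independence that the four-color application requires.
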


For four colors, averaging guarantees a color space of mass at least $1/4$. Then
we can apply the stability theorem to this heaviest space, which leads to a
universal lower bound on the total $\Lambda$-energy; see
\Cref{sec:quantum-overview} for details.

\paragraph{Subsequent work.}
Since the connection established in \Cref{thm:reduction-informal} is bidirectional, it
also provides a route to quantum upper bounds. Indeed, a follow-up work
\cite{flin2026quantumadvantagedistributedsymmetry} built on this connection to
construct an $O(1)$-round quantum algorithm that $3$-colors directed cycles.
Since the classical LOCAL model requires $\Omega(\log^* n)$
rounds for this task, this establishes the first asymptotic quantum advantage in
the LOCAL model for a natural symmetry-breaking problem. 

We note a methodological distinction between the two works: the main result of the follow-up
work was discovered by GPT-6 while attempting to extend our four-color
lower bound, whereas the results of the present work were obtained without
AI assistance. To the best of our knowledge, existing LLM tools had not
resolved this problem before our energy-based characterization was introduced.
In this sense, our techniques provided the key human-derived
insight that enabled the subsequent AI-assisted breakthrough.

\subsection{Related Work}

The classical locality of coloring directed cycles with any fixed number of
colors $q\geq 3$ is well understood. Cole and Vishkin gave a deterministic
$O(\log^* n)$-round algorithm, while Linial proved a matching lower bound
\cite{cole1986deterministic,linial1992locality}. Naor later extended the same lower
bound to randomized algorithms that succeed with high probability
\cite{naor1991lower}, so randomization gives no asymptotic improvement for this problem.
More generally, $(\Delta+1)$-coloring graphs of maximum
degree $\Delta$ can be solved deterministically in
$O(\Delta+\log^* n)$ rounds, which is $O(\log^* n)$ when $\Delta$ is constant
\cite{barenboim2014distributed}.

Several recent works have sought analogous quantum lower bounds. \cite{holroyd2017finitary,gavoille2019localisation,gall2022non} show that a one-way one-round quantum
algorithm cannot $3$-color directed cycles.
\cite{fraigniaud2026distributed} proved an $\Omega(\log^* n)$ lower bound for
$3$-coloring rooted trees, and ruled out even constant quantum advantage for
$2$-coloring even cycles. All of these lower bounds also apply to the stronger
bounded-dependence model. Recently, \cite{coiteuxroy2026cycles} developed a
technique for the quantum model and proved that any anonymous quantum algorithm using
finitely many qubits that $3$-colors cycles with probability $1$
requires $\Omega(n)$ rounds. 
Although anonymous nodes can generate unique random
identifiers with high probability, this reduction is unavailable under the
probability-$1$ restriction. Their result therefore does not imply a lower bound
in the quantum-LOCAL model and is complementary to our high-probability
separation.

Another line of work gives constructions of finitely dependent proper colorings.
A distribution over vertex labels is \emph{$k$-dependent} if the labels on any
two vertex sets at graph distance greater than $k$ are independent.
\cite{holroyd2016finitely} first constructed a stationary $2$-dependent
$3$-coloring and a stationary $1$-dependent $4$-coloring of $\mathbb Z$, and
subsequently \cite{holroyd2018finitely} extended these constructions to finite
cycles. Since any one-way one-round anonymous quantum algorithm on a directed
cycle produces a stationary $1$-dependent distribution, these constructions
imply that bounded-dependence arguments alone cannot establish the quantum lower
bound desired in this work. More generally,
\cite{akbari2025online} showed that
$(\Delta+1)$-coloring a graph with constant maximum degree $\Delta$ 
can be solved in the bounded-dependence model with constant locality, 
imposing a strong barrier to proving quantum lower bounds 
for this problem.

\subsection{Open Problems}
We highlight two open problems that we find particularly interesting.

The first is to determine the exact number of colors required
for one-way one-round quantum algorithms to properly color directed cycles. 
Our result rules out $4$ colors, while the follow-up work
\cite{flin2026quantumadvantagedistributedsymmetry} shows that $2\cdot 10^4$
colors suffice, leaving a substantial gap between 
$5$ and $2\cdot 10^4$.

\begin{openproblem}
Determine the smallest integer $q$ for which one-way one-round quantum-LOCAL
algorithms can properly $q$-color directed cycles with high probability.
\end{openproblem}

Second, to the best of our knowledge, our result gives the first separation between the
bounded-dependence and quantum-LOCAL models: proper $1$-dependent $4$-colorings
of cycles exist, but one-way one-round quantum-LOCAL algorithms cannot produce a
proper $4$-coloring with high probability. This is a constant gap in locality, and it remains open whether there exists a distributed graph problem
that achieves an asymptotic separation between the two models.

\begin{openproblem}
Does there exist a distributed graph problem whose locality 
in the quantum-LOCAL model is asymptotically larger than in the bounded-dependence model?
Furthermore, determine how large this gap can be.
\end{openproblem}

\paragraph{Organization.}
\Cref{sec:technical-overview} gives an overview of the proof.
\Cref{sec:prelim} introduces the necessary background and notation. 
\Cref{sec:energy} develops the formal definition of mass and energy, and establishes their basic properties.
Next, \Cref{sec:zero-energy} characterizes zero-energy operators and proves the
core stability theorem for low-energy operators. Finally,
\Cref{sec:one-way} connects the energy formulation to one-way one-round quantum
coloring and proves our main impossibility theorem for four colors.

\section{Technical Overview}\label{sec:technical-overview}

Our proof of \Cref{thm:q4-informal} recasts one-way one-round coloring as an extremal-combinatorics problem. We begin in~\Cref{sec:classical-overview} by reproving the classical lower bound
from this perspective. Then, in~\Cref{sec:quantum-overview}, we give an overview of our proof in the quantum setting, which builds on a noncommutative generalization of the classical extremal problem.





\subsection{Recasting the Classical Lower Bound} \label{sec:classical-overview}
We first consider one-way one-round anonymous randomized algorithms
that color a directed cycle with $q$ colors. 
Let $N$ denote the number of possible local random seeds.
Recall that every node $v_i$ independently samples a uniform seed $r_i\in[N]$,
sends it to $v_{i+1}$, and outputs $c_i=f(r_{i-1},r_i)$
for a deterministic function $f:[N]^2\to[q]$. This is the standard block-factor
view of constant-radius stochastic processes
\cite{naor1991lower,holroyd2017finitary}.
By symmetry, every edge has the same local collision probability. It therefore suffices
to analyze \(\Pr[c_i=c_{i+1}]\), the probability that the endpoints of a fixed edge output
the same color.

For each color $a\in[q]$, define the directed edge set
\(
    E_a:=\{(x,y)\in[N]^2:f(x,y)=a\},
\)
and the corresponding directed graph \(G_a=([N],E_a)\).
The sets $E_1,\ldots,E_q$ partition $[N]^2$. Adjacent nodes
$v_i$ and $v_{i+1}$ both output color $a$ exactly when
$(r_{i-1},r_i)$ and $(r_i,r_{i+1})$ are both edges of $G_a$. Consequently,
\begin{equation}\label{eq:overview-classical-collision}
    \Pr[c_i=c_{i+1}]
    =\frac{1}{N^3}\sum_{a\in[q]}
    \Es(G_a),
\end{equation}
where
\[
    \Es(G_a):=
      \left|\{(x,y,z)
      \in[N]^3:(x,y),(y,z)\in E_a\}\right|.
\]
For a directed graph $G=([N],E)$, $\Es(G)$ counts \emph{directed $2$-walks},
i.e., sequences of two consecutive directed edges, with repeated vertices allowed.
We also call $\Es(G)$ the \emph{energy} of $G$, because 
it coincides with the unweighted energy of the corresponding matrix space 
$W_G=\operatorname{span}\{E_{xy}:(x,y)\in E\}$, as we explain in
\Cref{sec:quantum-overview}.
The classical lower bound is therefore equivalent
to the following graph-theoretic problem.
\begin{problem} \label{prob:classical}
Does every $q$-coloring of $[N]^2$ contain 
$\Omega_q(N^3)$ monochromatic directed $2$-walks?    
\end{problem}

A general block-factor argument gives a positive answer for every fixed $q$
\cite[Proposition~5]{holroyd2017finitary}, but this argument relies on steps such as
conditioning on the intermediate random seed without disturbing the system, which does not survive quantization
due to the no-cloning principle \cite{balliu2025localproblems}. Instead, we give an alternative proof
for $q\leq 4$ via an extremal graph argument, and its natural noncommutative
counterpart leads to the quantum lower bound.

\paragraph{The extremal graph picture.} The first step is to analyze the structure of graphs without $2$-walks.

\begin{lemma} \label{lem:directed-mantel}
    Given any directed graph $G=([N], E)$, if $G$ contains no directed $2$-walks, then
    \[
        |E|\leq \frac{N^2}{4}.
    \]
    Equality holds if and only if there is a partition $[N]=L\sqcup R$ with
    $|L|=|R|=\frac{N}{2}$ such that $E=L\times R$.
\end{lemma}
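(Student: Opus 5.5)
The plan is to use in-degrees and out-degrees. For each vertex $y\in[N]$, let $d^-(y)$ be its in-degree and $d^+(y)$ its out-degree in $G$. A directed $2$-walk through middle vertex $y$ is a pair of edges $(x,y),(y,z)\in E$, and there are $d^-(y)d^+(y)$ of them. The hypothesis that $G$ has no directed $2$-walks is therefore equivalent to $d^-(y)\,d^+(y)=0$ for every $y$. (Loops $(y,y)$ are included here: they give the walk $(y,y,y)$, so a loop forces both degrees at $y$ to be positive and is excluded.)

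Define
\[
R:=\{y: d^-(y)>0\},\qquad L:=[N]\setminus R .
\]
Every vertex of $R$ has out-degree zero. So every edge $(x,y)$ satisfies $y\in R$ by definition, and $x\notin R$ because $x$ has positive out-degree, which gives $x\in L$. Hence $E\subseteq L\times R$, and
\[
|E|\le |L|\,|R|=|L|\,(N-|L|)\le \frac{N^2}{4}
\]
by AM--GM.

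For the equality case, suppose $|E|=N^2/4$. Then both inequalities above are tight. This forces $|L|=|R|=N/2$ and $E=L\times R$ (using the containment $E\subseteq L\times R$). Conversely, if $E=L\times R$ with $L\cap R=\emptyset$, then no vertex has both an incoming and an outgoing edge, since heads lie in $R$ and tails in $L$. So there are no $2$-walks, and $|E|=N^2/4$.

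There is no real obstacle. The only points needing care are the treatment of loops and the choice of $R$ (vertices with nonzero in-degree) so that the containment $E\subseteq L\times R$ holds exactly. When $N$ is odd, the equality case simply cannot occur, which is consistent with the statement.
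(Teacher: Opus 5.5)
Your proof is correct and follows the paper's argument. The paper takes $L$ to be the vertices of positive out-degree and $R=[N]\setminus L$, while you take $R$ to be the vertices of positive in-degree; this mirror-image choice gives the same containment $E\subseteq L\times R$, the same bound $|L|(N-|L|)\le N^2/4$, and the same equality analysis, and your explicit treatment of loops and of the converse direction of the equality case is a small addition to the paper's brief sketch.
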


Indeed, if there are no $2$-walks, then a vertex cannot have both an incoming and an outgoing edge. If $L$ is
the set of vertices with positive out-degree and $R=[N]\setminus L$, then
every edge points from $L$ to $R$. Hence
$E \subseteq L\times R$, and 
\[
    |E|\leq |L||R|
    = |L|\left(N-|L|\right)
    \leq \frac{N^2}{4}.
\]
Equality forces $G$ to be a balanced complete directed cut. \Cref{lem:directed-mantel} can be viewed as a directed
analogue of \emph{Mantel's theorem},
which states that an undirected graph without triangles has at most $N^2/4$ edges,
with equality attained precisely by the
balanced complete bipartite graph \cite{mantel1907problem}. 

In \Cref{prob:classical}, for $q\leq 3$ colors, averaging yields a color class with at least $N^2/3$
directed edges, substantially exceeding the extremal threshold $N^2/4$ by $\Omega(N^2)$ edges.
This suggests the possibility of showing what is known in extremal combinatorics as \emph{supersaturation} (i.e., the phenomenon where exceeding an extremal threshold not only implies one copy of a substructure, but an abundance of it): 
a graph $G$ with substantially more than $N^2/4$ edges must contain many $2$-walks, yielding the desired energy bound for the $q\leq 3$ case.

For $q=4$,
however, averaging guarantees only $N^2/4$ edges, so we will also need a \emph{stability}
statement: if 
$G$ has nearly $N^2/4$ edges and few $2$-walks, then $G$ must be
close to a balanced complete directed cut. 
Applying stability to the largest color class would impose structural
constraints on the four color classes, forcing many monochromatic
$2$-walks, and in turn, the desired energy lower bound. The following lemma provides both the
supersaturation and stability bounds.

\begin{lemma}[Robust version of \Cref{lem:directed-mantel}] \label{lem:directed-stability}
    Every directed graph $G=([N],E)$ admits a partition $[N]=L\sqcup R$ such
    that
    \[
        |E\setminus(L\times R)|\leq 2\sqrt{N\cdot \Es(G)}.
    \]
    We have two immediate consequences:
    \begin{enumerate}
        \item (Supersaturation) If $|E|-N^2/4=\Omega(N^2)$, then $\Es(G)=\Omega(N^3)$.
        \item (Stability) If $|E|=(1/4+o(1))N^2$ and $\Es(G)=o(N^3)$, then the corresponding $L$ and $R$ satisfy
        \[
        |L|,|R|=(1/2+o(1))N\quad\text{and}\quad
        |E\mathbin\triangle(L\times R)|=o(N^2).
        \]
    \end{enumerate}
\end{lemma}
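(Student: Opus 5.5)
To prove the robust bound, I will write $d^-(y)$ and $d^+(y)$ for the in- and out-degree of $y$ in $G$. The energy then satisfies $\Es(G)=\sum_{y\in[N]} d^-(y)\,d^+(y)$, since a $2$-walk $(x,y,z)$ is an incoming edge at $y$ paired with an outgoing edge at $y$. I choose the partition greedily. Put $y\in L$ if $d^-(y)\le d^+(y)$, and $y\in R$ otherwise. Set $m(y):=\min(d^-(y),d^+(y))$.

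Every edge $(x,y)\notin L\times R$ has either $x\in R$ or $y\in L$. Edges with tail in $R$ are counted by $\sum_{x\in R} d^+(x)=\sum_{x\in R} m(x)$. Edges with head in $L$ are counted by $\sum_{y\in L} d^-(y)=\sum_{y\in L}m(y)$. Hence
\[
|E\setminus(L\times R)|\le \sum_{y} m(y).
\]
Since $m(y)^2\le d^-(y)d^+(y)$, Cauchy--Schwarz gives
\[
\sum_y m(y)\le \sqrt{N\sum_y m(y)^2}\le\sqrt{N\,\Es(G)}.
\]
This is already within the claimed $2\sqrt{N\Es(G)}$; the factor $2$ is slack that the paper perhaps keeps for the weighted analogue.

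For supersaturation, note that $|E\cap (L\times R)|\le |L||R|\le N^2/4$. So
\[
|E|-N^2/4\le 2\sqrt{N\Es(G)},
\]
and if the left side is $\Omega(N^2)$ then $\Es(G)=\Omega(N^3)$.

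For stability, assume $\Es(G)=o(N^3)$. Then the error term $\sqrt{N\Es(G)}$ is $o(N^2)$. Combining $|E\cap(L\times R)|\ge |E|-o(N^2)=(1/4-o(1))N^2$ with $|L|(N-|L|)\ge |E\cap(L\times R)|$ forces $|L|(N-|L|)\ge (1/4-o(1))N^2$. Since $|L|(N-|L|)=N^2/4-(|L|-N/2)^2$, this gives $|L|=(1/2+o(1))N$, and likewise for $R$. Finally,
\[
|E\mathbin\triangle(L\times R)|=|E\setminus(L\times R)|+\bigl(|L||R|-|E\cap(L\times R)|\bigr).
\]
The first term is $o(N^2)$. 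The second is at most $N^2/4-(1/4-o(1))N^2=o(N^2)$.

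The only step needing care is the charging argument: each non-cut edge must be charged to the smaller-degree side of an endpoint. The choice of $L$ by comparing in- and out-degrees is exactly what makes this work. Everything else is Cauchy--Schwarz and the elementary quadratic bound.
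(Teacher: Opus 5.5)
Your proof is correct. It takes a different partition rule from the paper's main argument, though it is exactly the alternative the paper mentions in a footnote.

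\textbf{The paper's rule.} The paper sets $\theta=\sqrt{\mathcal{E}(G)/N}$ and puts $y$ in $L$ iff $d^+(y)>\theta$. It then bounds the deleted edges by $\sum_{d^+(y)\le\theta}d^+(y)+\sum_{d^+(y)>\theta}d^-(y)\le N\theta+\mathcal{E}(G)/\theta=2\sqrt{N\,\mathcal{E}(G)}$. The case $\mathcal{E}(G)=0$ is handled separately by the exact Mantel-type lemma.

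\textbf{Your rule.} You compare $d^-(y)$ with $d^+(y)$ at each vertex, charge every non-cut edge to $\min\{d^-(y),d^+(y)\}$, and finish with Cauchy--Schwarz. This gives the sharper bound $\sqrt{N\,\mathcal{E}(G)}$. It also needs no special case for $\mathcal{E}(G)=0$, since then every $\min\{d^-(y),d^+(y)\}$ vanishes.

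\textbf{What the paper's rule buys.} Its $L$ is determined by out-degrees alone, i.e.\ by a spectral cut of the out-degree matrix $B$. That is what carries over to the weighted quantum setting. There the argument relies on $B\succeq\sqrt{\epsilon}\,\Pi_U$ and $\Pi_{U^\perp}B\Pi_{U^\perp}\preceq\sqrt{\epsilon}\,I$, both in the leakage bound and in the approximate $\Lambda$-invariance argument via the measurement $V$. Moreover, the energy pairs the eigenvalues $\rho_j$ of $B$ with the column sums $c_j$. The row sums $v_i^\dagger\Lambda B\Lambda v_i$ generally differ from $\rho_i$, so there is no per-vertex minimum controlled by $\sqrt{\rho_j c_j}$.

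Your suspicion that the factor $2$ is kept for the weighted analogue is therefore right. Your supersaturation and stability deductions match the paper's reasoning.
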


To prove the lemma, we separate vertices into sources and sinks according to an out-degree threshold.
If $\Es(G)=0$, \Cref{lem:directed-mantel} already places $E$ inside a 
directed cut.
Otherwise, choose the partition $L\sqcup R$
\[
    L:=\{y:d^+(y)>\theta\}
    \quad\text{and}\quad
    R:=\{y:d^+(y)\leq\theta\},
\]
where $\theta:=\sqrt{\Es(G)/N}$.
For each $y\in R$, delete all its outgoing edges, and for each $y\in L$,
delete all its incoming edges. 
This is equivalent to deleting all edges outside $L\times R$,
leaving the remaining edge set $E'\subseteq L\times R$.

Then we observe that directed $2$-walks can be counted by
enumerating the intermediate vertex $y$ of the walk, and thus
\begin{equation}\label{eq:2-walk}
    \Es(G)=\sum_{y\in[N]}d^-(y)d^+(y),
\end{equation}
where \(d^-(y)\) and \(d^+(y)\) are the in-degree and out-degree of vertex \(y\).

Using this, we bound the number of deleted edges.\footnote{This threshold rule is not optimal. If, at
each vertex, we instead delete the smaller of its incoming and outgoing edge
sets, then the number of deletions is at most
$\sum_y\min\{d^-(y),d^+(y)\}\leq\sqrt{N\cdot\Es(G)}$.  We use the threshold rule 
because it extends to the general quantum setting.}
A directed edge is deleted if its tail lies in \(R\) or its head lies in \(L\).
Bounding these two contributions separately gives
\[
\begin{aligned}
    |E\setminus (L\times R)|
    \leq \sum_{d^+(y)\leq\theta}d^+(y)
       +\sum_{d^+(y)>\theta}d^-(y) 
    &\leq N\theta
       +\frac{\sum_{d^+(y)>\theta}d^-(y)d^+(y) }{\theta} \\
    &\leq N\theta+\frac{\Es(G)}{\theta}
     =2\sqrt{N\cdot\Es(G)}.
\end{aligned}
\]

Since $|L\times R|\leq N^2/4$, any excess of $|E|$ above $N^2/4$ must be
deleted, giving supersaturation. For stability, under the assumption
$\Es(G)=o(N^3)$, only $o(N^2)$ edges are deleted, leaving nearly $N^2/4$ edges. Then the remaining edges $E'$ must almost saturate a directed cut $L\times R$
with nearly balanced sides; otherwise, $L\times R$ could not accommodate so
many edges. Together with $o(N^2)$ deleted edges, this gives
$|E\mathbin\triangle(L\times R)|=o(N^2)$.

\paragraph{Classical impossibility for $q\leq 4$ colors.} 

For $q\leq 3$, \Cref{prob:classical} follows immediately
from \Cref{lem:directed-stability}: By averaging, there must exist a color class, say $G_1$, with
at least $N^2/3$ edges, and thus supersaturation implies $\Es(G_1)=\Omega(N^3)$.

Now consider $q=4$. Suppose, toward a contradiction,
that $\sum_{a\in[q]} \Es(G_a)=o(N^3)$. By averaging, some color class, say $G_1$, has at least $N^2/4$
edges. Since $\Es(G_1)=o(N^3)$, supersaturation forces
$G_1$ to have only $(1/4+o(1))N^2$ edges. Then stability gives a nearly balanced partition 
$[N]=L\sqcup R$ such that 
$|E_1\triangle(L\times R)|=o(N^2)$.
Changing the colors of these $o(N^2)$ exceptional edges changes the number of
monochromatic directed $2$-walks by only $o(N^3)$.
Therefore we can round $E_1$ to 
exactly $L\times R$. Observe that after rounding, the subgraph $L\times L$ contains no edges of color $1$,
and thus is colored by only the other three
colors. Since $|L|=(1/2+o(1))N$, the three-color lower bound inside $L$
produces $\Omega(|L|^3)=\Omega(N^3)$ monochromatic directed $2$-walks, 
contradicting the original assumption. 
Therefore, for any $4$-coloring of $[N]^2$, we have
$\sum_{a\in[q]}\Es(G_a)=\Omega(N^3)$.

\subsection{The Quantum Lower Bound} \label{sec:quantum-overview}

We now turn to the quantum setting. Both
\Cref{lem:directed-mantel,lem:directed-stability} have natural noncommutative
analogues. As a warm-up, we first explain the simple special case where the quantum messages are restricted to the maximally entangled state, which can be thought of as the uniform case (see below).
Then, we discuss the case of general quantum messages, which presents several challenges unique to the quantum setting.

\subsubsection{The Uniform Case}
We first consider one-way one-round anonymous quantum algorithms using a special quantum message. Each node $v_i$ starts by preparing a maximally entangled state
\[
    \ket{\Phi}=\frac{1}{\sqrt N}\sum_{j\in[N]} \ket{j}_{\A_i}\otimes \ket{j}_{\B_i},
\]
which is the quantum analogue of a uniform random seed. It 
then sends $\B_i$ to $v_{i+1}$
and receives $\B_{i-1}$ from $v_{i-1}$; 
this is the quantum counterpart of retaining one copy of its random seed and
sending the other to the successor.
Finally, it applies a $q$-outcome projective measurement 
$\{P_a\}_{a\in[q]}$ on the joint registers $\B_{i-1}\otimes\A_i$ 
and outputs the measurement outcome $c_i\in[q]$ as its color.

With the above in mind, we proceed to analyze the collision probability. Since the measurement projectors
$\{P_a\}_{a\in[q]}$ act on the space $\mathbb C^N\otimes\mathbb C^N$, their images form an orthogonal decomposition of $\mathbb C^N\otimes\mathbb
C^N$. To connect to the classical picture, we 
identify each projector $P_a$ with a matrix space $W_a\subseteq\mathbb C^{N\times N}$
under the vectorization
map\footnote{ Vectorization maps matrices to bipartite vectors by
$\V(\ket{i}\bra{j}) = \ket{i}\otimes \ket{j}$. It is a bijection and preserves the
Hilbert--Schmidt inner product; see \Cref{sec:norm}. } $\V:\mathbb{C}^{N\times N}\to\mathbb C^N\otimes\mathbb C^N$.
Specifically, we write the spectral decomposition as
\[
    P_a=\sum_k\V(M_{a,k})\V(M_{a,k})^\dagger,
\]
where $\{M_{a,k}\}_k$ is Frobenius-orthonormal in $\mathbb{C}^{N\times N}$,
and define the matrix space $W_a:=\operatorname{span}\{M_{a,k}\}_k \subseteq \mathbb{C}^{N\times N}$. 
The measurement therefore induces
the orthogonal decomposition
\[
    \mathbb C^{N\times N}=W_1\oplus\cdots\oplus W_q.
\]
A direct calculation gives a similar formula for the local collision probability:
\begin{equation}\label{eq:overview-unweighted-energy}
    \Pr[c_i=c_{i+1}]
    =\frac{1}{N^3}\sum_{a\in[q]}\mathcal E(W_a),
\end{equation}
where
\[
\mathcal E(W_a):=\sum_{k,\ell}\|M_{a,k}M_{a,\ell}\|_F^2.
\]

The quantity $\mathcal E(W_a)$ is invariant under the choice of orthonormal basis.
It is referred to as the \emph{unweighted energy} of $W_a$, 
which can be obtained from \Cref{def:energy-projective} by setting the weight $\Lambda=I$.
This definition is precisely the noncommutative analogue of 
the energy of a directed graph. To see this,
for a directed graph $G=([N],E)$, define a matrix
space $W_G=\operatorname{span}\{E_{xy}:(x,y)\in E\}$,
where $E_{xy}$ are standard matrix units. Observe that 
$E_{xy}E_{wz}$ is nonzero if and only if $y=w$, 
namely $(x,y,z)$ forms a directed $2$-walk.
Thus $\Es(G)=\mathcal E(W_G)$, and we recover the classical identity \Cref{eq:overview-classical-collision}.

By \Cref{eq:overview-unweighted-energy}, the desired lower bound for local collision probability in this unweighted quantum setting is equivalent
to the following matrix-theoretic problem.
\begin{problem} \label{conj:unweighted-energy}
    Does every orthogonal decomposition, 
    \(
        \mathbb C^{N\times N}=\bigoplus_{a\in[q]}W_a,
    \)
    satisfy
    \[
        \sum_{a\in[q]}\mathcal E(W_a)\geq C_q N^3,
    \]
    where $C_q>0$ is a positive constant depending only on $q$?
\end{problem}

Indeed, after dividing by $N^3$, the conclusion is exactly a
dimension-independent lower bound $C_q>0$ on the local collision probability. We next 
develop the noncommutative analogues of \Cref{lem:directed-mantel,lem:directed-stability}, and then
sketch the solution of \Cref{conj:unweighted-energy} for $q\leq 4$.

\paragraph{Noncommutative analogue of the extremal graph picture.}
For a matrix space $W$ and its orthonormal basis $\{M_k\}$, 
$W$ has zero energy if and
only if $M_kM_\ell=0$ for every $k,\ell$, or equivalently
\[
    W^2:=\operatorname{span}\{XY:X,Y\in W\}=\{0\}.
\]

The following lemma characterizes the extremal structure of zero-energy matrix
spaces, generalizing \Cref{lem:directed-mantel} to the noncommutative setting.
The extremal threshold remains $N^2/4$, with the dimension of a zero-energy
matrix space replacing the size of a graph with no $2$-walks. The vertex
partition $[N]=L\sqcup R$ is replaced by an orthogonal decomposition $\mathbb
C^N=U\oplus U^\perp$.

\begin{lemma}[Noncommutative analogue of \Cref{lem:directed-mantel}]
    \label{lem:unweighted-directed-mantel}
    If $W\subseteq\mathbb C^{N\times N}$ satisfies $\mathcal E(W)=0$, then
    \[
        \dim(W)\leq \frac{N^2}{4}.
    \]
    Equality holds if and only if there is an orthogonal decomposition
    $\mathbb C^N=U\oplus U^\perp$ such that 
    \[
    \dim(U)=\dim(U^\perp)=\frac{N}2
    \qquad\text{and}\qquad
    W=\mathrm L(U^\perp,U),
    \]
    where $\mathrm L(U^\perp,U)$ is the space of linear operators from $U^\perp$ to $U$.
\end{lemma}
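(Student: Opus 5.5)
The plan is to mimic the classical argument: replace "vertices with positive out-degree" by the subspace spanned by the images of matrices in $W$. Define
\[
    U:=\sum_{X\in W}\operatorname{im}(X)\subseteq\mathbb C^N,
    \qquad
    K:=\bigcap_{X\in W}\ker(X).
\]
Zero energy means $XY=0$ for all $X,Y\in W$, hence $\operatorname{im}(Y)\subseteq\ker(X)$ for all $X,Y$, i.e.\ $U\subseteq K$. Every $X\in W$ vanishes on $K$ and maps into $U$, so $X=P_UXP_{K^\perp}$. Therefore $W$ embeds into the space of operators from $K^\perp$ to $U$, and
\[
    \dim W\leq \dim(U)\cdot\dim(K^\perp)\leq \dim(U)\,(N-\dim U),
\]
using $\dim K^\perp=N-\dim K\leq N-\dim U$. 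Setting $u=\dim U$, we get $u(N-u)\leq N^2/4$, which proves the inequality.

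For the equality case, $\dim W=N^2/4$ forces every inequality above to be tight: $u=N/2$, $\dim K=\dim U$ and hence $K=U$, and $W$ is the full space $\{X: X=P_UXP_{U^\perp}\}$. Under the identification of $P_UXP_{U^\perp}$ with a map $U^\perp\to U$, this is exactly $\mathrm L(U^\perp,U)$.

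Conversely, if $W=\mathrm L(U^\perp,U)$ with $\dim U=N/2$, then for $X,Y\in W$ we have $XY=XP_UY=0$, because $X$ kills $U$. So $\mathcal E(W)=0$ and $\dim W=(N/2)^2=N^2/4$.

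There is no substantial obstacle. The only delicate point is to note that zero unweighted energy is equivalent to $W^2=\{0\}$, which the paper already states: each term $\|M_kM_\ell\|_F^2$ vanishes, and bilinearity extends this from the basis to all of $W$. One must also use the kernel intersection $K$ rather than just $U^\perp$, so that the dimension count goes through cleanly before the equality case pins down $K=U$.
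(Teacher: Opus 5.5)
Your proof is correct and follows the paper's argument. Like the paper, you take $U=\sum_{X\in W}\operatorname{Im}(X)$, use $W^2=\{0\}$ to get $W\subseteq\mathrm{L}(U^\perp,U)$, and then bound $\dim W$ by $\dim U\,(N-\dim U)\le N^2/4$, with equality forcing a balanced split and $W=\mathrm{L}(U^\perp,U)$.

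The only difference is your auxiliary $K=\bigcap_{X\in W}\ker X$. It is harmless but not needed, contrary to your closing remark: since $U\subseteq K$, every $X\in W$ already annihilates $U$, so $X=P_UXP_{U^\perp}$ directly. The paper argues this way.
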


We construct the decomposition by setting $U:=\sum_{X\in W}\operatorname{Im}(X)$. Then every $X\in W$ maps into $U$,
and $W^2=\{0\}$ implies that every $X\in W$ annihilates $U$. Therefore
$W\subseteq\mathrm L(U^\perp,U)$, and
\[
    \dim(W)\leq \dim(U)(N-\dim(U))
    \leq \frac{N^2}{4}.
\]
We call $\mathrm{L}(U^\perp,U)$ a \emph{cut space}, 
the noncommutative analogue of a complete directed cut. 
Equality forces $U\oplus U^\perp$ to be balanced and $W=\mathrm L(U^\perp,U)$. 
If we define a unitary matrix $S\in\mathrm{U}(N)$
whose columns form an orthonormal basis of $U$ and $U^\perp$, 
then $W$ is exactly the space supported on the
upper-right block 
\[
W = \left\{
S
\left(\begin{array}{c|c}
            0 & X \\ \hline
            0 & 0
\end{array}\right)
S^\dagger
:
X\in\mathbb C^{\frac{N}2\times\frac{N}2}
\right\},
\]
which generalizes the balanced complete directed cut to the noncommutative setting.

As in the classical argument, we will need a statement of the following robust form.
\begin{lemma}[Noncommutative analogue of \Cref{lem:directed-stability}]
    \label{lem:unweighted-directed-stability}
    Every $W\subseteq\mathbb C^{N\times N}$ admits an orthogonal
    decomposition $\mathbb C^N=U\oplus U^\perp$ such that
    \[
    \mathrm{Tr}((I-P_L)P_W) \leq 2 \sqrt{N\cdot \mathcal E(W)},
    \]
    where $P_W$ and $P_L$ are the orthogonal projectors onto $W$ and
    $\mathrm L(U^\perp,U)$ respectively. Consequently, 
    \begin{enumerate}
        \item (Supersaturation) If $\dim(W)-N^2/4=\Omega(N^2)$, then $\mathcal E(W)=\Omega(N^3)$.
        \item (Stability) If $\dim(W)=(1/4+o(1))N^2$ and $\mathcal E(W)=o(N^3)$, then the corresponding $U$ satisfies
        \[
        \dim(U),\dim(U^\perp)=(1/2+o(1))N\qquad\text{and}\qquad
        \|P_W-P_L\|_1=o(N^2).
        \]
    \end{enumerate}
\end{lemma}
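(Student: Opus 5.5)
Following the classical proof of \Cref{lem:directed-stability}, we replace the degree sequence by the two positive operators
\[
    D^+ := \sum_k M_k M_k^\dagger,\qquad D^- := \sum_k M_k^\dagger M_k ,
\]
where $\{M_k\}$ is a Frobenius-orthonormal basis of $W$. Both are basis-independent, and $\Tr D^+=\Tr D^-=\dim W$. The analogue of the walk-counting identity \eqref{eq:2-walk} is then
\[
    \mathcal E(W)=\sum_{k,\ell}\Tr(M_\ell^\dagger M_k^\dagger M_k M_\ell)=\Tr(D^-D^+).
\]
Here $D^+$ is the out-degree operator, since $\operatorname{Im}(D^+)=\sum_{X\in W}\operatorname{Im}X$, and $D^-$ is the in-degree operator. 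We set $\theta:=\sqrt{\mathcal E(W)/N}$ and take $U$ to be the spectral subspace of $D^+$ for eigenvalues $>\theta$, with $P_U$ the projector onto it. Because $U$ is a spectral subspace, $P_U$ commutes with $D^+$; this is what makes the threshold rule extend to the quantum setting.

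The projector onto the cut space $\mathrm L(U^\perp,U)$ acts as $P_L(X)=P_U X (I-P_U)$, so
\[
    \Tr((I-P_L)P_W)=\sum_k \bigl(\|M_k\|_F^2-\|P_UM_k(I-P_U)\|_F^2\bigr).
\]
Write $\Pi=I-P_U$ and use $\|M\|_F^2=\|\Pi M\|_F^2+\|P_UM\|_F^2$ together with $\|P_UM\|_F^2=\|P_UMP_U\|_F^2+\|P_UM\Pi\|_F^2$. The summand then equals $\|\Pi M_k\|_F^2+\|P_UM_kP_U\|_F^2$, and this is at most $\|\Pi M_k\|_F^2+\|M_kP_U\|_F^2$. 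Summing over $k$ gives
\[
    \Tr((I-P_L)P_W)\le \Tr(\Pi D^+)+\Tr(P_U D^-).
\]
The first term is at most $N\theta$, because $\Pi D^+\Pi$ has all eigenvalues at most $\theta$. For the second term, $P_UD^+P_U\succeq \theta P_U$ gives $P_U\preceq \theta^{-1}P_UD^+P_U$. Since $P_U$ commutes with $D^+$, we have $P_UD^+P_U=(D^+)^{1/2}P_U(D^+)^{1/2}\preceq D^+$, and hence
\[
    \Tr(P_UD^-)=\Tr\bigl((D^-)^{1/2}P_U(D^-)^{1/2}\bigr)\le \theta^{-1}\Tr(D^-D^+)=\mathcal E(W)/\theta.
\]
Adding the two bounds gives $N\theta+\mathcal E(W)/\theta=2\sqrt{N\mathcal E(W)}$. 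If $\mathcal E(W)=0$, we instead take $U=\operatorname{Im}D^+$; \Cref{lem:unweighted-directed-mantel} then applies and the bound is $0$.

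Both consequences follow from the inequality. Since $\Tr(P_LP_W)\le \operatorname{rank}P_L=\dim U\cdot\dim U^\perp\le N^2/4$, we have $\Tr((I-P_L)P_W)\ge \dim W-N^2/4$. An excess of $\Omega(N^2)$ therefore forces $\sqrt{N\mathcal E(W)}=\Omega(N^2)$, i.e. $\mathcal E(W)=\Omega(N^3)$, which is supersaturation.

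For stability, let $\delta:=\Tr((I-P_L)P_W)=o(N^2)$. Then $\dim U\cdot\dim U^\perp\ge\Tr(P_LP_W)\ge (1/4-o(1))N^2$, which forces $\dim U=(1/2+o(1))N$. For the trace-norm closeness we use
\[
    \|P_W-P_L\|_2^2=\dim W+\operatorname{rank}P_L-2\Tr(P_WP_L),
\]
which is $o(N^2)$ because $\Tr(P_WP_L)=\dim W-\delta$ and both ranks are $(1/4+o(1))N^2$. The main obstacle is turning this into a trace-norm bound without losing a factor of $N$ from Cauchy--Schwarz. To do this, write $P_W-P_L=P_W(I-P_L)-(I-P_W)P_L$. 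Hölder's inequality gives
\[
    \|P_W(I-P_L)\|_1\le \sqrt{\operatorname{rank}P_W}\,\|P_W(I-P_L)\|_2=\sqrt{\dim W}\,\sqrt{\delta}=O(N)\cdot o(N)=o(N^2).
\]
Similarly, $\|(I-P_W)P_L\|_1\le\sqrt{\operatorname{rank}P_L}\,\sqrt{\Tr((I-P_W)P_L)}$. Here $\Tr((I-P_W)P_L)=\operatorname{rank}P_L-\dim W+\delta$, and each of these terms is $o(N^2)$ in the appropriate sense, because $\operatorname{rank}P_L\le N^2/4\le \dim W+o(N^2)$. So this term is also $o(N^2)$, and the triangle inequality gives $\|P_W-P_L\|_1=o(N^2)$.
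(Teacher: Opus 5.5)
Your proof is correct and follows essentially the paper's argument: the same out-degree operator $D^+=\sum_k M_kM_k^\dagger$, threshold $\theta=\sqrt{\mathcal E(W)/N}$ and spectral split, with your bound $\Tr(\Pi D^+)+\Tr(P_UD^-)$ being exactly the paper's ``deleted row sums plus deleted column sums'' estimate on the flows $y_{ij}$, written in operator form instead of eigenbasis coordinates. One small remark: the trace-norm step is not really an obstacle, since $P_W-P_L$ acts on an $N^2$-dimensional space, so $\|P_W-P_L\|_1\le N\|P_W-P_L\|_2=N\cdot o(N)=o(N^2)$ follows directly from the Frobenius bound you already had (your splitting argument is also valid).
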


The quantity $\Tr((I-P_L)P_W)$ 
measures how much of $W$ falls outside $\mathrm L(U^\perp,U)$, which is exactly the noncommutative version of the
number of edges outside a directed cut,
and $\|P_W-P_L\|_1$ measures 
the difference of $W$ from $\mathrm L(U^\perp,U)$.
Indeed, for a directed
graph $G=([N],E)$ and a vertex partition $[N]=L\sqcup R$,
if we set $W=\operatorname{span}\{E_{xy}:(x,y)\in E\}$ and
$U=\operatorname{span}\{e_x:x\in L\}$, then
$\Tr((I-P_L)P_W)=|E\setminus(L\times R)|$
and 
$\|P_W-P_L\|_1=|E\mathbin\triangle(L\times R)|$,
which recovers \Cref{lem:directed-stability}.

The proof of \Cref{lem:unweighted-directed-stability} uses the same source--sink
separation as in the classical case, applied to the spectrum of an out-degree
matrix. Let $\{M_k\}$ be an orthonormal basis of $W$, 
and define a basis-independent PSD matrix $B$ with its spectral decomposition as
\[
    B:=\sum_k M_kM_k^\dagger
      =\sum_{i=1}^N\beta_i u_i u_i^\dagger,
\]
where $N\geq\beta_1\geq\cdots\geq\beta_N\geq0$. 

The matrix $B$ is the spectral analogue of the out-degree sequence. 
For the coordinate space defined by a directed graph, $B$ is diagonal with entries 
being precisely the
out-degrees.  In general, its eigenvectors 
$\{u_i\}$ can be viewed as ``vertices'', 
and the corresponding eigenvalue $\beta_i$ is the ``out-degree'' of
$u_i$.  If $\mathcal E(W)=0$, \Cref{lem:unweighted-directed-mantel} already places $W$ inside
a cut space. Otherwise, we choose
the orthogonal decomposition $U\oplus U^\perp$ by 
partitioning the eigenvectors as
\[
    U:=\operatorname{span}\{u_i:\beta_i>\theta\}
    \quad\text{and}\quad
    U^\perp:=\operatorname{span}\{u_i:\beta_i\leq\theta\},
\]
where $\theta:=\sqrt{\mathcal E(W)/N}$, the same 
threshold as in the classical case.

To define the analogue of in-degree, for $i,j\in[N]$, set 
$y_{ij}:=\sum_k|u_i^\dagger M_ku_j|^2$ and 
its column sum $\alpha_j:=\sum_i y_{ij}$. 
Then $y_{ij}$ measures the ``flow'' from $u_i$ to $u_j$, and 
$\alpha_j$ is the ``in-degree'' of $u_j$.  
Naturally, the $i$-th row sum of $y_{ij}$ equals $\beta_i$. 
One can verify that the matrix-space energy satisfies the noncommutative
counterpart of \Cref{eq:2-walk}:
\[
    \mathcal E(W)=\sum_{j\in[N]}\beta_j\alpha_j.
\]
All the flows from $U$ to $U^\perp$,
i.e., from high eigenspace to low eigenspace, are contained in the 
cut space $\mathrm L(U^\perp,U)$, 
\footnote{We note that
the orientation conventions of directed graphs and matrix multiplication are reversed:
a directed edge $(x,y)$ goes from $x$ to $y$, while matrix $E_{xy}$ maps $e_y$ to $e_x$.
}
so the mass of $W$ outside $\mathrm L(U^\perp,U)$ is
\[
    \Tr((I-P_L)P_W)
    =\sum_{\beta_i\leq\theta\ \mathrm{or}\ \beta_j>\theta}y_{ij}.
\]
Observe that the summation can be bounded by 
the sum of rows with $\beta_i\leq\theta$
plus the sum of columns with $\beta_j>\theta$,
which correspond to deleted out-degrees and deleted in-degrees respectively
in the classical picture. Thus
\[
\begin{aligned}
    \Tr((I-P_L)P_W)
    \leq \sum_{\beta_i\leq\theta}\beta_i
       +\sum_{\beta_j>\theta}\alpha_j 
    \leq N\theta
       +\frac{1}{\theta}\sum_{\beta_j>\theta}\beta_j\alpha_j 
    \leq N\theta+\frac{\mathcal E(W)}{\theta}
     =2\sqrt{N\cdot\mathcal E(W)},
\end{aligned}
\]
which further implies the supersaturation and stability 
by the same classical reasoning.

\paragraph{Quantum impossibility for $q\leq 4$ in the unweighted setting.}
The quantum proof now follows the outline of the classical one.  For
$q\leq3$ colors, averaging gives a space $W_a$ of dimension at least $N^2/3$,
noticeably larger than the extremal threshold $N^2/4$, so
supersaturation immediately gives $\mathcal E(W_a)=\Omega(N^3)$.

For $q=4$, suppose toward a contradiction that the total energy is $o(N^3)$.  By averaging, one
of the four spaces, say $W_1$, has dimension at least $N^2/4$. Since its energy is
also $o(N^3)$, supersaturation forces
$\dim(W_1)=(1/4+o(1))N^2$.  Stability then gives an orthogonal
decomposition $\mathbb C^N=U\oplus U^\perp$ such that
\(
    \dim(U),\dim(U^\perp)=(1/2+o(1))N
\) and \(
    \|P_{W_1}-P_L\|_1=o(N^2)
\), 
where $P_{W_1}$ and $P_L$ are the orthogonal projectors onto $W_1$ and
$\mathrm{L}(U^\perp,U)$ respectively.

We next round $W_1$ to the exact zero-energy space.
Replace $W_1$ by $\mathrm{L}(U^\perp,U)$ and adjust the remaining three spaces so that the
four still form a complete orthogonal decomposition. The adjustment can be done within
a trace distance of $o(N^2)$ and changes the total energy by only $o(N^3)$.
After rounding, the first color is supported on $\mathrm{L}(U^\perp,U)$ and hence vanishes on the
diagonal block $\mathrm{L}(U,U)$.
Restricting the other three spaces to $\mathrm{L}(U,U)$ gives a decomposition of
the smaller space $\mathrm{L}(U,U)$ with only three colors.\footnote{
Strictly speaking, the spaces $\{W_a\}_{2\leq a\leq 4}$ after restriction may no longer be orthogonal, 
but their projectors $\{P_{W_a}\}_{2\leq a\leq 4}$ after restriction still form a complete POVM on $\mathrm{L}(U,U)$.
Therefore, in the main proof, we extend the energy definition and all arguments to 
PSD operators
on $\mathbb{C}^{N\times N}$
, not just orthogonal projectors; see \Cref{sec:energy}.
}
Each basis matrix $M_k$ from one of the remaining three color spaces lies in
$\mathrm{L}(U^\perp,U)^\perp$. Relative to the decomposition $U\oplus U^\perp$, it therefore has
the block form
\[
    M_k = \left(\begin{array}{c|c}
            X_k & 0 \\ \hline
            Y_k & Z_k
    \end{array}\right),
\]
where $X_k\in\mathrm{L}(U,U),Y_k\in\mathrm{L}(U,U^\perp),Z_k\in\mathrm{L}(U^\perp,U^\perp)$.
Since multiplication is preserved in the upper-left block $\mathrm{L}(U,U)$:
\[
    M_kM_\ell = \left(\begin{array}{c|c}
            X_kX_\ell & 0 \\ \hline
            * & *
    \end{array}\right),
\]
we have $\|M_kM_\ell\|_F^2\geq \|X_kX_\ell\|_F^2$. Summing this
inequality and using the energy formula \eqref{eq:overview-unweighted-energy} shows 
that restricting to $\mathrm{L}(U,U)$ does not increase the energy.\footnote{
    One may notice that $\{X_k\}$ may not be orthogonal, 
    but the energy formula still works; see \Cref{fact:energy}.
}
Since $\dim(U)=(1/2+o(1))N$, applying the $3$-color case gives
$\Omega(\dim(U)^3)=\Omega(N^3)$ total energy
in $\mathrm{L}(U,U)$, and thus in the original four-color decomposition, which contradicts the assumption that
the total energy is $o(N^3)$.

\subsubsection{The General Case}
In the classical model, there is essentially no loss in starting with a
uniform random seed: unbounded local computation can 
approximate any desired finite distribution with 
sufficiently many uniform bits. However, the quantum setting is
different. The bipartite state shared across an edge after communication may
have nonuniform Schmidt coefficients, so it cannot be assumed to be maximally
entangled.\footnote{A maximally entangled state can be converted into any
bipartite pure state by local operations assisted by further classical communication
\cite{nielsen2010quantum}, but that additional communication is unavailable in
our one-way one-round model.} 
Since we cannot assume the weight matrix $\Lambda=I$ in the general case, we will have to deal with non-commutative $\Lambda$, which leads to several challenges we discuss below.

Since the Schmidt bases can be absorbed into the local measurement, it suffices
to consider quantum states of the form
\[
\ket{\Psi} = \sum_{j=1}^N \lambda_j \ket{j}_{\A}\otimes \ket{j}_{\B},
\]
where $\lambda_j\geq 0$ for $j\in[N]$ and $\sum_{j=1}^N \lambda_j^2 = 1$.

The general one-way one-round quantum algorithm can be 
described as follows.
Each node $v_i$ initially prepares $\ket{\Psi}_{\A_i\B_i}$, then sends register $\B_i$ to its successor, and
finally performs the same $q$-outcome projective measurement
$\{P_a\}_{a\in[q]}$ on the joint registers $\B_{i-1}\otimes\A_i$ and outputs the measurement outcome $c_i\in[q]$ as its color.
\footnote{
    Projective measurements suffice in this setting:
    Since we allow arbitrary quantum messages, 
    we can use Naimark's dilation theorem to 
    realize any POVM as a projective measurement \cite{nielsen2010quantum}.
} 

As before, we write $W_a$ for the matrix space associated
with $P_a$, and $\{M_{a,k}\}_k$ for an orthonormal basis of
$W_a$. Let $\Lambda:=\operatorname{diag}(\lambda_1,\ldots,\lambda_N)$
be the weight matrix encoding the Schmidt coefficients of the quantum message,
which is a PSD matrix normalized by $\|\Lambda\|_F^2=1$. 
Then the local collision probability becomes
\[
    \Pr[c_i=c_{i+1}]
    = \sum_{a\in[q]}\mathcal E_\Lambda(W_a),
\]
where 
\[
    \mathcal E_\Lambda(W_a)
    =\sum_{k,\ell}
      \|\Lambda M_{a,k}\Lambda M_{a,\ell}\Lambda\|_F^2
\]
is precisely the $\Lambda$-energy of $W_a$ defined in \Cref{def:energy-projective}. 
If we set $\Lambda=I/\sqrt{N}$, then we recover the unweighted energy 
in \eqref{eq:overview-unweighted-energy} with a normalizing factor $1/N^3$.
Consequently, the impossibility with general quantum messages is equivalent to
lower bounding the \emph{total weighted energy}.

\begin{problem}
Does every normalized $\Lambda$ and orthogonal decomposition,
$\mathbb{C}^{N\times N}=\bigoplus_{a\in[q]}W_a$, satisfy
\[
\sum_{a\in[q]} \mathcal E_\Lambda(W_a)\geq C_q,
\]
where $C_q>0$ is a positive constant depending only on $q$?
\end{problem}

The extremal structure in \Cref{lem:unweighted-directed-mantel}
persists in this weighted setting, with $\Lambda$-mass
replacing the dimension. For a matrix space $W$ with orthonormal
basis $\{M_k\}$, define its $\Lambda$-mass as
\[
    m_\Lambda(W)=\sum_k\|\Lambda M_k\Lambda\|_F^2.
\]

For any normalized $\Lambda$, the $\Lambda$-mass can be viewed as a weighted notion of density,
since we have $\sum_{a\in[q]} m_\Lambda(W_a)=1$. 
Operationally, the mass $m_\Lambda(W_a)$ equals the marginal probability of a single node
$v_i$ outputting the color $a \in[q]$ in the above quantum algorithm.
We show the following theorem that can be viewed as the weighted 
version of the extremal structure in \Cref{lem:unweighted-directed-mantel}. 
Specifically, we will need the extremal structure to be a cut space $\mathrm{L}(U^\perp,U)$, where
$U$ and $U^\perp$ carry equal weight and are both invariant under $\Lambda$. We discuss these properties after stating the theorem below.

\begin{theorem}[informal, see \Cref{thm:ze-mm}] \label{thm:ze-mm-informal}
If $\mathcal E_\Lambda(W)=0$, then
$m_\Lambda(W)\leq1/4$. 
If equality holds,
after restricting to the support of $\Lambda$, there exists an orthogonal decomposition
$\mathbb C^N=U\oplus U^\perp$ satisfying
\begin{enumerate}[
    label=(P\arabic*),
    ref=(P\arabic*),
    leftmargin=1.5cm,
]
    \item\label{W1}
    $W=\mathrm L(U^\perp,U)$;
    \hfill(cut space)

    \item\label{W2}
    $\Tr(\Lambda^2\Pi_U)
      =\Tr(\Lambda^2\Pi_{U^\perp})
      =\frac12$;
    \hfill(balanced weight)

    \item\label{W3}
    $\Lambda U=U$ and $\Lambda U^\perp=U^\perp$;
    \hfill($\Lambda$-invariance)
\end{enumerate}
where $\Pi_U$ and $\Pi_{U^\perp}$ are the orthogonal projectors onto $U$ and $U^\perp$.

\end{theorem}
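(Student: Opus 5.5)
The plan is to transfer the zero-energy condition from $W$ to an \emph{unweighted} nilpotent matrix space, and then bound the mass by a trace inequality whose equality cases produce \ref{W1}--\ref{W3}.

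\textbf{Step 1: reduction.} We may take $\Lambda=\operatorname{diag}(\lambda_1,\dots,\lambda_N)$; the Schmidt bases are absorbed as in the overview. We then restrict to the support of $\Lambda$. Anything in $W$ supported off this block is invisible to both $m_\Lambda$ and $\mathcal E_\Lambda$, so we may assume $\Lambda$ is invertible.

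\textbf{Step 2: transfer to a nilpotent space.} Let $\Gamma(M):=\Lambda^{1/2}M\Lambda^{1/2}$, a positive operator on $\mathbb C^{N\times N}$, and set $Y_k:=\Gamma(M_k)$ for an orthonormal basis $\{M_k\}$ of $W$. Since $\Lambda M_k\Lambda M_\ell\Lambda=\Lambda^{1/2}Y_kY_\ell\Lambda^{1/2}$ and $\Lambda$ is invertible, zero energy is equivalent to $Y_kY_\ell=0$ for all $k,\ell$. In other words, the space $V:=\Gamma(W)$ satisfies $V^2=\{0\}$.

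Exactly as in the proof of \Cref{lem:unweighted-directed-mantel}, put $U:=\sum_{Y\in V}\operatorname{Im}(Y)$. Then $V\subseteq L:=\mathrm L(U^\perp,U)$, whose orthogonal projector is $P_L=\Pi_U\otimes\Pi_{U^\perp}^{T}$ under vectorization.

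\textbf{Step 3: the mass bound.} Write $\Phi:=\Gamma^2$, i.e.\ $M\mapsto\Lambda M\Lambda$, which corresponds to $\Lambda\otimes\Lambda^T$. Then
\[
m_\Lambda(W)=\Tr(\Phi P_W\Phi)=\Tr(\Gamma Q\Gamma),\qquad Q:=\Gamma P_W\Gamma .
\]
The operator $Q$ is PSD with range in $V\subseteq L$, and $Q\le\Gamma^2=\Phi$. Hence $Q=P_LQP_L\le P_L\Phi P_L$, and therefore
\[
m_\Lambda(W)\le\Tr(\Gamma P_L\Phi P_L\Gamma)=\Tr(P_L\Phi P_L\Phi).
\]
By the tensor structure this equals
\[
\Tr(\Pi_U\Lambda\Pi_U\Lambda)\cdot\Tr(\Pi_{U^\perp}\Lambda\Pi_{U^\perp}\Lambda).
\]
Each factor satisfies $\Tr(\Pi\Lambda\Pi\Lambda)\le\Tr(\Pi\Lambda^2)$, because $\Tr(\Pi\Lambda^2)-\Tr(\Pi\Lambda\Pi\Lambda)=\|(I-\Pi)\Lambda\Pi\|_F^2\ge 0$. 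Since $\Tr(\Pi_U\Lambda^2)+\Tr(\Pi_{U^\perp}\Lambda^2)=\|\Lambda\|_F^2=1$, AM--GM gives $m_\Lambda(W)\le 1/4$.

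\textbf{Step 4: equality analysis.} If $m_\Lambda(W)=1/4$, every inequality in Step 3 is tight.
\begin{itemize}
\item AM--GM is tight, which gives the balanced weight condition \ref{W2}.
\item Both bounds $\Tr(\Pi\Lambda\Pi\Lambda)\le\Tr(\Pi\Lambda^2)$ are tight. This forces $(I-\Pi_U)\Lambda\Pi_U=0$, i.e.\ $\Lambda$ commutes with $\Pi_U$, which gives \ref{W3} (using invertibility, $\Lambda U=U$).
\item The bound $\Tr(\Gamma Q\Gamma)\le\Tr(\Gamma P_L\Phi P_L\Gamma)$ is tight. Since $\Gamma$ is injective, $P_L\Phi P_L-Q$ is a PSD operator with $\Tr(\Gamma(\cdot)\Gamma)=0$, so $Q=P_L\Phi P_L$.
\end{itemize}
Once $\Lambda$ commutes with $\Pi_U$, $\Gamma$ preserves $L$, so $P_L\Phi P_L=\Gamma P_L\Gamma$. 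Cancelling the invertible $\Gamma$ on both sides of $Q=\Gamma P_L\Gamma$ gives $P_W=P_L$, i.e.\ \ref{W1}.

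\textbf{Main obstacle.} The delicate point is Step 3: choosing the right comparison operator so that the non-commutativity of $\Lambda$ with $U$ appears only through $\Tr(\Pi\Lambda\Pi\Lambda)\le\Tr(\Pi\Lambda^2)$. The naive attempt of bounding $\Tr(P_T\,\Phi\Phi^*)$, with $T=\Gamma(L)$ non-orthogonally embedded, is messy. The trick of conjugating by $\Gamma$ and using $Q=P_LQP_L\le P_L\Phi P_L$ avoids this. I would also double-check the reduction to the support of $\Lambda$ in Step 1, to make sure it does not lose mass or break the decomposition.
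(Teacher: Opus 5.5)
Your proof is correct, and it takes a genuinely different route from the paper's.

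\textbf{The paper's route.} The paper works with $W$ itself and with $U_0:=\sum_{X\in W}\mathrm{Im}(X)$. Because zero energy only gives $X\Lambda Y=0$, it obtains the skewed containment $W\subseteq\mathrm L((\Lambda U_0)^\perp,U_0)$. This yields $m_\Lambda\le\alpha(1-\beta)$, where $\alpha=\Tr(\Lambda^2\Pi_{U_0})$ and $\beta=\Tr(\Lambda^2\Pi_{\Lambda U_0})$. The extra ingredient is $\alpha\le\beta$ (\Cref{lem:ze-alpha-beta}). It is proved from the explicit projector formula $\Pi_{\Lambda U_0}=\Lambda V(V^\dagger\Lambda^2V)^{-1}V^\dagger\Lambda$ together with positivity of $V^\dagger\Lambda^2\Pi_{U_0^\perp}\Lambda^2V$. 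Invariance then comes from $\alpha=\beta\Rightarrow\Lambda^2U_0\subseteq U_0$.

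\textbf{Your route.} You conjugate by $\Lambda^{1/2}$ on both sides, so zero energy becomes honest nilpotency of $V=\Lambda^{1/2}W\Lambda^{1/2}$, which sits in an orthogonal cut space. Your $U$ is $\Lambda^{1/2}U_0$, and the two coincide in the equality case. Non-invariance then enters only through the off-diagonal block: your bound reads $m_\Lambda(W)\le(\eta-\delta)(1-\eta-\delta)$, with $\eta=\Tr(\Lambda^2\Pi_U)$ and $\delta=\|\Pi_U\Lambda\Pi_{U^\perp}\|_F^2$.

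\textbf{What each buys.} Your approach avoids the pseudo-inverse and Schur-complement computation and reuses the unweighted nilpotent-space argument directly. It also produces exactly the quantities $\eta,\delta$ that the paper later has to control in the stability theorem (compare \Cref{lem:lambda-close}). The paper's approach keeps $U$ intrinsic to $W$ rather than to a reweighted copy.

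\textbf{On your Step~1 concern.} Your argument only uses $0\preceq P_W\preceq I$. It therefore applies verbatim to the compressed operator $(\Pi_\Lambda\otimes\bar\Pi_\Lambda)P_W(\Pi_\Lambda\otimes\bar\Pi_\Lambda)$, which is what ``restricting to the support'' actually produces and is generally not a projector. This is also the setting of the formal \Cref{thm:ze-mm}.

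Moreover, $\Lambda^{1/2}Z\Lambda^{1/2}$ lives on the support of $\Lambda$, so $\Lambda Z\Lambda=0$ already forces $\Lambda^{1/2}Z\Lambda^{1/2}=0$. Hence $V^2=\{0\}$ and the bound $m_\Lambda\le 1/4$ hold for singular $\Lambda$ without any restriction. Invertibility is needed only in the equality case, to cancel $\Gamma$ and to conclude $\Lambda U=U$. The diagonalization in Step~1 is harmless but never used.
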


The first two properties have a classical interpretation. For a directed
graph $G=([N],E)$ and its associated space $W_G$, assign each vertex $x$ positive\footnote{
The weights are positive because \Cref{thm:ze-mm-informal} restricts to the support of $\Lambda$. 
This restriction is necessary since edges incident to zero-weight vertices do not affect the energy and thus are unconstrained by the zero-energy condition.
}
weight $w_x:=\lambda_x^2$. Then $m_\Lambda(W_G)=\sum_{(x,y)\in E}w_xw_y$ and
$\mathcal E_\Lambda(W_G)=\sum_{(x,y),(y,z)\in E}w_xw_yw_z$. Thus $\Lambda$-mass
is the total weight of edges, and $\Lambda$-energy is the total weight of
directed $2$-walks. Since all weights are positive, 
a zero-energy $G$ must have no $2$-walks, and thus be contained in a directed cut. As $\sum_x
w_x=1$, the mass reaches the extremal threshold $1/4$ precisely when $G$ is a
complete directed cut $L\times R$, as in \ref{W1}, with each side carrying half
of the total vertex weight, as in \ref{W2}. 
In the reduction from $4$ to $3$ colors, these two properties allow us to restrict to the smaller space $\mathrm{L}(U,U)$, where one color vanishes while a constant fraction of the total $\Lambda$-mass remains.

The invariance property \ref{W3} is specific to the quantum setting, since classically $U$ and $U^\perp$ are
spanned by standard basis vectors, so a diagonal $\Lambda$ automatically preserves them. 
This invariance 
is crucial to the $4$-color reduction because it ensures that weighted multiplication is preserved inside  $\mathrm{L}(U,U)$. 
Establishing its approximate counterpart is the main technical challenge in
proving the weighted stability theorem below. 

We next need a robust version of this extremal statement. 
To prove the weighted supersaturation,
we use the operational interpretation that mass is the marginal probability of
a color, while energy is the probability of seeing that color at two
adjacent nodes. Then a bounded-dependence inequality
\cite{gandolfi1989extremal} implies that mass $1/4+\eta$ forces
$\Omega(\eta)$ energy; see \Cref{coro:weighted-large-energy}. 
Moreover, we prove the following weighted stability theorem.

\begin{theorem}[informal, see \Cref{thm:weighted-stability}] \label{thm:weighted-stability-informal}
If $W$ has $\Lambda$-mass $1/4$ and $\Lambda$-energy $\epsilon=o(1)$,
then there exist a nearby PSD weight matrix $\Lambda'$ and a nearby cut space
$W'=\mathrm L(U^\perp,U)$ such that $W'$ has $\Lambda'$-mass $1/4$ and
$\Lambda'$-energy exactly zero. 
\end{theorem}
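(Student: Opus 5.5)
The plan is to mimic the unweighted stability argument (\Cref{lem:unweighted-directed-stability}), but with weighted out-degree and in-degree matrices, and then to repair the loss of $\Lambda$-invariance by perturbing $\Lambda$.

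\textbf{Step 1: spectral source--sink split.} Work in the eigenbasis of $\Lambda=\operatorname{diag}(\lambda_j)$, restricted to its support. Let $\{M_k\}$ be an orthonormal basis of $W$ and set $\tilde M_k:=\Lambda M_k\Lambda$. Define the weighted out-degree operator
\[
B:=\sum_k \tilde M_k\tilde M_k^\dagger=\sum_i\beta_i u_iu_i^\dagger ,
\]
and the flows $y_{ij}:=\sum_k|u_i^\dagger \tilde M_k \Lambda^{-1}u_j|^2$, which are weighted so that the in-degrees $\alpha_j$ satisfy
\[
\mathcal E_\Lambda(W)=\sum_j \beta_j\alpha_j ,
\]
the analogue of \eqref{eq:2-walk}. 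With $U:=\operatorname{span}\{u_i:\beta_i>\theta\}$ and $\theta\approx\sqrt{\epsilon}$, the threshold computation gives that the $\Lambda$-mass of $W$ outside $\mathrm L(U^\perp,U)$ is $O(\sqrt\epsilon)$. The normalization $\sum\lambda_j^2=1$ plays the role of $N$.

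\textbf{Step 2: mass of the cut space.} Since the mass of $W$ is $1/4$, the part of $W$ inside $\mathrm L(U^\perp,U)$ has mass $1/4-O(\sqrt\epsilon)$. The mass of the full cut space is
\[
m_\Lambda(\mathrm L(U^\perp,U))=\Tr(\Pi_U\Lambda^2\Pi_U\cdots)\le \Tr(\Lambda^2\Pi_U)\,\Tr(\Lambda^2\Pi_{U^\perp}).
\]
Equality in this bound requires $U$ to be $\Lambda$-invariant; in general the quantity equals $\|\Pi_U\Lambda\Pi_{U^\perp}\|$-type expressions. Nearly saturating $1/4$ therefore forces two things: the weights are nearly balanced, $\Tr(\Lambda^2\Pi_U)=1/2\pm O(\epsilon^{1/4})$, and the off-diagonal block $\Pi_{U^\perp}\Lambda\Pi_U$ is small in a weighted Frobenius sense. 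This yields approximate versions of \ref{W2} and \ref{W3}.

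\textbf{Step 3: rounding.} Define $\Lambda':=\Pi_U\Lambda\Pi_U+\Pi_{U^\perp}\Lambda\Pi_{U^\perp}$, which is exactly invariant, and rescale the two blocks so that each carries weight $1/2$. Take $W':=\mathrm L(U^\perp,U)$. Then $W'^2=0$ and $\Lambda'$ preserves $U$, so $\mathcal E_{\Lambda'}(W')=0$ and $m_{\Lambda'}(W')=1/4$. Closeness of $\Lambda'$ to $\Lambda$ (in Frobenius norm) and of $P_{W'}$ to $P_W$ (in the weighted trace norm) follows from Steps 1--2 and the continuity of mass and energy in $(\Lambda,W)$, proved in \Cref{sec:energy}.

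\textbf{Main obstacle.} The hard part is Step 2: turning near-saturation of the mass bound into quantitative near-invariance of $U$ under a noncommuting $\Lambda$. I would first try to write
\[
m_\Lambda(\mathrm L(U^\perp,U))=\sum_{i\in U,\,j\in U^\perp}|u_i^\dagger\Lambda^2 u_j'|^{2}\text{-type expressions}
\]
and use Cauchy--Schwarz together with $\Tr(\Lambda^2)=1$. This would bound the deficit $1/4-m$ below by $\Omega(\|\Pi_{U^\perp}\Lambda\Pi_U\|_F^2)$ plus the imbalance squared. Small eigenvalues of $\Lambda$ may obstruct this; I would handle them by first discarding low-weight eigenvectors at a cost of $o(1)$.
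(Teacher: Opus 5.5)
Your Step 3 construction (threshold eigenspace $U$, cut space $W'=\mathrm L(U^\perp,U)$, and $\Lambda'$ obtained by deleting the off-diagonal blocks of $\Lambda$ and rescaling) is exactly the paper's. There is, however, a genuine gap at the point you flag yourself. Nothing in your argument bounds $\delta:=\|\Pi_U\Lambda\Pi_{U^\perp}\|_F^2$, and the mechanism you propose in Step 2 cannot work. Write $P'=\Pi_{\mathrm L(U^\perp,U)}=\Pi_U\otimes\bar\Pi_{U^\perp}$. Then the mass of the cut space is an identity,
\[
m_\Lambda(P')=\Tr\bigl(P'(\Lambda^2\otimes\bar\Lambda^2)\bigr)=\Tr(\Lambda^2\Pi_U)\,\Tr(\Lambda^2\Pi_{U^\perp})=\eta(1-\eta),
\]
and it holds for every $U$, whether $\Lambda$-invariant or not. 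So $1/4-m_\Lambda(P')=(\eta-1/2)^2$ exactly. The lower bound $\Omega(\delta)+(\text{imbalance})^2$ you hope to get from Cauchy--Schwarz is therefore false whenever $U$ is balanced but not invariant. Consequently, knowing $\Tr(\tilde PP')\ge 1/4-o(1)$, where $\tilde P:=(\Lambda\otimes\bar\Lambda)P(\Lambda\otimes\bar\Lambda)\preceq\Lambda^2\otimes\bar\Lambda^2$, yields only the balance $|\eta-1/2|=O(\epsilon^{1/4})$ (this is \Cref{lem:tr-lambda-pi}). Yet $\delta$ cannot be avoided. Deleting the off-diagonal blocks alone costs $\|\Lambda-\Lambda'\|_F^2\ge 2\delta$. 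The bound on $\|P-P'\|_\Lambda$ also carries an $O(\sqrt\delta)$ term from $\Lambda\Pi_U\Lambda\neq\Pi_U\Lambda^2\Pi_U$. So appealing to continuity does not finish Step 3.

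Step 1 is also incorrect as stated, and in a way that makes your argument circular. Your $B=\sum_k\tilde M_k\tilde M_k^\dagger=\Lambda\bigl(\sum_kM_k\Lambda^2M_k^\dagger\bigr)\Lambda$ has eigenvalues that scale with $\Lambda^2$. A fixed threshold $\theta\approx\sqrt\epsilon$ therefore bounds $\sum_{\beta_i\le\theta}\beta_i$ only by $\theta$ times the dimension of the low eigenspace. Your flows $y_{ij}$, which include the $\Lambda^{-1}$, are not the mass flows $\sum_k|u_i^\dagger\tilde M_ku_j|^2$. Their column sums $u_j^\dagger A u_j$ control $\Tr(\Pi_U A)$, not the needed $\Tr(\Pi_U\Lambda A\Lambda)$. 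The remark that $\sum_j\lambda_j^2=1$ ``plays the role of $N$'' silently assumes that $U$ commutes with $\Lambda$. Even with the right out-degree operator $B=\sum_kM_k\Lambda^2M_k^\dagger\preceq I$ (threshold on its spectrum, with $\mathcal E_\Lambda=\Tr(\Lambda A\Lambda B)$), the row part of the leakage is $\Tr(\Pi_{U^\perp}\Lambda B\Lambda\Pi_{U^\perp})\le\delta+\sqrt\epsilon$ (\Cref{lem:upper-right}). So the $O(\sqrt\epsilon)$ leakage you feed into Step 2 already presupposes $\delta=O(\sqrt\epsilon)$.

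The missing idea is a second, independent use of the small energy to control $\delta$. The paper observes that $2\delta$ is exactly the pretty-good-measurement error for discriminating $\sigma_0=\Lambda\Pi_{U^\perp}\Lambda$ from $\sigma_1=\Lambda\Pi_U\Lambda$. It then exhibits the explicit test $V=\sum_kM_k^\dagger\Pi_U\Lambda^2\Pi_UM_k/\Tr(\Pi_U\Lambda^2)$, whose two error terms are $O(\sqrt\epsilon)$. This uses $\sqrt\epsilon\,\Pi_U\preceq B$, $\Pi_{U^\perp}B\Pi_{U^\perp}\preceq\sqrt\epsilon\, I$, the energy identity, and $m_\Lambda(P)=1/4$. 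Finally, $\delta=O(\sqrt\epsilon)$ follows from $\mathrm{PGM}\le 2\,\mathrm{OPT}$ (\Cref{lem:delta-bound}). Discarding small eigenvalues of $\Lambda$ does not substitute for this. The obstruction is the misalignment of $U$ with the eigenspaces of $\Lambda$, not small weights.
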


As in the unweighted proof, we apply the spectral source--sink decomposition
to the weighted out-degree matrix
\(
    B:=\sum_k M_k\Lambda^2M_k^\dagger,
\)
where $\{M_k\}$ is an orthonormal basis of $W$.  
Setting the threshold to $\sqrt\epsilon$ gives a candidate decomposition
$\mathbb C^N=U\oplus U^\perp$, and hence a candidate cut space
$W'=\mathrm L(U^\perp,U)$.  
If $\Lambda$ already preserves $U\oplus U^\perp$, then the same argument as in the unweighted case will show that $W$
is close to $W'$. Otherwise, we need to find a nearby $\Lambda'$ that does. 

We construct $\Lambda'$ by deleting
the off-diagonal blocks of $\Lambda$ with respect to $U\oplus U^\perp$ and
then rescaling the two diagonal blocks so that each side carries half of the
weight.\footnote{
    The new weight $\Lambda'$ may not be diagonal, 
    but it remains a normalized PSD matrix. Every such $\Lambda'$ can be vectorized as a bipartite
    quantum state $\V(\bar{\Lambda}')$, and the quantum algorithm using this
    state produces the corresponding $\Lambda'$-mass and $\Lambda'$-energy; see
    \Cref{lem:op-energy}.  Thus our formal definitions do not require the weight
    matrix to be diagonal.
}
This makes $(W',\Lambda')$ an exact zero-energy pair with mass $1/4$.
To show that it remains close to $(W,\Lambda)$, the key step is to prove that
small energy $\epsilon$ implies that
$\Lambda$ approximately preserves $U\oplus U^\perp$.
Specifically, we need to control the off-diagonal quantity
\[
    \delta:=\|
    \Pi_U
    \Lambda \Pi_{U^\perp}\|_F^2.
\]

We prove that $\delta$ is small by switching to 
a view of quantum state discrimination. Consider the two subnormalized states
\[
    \sigma_0:=\Lambda \Pi_{U^\perp}\Lambda,
    \qquad
    \sigma_1:=\Lambda \Pi_U\Lambda.
\]
Their pretty good measurement is essentially $\{\Pi_{U^\perp},\Pi_U\}$, and a direct
calculation shows that its total error is exactly $2\delta$.  Thus, instead
of bounding this off-diagonal block directly, 
we explicitly construct a measurement that distinguishes $\sigma_0$ from $\sigma_1$ with $O(\sqrt\epsilon)$ error; see \Cref{lem:delta-bound}.
The pretty-good-measurement bound \cite{barnum2002reversing} then implies $\delta=O(\sqrt\epsilon)$, so $\Lambda$ approximately preserves $U\oplus U^\perp$. Combining
this approximate invariance with the bounds in the unweighted case
gives the desired
closeness of $(W,\Lambda)$ to $(W',\Lambda')$.

Finally, 
by plugging the weighted stability theorem into 
the previous four-color arguments, 
we can show that any decomposition $\bigoplus_{a\in[4]}W_a$,
under any weight $\Lambda$, has a total $\Lambda$-energy
$\sum_{a\in[4]}\mathcal E_\Lambda(W_a)=\Omega(1)$. 
This gives a constant lower bound on the local collision probability
for arbitrary quantum messages, which proves \Cref{thm:q4-informal}.

\section{Preliminaries} \label{sec:prelim}

\subsection{Matrix Norms}\label{sec:norm}

\paragraph{Notation.}
We use $[n]$ to denote the set $\{1,2,\ldots,n\}$.
Given two linear spaces $S$ and $T$, we denote by $\mathrm{L}(S,T)$ the space of
linear operators from $S$ to $T$, $\mathrm{L}(S):=\mathrm{L}(S,S)$,
denote by $S^\perp$ the orthogonal complement of $S$,
and denote by $\Pi_S$ the orthogonal projector onto $S$.
Given a matrix $X\in\mathbb{C}^{M\times N}$, we denote by 
$\bar{X}$ the conjugate of $X$, $X^\top$ the transpose of $X$, $X^\dagger := \bar{X}^\top$ the conjugate transpose of $X$,
and $X^+$ the Moore--Penrose inverse (pseudoinverse) of $X$.

\begin{definition}[Frobenius norm]
    Given a matrix $X\in\mathbb{C}^{M\times N}$, the Frobenius norm of $X$ is defined as
    \[
    \|X\|_F := \sqrt{\Tr(X^\dagger X)} = \sqrt{\sum_{i,j} |X_{i,j}|^2}.
    \]
\end{definition}

\begin{definition}[Schatten $p$-norm]
    Given $p \in [1, \infty]$ and
    any matrix $X\in\mathbb{C}^{M\times N}$, the Schatten $p$-norm of $X$ is defined as
    \[
    \|X\|_p := \left(\Tr(|X|^p)\right)^{1/p}
    \quad\text{for $p\in[1,\infty)$,}\qquad
    \|X\|_\infty = \max_{\|v\|_2=1} \|Xv\|_2,
    \]
    where $|X| := \sqrt{X^\dagger X}$.
    In particular, $\|X\|_1$ is the trace norm, $\|X\|_2$ coincides with $\|X\|_F$, and
    $\|X\|_\infty$ is the operator norm.
\end{definition}

\begin{fact}[Schatten H\"older's inequality]
    Given $p, q \in [1, \infty]$ such that $\frac1p+ \frac1q = 1$, for any matrices $X,
    Y\in\mathbb{C}^{M\times N}$, we have
    \[
    |\Tr(X^\dagger Y)| \leq \|X\|_p \|Y\|_q \quad\text{and}\quad
    \|X^\dagger Y\|_1 \leq \|X\|_p \|Y\|_q.
    \]
\end{fact}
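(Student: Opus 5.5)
We need to prove Schatten Hölder's inequality: $|\Tr(X^\dagger Y)|\le \|X\|_p\|Y\|_q$ and $\|X^\dagger Y\|_1\le\|X\|_p\|Y\|_q$. Since the second bound implies the first (because $|\Tr Z|\le\|Z\|_1$), the plan is to prove the trace-norm version and deduce the trace inequality from it.

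The approach is to reduce to a statement about singular values. First, by the variational characterization of the trace norm, $\|Z\|_1=\max_{U\text{ unitary}}|\Tr(UZ)|$, where the maximum is attained by the unitary from the polar decomposition of $Z$. It therefore suffices to show that $|\Tr(UX^\dagger Y)|\le\|X\|_p\|Y\|_q$ for every unitary $U$. Replacing $X$ by $XU^\dagger$ leaves $\|X\|_p$ unchanged by unitary invariance, so this reduces to proving $|\Tr(X^\dagger Y)|\le \|X\|_p\|Y\|_q$ in general.

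Next, write the singular value decompositions $X=\sum_i s_i a_i b_i^\dagger$ and $Y=\sum_j t_j c_j d_j^\dagger$. Then
\[
\Tr(X^\dagger Y)=\sum_{i,j}s_it_j\,(a_i^\dagger c_j)(d_j^\dagger b_i),
\]
so
\[
|\Tr(X^\dagger Y)|\le\sum_{i,j}s_it_j\,w_{ij},\qquad w_{ij}:=|a_i^\dagger c_j|\,|d_j^\dagger b_i|.
\]
By Cauchy--Schwarz together with Bessel's inequality, the matrix $(w_{ij})$ has every row sum and every column sum at most $1$; that is, it is doubly substochastic.

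The remaining task, and the main obstacle, is to bound $\sum_{i,j}s_it_jw_{ij}$ for a doubly substochastic weight matrix. The key step is the weighted Young inequality $s_it_j\le s_i^p/p+t_j^q/q$. Before applying it, normalize so that $\|X\|_p=\|Y\|_q=1$ by homogeneity; this step is fine when $p,q$ are finite, and the cases $p=1$ or $p=\infty$ are handled directly as $\sum_{i,j} s_it_jw_{ij}\le \max_j t_j\sum_i s_i$. Summing the Young bound against the weights and using the row and column sum bounds gives
\[
\sum_{i,j}s_it_jw_{ij}\le \frac1p\sum_i s_i^p+\frac1q\sum_j t_j^q=1,
\]
which completes the proof. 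Rectangular shapes cause no difficulty, since one can pad the matrices with zeros to make them square.
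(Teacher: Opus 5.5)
Your proof is correct and complete. The paper itself does not prove this statement; it quotes it as a standard fact. So there is no argument in the paper to compare yours against.

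Compared with the common textbook route, your argument is more self-contained. That route applies von Neumann's trace inequality $|\Tr(X^\dagger Y)|\le\sum_i s_i t_i$, with both singular-value sequences sorted decreasingly, and then the scalar H\"older inequality. In your argument, Young's inequality $s_it_j\le s_i^p/p+t_j^q/q$ decouples the two indices. You therefore only need the row and column sums of $(w_{ij})$ to be at most $1$. You never have to invoke Birkhoff's theorem or a rearrangement argument to pass from a doubly substochastic weighting to the diagonal pairing of sorted singular values. The price is that you obtain H\"older's inequality but not the sharper von Neumann bound.

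One small point of presentation concerns the endpoints. Your displayed bound $\max_j t_j\sum_i s_i$, which uses the row sums, covers $p=1$. For $p=\infty$ you need the symmetric bound $\max_i s_i\sum_j t_j$, which uses the column sums instead.
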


\paragraph{Vectorization.} We introduce a useful correspondence between matrices and bipartite vectors.
Let $\mathcal{A}:=\mathbb{C}^M$ and $\mathcal{B}:=\mathbb{C}^N$, with standard bases
$\{e_i\}_{i\in[M]}$ and $\{e_j\}_{j\in[N]}$, respectively. Let $E_{i,j}\in\mathbb{C}^{M\times N}$
be the matrix unit with a $1$ in entry $(i,j)$ and $0$ elsewhere. Define
\[
\V:\mathbb{C}^{M\times N}\to \mathcal{A}\otimes \mathcal{B}
\qquad\text{by}\qquad
\V(E_{i,j}) := e_i\otimes e_j.
\]

Given a space $W\subseteq \mathbb{C}^{M\times N}$, we abuse the notation
\[
\V(W):=\{\V(A):A\in W\}\subseteq \mathbb{C}^{M}\otimes \mathbb{C}^{N}.
\]

We present some useful vectorization identities. 
For vectors $u\in\mathcal{A}$ and $v\in\mathcal{B}$,
\[
\V(uv^\top)=u\otimes v,
\qquad
\V(uv^\dagger)=u\otimes \bar v.
\]
For matrices $X,Y\in\mathbb{C}^{M\times N}$, vectorization preserves the Hilbert--Schmidt inner product:
\begin{equation*}
\langle X,Y\rangle := \Tr(X^\dagger Y)
=
\langle \V(X),\V(Y)\rangle.
\end{equation*}
For $X_0\in\mathrm{L}(\mathcal{A})\cong\mathbb{C}^{M\times M}$ and
$X_1\in\mathrm{L}(\mathcal{B})\cong\mathbb{C}^{N\times N}$,
\begin{equation*} 
(X_0\otimes X_1)\V(Y) = \V\!\left(X_0YX_1^\top\right).
\end{equation*}
Finally, for $X,Y\in\mathbb{C}^{M\times N}$,
\begin{equation*}
\Tr_{\mathcal{B}}\!\left(\V(X)\V(Y)^\dagger\right)=XY^\dagger,
\qquad
\Tr_{\mathcal{A}}\!\left(\V(X)\V(Y)^\dagger\right)=X^\top \bar{Y}.
\end{equation*}

We also introduce a seminorm
on bipartite matrices, called the $\Lambda$-norm,
defined by the trace norm weighted by a PSD matrix $\Lambda$.

\begin{definition}[$\Lambda$-norm]
    Let $\Lambda\in\mathbb{C}^{N\times N}$ be a PSD matrix.
    Given any matrix $X\in \mathrm{L}(\mathbb{C}^N\otimes \mathbb{C}^N)$, define its $\Lambda$-norm as
    \[
    \|X\|_\Lambda := \|(\Lambda\otimes \bar{\Lambda}) X(\Lambda\otimes \bar{\Lambda})\|_1.
    \]
\end{definition}

The factor $\Lambda\otimes \bar{\Lambda}$ arises naturally from the vectorization
identity \(\V(\Lambda X \Lambda) = (\Lambda\otimes \bar{\Lambda})\V(X)\).
The $\Lambda$-norm is only a seminorm for general PSD $\Lambda$,
but becomes a proper norm if $\Lambda\succ 0$.

\subsection{Quantum Information}

We introduce some tools from quantum information
used in this paper.

\begin{definition}[Pretty Good Measurement] \label{def:pgm}
    Given $\sigma_0, \sigma_1 \succeq 0$, define 
    $\sigma := \sigma_0 + \sigma_1$, and assume $\Tr(\sigma)=1$.
    The \emph{Pretty Good Measurement} (PGM) is a $2$-outcome POVM defined as
    \[
    M_0 := \sqrt{\sigma^+}\sigma_0 \sqrt{\sigma^+} + \frac{1}{2}\Pi_{\mathrm{ker}(\sigma)},
    \quad
    M_1 := \sqrt{\sigma^+} \sigma_1 \sqrt{\sigma^+} + \frac{1}{2}\Pi_{\mathrm{ker}(\sigma)}.
    \]
\end{definition}

\begin{lemma}[\cite{barnum2002reversing}] \label{lem:pgm}
    Given $\sigma_0, \sigma_1 \succeq 0$
    with $\Tr(\sigma_0 + \sigma_1)=1$, 
    define the optimal 
    testing error
    \[
    \mathrm{OPT}(\sigma_0, \sigma_1) := 
    \min_{0\preceq M \preceq I} \Tr(M\sigma_1) + \Tr((I-M)\sigma_0).
    \]
    Let $\{M_0, M_1\}$ be the pretty good measurement for $\sigma_0, \sigma_1$. Then
    the PGM testing error 
    \[
    \mathrm{PGM}(\sigma_0, \sigma_1):=
    \mathrm{Tr}(M_0\sigma_1) + \Tr(M_1\sigma_0) 
    \leq 
    2 \mathrm{OPT}(\sigma_0, \sigma_1).
    \]
\end{lemma}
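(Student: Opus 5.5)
The plan is to follow Barnum and Knill and prove the equivalent success-probability statement: the PGM succeeds with probability at least the square of the optimal success probability. Write $\sigma=\sigma_0+\sigma_1$ with $\Tr(\sigma)=1$, and work on the support of $\sigma$. Since $\sigma_0,\sigma_1\preceq\sigma$, both are supported there, so the kernel terms $\tfrac12\Pi_{\ker(\sigma)}$ in \Cref{def:pgm} contribute nothing to $\Tr(M_0\sigma_1)$, $\Tr(M_1\sigma_0)$ or the success terms. Define success probabilities
\[
P_{\mathrm{PGM}}:=\Tr(M_0\sigma_0)+\Tr(M_1\sigma_1)=1-\mathrm{PGM}(\sigma_0,\sigma_1),
\qquad
P_{\mathrm{OPT}}:=1-\mathrm{OPT}(\sigma_0,\sigma_1).
\]
On $\operatorname{supp}(\sigma)$ we have
\[
P_{\mathrm{PGM}}=\sum_{i\in\{0,1\}}\Tr\bigl(\sigma^{-1/2}\sigma_i\sigma^{-1/2}\sigma_i\bigr)=\sum_i\bigl\|\sigma^{-1/4}\sigma_i\sigma^{-1/4}\bigr\|_F^2 .
\]

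Fix an optimal POVM $\{E_0,E_1\}$, with $E_0=I-M$ and $E_1=M$ for the minimizing $M$, so that $P_{\mathrm{OPT}}=\sum_i\Tr(E_i\sigma_i)$. Insert $\sigma^{1/4}\sigma^{-1/4}$ to write
\[
\Tr(E_i\sigma_i)=\Tr\bigl((\sigma^{1/4}E_i\sigma^{1/4})(\sigma^{-1/4}\sigma_i\sigma^{-1/4})\bigr).
\]
Apply Cauchy--Schwarz twice: once for the Hilbert--Schmidt inner product inside each term, and once over the index $i$. This gives
\[
P_{\mathrm{OPT}}\le\Bigl(\sum_i\|\sigma^{1/4}E_i\sigma^{1/4}\|_F^2\Bigr)^{1/2}\Bigl(\sum_i\|\sigma^{-1/4}\sigma_i\sigma^{-1/4}\|_F^2\Bigr)^{1/2}.
\]
The second factor is exactly $\sqrt{P_{\mathrm{PGM}}}$.

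For the first factor, $\|\sigma^{1/4}E_i\sigma^{1/4}\|_F^2=\Tr(E_i\sigma^{1/2}E_i\sigma^{1/2})$. Since $0\preceq E_i\preceq I$, we have $E_i\sigma^{1/2}E_i\preceq$-type bounds that I expect to reduce cleanly. Concretely, $\Tr(E_i\sigma^{1/2}E_i\sigma^{1/2})\le\Tr(E_i\sigma)$. To see this, write it as $\Tr\bigl((\sigma^{1/4}E_i\sigma^{1/4})^2\bigr)$ and use $\sigma^{1/4}E_i\sigma^{1/4}\preceq\sigma^{1/2}$ together with $\Tr(X^2)\le\Tr(XY)$ for $0\preceq X\preceq Y$. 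The latter holds because $\Tr(X(Y-X))=\Tr(X^{1/2}(Y-X)X^{1/2})\ge0$. Summing over $i$, the first factor is at most $\Tr(\sigma)^{1/2}=1$, hence $P_{\mathrm{PGM}}\ge P_{\mathrm{OPT}}^2$.

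To conclude, set $e=\mathrm{OPT}(\sigma_0,\sigma_1)$. Then
\[
\mathrm{PGM}(\sigma_0,\sigma_1)=1-P_{\mathrm{PGM}}\le1-(1-e)^2=2e-e^2\le2e .
\]
The main point requiring care is the positivity step bounding the first Cauchy--Schwarz factor by $\Tr(\sigma)$, together with the bookkeeping that restricts everything to $\operatorname{supp}(\sigma)$ so that the pseudoinverse and the kernel terms are handled correctly. The rest is a direct calculation.
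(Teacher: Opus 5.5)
Your proof is correct and is exactly the Barnum--Knill Cauchy--Schwarz argument (showing $P_{\mathrm{PGM}}\ge P_{\mathrm{OPT}}^2$), which the paper invokes only by citation without reproducing it. One labeling slip needs fixing: in the paper's convention $M$ is the outcome-$0$ operator (it plays the role of $M_0$), so set $E_0=M$ and $E_1=I-M$; with your choice $E_0=I-M$, $E_1=M$, the sum $\sum_i\Tr(E_i\sigma_i)$ equals $\mathrm{OPT}(\sigma_0,\sigma_1)$ rather than $1-\mathrm{OPT}(\sigma_0,\sigma_1)$.
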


The following is a perturbation lemma for $4$-outcome POVMs, with distances measured in the \(\Lambda\)-norm. It will be used in the final POVM rounding step in \Cref{sec:robust}, and its proof is deferred to Appendix \ref{appx:xy}.

\begin{lemma} \label{lem:project-povm}
Given a $4$-outcome POVM $\{P_i\}_{i\in[4]}$ on
$\mathbb{C}^N\otimes \mathbb{C}^N$
and another PSD operator $Z$ with $0\preceq Z\preceq I$, let $\Lambda\in \mathbb{C}^{N\times N}$
be PSD with $\|\Lambda\|_F=1$. If
\[
\|P_1-Z\|_\Lambda = \epsilon,
\]
then we can construct another POVM $\{P_i'\}_{i\in[4]}$ such that $P_1'=Z$ and
for each $2\leq i\leq 4$, 
\[
\|P_i-P_i'\|_\Lambda = O(\sqrt{\epsilon}).
\]
\end{lemma}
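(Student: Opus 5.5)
We need to prove the last statement, \Cref{lem:project-povm}: given a 4-outcome POVM $\{P_i\}$ and $Z$ with $0\preceq Z\preceq I$ and $\|P_1-Z\|_\Lambda=\epsilon$, construct a POVM with $P_1'=Z$ and the other elements within $O(\sqrt\epsilon)$ in $\Lambda$-norm.

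The plan is to redistribute the complement $I-Z$ among outcomes $2,3,4$ in proportion to how $I-P_1=P_2+P_3+P_4$ is distributed. Write $Q:=I-P_1=P_2+P_3+P_4$ and $R:=I-Z$, both between $0$ and $I$. The natural construction is
\[
P_i' := \sqrt{R}\,Q^{-1/2}P_iQ^{-1/2}\sqrt{R},\qquad i=2,3,4,
\]
with pseudoinverse on $\ker Q$, plus a correction $\tfrac13\sqrt R\,\Pi_{\ker Q}\sqrt R$ added to each. These are PSD, and they sum to $\sqrt R\,(\Pi_{\mathrm{supp} Q}+\Pi_{\ker Q})\sqrt R=R$. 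Hence $\{Z,P_2',P_3',P_4'\}$ is a POVM. Writing $T_i:=Q^{-1/2}P_iQ^{-1/2}$, which satisfies $0\preceq T_i\preceq I$, we have $P_i=\sqrt Q\,T_i\sqrt Q$ and $P_i'=\sqrt R\,T_i\sqrt R$ up to the kernel term.

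The comparison is then a sandwich estimate. Set $D:=\sqrt R-\sqrt Q$ and $G:=\Lambda\otimes\bar\Lambda$, so that $\mathrm{Tr}(G^2)=\|\Lambda\|_F^4=1$. Then
\[
P_i'-P_i=D\,T_i\sqrt R+\sqrt Q\,T_iD,
\]
and the terms to bound look like $\|G\,D\,T_i\sqrt R\,G\|_1\le \|GD\|_2\,\|T_i\sqrt R\,G\|_2\le \|GD\|_2\,\|G\|_2=\|GD\|_2$ by Hölder. So the task reduces to showing $\|GD\|_2^2=\mathrm{Tr}(GD^2G)=O(\epsilon)$. The kernel term needs similar care: $\Pi_{\ker Q}$ is killed by $\sqrt Q$, so it contributes $\sqrt R\,\Pi_{\ker Q}\sqrt R=(\sqrt R-\sqrt Q)\Pi_{\ker Q}(\sqrt R-\sqrt Q)$, which is again controlled by $\|GD\|_2^2$.

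The main obstacle is bounding $\mathrm{Tr}(G(\sqrt R-\sqrt Q)^2G)$ by $O(\|G(R-Q)G\|_1)=O(\epsilon)$. This requires a weighted form of the Powers–Størmer inequality $\|\sqrt A-\sqrt B\|_2^2\le\|A-B\|_1$, with the operator $G$ sandwiched outside and non-commuting with $R$ and $Q$. The obvious approach fails: if $G$ were replaced by a density operator, the weight would sit inside rather than outside the square roots.

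The first thing to try is a fidelity-type argument. Use the vectors $\V(\cdot)$ on the doubled system and purify the state $G^2$ via $\ket{g}=\V(G)$. Then $\mathrm{Tr}(GAG)=\langle g|\,(A\otimes I)\,|g\rangle$-type expressions arise, and $\mathrm{Tr}(GD^2G)=\|(D\otimes I)g\|^2$. One would compare $\sqrt R g$ and $\sqrt Q g$ via the binary-outcome measurements $\{R,I-R\}$ and $\{Q,I-Q\}$, using the gentle-measurement / square-root-fidelity inequality: for $0\preceq A,B\preceq I$ and a state $\rho$, $\mathrm{Tr}(\rho(\sqrt A-\sqrt B)^2)\le\|A-B\|$ measured appropriately.

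If this does not yield an $O(\epsilon)$ bound on $\|GD\|_2^2$, fall back to a weaker estimate in which $\|GD\|_2^2$ is only bounded by $O(\sqrt\epsilon)$ — for instance via operator-monotone integral representations $\sqrt x=\frac1\pi\int_0^\infty\frac{x}{x+t}\,t^{-1/2}\,dt$, splitting the integral at a threshold. Since the Hölder step bounds $\|P_i'-P_i\|_\Lambda$ by $\|GD\|_2$ rather than its square, this fallback would only give $\|P_i-P_i'\|_\Lambda=O(\epsilon^{1/4})$, weaker than the claimed $O(\sqrt\epsilon)$. So the fallback does not prove the lemma as stated; to obtain the claimed $O(\sqrt\epsilon)$ rate, the weighted Powers–Størmer bound $\|GD\|_2^2=O(\epsilon)$ is genuinely needed.
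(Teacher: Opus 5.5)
\textbf{Gap: the construction itself fails.} The weighted Powers--St{\o}rmer bound you identify as the crux is false, not merely hard to prove. Use your notation $Q=I-P_1$, $R=I-Z$, $G=\Lambda\otimes\bar\Lambda$. Take $N=2$ and $\Lambda=e_1e_1^\dagger$, so that $G=gg^\dagger$ with $g=e_1\otimes e_1$. Let $h=e_2\otimes e_2$, $u=(g+h)/\sqrt2$ and $w=(g-h)/\sqrt2$, and set $P_1=I-uu^\dagger$, $P_2=uu^\dagger$, $P_3=P_4=0$, $Z=I-ww^\dagger$.

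Then $\epsilon=\bigl||\langle g,w\rangle|^2-|\langle g,u\rangle|^2\bigr|=0$. Yet $\|G(\sqrt R-\sqrt Q)\|_2^2=\tfrac12\|w-u\|^2=1$. Since $\langle w,u\rangle=0$, your formula gives $P_2'=\tfrac13 ww^\dagger$, and so $\|P_2-P_2'\|_\Lambda=\tfrac12-\tfrac16=\tfrac13$. Perturbing $\Lambda$ to be full rank keeps this error near $\tfrac13$ while $\epsilon\to0$.

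The same example rules out your fallback: no bound on $\|GD\|_2$ by any function of $\epsilon$ that vanishes at $0$ can hold. The reason is structural. With $X:=\sqrt Q\,G$ and $Y:=\sqrt R\,G$, the quantity $\epsilon=\|G(Q-R)G\|_1=\|X^\dagger X-Y^\dagger Y\|_1$ only sees the Gram matrices, which determine $X$ and $Y$ only up to a left unitary. Your estimate, however, needs $\|X-Y\|_F$ itself to be small.

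\textbf{Missing idea: build the unitary into the construction.} This is what the paper does. Uhlmann's theorem combined with Powers--St{\o}rmer (\Cref{lem:xy}) gives a unitary $U$ with $\|X-UY\|_F^2\le\|X^\dagger X-Y^\dagger Y\|_1=\epsilon$. Set $M_i:=\sqrt{Q^+}P_i\sqrt{Q^+}+\tfrac13\Pi_{\ker Q}$, so that $P_i=\sqrt Q\,M_i\sqrt Q$, and define $P_i':=\sqrt R\,U^\dagger M_iU\sqrt R$. These still sum with $Z$ to $I$.

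Now $G(P_i-P_i')G=X^\dagger M_iX-(UY)^\dagger M_i(UY)$. Exactly your H\"older step then gives $\|P_i-P_i'\|_\Lambda\le 2\|X-UY\|_F\le 2\sqrt\epsilon$.

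The fidelity argument you mention is the right tool, but it controls $\min_U\|X-UY\|_F$, not $\|X-Y\|_F$. In the example above, $U$ maps $w$ to $u$, and the rotated construction returns $P_2'=ww^\dagger$. This agrees with $P_2$ exactly in $\Lambda$-norm.
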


\subsection{Models in Distributed Computing} \label{sec:prelim-local}

We recall the classical LOCAL model
\cite{linial1992locality,peleg2000distributed} and its quantum counterpart
\cite{gavoille2009can,arfaoui2014without}, then specialize to one-way one-round
algorithms on directed cycles.

\begin{definition}[LOCAL models] \label{def:local}
Let $G=(V,E)$ be an $n$-vertex graph. Each node $v\in V$ initially knows $n$ and its unique identifier $\ID_v\in[\poly(n)]$, and may receive a
local input $x_v$. 
Incident edges are distinguished by local port numbers.
All nodes execute the same algorithm.

Computation proceeds in synchronous rounds. In each round, every node
sends a message to each neighbor, receives a message from each neighbor, and updates its local state. After the final round, each node
produces a classical local output $y_v$.
\begin{enumerate}
    \item In the \emph{randomized-LOCAL model}, local computations
    and messages are classical, and each node has access to independent
    private random bits. 
    \item In the \emph{quantum-LOCAL model}, each node 
    may perform arbitrary local quantum operations, 
    and send quantum messages.
\end{enumerate}
The nodes have no shared randomness or prior entanglement. Local computation
and message length are unbounded but finite. 
The \emph{round complexity} of a LOCAL algorithm is its number of
communication rounds; the \emph{locality} of a problem is the minimum round
complexity of a LOCAL algorithm that solves it.
\end{definition}

We say a LOCAL algorithm succeeds \emph{with high probability} if it produces a
valid solution with probability at least $1-1/\poly(n)$, where the probability
is over private randomness and quantum measurements. A LOCAL algorithm
\emph{solves} a distributed problem if it succeeds with high probability on
every valid input graph, local input, and assignment of unique identifiers.

In this paper, we focus on the following distributed problem.

\begin{definition}[Directed-cycle $q$-coloring] \label{def:cycle-coloring}
Fix an integer $q\geq 2$. Consider a directed $n$-cycle ($n\geq 3$) with nodes
$(v_i)_{i\in\mathbb{Z}_n}$, where each node knows 
the direction of the cycle. The \emph{directed-cycle
$q$-coloring problem} requires each node $v_i$ to output a color $c_i\in[q]$
such that $c_i\neq c_{i+1}$ for every $i\in\mathbb{Z}_n$.
\end{definition}

For $q\geq 3$, proper cycle $q$-colorings always exist.
We focus on the following restricted model.

\begin{definition}[One-way one-round anonymous quantum algorithms on cycles]
    \label{def:anonymous-one-round}
Consider a directed cycle with $n$ nodes $(v_i)_{i\in\mathbb{Z}_n}$.
In a \emph{one-way one-round anonymous quantum algorithm}, all nodes
are initially identical and execute the same algorithm. In the single communication round, each
node $v_i$ sends a quantum message to its successor $v_{i+1}$ and receives one from its predecessor
$v_{i-1}$. The local computation and message length are unbounded but finite. 
\end{definition}

In the high-probability setting, 
anonymous quantum algorithms are equivalent to 
quantum-LOCAL algorithms, since the nodes can independently
sample unique identifiers from a sufficiently large 
set $[n^{C_0}]$,
which yields pairwise distinct identifiers
with probability at least $1-1/n^{C_0-2}$.

\paragraph{Finite dependence.}
Let $(X_i)_{i\in\mathbb{Z}_n}$ be random labels on a cycle. Their joint
distribution is \emph{stationary} if it is invariant under cyclic shifts,
that is, $(X_i)_{i\in\mathbb{Z}_n}$ and $(X_{i+1})_{i\in\mathbb{Z}_n}$ are equal in law.
It is \emph{$k$-dependent} if, for any two vertex sets
$U,V\subseteq\mathbb{Z}_n$ with distance 
larger than $k$ on the cycle, 
the random vectors
$(X_i)_{i\in U}$ and $(X_i)_{i\in V}$ are independent. These notions extend
naturally to processes $(X_t)_{t\in\mathbb{Z}}$ on the infinite line, with
stationarity meaning invariance under integer shifts.

Every one-way one-round anonymous quantum algorithm produces a stationary $1$-dependent output
distribution. Stationarity follows because all nodes have identical initial
states and execute the same procedure. $1$-dependence 
follows from observing that the light cones of 
two nodes at distance greater than $1$ do not intersect.

\section{Energy Formulation} \label{sec:energy}
In this section, we formally define the notions of $\Lambda$-mass and
$\Lambda$-energy and establish their basic properties. These notions are
defined for any PSD operators
$P\in\mathrm{L}(\mathbb{C}^N\otimes \mathbb{C}^N)$ and
$\Lambda\in\mathbb{C}^{N\times N}$. In particular, they apply when $P$ is an
orthogonal projector onto a matrix subspace under the vectorization equivalence
$\mathbb{C}^{N\times N}\cong \mathbb{C}^N\otimes \mathbb{C}^N$, and, more
generally, when $P$ is a POVM element on the same space. 

\begin{definition}[$\Lambda$-mass and $\Lambda$-energy] \label{def:energy} 
Given PSD $P\in\mathrm{L}(\mathbb{C}^N\otimes \mathbb{C}^N)$ 
and PSD $\Lambda\in\mathbb{C}^{N\times N}$, define the
$\Lambda$-mass of $P$ by
\[
m_{\Lambda}(P) := 
\|P\|_\Lambda = 
\mathrm{Tr}\left[P(\Lambda^2\otimes \bar{\Lambda}^2)\right],
\]
and the $\Lambda$-energy of $P$ by
\[
\mathcal{E}_{\Lambda}(P) := \mathrm{Tr}\left(
\Lambda A \Lambda B
\right),
\]
where
$A:=\mathrm{Tr}_{\mathcal{A}}((\Lambda^2\otimes I)P)^\top$ and 
$B:=\mathrm{Tr}_{\mathcal{B}}((I\otimes \bar{\Lambda}^2)P)$.

\end{definition}

\Cref{sec:spectral} proves an equivalent spectral formula for
$\Lambda$-energy, matching \Cref{def:energy-projective} in the introduction. \Cref{sec:examples} then illustrates the definition through several
examples, several of which already appear in the technical overview. Finally, \Cref{sec:operational}
interprets mass and energy as probabilities arising from a specific quantum
experiment, while \Cref{sec:properties} establishes properties of the energy
that will be used in the main proof.

\subsection{Spectral Formula} \label{sec:spectral}

The following fact gives an alternative expression for $\Lambda$-energy,
connecting it to the multiplicative 
structure of the corresponding matrix space.

\begin{lemma} \label{fact:energy}
Given $P$ and $\Lambda$ as above, let
$\{M_k\}_k\subseteq\mathbb{C}^{N\times N}$ be any finite family of matrices satisfying
\[
P = \sum_{k} \V(M_k)\V(M_k)^\dagger.
\]
Then
\[
\mathcal{E}_\Lambda(P) = \sum_{k,\ell} \|\Lambda M_k\Lambda M_\ell \Lambda\|_F^2.
\]
Such $\{M_k\}_k$ always exists.
In particular, it can be obtained by taking a spectral decomposition of $P$.
\end{lemma}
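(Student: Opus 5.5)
The plan is to substitute the decomposition $P=\sum_k \V(M_k)\V(M_k)^\dagger$ into the definitions of $A$ and $B$, compute both partial traces using the vectorization identities from \Cref{sec:norm}, and then expand $\Tr(\Lambda A\Lambda B)$ as a double sum over $k,\ell$. Existence of such a family is immediate: write the spectral decomposition $P=\sum_k \mu_k v_kv_k^\dagger$ with $\mu_k\ge 0$, and set $M_k:=\sqrt{\mu_k}\,\V^{-1}(v_k)$. This works because $\V$ is a bijection.

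\textbf{Computing $B$.} The identity $(X_0\otimes X_1)\V(Y)=\V(X_0YX_1^\top)$ with $X_0=I$ and $X_1=\bar\Lambda^2$ gives $(I\otimes\bar\Lambda^2)\V(M_k)=\V(M_k\Lambda^{2\dagger})=\V(M_k\Lambda^2)$. Here we use $\bar\Lambda^{2\top}=\Lambda^{2\dagger}=\Lambda^2$ since $\Lambda$ is Hermitian. Then $\Tr_{\mathcal B}(\V(X)\V(Y)^\dagger)=XY^\dagger$ yields
\[
B=\sum_k M_k\Lambda^2M_k^\dagger .
\]

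\textbf{Computing $A$.} Similarly, $(\Lambda^2\otimes I)\V(M_k)=\V(\Lambda^2M_k)$, and $\Tr_{\mathcal A}(\V(X)\V(Y)^\dagger)=X^\top\bar Y$ gives $\sum_k (\Lambda^2M_k)^\top\bar M_k$. Transposing this gives
\[
A=\sum_k M_k^\dagger\Lambda^2M_k .
\]

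\textbf{Expanding the trace.} Substituting,
\[
\Tr(\Lambda A\Lambda B)=\sum_{k,\ell}\Tr\bigl(\Lambda M_k^\dagger\Lambda^2M_k\Lambda M_\ell\Lambda^2M_\ell^\dagger\bigr).
\]
Using cyclicity, each summand equals $\Tr(X_{k\ell}^\dagger X_{k\ell})$ with $X_{k\ell}:=\Lambda M_k\Lambda M_\ell\Lambda$. Indeed, $X_{k\ell}^\dagger X_{k\ell}=\Lambda M_\ell^\dagger\Lambda M_k^\dagger\Lambda^2M_k\Lambda M_\ell\Lambda$, and cycling the trailing $\Lambda M_\ell^\dagger\cdots$ around matches the expression above after relabeling the roles consistently. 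If the pairing comes out as $\|\Lambda M_\ell\Lambda M_k\Lambda\|_F^2$ instead, the double sum is symmetric under $k\leftrightarrow\ell$, so the conclusion is unchanged. This gives $\mathcal E_\Lambda(P)=\sum_{k,\ell}\|\Lambda M_k\Lambda M_\ell\Lambda\|_F^2$.

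Since $A$ and $B$ are defined from $P$ alone, the right-hand side is independent of the chosen family $\{M_k\}$. This also justifies the basis-independence claim made after \Cref{def:energy-projective}.

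The only delicate point is bookkeeping with conjugates and transposes in the partial-trace-over-$\mathcal A$ identity, namely getting $A=\sum_k M_k^\dagger\Lambda^2M_k$ rather than its conjugate. I would double-check this on a rank-one example $M=uv^\top$.
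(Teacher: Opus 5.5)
Your proposal is correct and follows essentially the same route as the paper. You compute $A=\sum_k M_k^\dagger\Lambda^2M_k$ and $B=\sum_k M_k\Lambda^2M_k^\dagger$ from the vectorization identities and then expand $\Tr(\Lambda A\Lambda B)$ by cyclicity, with existence coming from the spectral decomposition. The only difference is that you apply $\Lambda^2$ on one side directly instead of splitting it as $\Lambda\cdot\Lambda$ via cyclicity of the partial trace, which is immaterial.

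Your hedge in the last step is unnecessary. Moving the leading factor $\Lambda M_\ell^\dagger$ in $\Tr(X_{k\ell}^\dagger X_{k\ell})$ to the end gives exactly $\Tr(\Lambda M_k^\dagger\Lambda^2M_k\Lambda M_\ell\Lambda^2M_\ell^\dagger)$, so the summand is $\|\Lambda M_k\Lambda M_\ell\Lambda\|_F^2$ without any relabeling.
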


\begin{proof}
    Note that $\Lambda$ is Hermitian,
    so $\Lambda=\Lambda^\dagger$ and $\bar{\Lambda}=\Lambda^\top$.
    For each $M_k$, we have
    \begin{align*}
        \Tr_\As\!\left((\Lambda^2\otimes I)\V(M_k)\V(M_k)^\dagger\right)
        &= \Tr_\As\!\left((\Lambda\otimes I)\V(M_k)\V(M_k)^\dagger(\Lambda\otimes I)\right) \\
        &= \Tr_\As\!\left(\V(\Lambda M_k)\V(\Lambda M_k)^\dagger\right) 
        = (\Lambda M_k)^\top \overline{\Lambda M_k}
        = (M_k^\dagger \Lambda^2 M_k)^\top,
    \end{align*}
    where the first equality is by cyclicity of partial trace, the second
    equality is by $(X_0\otimes X_1)\V(Y) = \V(X_0YX_1^\top)$, the third equality is by
    $\Tr_\As\!\left(\V(X)\V(Y)^\dagger\right)=X^\top \bar{Y}$. Then
    \[
    A := \Tr_\As((\Lambda^2\otimes I)P)^\top
    = \sum_{k} \Tr_\As((\Lambda^2\otimes I)\V(M_k)\V(M_k)^\dagger)^\top
    = \sum_{k} M_k^\dagger \Lambda^2 M_k.
    \]
    
    Similarly, we have
    \begin{align*}
        \Tr_\Bs\!\left((I\otimes \bar{\Lambda}^2)\V(M_k)\V(M_k)^\dagger\right)
        &= \Tr_\Bs\!\left((I\otimes \bar{\Lambda})\V(M_k)\V(M_k)^\dagger(I\otimes \bar{\Lambda})\right) \\
        &= \Tr_\Bs\!\left(\V(M_k\Lambda)\V(M_k\Lambda)^\dagger\right) 
        = (M_k\Lambda)(M_k\Lambda)^\dagger
        = M_k \Lambda^2 M_k^\dagger,
    \end{align*}
    and then
    \[
    B := \Tr_\Bs((I\otimes \bar{\Lambda}^2)P) = \sum_{k} M_k \Lambda^2 M_k^\dagger.
    \]

    Thus we have
    \begin{align*}
    \Es_\Lambda(P) = \Tr(\Lambda A \Lambda B)
    &= \sum_{k,\ell}\Tr\left(
        \Lambda M_k^\dagger \Lambda^2 M_k \Lambda M_\ell \Lambda^2 M_\ell^\dagger
        \right)\\
    &= \sum_{k,\ell}\Tr\left(
        \Lambda M_\ell^\dagger \Lambda M_k^\dagger \Lambda \cdot \Lambda M_k \Lambda M_\ell \Lambda
        \right) 
    = \sum_{k,\ell} \|\Lambda M_k\Lambda M_\ell \Lambda\|_F^2. 
    \end{align*}

    In particular,
    taking spectral decomposition $P=\sum_k p_k v_k v_k^\dagger$ and
    letting $M_k = \sqrt{p_k} \V^{-1}( v_k)$ for each $k$ give a valid choice of $\{M_k\}_k$.
\end{proof}

For completeness, we also show that $\Lambda$-mass can be expressed in a similar way,
which matches the definition appearing in the technical overview.

\begin{lemma}
    Given the same setting as in \Cref{fact:energy}, we have
    \[
    m_\Lambda(P) = \sum_k \|\Lambda M_k \Lambda\|_F^2.
    \]
\end{lemma}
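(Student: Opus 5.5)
The plan is to expand the definition of the mass term by term, using linearity of the trace in $P$ and the vectorization identities collected in \Cref{sec:norm}. By \Cref{def:energy},
\[
m_\Lambda(P)=\Tr\!\left[P(\Lambda^2\otimes\bar\Lambda^2)\right]
=\sum_k \Tr\!\left[\V(M_k)\V(M_k)^\dagger(\Lambda^2\otimes\bar\Lambda^2)\right]
=\sum_k \V(M_k)^\dagger(\Lambda^2\otimes\bar\Lambda^2)\V(M_k),
\]
where the last step is cyclicity of the trace applied to a rank-one summand.

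Next I would split the middle factor as $\Lambda^2\otimes\bar\Lambda^2=(\Lambda\otimes\bar\Lambda)(\Lambda\otimes\bar\Lambda)$. Since $\Lambda$ is Hermitian, the operator $\Lambda\otimes\bar\Lambda$ is Hermitian as well, so each summand equals $\|(\Lambda\otimes\bar\Lambda)\V(M_k)\|_2^2$.

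The identity $(X_0\otimes X_1)\V(Y)=\V(X_0YX_1^\top)$ with $X_0=\Lambda$ and $X_1=\bar\Lambda$ gives
\[
(\Lambda\otimes\bar\Lambda)\V(M_k)=\V(\Lambda M_k\bar\Lambda^\top)=\V(\Lambda M_k\Lambda),
\]
using $\bar\Lambda^\top=\Lambda^\dagger=\Lambda$. Because vectorization preserves the Hilbert--Schmidt inner product, $\|\V(\Lambda M_k\Lambda)\|_2^2=\|\Lambda M_k\Lambda\|_F^2$. Summing over $k$ then yields the claim.

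There is no real obstacle here; the only care needed is in the conjugation and transpose bookkeeping. Specifically, one must check that the $\bar\Lambda$ on the second tensor factor becomes $\Lambda$ on the right of $M_k$, exactly as in the proof of \Cref{fact:energy}. As with the energy, the result shows that the formula does not depend on which decomposition $\{M_k\}_k$ of $P$ is chosen.
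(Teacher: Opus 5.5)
Your proposal is correct and follows essentially the same route as the paper: expand $\Tr[P(\Lambda^2\otimes\bar\Lambda^2)]$ over the decomposition of $P$, split $\Lambda^2\otimes\bar\Lambda^2=(\Lambda\otimes\bar\Lambda)^2$, and use $(\Lambda\otimes\bar\Lambda)\V(M_k)=\V(\Lambda M_k\Lambda)$ together with the isometry of vectorization. Your conjugation and transpose bookkeeping, $\bar\Lambda^\top=\Lambda^\dagger=\Lambda$, is handled correctly.
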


\begin{proof}
    By definition, we have
    \begin{align*}
    m_\Lambda(P) = 
    \Tr\left[P(\Lambda^2\otimes \bar{\Lambda}^2)\right]
    &= \sum_k 
    \Tr\left[
        (\Lambda\otimes \bar{\Lambda})
        \V(M_k)\V(M_k)^\dagger
        (\Lambda\otimes \bar{\Lambda})
    \right]\\
    &= \sum_k \Tr\left[
        \V(\Lambda M_k \Lambda)\V(\Lambda M_k \Lambda)^\dagger
    \right]
    = \sum_k \|\Lambda M_k \Lambda\|_F^2. \tag*{\qedhere}
    \end{align*}
\end{proof}

\subsection{Examples} \label{sec:examples}

In this paper, we are mainly interested in the energy of an operator $P$ with
$0\preceq P\preceq I$, as we will use it to analyze POVM elements.
If we further restrict $P$ to be an orthogonal projector
onto a matrix space $W$, then the energy of $P$ captures the multiplicative
structure of $W$. Below are several illustrative examples.

\begin{example}[Energy of matrix space]
    Let $\Lambda=I$.
    Let $\Pi_W$ be the orthogonal
    projector onto a matrix space $W\subseteq \mathbb{C}^{N\times N}$.
    Then $m_I(\Pi_W) = \Tr(\Pi_W) = \dim(W)$, and by \Cref{fact:energy}, we have
    \[
    \Es_I(\Pi_W) = \sum_{k,\ell} \|M_k M_\ell\|_F^2,
    \]
    where $\{M_k\}$ is an orthonormal basis of $W$. 
    Thus $\Es_I(\Pi_W)$ measures the multiplicative energy of space $W$.

    Moreover, $\Es_I(\Pi_W)=0$ if and only if 
    $W^2 := \mathrm{span}\{XY : X,Y \in W\} = \{0\}$.
\end{example} 

\begin{example}[Counting $2$-walks in a graph]
    Let $\Lambda=I$.
    Given a directed graph $G=(V,E)$, 
    define a matrix space 
    \[
    W := \mathrm{span}\{E_{uv}: (u,v)\in E\} \subseteq \mathbb{C}^{V\times V}.
    \]
    Then for the orthogonal projector $\Pi_W$ onto $W$, we have
    $m_I(\Pi_W) = |E|$, and
    \[
    \Es_I(\Pi_W) = 
    \sum_{(u,v),(x,w)\in E} \|E_{uv}E_{xw}\|_F^2
    = \sum_{(u,v),(x,w)\in E} \mathbf{1}_{v=x}
    = \# \{(u,v,w): (u,v),(v,w)\in E\},
    \]
    which counts the number of $2$-walks in $G$.
    
    Moreover, $\Es_I(\Pi_W)=0$ if and only if $G$ is contained in
    a directed cut. 
\end{example}

\begin{example}[Energy of a weighted graph]
    We extend the previous example to the setting where
    each node $v\in V$ has a weight $w_v\geq 0$, and define
    \[
    \Lambda := \mathrm{diag}(\sqrt{w_v})_{v\in V}.
    \]
    Then $m_\Lambda(\Pi_W) = \sum_{(u,v)\in E} w_u w_v$
    is the weighted sum of directed edges in $G$, and 
    \[
    \Es_\Lambda(\Pi_W) =
    \sum_{(u,v),(x,w)\in E} \|\Lambda E_{uv}\Lambda E_{xw}\Lambda\|_F^2
    = 
    \sum_{
        \text{$2$-walk (u,v,w) in $G$}
    } w_u w_v w_w
    \]
    is the weighted sum of $2$-walks in $G$.
\end{example}

\subsection{The Operational Interpretation} \label{sec:operational}

The next lemma characterizes the operational meanings of 
mass and energy: $\Lambda$-mass is the marginal probability of obtaining the outcome corresponding to POVM operator $P$,
and $\Lambda$-energy is the probability of obtaining the corresponding outcomes for $P\otimes P$ on some quantum
state specified by $\Lambda$.

\begin{lemma} \label{lem:op-energy} 
    
    Given PSD $P \in \mathrm{L}(\mathbb{C}^N\otimes \mathbb{C}^N)$ with $0\preceq P\preceq I$ and PSD
    $\Lambda \in\mathbb{C}^{N\times N}$ with $\|\Lambda\|_F=1$, define a bipartite quantum state
\[
\ket{\Psi} := \V(\bar{\Lambda}),
\]
and its reduced density matrices
\[
\rho := \mathrm{Tr}_{\mathcal{A}}(\ket{\Psi}\bra{\Psi}), \quad
\bar{\rho} := \mathrm{Tr}_{\mathcal{B}}(\ket{\Psi}\bra{\Psi}).
\]
Then we have
\[
m_\Lambda(P)=\mathrm{Tr}\left[
P(\rho\otimes \bar{\rho} )
\right],
\quad
\mathcal{E}_\Lambda(P) = \mathrm{Tr}\left[
(P\otimes P)(\rho\otimes \ket{\Psi}\bra{\Psi}\otimes \bar{\rho})
\right].
\]
\end{lemma}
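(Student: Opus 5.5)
The plan is to prove both identities by direct computation in the vectorization calculus of \Cref{sec:norm}. I will reduce the energy identity to the spectral formula of \Cref{fact:energy}.

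\textbf{Reduced states.} First I compute the reduced states. Since $\Lambda$ is Hermitian, $\bar\Lambda^\top=\Lambda^\dagger=\Lambda$. The identities $\Tr_{\mathcal A}(\V(X)\V(Y)^\dagger)=X^\top\bar Y$ and $\Tr_{\mathcal B}(\V(X)\V(Y)^\dagger)=XY^\dagger$, applied with $X=Y=\bar\Lambda$, give
\[
\rho=\bar\Lambda^\top\Lambda=\Lambda^2,
\qquad
\bar\rho=\bar\Lambda\bar\Lambda^\dagger=\bar\Lambda^2 .
\]
The mass identity is then immediate: $\Tr[P(\rho\otimes\bar\rho)]=\Tr[P(\Lambda^2\otimes\bar\Lambda^2)]=m_\Lambda(P)$ by \Cref{def:energy}.

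\textbf{Energy identity.} Fix any decomposition $P=\sum_k\V(M_k)\V(M_k)^\dagger$, which exists by \Cref{fact:energy}. Label the four registers $\A_1\B_1\A_2\B_2$, with $P\otimes P$ acting on $(\A_1\B_1)\otimes(\A_2\B_2)$. The state $\rho$ sits on $\A_1$, $\ket\Psi$ on $\B_1\A_2$, and $\bar\rho$ on $\B_2$. I factor
\[
\rho\otimes\ket\Psi\bra\Psi\otimes\bar\rho
=K K^\dagger,
\qquad
K:=\Lambda_{\A_1}\otimes\ket\Psi_{\B_1\A_2}\otimes\bar\Lambda_{\B_2},
\]
where $K$ is viewed as a map from $\A_1\otimes\B_2$ into all four registers. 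This uses $\rho=\Lambda\Lambda^\dagger$ and $\bar\rho=\bar\Lambda\bar\Lambda^\dagger$. Then
\[
\Tr[(P\otimes P)KK^\dagger]=\sum_{k,\ell}\bigl\|(\V(M_k)^\dagger\otimes\V(M_\ell)^\dagger)K\bigr\|_F^2 .
\]
Each term is a linear functional on $\A_1\otimes\B_2$. I will identify it, up to complex conjugation and transposition (neither of which changes the Frobenius norm), with $\V(\Lambda M_k\Lambda M_\ell\Lambda)$. The computation runs as follows. Absorbing $\Lambda_{\A_1}$ into $\V(M_k)$ replaces $M_k$ by $\Lambda M_k$, via $(X_0\otimes X_1)\V(Y)=\V(X_0YX_1^\top)$. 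Similarly, absorbing $\bar\Lambda_{\B_2}$ replaces $M_\ell$ by $M_\ell\Lambda$. Finally, contracting $\B_1$ of the first factor and $\A_2$ of the second against $\ket\Psi=\sum_{i,j}\bar\Lambda_{ij}\ket i\ket j$ inserts the middle factor $\Lambda$, producing the matrix product $\Lambda M_k\Lambda M_\ell\Lambda$ (or its entrywise conjugate). Summing over $k,\ell$ and invoking \Cref{fact:energy} yields $\mathcal E_\Lambda(P)$.

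\textbf{Expected obstacle.} The only delicate step is the index bookkeeping in the contraction: tracking where conjugates and transposes land, so that the middle weight comes out as $\Lambda$ rather than $\bar\Lambda$ and the two matrices appear in the order $M_kM_\ell$. I will carry this out in the standard basis. Writing
\[
\V(M_k)^\dagger=\sum_{a,b}\overline{(M_k)_{ab}}\bra a\bra b,
\qquad
\V(M_\ell)^\dagger=\sum_{c,d}\overline{(M_\ell)_{cd}}\bra c\bra d,
\]
the contraction over $b$ and $c$ against $\bar\Lambda_{bc}$ gives $\overline{(M_k\Lambda M_\ell)_{ad}}$. The outer weights then act on the indices $a$ and $d$; using Hermiticity, this gives the entry $\overline{(\Lambda M_k\Lambda M_\ell\Lambda)_{a'd'}}$. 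This confirms that the Frobenius norm is exactly $\|\Lambda M_k\Lambda M_\ell\Lambda\|_F$.
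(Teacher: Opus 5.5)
Your proposal is correct and follows essentially the paper's route: you compute $\rho=\Lambda^2$ and $\bar\rho=\bar\Lambda^2$, expand $P\otimes P$ via $P=\sum_k\V(M_k)\V(M_k)^\dagger$, contract the middle registers against $\ket\Psi$ to produce the entries of $\Lambda M_k\Lambda M_\ell\Lambda$, and conclude with \Cref{fact:energy}. The differences are cosmetic. You factor the state as $KK^\dagger$ and index in the standard basis, whereas the paper expands $\rho\otimes\bar\rho$ in the eigenbasis of $\Lambda$ and computes the amplitudes $\gamma^{a,c}_{k,\ell}$; your index bookkeeping, including the conjugation via Hermiticity, checks out.
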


\begin{proof}
    Since $\Lambda$ is PSD, write the spectral decomposition
    \[
    \Lambda = \sum_{i} \lambda_i \ket{\varphi_i}\bra{\varphi_i}
    \quad\text{and}\quad
    \bar{\Lambda} = \sum_{i} \lambda_i \ket{\bar{\varphi}_i}\bra{\bar{\varphi}_i},
    \]
    where $\ket{\varphi_i}$ are eigenvectors 
    and $\lambda_i\geq 0$ are eigenvalues of $\Lambda$.
    By $\V(xy^\dagger)=x\otimes \bar{y}$, we have
    \[
    \ket{\Psi} = \V(\bar{\Lambda}) = \sum_{i} \lambda_i \ket{\bar{\varphi}_i}\otimes \ket{\varphi_i},
    \]
    \[
    \rho = \Tr_\As(\ket{\Psi}\bra{\Psi})
    =\sum_{i} \lambda_i^2 \ket{\varphi_i}\bra{\varphi_i} = \Lambda^2,
    \quad\text{and}\quad
    \bar{\rho} = \Tr_\Bs(\ket{\Psi}\bra{\Psi})
    =\sum_{i} \lambda_i^2 \ket{\bar{\varphi_i}}\bra{\bar{\varphi}_i} = \bar{\Lambda}^2.
    \]
    Then
    \[
    \Tr(P(\rho\otimes \bar{\rho}))
    = \Tr(P(\Lambda^2\otimes \bar{\Lambda}^2))
    = m_\Lambda(P).
    \]
    
    By definition of $\rho$, we have
    \begin{equation} \label{eq:pp-psi}
    \Tr[(P\otimes P)(\rho\otimes \ket{\Psi}\bra{\Psi}\otimes \bar{\rho})]
    = \sum_{a,c} \lambda_a^2 \lambda_c^2
    \Tr\left[(P\otimes P)(\ket{\varphi_a,\Psi,\bar{\varphi}_c}\bra{\varphi_a,\Psi,\bar{\varphi}_c})\right].
    \end{equation}
    If we write the spectral decomposition of $P$ as
    \(
    P = \sum_{k} \V(M_k)\V(M_k)^\dagger,
    \)
    then
    \begin{equation} \label{eq:pp-psi2}
    \Tr\left[(P\otimes P)(\ket{\varphi_a,\Psi,\bar{\varphi}_c}\bra{\varphi_a,\Psi,\bar{\varphi}_c})\right]
    = \sum_{k,\ell} |\gamma_{k,\ell}^{a,c}|^2,
    \end{equation}
    where
    \(
    \gamma_{k,\ell}^{a,c} := 
    \bra{\varphi_a,\Psi,\bar{\varphi}_c}(\V(M_k)\otimes \V(M_\ell)).
    \)

    By definition of $\ket{\Psi}$
    and the fact that $\lambda_b$ are real, we have
    \begin{align*}
    \gamma_{k,\ell}^{a,c} &=
    \sum_b \lambda_b
    \bra{\varphi_a,\bar{\varphi}_b,\varphi_b,\bar{\varphi}_c}
    \V(M_k)\otimes \V(M_\ell) \\
    &= \sum_b 
        \lambda_b
        \bra{\varphi_a,\bar{\varphi}_b}\V(M_k) \cdot
        \bra{\varphi_b,\bar{\varphi}_c}\V(M_\ell)\\
    &= \sum_b
        \lambda_b\cdot
        \V(\ket{\varphi_a}\bra{\varphi_b})^\dagger\V(M_k) \cdot
        \V(\ket{\varphi_b}\bra{\varphi_c})^\dagger \V(M_\ell),
    \end{align*}
    where the last equality is by $\V(xy^\dagger)=x\otimes \bar{y}$. 
    As vectorization preserves the Hilbert--Schmidt inner product, we have
    \[
    \V(\ket{\varphi_a}\bra{\varphi_b})^\dagger\V(M_k) 
    = \Tr\left(\left(\ket{\varphi_a}\bra{\varphi_b}\right)^\dagger M_k\right)
    = \bra{\varphi_a}M_k\ket{\varphi_b}.
    \]
    Thus
    \[
    \gamma_{k,\ell}^{a,c} =
    \sum_b 
        \lambda_b\bra{\varphi_a}M_k\ket{\varphi_b} \cdot
        \bra{\varphi_b}M_\ell\ket{\varphi_c}=
    \bra{\varphi_a}M_k \Lambda M_\ell\ket{\varphi_c}.
    \]    
    Plugging this into \eqref{eq:pp-psi2} and then \eqref{eq:pp-psi}, we have
    \begin{align*}
    \Tr[(P\otimes P)(\rho\otimes \ket{\Psi}\bra{\Psi}\otimes \bar{\rho})]
    &= \sum_{a,c} \lambda_a^2 \lambda_c^2
    \sum_{k,\ell} \left|\bra{\varphi_a}M_k \Lambda M_\ell\ket{\varphi_c}\right|^2 \\
    &= \sum_{k,\ell} \sum_{a,c} |\bra{\varphi_a} \Lambda M_k \Lambda M_\ell \Lambda \ket{\varphi_c}|^2 \\
    &= \sum_{k,\ell} \|\Lambda M_k \Lambda M_\ell \Lambda\|_F^2
    = \mathcal{E}_\Lambda(P),
    \end{align*}
    where the second equality follows from $\Lambda \ket{\varphi_x} = \lambda_x \ket{\varphi_x}$,
    and the third equality follows because the vectors $\ket{\varphi_x}$ form an orthonormal basis,
    and the last equality is by \Cref{fact:energy}.
\end{proof}

\subsection{Useful Properties} \label{sec:properties}

We prove some natural properties 
of the energy function, which will be used
in later sections.
The first fact is that the energy is monotone with respect to PSD ordering.

\begin{fact}
    Given two PSD operators $P, Q\in \mathrm{L}(\mathbb{C}^N\otimes \mathbb{C}^N)$,
    and PSD $\Lambda\in\mathbb{C}^{N\times N}$, if $P\preceq Q$,
    then $\Es_\Lambda(P) \leq \Es_\Lambda(Q)$.
\end{fact}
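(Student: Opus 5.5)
The plan is to write the energy as a quadratic form in $P$ evaluated at a PSD operator, and then use that the cross terms are nonnegative. Set $D:=Q-P\succeq 0$. Every operator in sight has a decomposition of the kind used in \Cref{fact:energy}:
\[
P=\sum_k \V(M_k)\V(M_k)^\dagger,\qquad D=\sum_j \V(N_j)\V(N_j)^\dagger .
\]
Concatenating the two families $\{M_k\}\cup\{N_j\}$ then gives a valid decomposition of $Q=P+D$.

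Next I apply \Cref{fact:energy} to $Q$ with this concatenated family. The resulting expression is a double sum over ordered pairs drawn from the combined family:
\[
\mathcal E_\Lambda(Q)=\sum_{X,Y\in\{M_k\}\cup\{N_j\}}\|\Lambda X\Lambda Y\Lambda\|_F^2 .
\]
Every term in this sum is a squared Frobenius norm, so every term is nonnegative. The pairs with $X,Y$ both in $\{M_k\}$ add up to exactly $\mathcal E_\Lambda(P)$, again by \Cref{fact:energy}. Dropping all the remaining pairs, each of which contributes something nonnegative, leaves the inequality $\mathcal E_\Lambda(Q)\ge\mathcal E_\Lambda(P)$.

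An equivalent route works directly from \Cref{def:energy}. There $A$ and $B$ are linear in $P$, and they are PSD whenever $P$ is PSD. Expanding bilinearly gives
\[
\mathcal E_\Lambda(Q)=\mathcal E_\Lambda(P)+\Tr(\Lambda A_P\Lambda B_D)+\Tr(\Lambda A_D\Lambda B_P)+\mathcal E_\Lambda(D).
\]
Each trace on the right has the form $\Tr(\Lambda A\Lambda B)$ with $A,B\succeq0$, and such a trace is $\Tr\bigl((B^{1/2}\Lambda A^{1/2})(B^{1/2}\Lambda A^{1/2})^\dagger\bigr)\ge 0$. So again $\mathcal E_\Lambda(Q)\ge\mathcal E_\Lambda(P)$.

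I do not expect a real obstacle. The one point that needs care is that \Cref{fact:energy} must hold for any decomposition into rank-one terms, including families that are not orthogonal. It does, because its proof only uses the sum $P=\sum_k\V(M_k)\V(M_k)^\dagger$ and never uses orthogonality of the $M_k$.
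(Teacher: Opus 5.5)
Your proposal is correct, and your second route is exactly the paper's proof: set $R=Q-P\succeq 0$, expand $\mathcal{E}_\Lambda(P+R)$ bilinearly through $(A,B)$, and observe that the three extra terms $\Tr(\Lambda A_P\Lambda B_R)$, $\Tr(\Lambda A_R\Lambda B_P)$, $\Tr(\Lambda A_R\Lambda B_R)$ are nonnegative because all the $A$'s and $B$'s are PSD. Your first route is the same decomposition rewritten through \Cref{fact:energy}, which, as you note, permits non-orthogonal families; there the cross terms appear directly as sums of squared Frobenius norms.
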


\begin{proof}
    Let $R = Q - P$, then $R \succeq 0$.
    Define 
    \[
    \Es_\Lambda(Q) - \Es_\Lambda(P) =
    \Es_\Lambda(P + R) - \Es_\Lambda(P) =
    \Tr(\Lambda A_P \Lambda B_R) + 
    \Tr(\Lambda A_R \Lambda B_P) + 
    \Tr(\Lambda A_R \Lambda B_R),
    \]
    where $A_P := \Tr_\As((\Lambda^2\otimes I)P)^\top\succeq 0$,
    $B_P := \Tr_\Bs((I\otimes \bar{\Lambda}^2)P)\succeq 0$, and $A_R$, $B_R$
    are defined similarly for $R$. 
    Since $A_P, B_P, A_R, B_R \succeq 0$, 
    the three terms 
    $\Tr(\Lambda A_P \Lambda B_R)$, 
    $\Tr(\Lambda A_R \Lambda B_P)$, and $\Tr(\Lambda A_R \Lambda B_R)$
    are all non-negative, and thus $\Es_\Lambda(Q) - \Es_\Lambda(P) \geq 0$.
\end{proof}

We conclude with a $\Lambda$-Lipschitz lemma for the energy function. Namely, for fixed
$P$, $\Es_\Lambda(P)$ is Lipschitz continuous in $\Lambda$ with respect to the Frobenius norm. 
We defer its proof to Appendix \ref{appx:weighted-lip-psd}.

\begin{lemma}\label{lem:weighted-lip-psd}
    Given PSD $P\in \mathrm{L}(\mathbb{C}^N\otimes \mathbb{C}^N)$
    with $0\preceq P\preceq I$,
    and PSDs $\Lambda, \Lambda' \in\mathbb{C}^{N\times N}$ such that 
    $\|\Lambda\|_F=\|\Lambda'\|_F=1$, we have
    \[
    \left|\Es_{\Lambda}(P)-\Es_{\Lambda'}(P)\right|
    \leq 
    6\,\|\Lambda-\Lambda'\|_F.
    \]
\end{lemma}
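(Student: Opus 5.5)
We want $|\Es_\Lambda(P)-\Es_{\Lambda'}(P)|\le 6\|\Lambda-\Lambda'\|_F$. The plan is to write the energy as a sum of squared Frobenius norms, apply the difference-of-squares identity, and telescope the five $\Lambda$ factors, replacing one at a time. Fix a decomposition $P=\sum_k \V(M_k)\V(M_k)^\dagger$ as in \Cref{fact:energy}, so that
\[
\Es_\Lambda(P)=\sum_{k,\ell}\|\Lambda M_k\Lambda M_\ell\Lambda\|_F^2 .
\]
Define the operator $T_\Lambda:\V(X)\otimes\V(Y)\mapsto \V(\Lambda X\Lambda Y\Lambda)$, which is linear and trilinear in the three $\Lambda$ slots. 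Then
\[
\Es_\Lambda(P)=\Tr\bigl(T_\Lambda (P\otimes P) T_\Lambda^\dagger\bigr)=\|T_\Lambda (P\otimes P)^{1/2}\|_F^2 .
\]
Since $\bigl|\|a\|^2-\|b\|^2\bigr|\le \|a-b\|(\|a\|+\|b\|)$, it suffices to bound two things: $\|T_\Lambda (P\otimes P)^{1/2}\|_F\le 1$, and $\|(T_\Lambda-T_{\Lambda'})(P\otimes P)^{1/2}\|_F\le 3\|\Lambda-\Lambda'\|_F$.

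\textbf{Step 1.} For the first bound, note $\Es_\Lambda(P)\le 1$. By \Cref{lem:op-energy}, $\Es_\Lambda(P)$ is the probability of a measurement outcome $P\otimes P$ with $0\preceq P\otimes P\preceq I$ on a normalized state. Hence $\|T_\Lambda(P\otimes P)^{1/2}\|_F\le 1$, and the same holds for $\Lambda'$.

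\textbf{Step 2.} Telescope:
\[
T_\Lambda-T_{\Lambda'}=T(\Delta,\Lambda,\Lambda)+T(\Lambda',\Delta,\Lambda)+T(\Lambda',\Lambda',\Delta), \qquad \Delta:=\Lambda-\Lambda'.
\]
Each term should contribute at most $\|\Delta\|_F$. Take the middle term: its squared norm is $\sum_{k,\ell}\|\Lambda' M_k\Delta M_\ell\Lambda\|_F^2$, which is $\Tr\bigl(\Delta A'\Delta B\bigr)$ with
\[
A'=\sum_k M_k^\dagger\Lambda'^2M_k,\qquad B=\sum_\ell M_\ell\Lambda^2M_\ell^\dagger .
\]
This reproduces the algebra in the proof of \Cref{fact:energy}, with the middle weight replaced by $\Delta$.

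The key observation is that $A'$ and $B$ are partial traces, namely $\Tr_\As((\Lambda'^2\otimes I)P)^\top$ and $\Tr_\Bs((I\otimes\bar\Lambda^2)P)$. Since $P\preceq I$ and $\|\Lambda\|_F=\|\Lambda'\|_F=1$, this gives $A'\preceq \Tr(\Lambda'^2) I=I$ and $B\preceq I$. Therefore
\[
\Tr(\Delta A'\Delta B)\le \|\Delta A'\Delta\|_1\,\|B\|_\infty\le\|\Delta\|_F^2 ,
\]
since $\Delta A'\Delta\preceq \Delta^2$.

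The outer terms are handled the same way. For instance,
\[
\sum_{k,\ell}\|\Delta M_k\Lambda M_\ell\Lambda\|_F^2=\Tr\bigl(\Delta^2 C\bigr),\qquad C:=\sum_k M_k\Lambda\Bigl(\sum_\ell M_\ell\Lambda^2M_\ell^\dagger\Bigr)\Lambda M_k^\dagger .
\]
Here $C\preceq \sum_k M_k\Lambda^2M_k^\dagger\preceq I$ by applying the partial-trace bound twice. Thus this term has norm at most $\|\Delta\|_F$, and the last term is symmetric. Summing the three terms gives $\|(T_\Lambda-T_{\Lambda'})(P\otimes P)^{1/2}\|_F\le 3\|\Delta\|_F$.

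Combining with Step 1 yields $|\Es_\Lambda(P)-\Es_{\Lambda'}(P)|\le 3\|\Delta\|_F\cdot 2=6\|\Delta\|_F$. The main obstacle is justifying the operator bounds $A',B,C\preceq I$, especially the nested bound for $C$. I would derive these from the partial-trace inequality $\Tr_\Bs((I\otimes\sigma)P)\preceq \Tr(\sigma)I$, valid for $0\preceq P\preceq I$ and $\sigma\succeq0$.
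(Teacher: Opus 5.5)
Your proof is correct, and it takes a different route from the paper's.

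\textbf{The paper's route.} The paper never passes to a Kraus-type decomposition. It works directly with $\Es_\Lambda(P)=\Tr(\Lambda A\Lambda B)$ and splits the difference as $\Tr\bigl((\Lambda A\Lambda-\Lambda'A'\Lambda')B\bigr)+\Tr\bigl(\Lambda'A'\Lambda'(B-B')\bigr)$. Both pieces are then bounded by trace-norm/operator-norm H\"older:
\begin{itemize}
\item contractivity of the trace norm under partial trace gives $\|\Lambda A\Lambda-\Lambda'A'\Lambda'\|_1\le 4\|\Delta\|_F$;
\item testing $B-B'$ against rank-one $yx^\dagger$ gives $\|B-B'\|_\infty\le\|\Lambda^2-\Lambda'^2\|_1\le 2\|\Delta\|_F$.
\end{itemize}
The constant is $4+2=6$.

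\textbf{Your route.} You linearize at the amplitude level instead. You write $\Es_\Lambda(P)=\|T_\Lambda(P\otimes P)^{1/2}\|_F^2$ with $T$ trilinear in the three weight slots, and telescope the slots one at a time. Each slot costs at most $\|\Delta\|_F$, via the operator inequalities $A',B,C\preceq I$; these follow from positivity of $X\mapsto\sum_k M_kXM_k^\dagger$ and of its adjoint. The factor $2$ then comes from $\|a\|+\|b\|\le 2$. So your $6$ arises as $3\cdot 2$ rather than $4+2$.

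\textbf{What each approach buys.} Your argument proves more than the lemma. Since $\bigl|\|a\|-\|b\|\bigr|\le\|a-b\|$, it shows
\[
\Bigl|\sqrt{\Es_\Lambda(P)}-\sqrt{\Es_{\Lambda'}(P)}\Bigr|\le 3\|\Lambda-\Lambda'\|_F,
\]
that is, $\sqrt{\Es}$ is $3$-Lipschitz in the weight. Equivalently, $|\Es_\Lambda(P)-\Es_{\Lambda'}(P)|\le 3\|\Delta\|_F\bigl(\sqrt{\Es_\Lambda(P)}+\sqrt{\Es_{\Lambda'}(P)}\bigr)$, which is sharper when the energies are small. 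The paper's route is basis-free. It also reuses exactly the partial-trace/H\"older estimates that reappear in the last step of the proof of \Cref{thm:q4}, where $P_i$ is replaced by $P_i'$ at fixed $\Lambda$.
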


\section{Zero-Energy Operators} \label{sec:zero-energy}

This section establishes the structural results on low-energy operators that
drive the four-color impossibility proof. We first show that a POVM operator
with zero $\Lambda$-energy has $\Lambda$-mass at most $1/4$ and characterize the
extremal case as the projector onto an off-diagonal matrix space associated with
a balanced $\Lambda$-invariant decomposition (\Cref{thm:ze-mm}). We then prove a
robust version: any POVM operator of mass $1/4$ and small energy can be approximated,
together with its weight matrix, by an exact zero-energy extremizer
(\Cref{thm:weighted-stability}). The exact characterization yields the
reduction from four to three colors in \Cref{sec:unrobust}, while the robust theorem
is used for the general impossibility proof in \Cref{sec:robust}.

\subsection{Structure of Zero-Energy Operators}
\label{sec:ze-structure}


We first restrict our attention to strictly positive $\Lambda$ to avoid
degeneracy.
The following theorem characterizes the structure of operators with maximal mass and exactly zero energy.

\begin{theorem}[Characterization of operators with zero energy and maximal mass] \label{thm:ze-mm} 
    Given PSD $P\in \mathrm{L}(\mathbb{C}^N\otimes \mathbb{C}^N)$ with $0\preceq
    P\preceq I$, and PSD $\Lambda\in\mathbb{C}^{N\times N}$ with
    $\|\Lambda\|_F=1$ and $\Lambda\succ 0$, if
    \[
    m_\Lambda(P) = \frac14 \quad\text{and}\quad
    \mathcal{E}_\Lambda(P) = 0,
    \]
    then there exists an orthogonal decomposition $\mathbb{C}^N = U \oplus U^\perp$ such that
    \begin{enumerate}[
    label=(P\arabic*),
    ref=(P\arabic*),
    leftmargin=1.5cm,
]
        \item \label{P1} $P=\Pi_{\mathrm{L}(U^\perp,U)}$, i.e., 
        the orthogonal projector onto the space
        \[
        \mathrm{L}(U^\perp, U) := 
        \mathrm{span}\{xy^\dagger: x\in U, y\in U^\perp\} =
        \left\{    
        \left(\begin{array}{c|c}
            0 & X \\ \hline
            0 & 0
        \end{array}\right)
        : X \in \mathbb{C}^{|U|\times |U^\perp|}
        \right\}
        \]
        in the basis of $U\oplus U^\perp$;
        \item \label{P2}
        \(
            \Tr(\Lambda^2\Pi_U)=
            \Tr(\Lambda^2\Pi_{U^\perp})=
            \frac12,
        \)
        where $\Pi_U$ (resp.~$\Pi_{U^\perp}$) is the orthogonal projector onto
        $U$ (resp.~$U^\perp$);
        \item \label{P3} $U$ and $U^\perp$ are $\Lambda$-invariant, i.e.,
        \(
        \Lambda U=U, 
        \Lambda U^\perp=U^\perp
        \).
    \end{enumerate}
\end{theorem}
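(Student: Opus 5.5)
The plan is to remove the weight $\Lambda$ by a change of variables, reduce to the unweighted picture of \Cref{lem:unweighted-directed-mantel}, and then obtain the mass bound $1/4$ from two inequalities: AM--GM and a compression inequality. The three properties \ref{P1}--\ref{P3} should then come out of the equality cases.

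\textbf{Step 1: change of variables.} Write $P=\sum_k \V(M_k)\V(M_k)^\dagger$ by a spectral decomposition, as in \Cref{fact:energy}. Zero energy gives $\Lambda M_k\Lambda M_\ell\Lambda=0$ for all $k,\ell$. Set $C:=\Lambda^{1/2}\succ 0$ and $K_k:=CM_kC$. Since $C$ is invertible, $\Lambda M_k\Lambda M_\ell\Lambda=C K_kK_\ell C$, so $K_kK_\ell=0$ for all $k,\ell$.

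Let $U:=\sum_k\operatorname{Im}(K_k)$. As in the proof of \Cref{lem:unweighted-directed-mantel}, each $K_k$ maps into $U$ and annihilates $U$, so $K_k\in\mathrm L(U^\perp,U)$. Translating back, set $S:=C^{-1}U$ and $T:=C^{-1}U^\perp$. Then every $M_k$ satisfies $M_k=\Pi_S M_k\Pi_T$, because $\operatorname{Im}M_k\subseteq S$ and $M_k$ kills $(T)^\perp=CU$.

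Hence the range of $P$ lies in the space $\V(\{X:X=\Pi_SX\Pi_T\})$, whose projector is $\Pi_S\otimes\bar\Pi_T$. Together with $P\preceq I$ this gives $P\preceq \Pi_S\otimes\bar\Pi_T$, and therefore
\[
m_\Lambda(P)\le \Tr(\Lambda^2\Pi_S)\,\Tr(\Lambda^2\Pi_T)=:ab\le \Bigl(\frac{a+b}{2}\Bigr)^2 .
\]

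\textbf{Step 2: the main obstacle, proving $a+b\le 1$.} Here $S=C^{-1}U$ and $T=C^{-1}U^\perp$ are in general not orthogonal. I would first write the projector explicitly as
\[
\Pi_{C^{-1}U}=C^{-1}\Pi_U\,(\Pi_UC^{-2}\Pi_U)^{+}\,\Pi_UC^{-1}.
\]
Then I would use the compression inequality $(\Pi_UA^{-1}\Pi_U)^{+}\preceq \Pi_UA\Pi_U$ for $A\succ0$, a Schur-complement or operator-Jensen fact, with $A=C^2=\Lambda$. The aim is to bound $a$ by a quantity of the form $\Tr(\Pi_U\,\cdot\,\Pi_U)$ and $b$ by the same quantity with $\Pi_{U^\perp}$, so that the two bounds sum to at most $\Tr\Lambda^2=1$. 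I expect some care is needed to choose the right form of the inequality so that the powers of $C$ match; that is why this is the step I would work out first. A fallback is to test the claim on $2\times2$ examples and then argue via a variational characterisation of $\Tr(\Lambda^2\Pi_S)$ over orthonormal bases.

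\textbf{Step 3: equality analysis.} If $m_\Lambda(P)=1/4$, then every inequality above is tight:
\begin{itemize}
\item AM--GM forces $a=b=1/2$.
\item Tightness of the compression inequality should force $\Pi_U$ to commute with $\Lambda$, i.e.\ $\Lambda U=U$. Then $S=U$ and $T=U^\perp$, which gives \ref{P3} and, with $a=b=1/2$, gives \ref{P2}.
\item Finally, $m_\Lambda(\Pi_S\otimes\bar\Pi_T-P)=0$ with $\Lambda\succ0$ forces $P=\Pi_S\otimes\bar\Pi_T=\Pi_{\mathrm L(U^\perp,U)}$, which is \ref{P1}.
\end{itemize}
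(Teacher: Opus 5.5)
Your proposal is correct, and Step 2 closes as you anticipated; the powers of $C$ match. Since $C^{-1}\Lambda^2C^{-1}=\Lambda$, your projector formula gives
\[
a=\Tr\bigl(\Pi_U\Lambda\Pi_U\,(\Pi_U\Lambda^{-1}\Pi_U)^{+}\bigr)\le\|\Pi_U\Lambda\Pi_U\|_F^2,
\qquad
b\le\|\Pi_{U^\perp}\Lambda\Pi_{U^\perp}\|_F^2 .
\]
These bounds use the compression inequality $(\Pi_U\Lambda^{-1}\Pi_U)^{+}\preceq\Pi_U\Lambda\Pi_U$, paired against the PSD matrix $\Pi_U\Lambda\Pi_U$. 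By pinching, the two right-hand sides sum to $\|\Lambda\|_F^2-2\|\Pi_U\Lambda\Pi_{U^\perp}\|_F^2\le 1$. In the equality case this slack term already forces $\Pi_U\Lambda\Pi_{U^\perp}=0$. Your claim that tightness of the compression inequality forces the same conclusion is also true, via the block formula $((\Lambda^{-1})_{11})^{-1}=\Lambda_{11}-\Lambda_{12}\Lambda_{22}^{-1}\Lambda_{21}$.

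Structurally this is the paper's argument in symmetrized coordinates. Write $U_0:=\sum_k\mathrm{Im}(M_k)$ for the paper's subspace. Your $U$ equals $CU_0$, so $S=U_0$ and $T=(\Lambda U_0)^\perp$. Your Step 1 is then \Cref{lem:ze-structure}, your $ab$ is the paper's $\alpha(1-\beta)$, and $a+b\le 1$ is exactly $\alpha\le\beta$ from \Cref{lem:ze-alpha-beta}.

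You differ in how that inequality is proved. The paper takes $V$ an isometry onto $U_0$ and computes $\beta-\alpha=\Tr\bigl((V^\dagger\Lambda^4V-(V^\dagger\Lambda^2V)^2)(V^\dagger\Lambda^2V)^{-1}\bigr)$. It then uses $V^\dagger\Lambda^4V-(V^\dagger\Lambda^2V)^2=V^\dagger\Lambda^2\Pi_{U_0^\perp}\Lambda^2V\succeq 0$. This is one direct computation, with no square root of $\Lambda$ and no Schur-complement inequality. However, its equality case only yields $\Lambda^2U_0\subseteq U_0$, and an extra eigenvector argument is needed to reach $\Lambda$-invariance.

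Your route treats the two sides symmetrically and isolates the deficit $1-(a+b)$ as the off-diagonal mass $2\|\Pi_U\Lambda\Pi_{U^\perp}\|_F^2$, so property (P3) drops out directly. It also gives, at no extra cost, the quantitative bound $m_\Lambda(P)\le\bigl(\tfrac12-\|\Pi_U\Lambda\Pi_{U^\perp}\|_F^2\bigr)^2$ for every zero-energy $P$. This is expressed through the same kind of non-invariance quantity as the $\delta$ of \Cref{lem:delta-bound}.
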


\begin{remark}
    \Cref{thm:ze-mm} can be viewed as the noncommutative generalization of the
    following classical fact. Given a directed graph $G=(V,E)$ in which each vertex
    $v\in V$ has a weight $w_v>0$, suppose that the weights sum to $1$.
    If $G$ contains no $2$-walks, then there must exist a partition 
    \[
    V = L \sqcup R,
    \]
    such that $E$ only contains edges from $L$ to $R$.
    Moreover, the total weight of edges
    \[
    m := \sum_{(u,v)\in E} w_u w_v \leq 
    \sum_{v\in L} w_v 
    \left(
    1- \sum_{v\in L} w_v
    \right)
    \leq  \frac14.
    \]
    If the maximal $m=1/4$ is reached, then
    $G$ must be a complete directed cut and $W_L = W_R = 1/2$,
    which corresponds to Properties \ref{P1} and \ref{P2}.

    Property \ref{P3} has no classical analogue, as the 
    corresponding weight matrix $\Lambda = \mathrm{diag}(\sqrt{w_v})$ is diagonal
    in the computational basis,
    which always preserves spaces spanned by standard unit vectors.
\end{remark}

\paragraph{Definitions.}
To prove \Cref{thm:ze-mm}, we 
construct the desired $U\oplus U^\perp=\mathbb{C}^N$ as below.
Write the spectral decomposition of 
$P$ as
\[
P = \sum_{k} \V(M_k)\V(M_k)^\dagger,
\]
where $M_k \in \mathbb{C}^{N\times N}$ are pairwise orthogonal matrices.
Define the corresponding matrix space
\[
W := \mathrm{span}\{M_k\} \subseteq \mathbb{C}^{N\times N}.
\]
Then $W$ defines an orthogonal decomposition $\mathbb{C}^N = U \oplus U^\perp$ where
\[
U:= \sum_{X \in W} \mathrm{Im}(X).
\]

We then prove two helper lemmas about zero-energy operators. The first lemma proves a relaxed version of Property \ref{P1}.

\begin{lemma} \label{lem:ze-structure}
    If $\mathcal{E}_\Lambda(P)=0$, then
    \[
    P \preceq 
    \Pi_W \preceq 
    \Pi_{\mathrm{L}((\Lambda U)^\perp, U)} = \Pi_U\otimes\bar{\Pi}_{(\Lambda U)^\perp}
    \]
    where $\Lambda U := \{\Lambda x: x\in U\}$, and $\Pi_S$ denotes
    the orthogonal projection onto a linear space $S$.
\end{lemma}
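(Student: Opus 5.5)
The plan is to prove the three relations in the chain separately: first $P\preceq\Pi_W$, then $W\subseteq\mathrm{L}((\Lambda U)^\perp,U)$, and finally the tensor-product form of the projector. The context is the standing assumption of this subsection, $\Lambda\succ 0$. Write $P=\sum_k\V(M_k)\V(M_k)^\dagger$ as in the definitions, with the $M_k$ pairwise orthogonal and nonzero (absorb the eigenvalues $p_k\in(0,1]$ into $M_k$). Then $P$ is supported on $\V(W)$ and has operator norm at most $1$, so $P\preceq\Pi_W$ is immediate from $0\preceq P\preceq I$.

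For the middle inequality I will show $W\subseteq\mathrm{L}((\Lambda U)^\perp,U)$, i.e.\ every $X\in W$ has image in $U$ and kills $\Lambda U$. The image condition holds by the definition of $U$. For the kernel condition I use \Cref{fact:energy}: since every summand is nonnegative, $\mathcal{E}_\Lambda(P)=0$ forces $\Lambda M_k\Lambda M_\ell\Lambda=0$ for all $k,\ell$. Because $\Lambda\succ0$ is invertible, this gives $M_k\Lambda M_\ell=0$. By bilinearity, $X\Lambda Y=0$ for all $X,Y\in W$. Hence $X$ annihilates $\Lambda\,\mathrm{Im}(Y)$ for every $Y\in W$, and therefore annihilates $\Lambda U=\sum_Y\Lambda\,\mathrm{Im}(Y)$. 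So $X$ vanishes on $\Lambda U$ and maps into $U$, which puts $X$ in $\mathrm{L}((\Lambda U)^\perp,U)$. Inclusion of subspaces then gives $\Pi_W\preceq\Pi_{\mathrm{L}((\Lambda U)^\perp,U)}$.

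For the last equality, I identify the vectorized space. Take orthonormal bases $\{x_i\}$ of $U$ and $\{y_j\}$ of $(\Lambda U)^\perp$. The matrices $x_iy_j^\dagger$ form a Frobenius-orthonormal basis of $\mathrm{L}((\Lambda U)^\perp,U)=\mathrm{span}\{xy^\dagger\}$. Their vectorizations are $x_i\otimes\bar y_j$, and these are orthonormal because $\V$ preserves inner products. Summing the rank-one projectors gives $\Pi_U\otimes\overline{\Pi_{(\Lambda U)^\perp}}$, which is the claimed $\Pi_U\otimes\bar\Pi_{(\Lambda U)^\perp}$.

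The only step needing care is passing from zero energy to $X\Lambda Y=0$. This uses both the nonnegativity of each Frobenius term and the invertibility of $\Lambda$; without $\Lambda\succ0$ one only gets $\Lambda X\Lambda Y\Lambda=0$. The remaining steps are bookkeeping with the vectorization identities.
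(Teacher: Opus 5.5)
Your proposal is correct and follows the paper's proof essentially step by step. You get $P\preceq\Pi_W$ from $0\preceq P\preceq I$ and the support of $P$ lying in $\V(W)$; zero energy together with invertibility of $\Lambda$ gives $X\Lambda Y=0$ on $W$, hence $W\subseteq\mathrm{L}((\Lambda U)^\perp,U)$; and the tensor-product form of the projector comes from product orthonormal bases $x_i\otimes\bar y_j$.
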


\begin{proof}
    By $0\preceq P\preceq I$, we have that $\|M_k\|_F^2 = \|\V(M_k)\|^2 \leq 1$.
    Thus
    \begin{equation} \label{eq:p-piw}
    P = \sum_{k} \V(M_k)\V(M_k)^\dagger \preceq 
    \sum_{k} \frac{1}{\|M_k\|_F^2} \V(M_k)\V(M_k)^\dagger = \Pi_W.
    \end{equation}

    By the zero-energy assumption and \Cref{fact:energy}, we have
    \[
    \Es_\Lambda(P) = \sum_{k,\ell} \|\Lambda M_k\Lambda M_\ell \Lambda\|_F^2 = 0.
    \]
    Together with $\Lambda\succ 0$, it implies that 
    \(
    M_k\Lambda M_\ell = 0
    \)
    for any $k,\ell$. As $\{M_k\}$ spans $W$, we have
    \[
    X\Lambda Y = 0 \quad\text{for any } X,Y \in W.
    \]
    For any $X \in W$, we have
    $\mathrm{Im}(X) \subseteq \sum_{Y \in W} \mathrm{Im}(Y) = U$.
    Moreover, we claim that 
    \(
    \Lambda U \subseteq \mathrm{ker}(X).
    \)
    Indeed, for any vector $u \in \Lambda U$, there exist $Y_1,\ldots,Y_{\dim(W)} \in W$ and $z_1,\ldots z_{\dim(W)} \in
    \mathbb{C}^N$ such that $u = \Lambda\sum_i Y_i z_i$. Then 
    \(
    X u = \sum_i (X \Lambda Y_i) z_i = 0,
    \)
    since $X \Lambda Y_i = 0$ for $X,Y_i \in W$. Thus 
    \[
    \mathrm{Im}(X) \subseteq U \quad\text{and}\quad
    \Lambda U \subseteq \mathrm{ker}(X) \quad\text{for any } X \in W,
    \]
    which implies
    \[
    W \subseteq \mathrm{L}((\Lambda U)^\perp, U).
    \]
    Thus
    \begin{equation} \label{eq:piw-pil}
    \Pi_W \preceq \Pi_{\mathrm{L}((\Lambda U)^\perp, U)}.
    \end{equation}
    Let $\{u_i\}$ and $\{v_i\}$ be orthonormal bases of $U$ and $(\Lambda U)^\perp$, respectively. Then
    \begin{equation} \label{eq:pil-pipi}
    \Pi_{\mathrm{L}((\Lambda U)^\perp, U)}
    = \sum_{i,j} \V\left(u_i v_j^\dagger\right)\V\left(u_i v_j^\dagger\right)^\dagger
    = \sum_i u_iu_i^\dagger \otimes \sum_j \overline{v_jv_j^\dagger}
    = \Pi_U \otimes \bar{\Pi}_{(\Lambda U)^\perp},
    \end{equation}
    where the second equality is by $\V(xy^\dagger)=x\otimes \bar{y}$.
    The conclusion follows by \eqref{eq:p-piw}, \eqref{eq:piw-pil}, and \eqref{eq:pil-pipi}.
\end{proof}

The next lemma bounds the quantities in Property \ref{P2}.

\begin{lemma} \label{lem:ze-alpha-beta}
    Define
    $
    \alpha := \Tr(\Lambda^2 \Pi_U)$
    and
    $
    \beta := \Tr(\Lambda^2 \Pi_{\Lambda U}).
    $
    Then $0\leq \alpha \leq \beta\leq 1$.
\end{lemma}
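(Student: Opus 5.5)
We must show $0\le\alpha\le\beta\le1$, where $\alpha=\Tr(\Lambda^2\Pi_U)$ and $\beta=\Tr(\Lambda^2\Pi_{\Lambda U})$. The outer bounds are immediate. Both quantities have the form $\Tr(\Lambda\Pi\Lambda)$ for an orthogonal projector $\Pi$, so they are nonnegative. They are also at most $\Tr(\Lambda^2)=\|\Lambda\|_F^2=1$, because $0\preceq\Pi\preceq I$. The real content is therefore the middle inequality $\alpha\le\beta$.

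For $\alpha\le\beta$, the plan is to use the variational characterization of the Frobenius norm of a compression. Write $\alpha=\|\Pi_U\Lambda\|_F^2=\|\Lambda\Pi_U\|_F^2$ and $\beta=\|\Pi_{\Lambda U}\Lambda\|_F^2$.

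First, since $\Lambda\succ0$ is invertible, $\dim(\Lambda U)=\dim(U)=:d$. Next, observe that
\[
\Lambda\Pi_U\Lambda=(\Lambda\Pi_U)(\Lambda\Pi_U)^\dagger,
\]
so this PSD matrix has range exactly $\operatorname{Im}(\Lambda\Pi_U)=\Lambda U$. Hence
\[
\beta=\Tr(\Pi_{\Lambda U}\Lambda^2)\ge\Tr(\Pi_{\Lambda U}\,\Lambda\Pi_U\Lambda)=\Tr(\Lambda\Pi_U\Lambda)=\alpha .
\]
The inequality holds because $\Lambda^2-\Lambda\Pi_U\Lambda=\Lambda\Pi_{U^\perp}\Lambda\succeq0$, so its trace against the projector $\Pi_{\Lambda U}$ is nonnegative. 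The following equality holds because $\Pi_{\Lambda U}$ acts as the identity on the range of $\Lambda\Pi_U\Lambda$.

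Equivalently, $\Tr(\Lambda^2\Pi_{\Lambda U})=\|\Pi_{\Lambda U}\Lambda\|_F^2\ge\|\Pi_{\Lambda U}\Lambda\Pi_U\|_F^2=\|\Lambda\Pi_U\|_F^2$. This is the form I would write down.

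There is no serious obstacle here. The only point requiring care is identifying the range of $\Lambda\Pi_U$ with $\Lambda U$, so that $\Pi_{\Lambda U}\Lambda\Pi_U=\Lambda\Pi_U$. This follows directly from the definition $\Lambda U=\{\Lambda x:x\in U\}$, and in fact does not need invertibility of $\Lambda$. I expect this lemma to be used later, together with \Cref{lem:ze-structure}, to show that $m_\Lambda(P)\le\alpha(1-\beta)\le\alpha(1-\alpha)\le\frac14$. In the equality case, $\alpha=\beta=\frac12$ would then force $\Lambda U=U$.
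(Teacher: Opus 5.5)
Your proof is correct, and it takes a different route from the paper's.

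The paper works in coordinates. It picks an orthonormal basis matrix $V$ of $U$ and writes $\Pi_{\Lambda U}=\Lambda V X^{-1}V^\dagger\Lambda$ with $X=V^\dagger\Lambda^2V$. It then obtains $\beta-\alpha=\Tr\bigl((Y-X^2)X^{-1}\bigr)$, where $Y-X^2=V^\dagger\Lambda^2\Pi_{U^\perp}\Lambda^2V\succeq0$. This requires $X$ to be invertible, and hence uses $\Lambda\succ0$.

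You instead compress the operator inequality $\Lambda^2\succeq\Lambda\Pi_U\Lambda$ by $\Pi_{\Lambda U}$, together with the range identity $\Pi_{\Lambda U}\Lambda\Pi_U=\Lambda\Pi_U$. This avoids any inversion and works for arbitrary PSD $\Lambda$. Your remark that $\dim(\Lambda U)=\dim(U)$ is never actually used.

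Your argument also gives the slack exactly:
\[
\beta-\alpha=\Tr(\Pi_{\Lambda U}\Lambda\Pi_{U^\perp}\Lambda)=\|\Pi_{U^\perp}\Lambda\Pi_{\Lambda U}\|_F^2 .
\]
So $\alpha=\beta$ immediately gives $\Lambda^2U\subseteq U$. That is the same conclusion ($\Pi_{U^\perp}\Lambda^2V=0$) the paper extracts from $Y-X^2=0$ and $X^{-1}\succ0$ when proving Property~\ref{P3}.

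The paper's version yields a concrete matrix formula for $\beta-\alpha$. For this lemma and its equality case, yours is shorter and slightly more general.
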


\begin{proof}
    First, by $\Tr(\Lambda^2)=\|\Lambda\|_F^2=1$ and $0\preceq \Pi_U,
    \Pi_{\Lambda U}\preceq I$, we have $\alpha,\beta \in[0,1]$. 

    Let $\{v_i\in \mathbb{C}^N\}_{i\in[r]}$ be an orthonormal basis of $U$, and define matrix
    \[
    V := [v_1,\ldots,v_r] \in \mathbb{C}^{N\times r},
    \]
    and set
    \[
    X := V^\dagger \Lambda^2 V, \quad
    Y := V^\dagger \Lambda^4 V \in \mathbb{C}^{r\times r}.
    \]
    Note that $X\succ 0$ since $\Lambda\succ 0$ and $V$ has full column rank.
    By observing that $\Pi_U = VV^\dagger$, we have
    \begin{equation}
    \label{eq:alpha}        
    \alpha = \mathrm{Tr}(\Lambda^2 \Pi_U) = \mathrm{Tr}(V^\dagger \Lambda^2 V) = \mathrm{Tr}(X).
    \end{equation}
    As the space $\Lambda U$ is spanned by columns of $\tilde{V} := \Lambda V$, 
    its orthogonal projector can be computed as
    \[
    \Pi_{\Lambda U} = 
    \tilde{V}(\tilde{V}^\dagger \tilde{V})^{-1}\tilde{V}^\dagger
    = \Lambda V (V^\dagger \Lambda^2 V)^{-1} V^\dagger \Lambda
    = \Lambda V X^{-1} V^\dagger \Lambda.
    \]
    Then
    \begin{equation} \label{eq:beta}
    \beta = \Tr(\Lambda^2 \Pi_{\Lambda U}) = 
    \Tr(\Lambda^2 \Lambda V X^{-1} V^\dagger \Lambda) =
    \Tr(V^\dagger \Lambda^4 V X^{-1}) = \Tr(Y X^{-1}).
    \end{equation}

    Combining \eqref{eq:alpha} and \eqref{eq:beta}, we have
    \begin{equation} \label{eq:alpha-beta}
    \beta - \alpha = \Tr(Y X^{-1}) - \Tr(X) = \Tr((Y-X^2)X^{-1}) \geq 0,
    \end{equation}
    because $X^{-1} \succ 0$, and 
    \begin{equation} \label{eq:y-x2}
    Y - X^2 = V^\dagger \Lambda^4 V - (V^\dagger \Lambda^2 V)^2 = 
    V^\dagger \Lambda^2 (I-VV^\dagger) \Lambda^2 V =
    V^\dagger \Lambda^2 \Pi_{U^\perp} \Lambda^2 V 
    \succeq 0. \qedhere
    \end{equation}
\end{proof}

Combining the above two lemmas,
we upper bound the mass of a zero-energy operator.

\begin{lemma} \label{lem:ze-mm-upper}
    If $\mathcal{E}_\Lambda(P)=0$, then
    \(
    m_\Lambda(P) \leq \frac14.
    \)
\end{lemma}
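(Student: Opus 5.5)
Combine \Cref{lem:ze-structure} with \Cref{lem:ze-alpha-beta}, using the standing assumptions $0\preceq P\preceq I$, $\|\Lambda\|_F=1$ and $\Lambda\succ0$. By \Cref{lem:ze-structure}, zero energy gives
\[
P\preceq \Pi_U\otimes\bar\Pi_{(\Lambda U)^\perp}.
\]
The mass functional $m_\Lambda(P)=\Tr[P(\Lambda^2\otimes\bar\Lambda^2)]$ is monotone in the PSD order, since $\Lambda^2\otimes\bar\Lambda^2\succeq0$. Hence
\[
m_\Lambda(P)\le \Tr\bigl[(\Pi_U\otimes\bar\Pi_{(\Lambda U)^\perp})(\Lambda^2\otimes\bar\Lambda^2)\bigr]
=\Tr(\Lambda^2\Pi_U)\cdot\Tr\bigl(\bar\Lambda^2\bar\Pi_{(\Lambda U)^\perp}\bigr).
\]

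The first factor is $\alpha$ by definition. For the second factor, complex conjugation preserves traces of products up to conjugation, and the value is real, so
\[
\Tr\bigl(\bar\Lambda^2\bar\Pi_{(\Lambda U)^\perp}\bigr)=\Tr\bigl(\Lambda^2(I-\Pi_{\Lambda U})\bigr)=1-\beta,
\]
using $\Tr(\Lambda^2)=1$. Therefore $m_\Lambda(P)\le \alpha(1-\beta)$.

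\Cref{lem:ze-alpha-beta} gives $0\le\alpha\le\beta\le1$, so $1-\beta\le1-\alpha$ and
\[
m_\Lambda(P)\le\alpha(1-\alpha)\le\tfrac14
\]
by AM--GM.

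No step is a serious obstacle, since both ingredient lemmas are already established. The only points needing care are the conjugation bookkeeping in the second factor and checking that $\Pi_S\otimes\bar\Pi_T$ against $\Lambda^2\otimes\bar\Lambda^2$ factorizes as a product of traces. It is also worth recording where equality can occur, for later use in \Cref{thm:ze-mm}: it requires $\alpha=\beta=\tfrac12$ and $P=\Pi_U\otimes\bar\Pi_{(\Lambda U)^\perp}$ on the relevant support.
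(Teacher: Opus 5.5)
Your proposal is correct and is essentially the paper's proof: apply \Cref{lem:ze-structure} to get $P\preceq \Pi_U\otimes\bar\Pi_{(\Lambda U)^\perp}$, use monotonicity of the mass to bound $m_\Lambda(P)\le\alpha(1-\beta)$, and then invoke $0\le\alpha\le\beta\le1$ from \Cref{lem:ze-alpha-beta} to get $\alpha(1-\beta)\le\alpha(1-\alpha)\le\frac14$. Your conjugation bookkeeping for the second factor and your equality remark (which, since $\Lambda\succ0$, in fact forces $P=\Pi_U\otimes\bar\Pi_{(\Lambda U)^\perp}$ exactly) also match how the paper proceeds in \Cref{thm:ze-mm}.
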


\begin{proof}
    By \Cref{lem:ze-structure}, we have
    \[
    P \preceq \Pi_{\mathrm{L}((\Lambda U)^\perp, U)} = 
    \Pi_U\otimes\bar{\Pi}_{(\Lambda U)^\perp},
    \]
    which implies
    \[
    m_\Lambda(P) = \Tr(P(\Lambda^2\otimes \bar{\Lambda}^2)) \leq
    \Tr(\Pi_{\mathrm{L}((\Lambda U)^\perp, U)}(\Lambda^2\otimes \bar{\Lambda}^2))  
    = m_\Lambda(\Pi_{\mathrm{L}((\Lambda U)^\perp, U)}),
    \]
    and 
    \[
    m_\Lambda(\Pi_{\mathrm{L}((\Lambda U)^\perp, U)})
    = \Tr(\Lambda^2 \Pi_U) \cdot \overline{\Tr(\Lambda^2 \Pi_{(\Lambda U)^\perp})} =
    \alpha (1-\beta),
    \]
    where $\alpha := \Tr(\Lambda^2 \Pi_U)$ and $\beta := \Tr(\Lambda^2 \Pi_{\Lambda U})$.
    By \Cref{lem:ze-alpha-beta}, we have $0\leq \alpha \leq \beta\leq 1$, and thus
    \begin{equation} \label{eq:ze-chain}
    m_\Lambda(P) \leq
    m_\Lambda(\Pi_{\mathrm{L}((\Lambda U)^\perp, U)}) =
    \alpha (1-\beta) \leq \alpha (1-\alpha) \leq \frac14. \qedhere
    \end{equation}
\end{proof}

Finally, we prove \Cref{thm:ze-mm} by analyzing the equality condition forced by \eqref{eq:ze-chain}.

\begin{proof}[Proof of \Cref{thm:ze-mm}]
    By \Cref{lem:ze-mm-upper}, we have $m_\Lambda(P) \leq 1/4$. Since
    $m_\Lambda(P) = 1/4$, all inequalities in \eqref{eq:ze-chain} must be saturated. In particular, we have
    \[
    m_\Lambda(P) = m_\Lambda(\Pi_{\mathrm{L}((\Lambda U)^\perp, U)}), \quad
    \beta = \alpha, \quad\text{and}\quad
    \alpha = \frac12.
    \]

    We first prove \ref{P2}. By $\alpha=\frac12$, we have
    \[
    \Tr(\Lambda^2 \Pi_U) = \alpha = \frac12,
    \quad\text{and}\quad
    \Tr(\Lambda^2 \Pi_{U^\perp}) = 
    \Tr(\Lambda^2) - \Tr(\Lambda^2 \Pi_U) = 1 - \frac12 = \frac12.
    \]

    Next, we prove \ref{P3}. 
    Define $X = V^\dagger \Lambda^2 V$ and $Y = V^\dagger \Lambda^4 V$
    as in the proof of \Cref{lem:ze-alpha-beta}. By 
    $\alpha=\beta$
    and \eqref{eq:alpha-beta}, we have
    \(
    0 = \beta - \alpha = \Tr((Y-X^2)X^{-1}), 
    \)
    which implies $Y-X^2=0$, as $Y-X^2\succeq 0$ by \eqref{eq:y-x2} and $X^{-1}\succ 0$. 
    Combining with \eqref{eq:y-x2}, we have
    \[
    0 = Y-X^2 = V^\dagger \Lambda^2 \Pi_{U^\perp} \Lambda^2 V =
        \left(\Pi_{U^\perp}\Lambda^2V\right)^\dagger
        \left(\Pi_{U^\perp}\Lambda^2V\right).
    \]
    
    Thus $\Pi_{U^\perp}\Lambda^2V = 0$. As $U$ is spanned by columns of $V$, we
    have 
    \[
    \Lambda^2 U \subseteq U,
    \] i.e., $U$ is invariant under $\Lambda^2$.
    Since $\Lambda$ is Hermitian, any invariant subspace of $\Lambda^2$ is spanned
    by eigenvectors of $\Lambda^2$. 
    Since $\Lambda$ and $\Lambda^2$ share the same eigenvectors, 
    we also have $\Lambda U\subseteq U$. Moreover, 
    because $\Lambda$ is invertible, we have
    \[
        \Lambda U=U.
    \]
    
    Then for any vector $u\in U, u^\perp\in U^\perp$, 
    we have $\braket{u^\perp, \Lambda u}=0$ since $\Lambda u \in \Lambda U=U$. 
    Since $\Lambda$ is Hermitian,
    we have
    \(
    \braket{\Lambda u^\perp, u} = 
    \braket{u^\perp, \Lambda u} = 0
    \).
    Thus $U^\perp$ is also $\Lambda$-invariant, i.e., 
    \[\Lambda U^\perp=U^\perp.\]

    Finally, we prove \ref{P1}. By \Cref{lem:ze-structure}, we know that
    \[
    P \preceq \Pi_{\mathrm{L}((\Lambda U)^\perp, U)}.
    \]
    We claim that $P = \Pi_{\mathrm{L}((\Lambda U)^\perp, U)}$. Indeed, 
    if $P \ne \Pi_{\mathrm{L}((\Lambda U)^\perp, U)}$, then
    $(\Pi_{\mathrm{L}((\Lambda U)^\perp, U)}-P)$ is a nonzero PSD.
    By $\Lambda^2\otimes \bar{\Lambda}^2\succ 0$, we would have
    \[
    m_\Lambda(\Pi_{\mathrm{L}((\Lambda U)^\perp, U)}) - m_\Lambda(P) = 
    \Tr\left[(\Pi_{\mathrm{L}((\Lambda U)^\perp, U)} - P)(\Lambda^2\otimes \bar{\Lambda}^2)\right] >0,
    \]
    contradicting the fact that $m_\Lambda(P) = m_\Lambda(\Pi_{\mathrm{L}((\Lambda U)^\perp, U)})$.
    Furthermore, by $\Lambda U=U$ from \ref{P3}, we have
    \[
    P = \Pi_{\mathrm{L}((\Lambda U)^\perp, U)}
    = \Pi_{\mathrm{L}(U^\perp, U)}.
    \]
    This proves all three claimed properties.
\end{proof}

\subsection{Approximation of Zero-Energy Operators}

In this section,
we show that if an operator of maximal mass has small energy, then it can be approximated by 
an operator with exactly zero energy,
as stated in \Cref{thm:weighted-stability}.
We remark that this theorem does not require $\Lambda$ to be strictly positive.

\begin{theorem} \label{thm:weighted-stability}
Assume $N\geq 2$.
Given PSD $P\in \mathrm{L}(\mathbb{C}^N\otimes \mathbb{C}^N)$ with $0\preceq
P\preceq I$, and PSD $\Lambda\in\mathbb{C}^{N\times N}$ with $\|\Lambda\|_F=1$, if
\[
\mathcal{E}_\Lambda(P) = \epsilon \geq 0, 
\quad\text{and}\quad
m_{\Lambda}(P)=\frac14,
\]
then there exists PSD $P'$ with $0\preceq P'\preceq I$ and PSD $\Lambda'$ with 
$\|\Lambda'\|_F=1$ such that
\[
\mathcal{E}_{\Lambda'}(P') = 0,\quad 
m_{\Lambda'}(P')=\frac14,\quad 
\|\Lambda-\Lambda'\|_F^2 = O(\epsilon^{1/4}),
\quad\text{and}\quad
\|P-P'\|_\Lambda = O(\epsilon^{1/4}).
\]
\end{theorem}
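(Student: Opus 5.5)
We follow the spectral source--sink strategy sketched in the technical overview. Write $P=\sum_k \V(M_k)\V(M_k)^\dagger$ and form the weighted out-degree matrix $B=\sum_k M_k\Lambda^2M_k^\dagger$ and in-degree matrix $A=\sum_k M_k^\dagger\Lambda^2M_k$, so that $\epsilon=\Tr(\Lambda A\Lambda B)$ by \Cref{fact:energy}. Diagonalize the sandwiched operator $\Lambda B\Lambda=\sum_i\beta_i u_iu_i^\dagger$ and set $U:=\operatorname{span}\{u_i:\beta_i>\theta\}$ with threshold $\theta=\sqrt\epsilon$. In the eigenbasis $\{u_i\}$ we define flows $y_{ij}:=\sum_k|u_i^\dagger\Lambda M_k\Lambda u_j|^2$. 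Their row sums are $\beta_i$, and their column sums $\alpha_j=u_j^\dagger\Lambda A\Lambda u_j$ satisfy $\sum_j\alpha_j\beta_j\le\epsilon$ only when $\Lambda$ is nearly diagonal in this basis. Because of that caveat, the first step is a trace-type bound on the $\Lambda$-mass of $P$ outside $\Pi_U\otimes\bar\Pi_{U^\perp}$. Splitting exactly as in \Cref{lem:unweighted-directed-stability}, we aim for
\[
\Tr\bigl[P(\Lambda^2\otimes\bar\Lambda^2)\bigr]-\Tr\bigl[P(\Lambda\Pi_U\Lambda\otimes\overline{\Lambda\Pi_{U^\perp}\Lambda})\bigr]\le \theta+\epsilon/\theta=O(\sqrt\epsilon).
\]
This uses $\sum_{\beta_i\le\theta}\beta_i\le\theta\Tr(\Lambda^2)$ together with Markov's inequality on the high-eigenvalue part.

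The main obstacle is the second step: showing that $\Lambda$ approximately preserves $U\oplus U^\perp$, that is, $\delta:=\|\Pi_U\Lambda\Pi_{U^\perp}\|_F^2=O(\sqrt\epsilon)$. Following the overview, we introduce the subnormalized states $\sigma_0=\Lambda\Pi_{U^\perp}\Lambda$ and $\sigma_1=\Lambda\Pi_U\Lambda$, whose sum is $\Lambda^2$ with trace $1$. A direct computation should show that the PGM error for this pair is, up to constants, $2\delta$.

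By \Cref{lem:pgm}, it then suffices to exhibit some test $0\preceq T\preceq I$ with error $O(\sqrt\epsilon)$. We would first try $T$ built from the measurement itself. Take $T$ to be the normalized marginal $\Tr_{\mathcal B}[(I\otimes\bar\Lambda^2)P]$, suitably rescaled and truncated; this ``source'' indicator is roughly $4B$ in spirit. Its error against $\sigma_1$ is controlled by the mass estimate of step one. Its error against $\sigma_0$ is controlled by the energy, since outgoing weight on low states times incoming weight is $\epsilon$.

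Mass $1/4$ together with $\alpha(1-\alpha)\ge 1/4-O(\sqrt\epsilon)$ forces the split to be balanced: $\Tr(\Lambda^2\Pi_U)=1/2\pm O(\epsilon^{1/4})$. This is where the exponent $1/4$ enters, through the square root in the quadratic.

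Finally we construct the approximating pair. Let $\Lambda_0=\Pi_U\Lambda\Pi_U+\Pi_{U^\perp}\Lambda\Pi_{U^\perp}$, which is PSD, differs from $\Lambda$ by $2\delta$ in squared Frobenius norm, and commutes with $\Pi_U$. Rescale its two blocks by $(2\Tr(\Lambda_0^2\Pi_U))^{-1/2}$ and $(2\Tr(\Lambda_0^2\Pi_{U^\perp}))^{-1/2}$ to obtain $\Lambda'$. Then $\Lambda'$ is normalized, preserves $U$ and $U^\perp$, and gives each side weight $1/2$, with $\|\Lambda-\Lambda'\|_F^2=O(\delta+\epsilon^{1/4})=O(\epsilon^{1/4})$.

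Set $P'=\Pi_{\mathrm L(U^\perp,U)}=\Pi_U\otimes\bar\Pi_{U^\perp}$. Since $\Lambda'$ preserves the decomposition, $X\Lambda'Y=0$ for all $X,Y$ in the cut space, so $\mathcal E_{\Lambda'}(P')=0$ and $m_{\Lambda'}(P')=\tfrac12\cdot\tfrac12=\tfrac14$. It remains to bound $\|P-P'\|_\Lambda$. Step one gives $\|(I-P')P\|$-type leakage of $O(\sqrt\epsilon)$ in $\Lambda$-weighted trace. Equal masses together with $P\preceq I$ give that $P'-P'PP'$ is small in $\Lambda$-weighted trace; here $m_\Lambda(P')\approx 1/4$ up to $O(\delta+\epsilon^{1/4})$ because $\Lambda\approx\Lambda'$. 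The standard gentle-measurement/Cauchy--Schwarz conversion from these one-sided trace bounds to a trace-norm bound then yields $O(\epsilon^{1/4})$ in $\|\cdot\|_\Lambda$.
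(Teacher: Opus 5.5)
Your proposal has two genuine gaps: the source--sink split is taken in the wrong operator, and the test in the invariance step cannot work.

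\textbf{The split.} You split according to the spectrum of the sandwiched operator $\Lambda B\Lambda$ with the dimension-free threshold $\theta=\sqrt\epsilon$. This operator is mis-scaled. Since $B\preceq I$, we have $\Lambda B\Lambda\preceq\Lambda^2$, so for $\Lambda=I/\sqrt N$ every eigenvalue of $\Lambda B\Lambda$ is at most $1/N$. As soon as $\epsilon\ge N^{-2}$, which is compatible with $\epsilon$ arbitrarily small once $N$ is large, the consequences are:
\begin{itemize}
\item $U=\{0\}$ and $P'=0$;
\item the rescaling defining $\Lambda'$ divides by zero;
\item the leakage in your step one is $\tfrac14$ rather than $O(\sqrt\epsilon)$.
\end{itemize}
Correspondingly, your inequality $\sum_{\beta_i\le\theta}\beta_i\le\theta\Tr(\Lambda^2)$ is false; it only gives $\theta\dim U^\perp$. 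And, as you note yourself, $\sum_j\alpha_j\beta_j=\Tr(\Lambda A\Lambda\cdot\Lambda B\Lambda)$ is not the energy, so the Markov step does not yield $\epsilon/\theta$ either.

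The split must be taken in the spectrum of $B=\sum_kM_k\Lambda^2M_k^\dagger$ itself. In its eigenbasis $\{v_j\}$, with eigenvalues $\rho_j$, the energy is exactly $\sum_j\rho_jc_j$ where $c_j=v_j^\dagger\Lambda A\Lambda v_j$; this handles the column part. The row sums, however, are now $v_i^\dagger\Lambda B\Lambda v_i$, and on the low side the available bound is $\Tr(\Pi_{U^\perp}\Lambda B\Lambda\Pi_{U^\perp})\le\delta+\sqrt\epsilon$. So the leakage estimate needs $\delta=O(\sqrt\epsilon)$ as input. The invariance lemma therefore has to be proved first, independently of the leakage estimate. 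This is the reverse of your ordering, in which the invariance test relies on step one.

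\textbf{The test.} The invariance step is the heart of the theorem, and a test that is a function of the out-degree matrix $B$ cannot do it. This covers a rescaled or truncated $B$, or $\Pi_U$ itself. Its error on $\sigma_0=\Lambda\Pi_{U^\perp}\Lambda$ measures exactly the weight that $\Lambda$ transports from $U^\perp$ into $U$:
\begin{itemize}
\item for $T=\Pi_U$ it equals $\delta$;
\item for $T=cB$ it equals $c\Tr(\Pi_{U^\perp}\Lambda B\Lambda\Pi_{U^\perp})$, which lies in $[c\sqrt\epsilon\,\delta,\;c(\delta+\sqrt\epsilon)]$.
\end{itemize}
So the argument only returns $\delta\le O(\delta)$. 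Nor is this error controlled by the energy, as you claim. Take $\Lambda\succ0$ and $P=\Pi_{\mathrm{L}((\Lambda U)^\perp,U)}$ with $\Lambda U\neq U$. Then $\mathcal E_\Lambda(P)=0$ and $B=(1-\beta)\Pi_U$ with $\beta=\Tr(\Lambda^2\Pi_{\Lambda U})$, yet $\Tr(B\,\Lambda\Pi_{U^\perp}\Lambda)=(1-\beta)\delta>0$. What excludes this situation is the extremal mass $m_\Lambda(P)=\tfrac14$, and the test must be built to exploit it.

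The missing idea is the paper's \emph{in-degree} test
$V=\frac1\alpha\sum_kM_k^\dagger\Pi_U\Lambda^2\Pi_UM_k$, with $\alpha=\Tr(\Pi_U\Lambda^2)$, which accepts $\sigma_0$.
\begin{itemize}
\item Its error on $\sigma_1$ is $\frac1\alpha\Tr\bigl(P(\Pi_U\Lambda^2\Pi_U\otimes\overline{\Lambda\Pi_U\Lambda})\bigr)=O(\sqrt\epsilon)$. This follows from $\Pi_U\Lambda^2\Pi_U\preceq2\Lambda^2+2\Pi_{U^\perp}\Lambda^2\Pi_{U^\perp}$, $\sqrt\epsilon\,\Pi_U\preceq B$, and $\Pi_{U^\perp}B\Pi_{U^\perp}\preceq\sqrt\epsilon I$.
\item Its error on $\sigma_0$ is at most $\frac1\alpha\bigl(\alpha(1-\alpha)-\tfrac14+5\sqrt\epsilon\bigr)\le5\sqrt\epsilon/\alpha$. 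Here $\Tr(B\Pi_U\Lambda^2\Pi_U)\ge\tfrac14-\sqrt\epsilon$, which comes from mass exactly $\tfrac14$, is played against $\alpha(1-\alpha)\le\tfrac14$.
\end{itemize}

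Once $U$ comes from $B$ and $\delta=O(\sqrt\epsilon)$ is proved this way, your remaining steps go through essentially as in the paper. These are the balance estimate, the rescaled block-diagonal $\Lambda'$, and the trace-norm conversion. The last one also needs an $O(\sqrt\delta)$ correction, because $\Lambda\otimes\bar\Lambda$ does not commute with $P'$.
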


\subsubsection{Candidate Construction} \label{sec:ze-candidate}

Notice that \Cref{thm:weighted-stability} is trivial if $\Es_\Lambda(P)=0$.
In the following, we assume $\Es_\Lambda(P)=\epsilon>0$. 
We construct $P'$ and $\Lambda'$ as follows.
First recall that
\[
\Es_\Lambda(P) = \Tr(\Lambda A \Lambda B),
\quad\text{where}\quad
A := \Tr_\As\left((\Lambda^2\otimes I)P\right)^\top,\ 
B := \Tr_\Bs\left((I\otimes \bar{\Lambda}^2)P\right).
\]
As $0\preceq B=\Tr_\Bs((I\otimes \bar{\Lambda})P(I\otimes \bar{\Lambda}))\preceq 
\Tr_\Bs(I\otimes \bar{\Lambda}^2)=I$, we
can write its spectral decomposition as
\[
B=\sum_i \rho_i v_i v_i^\dagger
\]
where $\{v_i\}$ are orthonormal eigenvectors, and 
$1 \geq \rho_1 \geq \rho_2 \geq \cdots \geq \rho_N \geq 0$ are 
eigenvalues of $B$ in non-increasing order.
Define a threshold index
\[
t := \max \{i\in[N]: \rho_i\geq \sqrt{\epsilon}\}.
\]
Then set up the orthogonal decomposition $\mathbb{C}^N=U\oplus U^\perp$ where
$$
U=\mathrm{span}\{v_i:i\leq t\}
\quad\text{and}\quad
U^\perp=\mathrm{span}\{v_i:i > t\}.
$$
Then the candidate construction is
$$
P':= \Pi_{\mathrm{L}(U^\perp,U)}
\quad\text{and}\quad
\Lambda' := \frac{X}{\sqrt{2\mathrm{Tr}(X^2)}} + \frac{Y}{\sqrt{2\mathrm{Tr}(Y^2)}},
$$
where 
\(
X:=\Pi_U\Lambda\Pi_U
\)
and
\(
Y:=\Pi_{U^\perp}\Lambda\Pi_{U^\perp}
\).

We claim that for small enough $\epsilon$,
the above construction is well-defined, i.e.,
$t$ exists and $X,Y$ are both nonzero;
see Proof of \Cref{thm:weighted-stability} for details.
Then it is immediate that $P'$ is an orthogonal projector, and $\Lambda'$ is PSD with
$\|\Lambda'\|_F=1$. We can also verify that the pair $(P',\Lambda')$ achieves
the maximal mass and zero energy, stated as follows.

\begin{fact} \label{fact:ze-candidate}
$m_{\Lambda'}(P') = \frac14$ and $\mathcal{E}_{\Lambda'}(P')=0$.
\end{fact}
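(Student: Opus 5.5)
The plan is to compute both quantities directly from \Cref{def:energy}, using the block structure of $\Lambda'$ with respect to $U\oplus U^\perp$. By construction $X=\Pi_U\Lambda\Pi_U$ is supported on $U$ and $Y=\Pi_{U^\perp}\Lambda\Pi_{U^\perp}$ on $U^\perp$. Hence $\Lambda'$ is block diagonal, so $U$ and $U^\perp$ are $\Lambda'$-invariant, and $\Lambda'^2 = \frac{X^2}{2\Tr(X^2)}+\frac{Y^2}{2\Tr(Y^2)}$ with no cross terms. From this we read off
\[
\Tr(\Lambda'^2\Pi_U)=\tfrac12,\qquad \Tr(\Lambda'^2\Pi_{U^\perp})=\tfrac12,
\]
which also confirms $\|\Lambda'\|_F=1$.

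For the mass, we use $P'=\Pi_{\mathrm L(U^\perp,U)}=\Pi_U\otimes\bar\Pi_{U^\perp}$, which is the same vectorization computation as \eqref{eq:pil-pipi}. Then
\[
m_{\Lambda'}(P')=\Tr(\Lambda'^2\Pi_U)\cdot\overline{\Tr(\Lambda'^2\Pi_{U^\perp})}=\tfrac12\cdot\tfrac12=\tfrac14,
\]
exactly as in the computation inside the proof of \Cref{lem:ze-mm-upper}.

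For the energy, we take orthonormal bases $\{u_i\}$ of $U$ and $\{w_j\}$ of $U^\perp$, so that $P'=\sum_{i,j}\V(u_iw_j^\dagger)\V(u_iw_j^\dagger)^\dagger$. By \Cref{fact:energy},
\[
\mathcal E_{\Lambda'}(P')=\sum_{(i,j),(k,l)}\|\Lambda' u_iw_j^\dagger\Lambda' u_kw_l^\dagger\Lambda'\|_F^2.
\]
Each summand contains the scalar factor $w_j^\dagger\Lambda' u_k$. Since $\Lambda'u_k\in U$ and $w_j\in U^\perp$, this factor is zero. Equivalently, every matrix $M$ in $\mathrm L(U^\perp,U)$ satisfies $M\Lambda' M'=0$, because $\Lambda'$ maps $\operatorname{Im}(M')\subseteq U$ back into $U\subseteq\ker(M)$. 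So the energy vanishes.

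The only point needing care is well-definedness: $\Tr(X^2)$ and $\Tr(Y^2)$ must be nonzero. The statement defers this to the proof of \Cref{thm:weighted-stability} under the assumption of small $\epsilon$, so here we simply assume $X,Y\neq0$. Given that assumption, the argument has no real obstacle; the one thing to verify is that the cross terms $XY$ vanish in $\Lambda'^2$, which holds because $\Pi_U\Pi_{U^\perp}=0$.
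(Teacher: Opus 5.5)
Your proposal is correct and follows essentially the same route as the paper: block-diagonality of $\Lambda'$ gives $\Tr(\Lambda'^2\Pi_U)=\Tr(\Lambda'^2\Pi_{U^\perp})=\frac12$, the factorization $P'=\Pi_U\otimes\bar{\Pi}_{U^\perp}$ then gives mass $\frac14$, and the energy vanishes because every term contains the factor $w_j^\dagger\Lambda' u_k=0$. Deferring the nonvanishing of $\Tr(X^2)$ and $\Tr(Y^2)$ to the proof of the stability theorem also matches what the paper does.
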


\begin{proof}
    As $\Lambda'$ is block diagonal with respect
    to $U\oplus U^\perp$, 
    spaces $U, U^\perp$ are invariant under $\Lambda'$,
    and thus 
    \[
    \Tr(\Lambda'^2 \Pi_U) = 
    \frac{\Tr(X^2)}{2\Tr(X^2)} + 0 = \frac12
    \quad\text{and}\quad
    \Tr(\Lambda'^2 \Pi_{U^\perp}) = 1 - \Tr(\Lambda'^2 \Pi_U) = \frac12.
    \]
    Then by the fact that $P' = \Pi_{\mathrm{L}(U^\perp,U)} = \Pi_{U}\otimes \bar{\Pi}_{U^\perp}$, we have
    \[
        m_{\Lambda'}(P') = \Tr\left[P'(\Lambda'^2\otimes \bar{\Lambda}'^2)\right] = 
        \Tr(\Pi_U \Lambda'^2) \overline{\Tr(\Pi_{U^\perp} \Lambda'^2)}
    = \frac14.
    \]

Note that the spectral decomposition of $P'$ is 
\[
P' = \sum_{k} \V(M_k)\V(M_k)^\dagger,
\]
where $\{M_k\} := \{v_i v_j^\dagger: i\leq t, j > t\}$.
For any two $M_k = v_iv_j^\dagger, M_\ell = v_p v_q^\dagger$, 
we have $v_j \in U^\perp$ and $v_p \in U$.
As $U$ is invariant under $\Lambda'$, we have $\Lambda' v_p \in U$, and thus
$v_j^\dagger \Lambda' v_p = 0$. Therefore, we have
\[
\Lambda' M_k \Lambda' M_\ell \Lambda' = 
\Lambda' v_i (v_j^\dagger \Lambda' v_p) v_q^\dagger \Lambda' = 0.
\]
By \Cref{fact:energy}, we have $\mathcal{E}_{\Lambda'}(P') 
= \sum_{k,\ell} \|\Lambda' M_k \Lambda' M_\ell \Lambda'\|_F^2 = 0$.
\end{proof}

Therefore, to conclude \Cref{thm:weighted-stability} using $(\Lambda', P')$, it
remains to show that $(\Lambda', P')$ is a good approximation of $(\Lambda, P)$.

\subsubsection{Approximate \texorpdfstring{$\Lambda$}{Lambda}-Invariance of \texorpdfstring{$U$}{U}}
\label{sec:approx-invariance}

We first prove that the subspace $U$ is approximately invariant under $\Lambda$,
when $\mathcal{E}_\Lambda(P)=\epsilon$ is small. Formally,

\begin{lemma} \label{lem:delta-bound}
    Let $Q:=\Pi_U$ and $R:=\Pi_{U^\perp}$. 
    Define the non-invariance measure of $U$ under $\Lambda$ as
    \[
    \delta := \|Q \Lambda R\|_F^2.
    \]
    Then we have $\delta = O(\sqrt{\epsilon})$.
\end{lemma}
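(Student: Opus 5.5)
The plan is to recast $\delta$ as the error of a pretty good measurement, as sketched in the technical overview. Consider the subnormalized states
\[
\sigma_0:=\Lambda R\Lambda,\qquad \sigma_1:=\Lambda Q\Lambda,\qquad \sigma:=\sigma_0+\sigma_1=\Lambda^2 ,
\]
which satisfy $\Tr(\sigma)=\|\Lambda\|_F^2=1$. Since $\sqrt{\sigma^+}=\Lambda^+$ and $\Lambda^+\Lambda=\Pi_S$ is the projector onto the support $S$ of $\Lambda$, the PGM of \Cref{def:pgm} is $M_0=\Pi_S R\Pi_S+\frac12\Pi_{\ker\Lambda}$ and $M_1=\Pi_S Q\Pi_S+\frac12\Pi_{\ker\Lambda}$. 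Using $\Pi_S\Lambda=\Lambda$ and $\Pi_{\ker\Lambda}\Lambda=0$, we get
\[
\mathrm{PGM}(\sigma_0,\sigma_1)=\Tr(R\Lambda Q\Lambda)+\Tr(Q\Lambda R\Lambda)=2\|Q\Lambda R\|_F^2=2\delta .
\]
By \Cref{lem:pgm}, $\delta\le \mathrm{OPT}(\sigma_0,\sigma_1)$. It therefore suffices to exhibit one test $0\preceq M\preceq I$, with $M$ the ``guess $0$'' operator, such that $\Tr(M\sigma_1)+\Tr((I-M)\sigma_0)=O(\sqrt\epsilon)$.

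I would build $M$ from the in-degree matrix $A=\Tr_\As((\Lambda^2\otimes I)P)^\top$, with $0\preceq A\preceq I$ by the same argument used for $B$. Two facts are immediate.

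\emph{Bound on $U$.} Since $B\succeq \sqrt\epsilon\,Q$ by the choice of $t$, and $\Lambda A\Lambda\succeq0$,
\[
\sqrt\epsilon\,\Tr(\Lambda A\Lambda Q)\le \Tr(\Lambda A\Lambda B)=\epsilon .
\]
Hence $\Tr(A\sigma_1)=\Tr(Q\Lambda A\Lambda)\le\sqrt\epsilon$.

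\emph{Mass identity.} $\Tr(A\Lambda^2)=m_\Lambda(P)=\frac14$, and similarly $\Tr(\Lambda^2 B)=\frac14$. On $U^\perp$ we have $B\preceq\sqrt\epsilon$, so $\Tr(\Lambda^2BR)\le\sqrt\epsilon$. Almost all of the out-degree mass thus sits on $U$, and, by the previous bound, almost all of the in-degree mass sits on $\sigma_0$: $\Tr(A\sigma_0)\ge \frac14-\sqrt\epsilon$.

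Classically, a near-extremal configuration has sinks of in-degree about $\frac12$ (the weight of the source side), so $A\approx\frac12 R$. This suggests the test $M:=\min(2A,I)$, via functional calculus. Its error is at most
\[
2\Tr(A\sigma_1)+\Tr(\sigma_0)-\Tr(\min(2A,I)\sigma_0)\le 2\sqrt\epsilon+\Big(\Tr(\sigma_0)-\Tr(\min(2A,I)\sigma_0)\Big).
\]
The remaining step is to show the bracket is $O(\sqrt\epsilon)$. That requires two near-extremality facts, each up to $O(\sqrt\epsilon)$: that $\Tr(\sigma_0)=\Tr(\Lambda^2R)\approx\frac12$, and that $2A$ does not exceed $I$ on the part seen by $\sigma_0$. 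For both I would redo the chain \eqref{eq:ze-chain} robustly. First, use the source--sink deletion bound of \Cref{lem:unweighted-directed-stability}, adapted to weights with the threshold $\sqrt\epsilon$, to show that $P$ has only $O(\sqrt\epsilon)$ $\Lambda$-mass outside $\Pi_U\otimes\bar\Pi_{(\Lambda U)^\perp}$. Second, bound the mass inside by $\alpha(1-\beta)\le\alpha(1-\alpha)$ as in \Cref{lem:ze-alpha-beta}. Together with $m_\Lambda(P)=\frac14$, this forces $\alpha(1-\alpha)\ge\frac14-O(\sqrt\epsilon)$.

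I expect this last step to be the main obstacle. The naive route only gives $\alpha=\frac12\pm O(\epsilon^{1/4})$, and it is unclear how the cap in $\min(2A,I)$ interacts with $\sigma_0$. If that loses a square root, I would instead compare $\Tr(A\sigma_0)\ge\frac14-\sqrt\epsilon$ directly against the operator bound $A\preceq \Tr(\Lambda^2\Pi_U)\,I+O(\sqrt\epsilon)$ on the range of $\sigma_0$. This would avoid solving for $\alpha$, and aims to show that $(I-\min(2A,I))$ has $O(\sqrt\epsilon)$ weight under $\sigma_0$.
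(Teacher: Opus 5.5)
Your reduction matches the paper's: $\mathrm{PGM}(\sigma_0,\sigma_1)=2\delta$, so by \Cref{lem:pgm} it suffices to exhibit one test with error $O(\sqrt\epsilon)$. Your estimates $\Tr(A\Lambda Q\Lambda)\le\sqrt\epsilon$ and $\Tr(BR\Lambda^2R)\le\sqrt\epsilon$ are exactly the right ingredients. The gap is that no such test is ever produced, and the one you propose cannot be analyzed with what you have.

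Since $\min(2A,I)\preceq 2A$ and $\Tr(A\sigma_0)\le\Tr(A\Lambda^2)=\frac14$, the $\sigma_0$-error of $\min(2A,I)$ is at least $\Tr(\sigma_0)-2\Tr(A\sigma_0)\ge\frac12-\eta$, where $\eta=\Tr(\Lambda^2Q)$. So this test needs the balance estimate $\eta\ge\frac12-O(\sqrt\epsilon)$. That is stronger than anything available: the only robust balance bound in the paper is $|\eta-\frac12|=O(\epsilon^{1/4})$ (\Cref{lem:tr-lambda-pi}), and it is derived \emph{from} this lemma.

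Your route to balance through a robust version of \eqref{eq:ze-chain} is circular in the same way. A weighted source--sink deletion bound must control the row term $\Tr\bigl((\Lambda\otimes\bar\Lambda)P(\Lambda\otimes\bar\Lambda)(R\otimes I)\bigr)=\Tr(B\Lambda R\Lambda)$. The operator $\Lambda R\Lambda$ has a $U$-block of trace exactly $\delta$, and on that block $B$ is only bounded below by $\sqrt\epsilon$ and is typically of order one. So the natural bound on this term is $\delta+\sqrt\epsilon$; this is precisely \eqref{eq:row-sum}.

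The fallback $A\preceq\eta I+O(\sqrt\epsilon)$ is also unjustified. $A=\sum_kM_k^\dagger\Lambda^2M_k$ also collects in-mass from sources in $U^\perp$ and from the cross blocks $Q\Lambda^2R$, and in general only $A\preceq I$ holds. Moreover, $t\mapsto\min(t,1)$ is not operator monotone, so operator bounds on $A$ do not transfer to the capped test.

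The missing idea is to change both the operator and the normalization. Keep only the in-degree coming from $U$, and divide by the exact weight $\alpha=\Tr(Q\Lambda^2)$ rather than by $\frac12$:
\[
V:=\tfrac1\alpha\textstyle\sum_kM_k^\dagger Q\Lambda^2QM_k.
\]
Because $P\preceq I$, we get $\langle x,Vx\rangle=\frac1\alpha\Tr\bigl(P(Q\Lambda^2Q\otimes\overline{xx^\dagger})\bigr)\le\|x\|^2$, so $0\preceq V\preceq I$ automatically, with no cap.

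Writing $\Lambda R\Lambda=\Lambda^2-\Lambda Q\Lambda$, the $\sigma_0$-error is $(1-\alpha)-\frac1\alpha\Tr\bigl(P(Q\Lambda^2Q\otimes\overline{\Lambda R\Lambda})\bigr)$. Two bounds control this:
\begin{itemize}
\item $\Tr\bigl(P(Q\Lambda^2Q\otimes\bar\Lambda^2)\bigr)=\Tr(BQ\Lambda^2Q)\ge\frac14-\sqrt\epsilon$;
\item $\Tr\bigl(P(Q\Lambda^2Q\otimes\overline{\Lambda Q\Lambda})\bigr)\le4\sqrt\epsilon$, using $Q\Lambda^2Q\preceq2\Lambda^2+2R\Lambda^2R$ together with your two estimates.
\end{itemize}
The inequality $\alpha(1-\alpha)\le\frac14$ then bounds the $\sigma_0$-error by $5\sqrt\epsilon/\alpha$, with no need to know that $\alpha\approx\frac12$. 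The same $4\sqrt\epsilon$ bound gives $\Tr(V\sigma_1)\le4\sqrt\epsilon/\alpha$. Finally, $\alpha\ge\Tr(BQ\Lambda^2Q)\ge\frac14-\sqrt\epsilon\ge\frac18$ for $\epsilon\le\frac1{64}$, which finishes the argument.
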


Note that $\delta=0$ if and only if $U$ is invariant under $\Lambda$, e.g., when
$\Lambda = I/\sqrt{N}$. To prove the lemma, the key observation is to relate
$\delta$ to the testing error of the pretty-good measurement (PGM, see \Cref{def:pgm}) for two subnormalized states
$\sigma_0=\Lambda R \Lambda$ and $\sigma_1=\Lambda Q \Lambda$. Then we construct
a measurement that distinguishes $\sigma_0$ and $\sigma_1$ with small error,
which implies that the optimal testing error is small, and thus the PGM testing
error is also small.

\begin{proof}[Proof of \Cref{lem:delta-bound}]
    Consider two states
\[
\sigma_0 := \Lambda R \Lambda, \quad
\sigma_1 := \Lambda Q \Lambda,
\]
and define $\sigma=\sigma_0 + \sigma_1 = \Lambda^2$. 
By \Cref{def:pgm}, the PGM for $\sigma_0, \sigma_1$ is
\[
M_0 := \Lambda^{+} \sigma_0 \Lambda^{+} + \frac12 \Pi_{\ker(\sigma)},
\quad
M_1 := \Lambda^{+} \sigma_1 \Lambda^{+} + \frac12 \Pi_{\ker(\sigma)}.
\]

The testing error of $M_0$ is 
\[
\Tr(M_0\sigma_1)
= \Tr(\Lambda^{+} \sigma_0 \Lambda^{+} \sigma_1) + 
\frac12 \Tr(\Pi_{\ker(\sigma)} \sigma_1)
= \Tr(\Lambda^{+} \Lambda R \Lambda \Lambda^{+} \Lambda Q \Lambda) + 0
= \Tr(R \Lambda Q \Lambda) = \delta,
\]
where the first equality is by the definition of $M_0$, the second equality follows from
$\sigma_1 \preceq \sigma$ 
and thus $\ker(\sigma) \subseteq \ker(\sigma_1)$, the third equality is by 
$\Lambda^{+}\Lambda\Lambda = \Lambda$, and the last equality
is by $\delta = \|Q\Lambda R\|_F^2 = \Tr(R\Lambda Q Q \Lambda R) = \Tr(R\Lambda Q\Lambda)$.

Similarly, the testing error of $M_1$ is also
$\Tr(M_1\sigma_0) = \delta$. Thus the testing error of PGM is
\[
\mathrm{PGM}(\sigma_0, \sigma_1)
:= \Tr(M_0\sigma_1) + \Tr(M_1\sigma_0) = 2\delta.
\]

Now we construct a $2$-outcome POVM $\{V, I-V\}$ that distinguishes $\sigma_0$
and $\sigma_1$. We write the spectral decomposition of $P$ as
\[
P = \sum_k \V(M_k)\V(M_k)^\dagger,
\]
and define a completely positive map and its adjoint map
\[
\Phi(X) := \sum_k M_k X M_k^\dagger, \quad
\Phi^*(X):= \sum_k M_k^\dagger X M_k.
\]

If $\Tr(Q\Lambda^2) = 0$, we have $0\leq \delta = \Tr(R\Lambda Q \Lambda) \leq
\Tr(Q\Lambda^2) = 0$, and thus $\delta = 0 = O(\epsilon^{1/2})$.
Assume $\Tr(Q\Lambda^2) > 0$. Define a quantum state
\[
\sigma:=\frac{Q\Lambda^2Q}{\mathrm{Tr}(Q\Lambda^2)},
\]
and feed $\sigma$ through $\Phi^*$ to get 
\[
V:=\Phi^*(\sigma) 
=\frac{1}{\mathrm{Tr}(Q\Lambda^2)} \sum_{k} M_k^\dagger Q \Lambda^2 Q M_k.
\]
This indeed defines a valid POVM operator. For every unit vector $x\in\mathbb{C}^N$,
\Cref{lem:tr-vy} gives
\[
0\leq \langle x,Vx\rangle
=\frac{1}{\Tr(Q\Lambda^2)}
\Tr\!\left(P\bigl(Q\Lambda^2Q\otimes\overline{xx^\dagger}\bigr)\right)
\leq\frac{\Tr(Q\Lambda^2Q)\Tr(xx^\dagger)}{\Tr(Q\Lambda^2)}
=\|x\|^2=1.
\]
Thus $0\preceq V\preceq I$, so $\{V,I-V\}$ is a two-outcome POVM.
By \Cref{lem:povm}, the optimal testing error
\[
\mathrm{OPT}(\sigma_0, \sigma_1) \leq 
\Tr(V\Lambda Q \Lambda) 
+ \Tr((I-V)\Lambda R \Lambda) = O(\epsilon^{1/2}).
\]
By \Cref{lem:pgm}, we have
\[
\mathrm{PGM}(\sigma_0, \sigma_1) \leq 2 \mathrm{OPT}(\sigma_0, \sigma_1) = O(\epsilon^{1/2}),
\]
which implies $\delta = \frac12 \mathrm{PGM}(\sigma_0, \sigma_1) = O(\epsilon^{1/2})$.
\end{proof}

\begin{lemma} \label{lem:povm}
    Let $\{V, I-V\}$ be the POVM defined above. Then
    \[
    \Tr(V\Lambda Q \Lambda) + \Tr((I-V)\Lambda R \Lambda) = O(\epsilon^{1/2}).
    \]
\end{lemma}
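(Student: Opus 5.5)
The plan is to expand both error terms using the Kraus form $V=\Phi^*(Q\Lambda^2Q)/c$ with $c:=\Tr(Q\Lambda^2)$. Each term should then become a trace against $A=\Phi^*(\Lambda^2)$ or $B=\Phi(\Lambda^2)$, which the spectral threshold defining $U$ lets us control. Two facts will be used throughout. First, $B\succeq\sqrt\epsilon\,Q$ by the choice of $t$. Since $\Lambda A\Lambda\succeq 0$, this gives
\[
\epsilon=\Tr(\Lambda A\Lambda B)\geq\sqrt\epsilon\,\Tr(\Lambda A\Lambda Q)=\sqrt\epsilon\sum_k\|\Lambda M_k\Lambda Q\|_F^2 ,
\]
that is, the weighted ``in-degree into $U$'' is at most $\sqrt\epsilon$. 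Second, on $U^\perp$ we have $RBR\preceq\sqrt\epsilon\,R$. I will only ever use this against a trace-one weight, for instance $\Tr(\Lambda^2 R B R)\leq\sqrt\epsilon$, so that the bound does not depend on $N$.

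For the first term, write $\Tr(V\Lambda Q\Lambda)=\frac1c\sum_k\Tr\bigl(M_k^\dagger Q\Lambda^2QM_k\,\Lambda Q\Lambda\bigr)=\frac1c\sum_k\|\Lambda QM_k\Lambda Q\|_F^2$. I then split $Q=I-R$ at the left factor. The $I$ part gives at most $\frac1c\sum_k\|\Lambda M_k\Lambda Q\|_F^2\leq\sqrt\epsilon/c$ by the first fact, up to cross terms handled by Cauchy--Schwarz. The $R$ part involves $\Lambda R M_k\Lambda$, which is bounded by the small out-degree of $U^\perp$, namely $\sum_k\|RM_k\Lambda\|_{\Lambda\text{-weighted}}^2\lesssim\sqrt\epsilon$.

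For the second term, write $\Tr((I-V)\Lambda R\Lambda)=\Tr(\Lambda R\Lambda)-\frac1c\Tr\bigl(Q\Lambda^2Q\,\Phi(\Lambda R\Lambda)\bigr)$. The aim is to show $\frac1c\Tr(Q\Lambda^2Q\,\Phi(\Lambda R\Lambda))\ge\Tr(\Lambda^2R)-O(\sqrt\epsilon)$, i.e.\ that almost all of the weighted mass of $P$ flows from $U^\perp$ into $U$. For this I will use $m_\Lambda(P)=\frac14$ and decompose the mass $\Tr(P(\Lambda^2\otimes\bar\Lambda^2))=\sum_k\|\Lambda M_k\Lambda\|_F^2$ into its four $Q/R$ blocks on each side. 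Every block other than ``$Q$ on the left, $R$ on the right'' is $O(\sqrt\epsilon)$ by the two facts, together with off-diagonal $Q\Lambda R$ contributions absorbed by Cauchy--Schwarz. The dominant block is then bounded by $\Tr(Q\Lambda^2)\Tr(R\Lambda^2)$-type products; I would try to get this from Lemma~\ref{lem:tr-vy}, which bounds $\Tr(P(S\otimes\bar T))\le\Tr S\,\Tr T$. Combining $\frac14-O(\sqrt\epsilon)\le c(1-c)+O(\sqrt\epsilon)$ yields $c=\frac12\pm O(\epsilon^{1/4})$; I am aiming for $c\geq\frac14$ for small $\epsilon$, so that the $1/c$ factors are harmless. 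The same chain of inequalities must then be nearly tight, and that tightness is exactly the needed lower bound on $\frac1c\Tr(Q\Lambda^2Q\,\Phi(\Lambda R\Lambda))$.

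The main obstacle is the noncommutativity between $\Lambda$ and $Q,R$. Terms like $\Lambda Q M_k\Lambda Q$ do not directly match the controlled quantities $\Lambda M_k\Lambda Q$ and $RM_k\Lambda$, and $\delta=\|Q\Lambda R\|_F^2$ cannot be used because it is what we are ultimately bounding. I would first try to manipulate only Loewner inequalities such as $\Lambda Q\Lambda\preceq\Lambda^2$ and $Q\Lambda^2Q\preceq\|\cdot\|$-free bounds via Lemma~\ref{lem:tr-vy}, combined with Cauchy--Schwarz on cross terms, accepting $\sqrt\epsilon$ losses. If the tightness argument in the second term only yields $O(\epsilon^{1/4})$, that would still suffice downstream, since Theorem~\ref{thm:weighted-stability} claims only $\epsilon^{1/4}$ rates.
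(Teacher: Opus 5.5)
Your first term is exactly the paper's argument (\Cref{lem:v-lql}). Splitting $Q=I-R$ in the inner left factor is the paper's $Q\Lambda^2Q\preceq 2\Lambda^2+2R\Lambda^2R$, and the two pieces are controlled by $\sqrt\epsilon\,Q\preceq B$ and $RBR\preceq\sqrt\epsilon\,R$. This gives $c\,\Tr(V\Lambda Q\Lambda)=\Tr\bigl(P(Q\Lambda^2Q\otimes\overline{\Lambda Q\Lambda})\bigr)\le 4\sqrt\epsilon$. Your target for the second term is also right. The gap is that the block decomposition you describe does not deliver $O(\sqrt\epsilon)$ without using $\delta$.

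Suppose your four blocks are the Frobenius-orthogonal blocks $Q\Lambda M_k\Lambda Q,\dots,R\Lambda M_k\Lambda R$. Then the two facts give only $\sum_k\|R\Lambda M_k\Lambda R\|_F^2\le\Tr(B\Lambda R\Lambda)\le\delta+\sqrt\epsilon$. Moreover, the dominant block $\sum_k\|Q\Lambda M_k\Lambda R\|_F^2$ is not your target $X:=\Tr\bigl(P(Q\Lambda^2Q\otimes\overline{\Lambda R\Lambda})\bigr)=\sum_k\|\Lambda QM_k\Lambda R\|_F^2$. Passing between the two produces exactly the $Q\Lambda R$ and $R\Lambda Q$ terms, which are controlled only through $\delta$, so the argument is circular. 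If instead you absorb cross terms by Cauchy--Schwarz between an $O(\sqrt\epsilon)$ piece and the $O(1)$ piece, you get $O(\epsilon^{1/4})$, which is not the stated lemma.

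Your fallback is also not harmless. This lemma is the source of $\delta=O(\sqrt\epsilon)$, and $\delta$ enters \Cref{lem:T-tP} as $8\sqrt\delta$. It also enters \Cref{lem:tr-lambda-pi} (through \Cref{lem:upper-right}) under a square root. So $\delta=O(\epsilon^{1/4})$ would degrade \Cref{thm:weighted-stability} to $\epsilon^{1/8}$ rates. The qualitative four-color result would survive, but the stated bounds would not.

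The missing observation is that there is a split with no cross terms at all. Insert $Q+R$ between the left $\Lambda$ and $M_k$, and multiply by $Q+R$ on the far right. On the left, the cross terms vanish after summing over $k$ because $B$ is block diagonal ($QBR=0$), so
\[
\tfrac14=\Tr(\Lambda^2B)=\Tr(BQ\Lambda^2Q)+\Tr(BR\Lambda^2R).
\]
On the right, splitting $\bar\Lambda^2=\overline{\Lambda Q\Lambda}+\overline{\Lambda R\Lambda}$ in the second tensor factor is exact:
\[
\Tr(BQ\Lambda^2Q)=\Tr\bigl(P(Q\Lambda^2Q\otimes\overline{\Lambda Q\Lambda})\bigr)+X.
\]
The first summand is your first-term quantity, at most $4\sqrt\epsilon$, and $\Tr(BR\Lambda^2R)\le\sqrt\epsilon$. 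Hence $X\ge\frac14-5\sqrt\epsilon$, and
\[
\Tr\bigl((I-V)\Lambda R\Lambda\bigr)=(1-c)-\frac{X}{c}\le\frac{c(1-c)-\frac14+5\sqrt\epsilon}{c}\le\frac{5\sqrt\epsilon}{c},
\]
using only $c(1-c)\le\frac14$. No near-tightness argument or balancing $c=\frac12\pm O(\epsilon^{1/4})$ is needed. The same identity gives $c\ge\Tr(BQ\Lambda^2Q)\ge\frac14-\sqrt\epsilon$, which handles every $1/c$ factor in both terms. This is exactly how the paper proves \Cref{lem:v-lrl} via \Cref{lem:alpha}.
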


\begin{proof}
    Combine \Cref{lem:v-lql} and \Cref{lem:v-lrl}.
\end{proof}

We first prove two helper lemmas. 
For convenience, we define $\alpha := \Tr(Q\Lambda^2)$.
The following helper lemma characterizes the action of $V$.

\begin{lemma} \label{lem:tr-vy}
For any PSD $Y$, we have
$$
\mathrm{Tr}(VY)= \frac{1}{\alpha}\mathrm{Tr}(P(Q\Lambda^2 Q\otimes \bar{Y})).
$$
\end{lemma}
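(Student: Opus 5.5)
The statement is a direct linear-algebra identity, so the plan is to expand both sides using the spectral decomposition $P=\sum_k \V(M_k)\V(M_k)^\dagger$, which was already used to define $\Phi^*$ and hence $V$. By definition,
\[
\Tr(VY)=\frac{1}{\alpha}\sum_k \Tr\!\left(M_k^\dagger Q\Lambda^2 Q M_k Y\right),
\]
so it suffices to show that, for each $k$,
\[
\Tr\!\left(M_k^\dagger Q\Lambda^2 Q M_k Y\right)
=\Tr\!\left(\V(M_k)\V(M_k)^\dagger\,(Q\Lambda^2Q\otimes \bar Y)\right).
\]
Summing over $k$ then gives the claim.

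For a single term, I would first write $\Tr(\V(M_k)\V(M_k)^\dagger (Z\otimes\bar Y))$ with $Z:=Q\Lambda^2Q$. This equals $\V(M_k)^\dagger (Z\otimes \bar Y)\V(M_k)$. The vectorization identity $(X_0\otimes X_1)\V(M)=\V(X_0 M X_1^\top)$ gives
\[
(Z\otimes\bar Y)\V(M_k)=\V(ZM_k\bar Y^\top)=\V(ZM_kY^\dagger).
\]
Since $Y$ is PSD, $Y^\dagger=Y$, so this is $\V(ZM_kY)$. Because vectorization preserves the Hilbert--Schmidt inner product, the quantity becomes
\[
\langle M_k, ZM_kY\rangle=\Tr(M_k^\dagger Z M_k Y),
\]
which is exactly the left side.

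The only care needed is bookkeeping of conjugates and transposes: $\bar Y^\top=Y^\dagger=Y$ holds precisely because $Y$ is Hermitian, and this is the one place the PSD hypothesis on $Y$ enters. Linearity of trace and of $\Phi^*$ then finishes the argument. I expect no real obstacle beyond matching the $\As/\Bs$ tensor-factor convention to the $\V$ convention ($\V(E_{ij})=e_i\otimes e_j$). That same convention was used in \Cref{fact:energy}, so I would follow it verbatim.
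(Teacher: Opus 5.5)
Your proof is correct and follows essentially the same route as the paper: expand $V$ via the spectral decomposition $P=\sum_k \V(M_k)\V(M_k)^\dagger$ and verify the identity term by term using vectorization identities, with Hermiticity of $Y$ supplying $\bar Y^\top = Y$. The only difference is cosmetic. The paper passes through the partial-trace identity $\Tr_{\As}(\V(X)\V(X)^\dagger)=X^\top\bar X$ before tracing out, whereas you apply $(X_0\otimes X_1)\V(M)=\V(X_0MX_1^\top)$ and Hilbert--Schmidt preservation directly.
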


\begin{proof} By the definition of $V$, 
$$
\mathrm{Tr}(VY) = \frac{1}{\alpha} \sum_k \mathrm{Tr}(Y M_k^\dagger Q\Lambda^2 Q M_k).
$$
Compute
\begin{align*}
M_k^\dagger Q\Lambda^2 Q M_k
=(\Lambda Q M_k)^\dagger (\Lambda QM_k)
&= \mathrm{Tr}_\As\!\left(\V(\Lambda Q M_k)\V(\Lambda Q M_k)^\dagger\right)^\top \\
&= \mathrm{Tr}_\As\!\left(\V(M_k)\V(M_k)^\dagger(Q\Lambda^2 Q\otimes I)\right)^\top.
\end{align*}
Combining the two equations gives
\begin{align*}
\mathrm{Tr}(VY) &= \frac{1}{\alpha} \mathrm{Tr}\left(Y\cdot \sum_k \mathrm{Tr}_\As(\V(M_k)\V(M_k)^\dagger(Q\Lambda^2 Q\otimes I))^\top
\right)\\
&= \frac{1}{\alpha}\mathrm{Tr}(Y\cdot \mathrm{Tr}_\As(P(Q\Lambda^2 Q\otimes I))^\top) \\
&= \frac{1}{\alpha}\mathrm{Tr}(\mathrm{Tr}_\As(P(Q\Lambda^2 Q\otimes \bar{Y})))
= \frac{1}{\alpha}\mathrm{Tr}(P(Q\Lambda^2 Q\otimes \bar{Y})).  \tag*{\qedhere}
\end{align*}
\end{proof}

The next lemma bounds the normalization factor $\alpha$.

\begin{lemma} \label{lem:alpha}
We have 
$$
\alpha \geq \mathrm{Tr}(BQ\Lambda^2 Q)  \geq \frac14-\sqrt{\epsilon}.
$$
If $\epsilon \leq 1/64$, we have $\alpha \geq 1/8$.
\end{lemma}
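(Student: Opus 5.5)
The plan is to write the mass constraint $m_\Lambda(P)=\tfrac14$ in terms of the out-degree matrix $B$, and then split $\Lambda^2$ according to the decomposition $U\oplus U^\perp$. That decomposition is built from eigenvectors of $B$, so $B$ commutes with $Q=\Pi_U$ and $R=\Pi_{U^\perp}$.

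\emph{Step 1 (first inequality).} As noted in the candidate construction, $0\preceq B\preceq I$. Since $Q\Lambda^2Q\succeq 0$, this gives
\[
\Tr(BQ\Lambda^2Q)\leq \Tr(Q\Lambda^2Q)=\Tr(Q\Lambda^2)=\alpha .
\]

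\emph{Step 2 (mass in terms of $B$).} By definition, $m_\Lambda(P)=\Tr[P(\Lambda^2\otimes\bar\Lambda^2)]$. Taking the partial trace over $\Bs$ first gives
\[
m_\Lambda(P)=\Tr\bigl(\Lambda^2\,\Tr_\Bs((I\otimes\bar\Lambda^2)P)\bigr)=\Tr(\Lambda^2B).
\]
Hence $\Tr(B\Lambda^2)=\tfrac14$.

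\emph{Step 3 (block splitting).} Insert $I=Q+R$ on both sides of $\Lambda^2$:
\[
\tfrac14=\Tr(BQ\Lambda^2Q)+\Tr(BR\Lambda^2R)+\Tr(BQ\Lambda^2R)+\Tr(BR\Lambda^2Q).
\]
The spaces $U$ and $U^\perp$ are spanned by eigenvectors $v_i$ of $B$, so $BQ=QB$ and $BR=RB$. Each cross term therefore vanishes, for example
\[
\Tr(BQ\Lambda^2R)=\Tr(RBQ\Lambda^2)=\Tr(BRQ\Lambda^2)=0,
\]
and similarly for the other.

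For the $U^\perp$ block, $RBR=\sum_{i>t}\rho_i v_iv_i^\dagger$ has operator norm less than $\sqrt\epsilon$ by the choice of the threshold index $t$. By Hölder's inequality,
\[
\Tr(BR\Lambda^2R)=\Tr(RBR\,\Lambda^2)\leq \|RBR\|_\infty\Tr(\Lambda^2)\leq\sqrt\epsilon .
\]
Together these give $\Tr(BQ\Lambda^2Q)\geq \tfrac14-\sqrt\epsilon$.

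\emph{Step 4 (the numerical consequence).} If $\epsilon\leq 1/64$, then $\sqrt\epsilon\leq 1/8$, so $\alpha\geq \tfrac14-\tfrac18=\tfrac18$. The same computation also handles the degenerate case where no $\rho_i\geq\sqrt\epsilon$, i.e.\ $t$ is undefined and $U=\{0\}$. Then $R=I$ and Step 3 reads $\tfrac14\leq\sqrt\epsilon$, which is impossible for $\epsilon\leq1/64$. So the construction is well defined.

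The only point needing care is the vanishing of the cross terms, which uses that $U$ is a spectral subspace of $B$; the rest is routine.
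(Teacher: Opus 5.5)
Your proposal is correct and follows the paper's proof essentially step for step. You write $\tfrac14=\Tr(B\Lambda^2)$, use that $B$ is block diagonal with respect to $Q\oplus R$ to drop the cross terms, and bound the $U$-block by $B\preceq I$ and the $U^\perp$-block by $RBR\preceq\sqrt{\epsilon}\,I$. Your commutation argument for the cross terms and your remark on the degenerate case where $t$ is undefined only add detail; the paper treats the well-definedness of $t$ separately, in the proof of the weighted stability theorem.
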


\begin{proof} By construction, we know that
$$
\mathrm{Tr}(B\Lambda^2) = 
\mathrm{Tr}(\Lambda^2 B) = 
\Tr(\Lambda^2\Tr_\Bs((I \otimes \bar{\Lambda}^2)P)) = 
\Tr((\Lambda^2 \otimes \bar{\Lambda}^2)P) = 
m_\Lambda(P) = 
\frac14.
$$
Since $B$ is block diagonal under $Q\oplus R$, we have
$$
\mathrm{Tr}(B\Lambda^2) =  \mathrm{Tr}(QB Q \Lambda^2) + \mathrm{Tr}(RBR\Lambda^2) = \mathrm{Tr}(B Q \Lambda^2 Q) + \mathrm{Tr}(BR\Lambda^2 R).
$$
By $B \preceq I$, we have 
$$
\mathrm{Tr}(BQ\Lambda^2 Q) \leq \mathrm{Tr}(Q\Lambda^2 Q) = \Tr(Q\Lambda^2) = \alpha.
$$
For the second term, 
as $\rho_i \leq \sqrt{\epsilon}$ for any $i > t$, 
we have $RBR \preceq \sqrt{\epsilon} I$. Thus
$$
\mathrm{Tr}(BR\Lambda^2 R)
= \mathrm{Tr}(RBR\Lambda^2)
\leq \sqrt{\epsilon} \mathrm{Tr}(\Lambda^2) = \sqrt{\epsilon}.
$$
Putting things together,
$$
\frac{1}{4} \leq \mathrm{Tr}(B Q \Lambda^2 Q) + \mathrm{Tr}(BR\Lambda^2 R)
\leq \alpha + \sqrt{\epsilon}
\quad\Rightarrow\quad
\alpha \geq \mathrm{Tr}(BQ\Lambda^2Q)\geq \frac{1}{4} - \sqrt{\epsilon}.
$$
For $\epsilon \leq 1/64$, we have $\alpha \geq 1/4 - 1/8 = 1/8$.
\end{proof}

Then we bound the two error terms of the POVM $\{V, I-V\}$.
The first lemma shows that $V$ accepts $\sigma_1=\Lambda Q \Lambda$ with low probability.

\begin{lemma}\label{lem:v-lql}
    $\Tr(V\Lambda Q \Lambda) = O(\epsilon^{1/2})$.
\end{lemma}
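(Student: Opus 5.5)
The plan is to evaluate $\Tr(V\Lambda Q\Lambda)$ exactly with \Cref{lem:tr-vy} and then bound the resulting sum by the energy, using that $B$ is large on $U$ and small on $U^\perp$. We work under the standing assumption $\Tr(Q\Lambda^2)>0$ from the proof of \Cref{lem:delta-bound}, and we take $\epsilon\le 1/64$; for larger $\epsilon$ the bound $\Tr(V\Lambda Q\Lambda)\le\Tr(\Lambda Q\Lambda)\le 1$ is already $O(\epsilon^{1/2})$.

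\textbf{Step 1: rewrite the quantity.} Apply \Cref{lem:tr-vy} with $Y=\Lambda Q\Lambda$, which is PSD. This gives $\Tr(V\Lambda Q\Lambda)=\alpha^{-1}\Tr\bigl(P(Q\Lambda^2Q\otimes\overline{\Lambda Q\Lambda})\bigr)$. Expand $P=\sum_k\V(M_k)\V(M_k)^\dagger$ and use $(X_0\otimes X_1)\V(M)=\V(X_0MX_1^\top)$ together with the fact that $\V$ preserves inner products. Each term becomes $\Tr(M_k^\dagger Q\Lambda^2QM_k\,\Lambda Q\Lambda)$, which equals $\|\Lambda QM_k\Lambda Q\|_F^2$. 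Hence
\[
\Tr(V\Lambda Q\Lambda)=\frac1\alpha\sum_k\|\Lambda QM_k\Lambda Q\|_F^2 .
\]
\Cref{lem:alpha} gives $\alpha\ge1/8$, so it remains to show that this sum is $O(\sqrt\epsilon)$.

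\textbf{Step 2: split along $Q+R=I$.} Write $\Lambda QM_k\Lambda Q=\Lambda M_k\Lambda Q-\Lambda RM_k\Lambda Q$. By $\|a-b\|^2\le2\|a\|^2+2\|b\|^2$, it suffices to bound $S_1:=\sum_k\|\Lambda M_k\Lambda Q\|_F^2$ and $S_2:=\sum_k\|\Lambda RM_k\Lambda Q\|_F^2$ separately by $O(\sqrt\epsilon)$.

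\textbf{Step 3: bound $S_1$ via the energy.} Here $A=\sum_kM_k^\dagger\Lambda^2M_k$. The matrix $B$ commutes with $Q$, and all its eigenvalues on $U$ are at least $\sqrt\epsilon$, so $B\succeq QBQ\succeq\sqrt\epsilon\,Q$. Since $\Lambda A\Lambda\succeq0$,
\[
\epsilon=\mathcal E_\Lambda(P)=\Tr(\Lambda A\Lambda B)\ge\sqrt\epsilon\,\Tr(Q\Lambda A\Lambda Q)=\sqrt\epsilon\,S_1 .
\]
Therefore $S_1\le\sqrt\epsilon$.

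\textbf{Step 4: bound $S_2$ via the low part of $B$.} Right multiplication by the projector $Q$ does not increase the Frobenius norm, so $S_2\le\sum_k\|\Lambda RM_k\Lambda\|_F^2=\Tr(\Lambda RBR\Lambda)$. On $U^\perp$ the eigenvalues of $B$ are below $\sqrt\epsilon$, so $RBR\preceq\sqrt\epsilon\,I$. This gives $S_2\le\sqrt\epsilon\,\Tr(\Lambda^2)=\sqrt\epsilon$. The essential point is to keep the weight $\Lambda$ on the left of $R$, so the bound does not pick up a factor $\dim U^\perp$.

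\textbf{Conclusion and main obstacle.} Combining the steps, $\Tr(V\Lambda Q\Lambda)\le 8\cdot(2S_1+2S_2)\le32\sqrt\epsilon$. The main obstacle is Step 2. The energy bound in Step 3 controls only $\Lambda M_k\Lambda Q$, with no projector between the left $\Lambda$ and $M_k$. The extra $Q$ in $\Lambda QM_k$ is not dominated by it, because $\Lambda$ need not commute with $Q$ (that is exactly what we are trying to prove). The $Q+R$ splitting resolves this: the $R$-part is handled by the smallness of $B$ on $U^\perp$ rather than by the energy.
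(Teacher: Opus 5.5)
Your proof is correct and follows the paper's argument. Your splitting $\Lambda QM_k\Lambda Q=\Lambda M_k\Lambda Q-\Lambda RM_k\Lambda Q$ is the Kraus-level form of the paper's operator inequality $Q\Lambda^2Q\preceq 2\Lambda^2+2R\Lambda^2R$, your $S_1,S_2$ are exactly its two terms $\Tr(P(\Lambda^2\otimes\overline{\Lambda Q\Lambda}))$ and $\Tr(P(R\Lambda^2R\otimes\overline{\Lambda Q\Lambda}))$, and you bound them with the same facts $B\succeq\sqrt{\epsilon}\,Q$ and $RBR\preceq\sqrt{\epsilon}\,I$, reaching the same constant $32\sqrt{\epsilon}$.
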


\begin{proof}
If $\epsilon> 1/64=\Omega(1)$, then the bound is trivial since 
$\Tr(V\Lambda Q \Lambda)$ is a probability and thus at most $1$.
Now assume $\epsilon \leq 1/64$.
By \Cref{lem:tr-vy} and \Cref{lem:alpha}, we have
$$
\mathrm{Tr}(V \Lambda Q \Lambda) =
\frac{1}{\alpha} \mathrm{Tr}(P (Q\Lambda^2 Q\otimes \overline{\Lambda Q\Lambda}))
\leq 8\cdot \mathrm{Tr}(P (Q\Lambda^2 Q\otimes \overline{\Lambda Q\Lambda})).
$$

Since $Q=I-R$, the inequality
$\|u-v\|^2 \leq 2\|u\|^2 + 2\|v\|^2$ implies that, for every
$x\in\mathbb{C}^N$,
\begin{align*}
\braket{x, Q\Lambda^2Qx}
= \|\Lambda Qx\|^2 
&= \|\Lambda x-\Lambda Rx\|^2  \\
&\leq 2\|\Lambda x\|^2 + 2\|\Lambda Rx\|^2 
= 2\braket{x, \Lambda^2x}
   + 2\braket{x, R\Lambda^2Rx}.
\end{align*}
Hence
\[
Q\Lambda^2Q \preceq 2\Lambda^2 + 2R\Lambda^2R.
\]
Consequently, we obtain
$$
\mathrm{Tr}(P (Q\Lambda^2 Q\otimes \overline{\Lambda Q\Lambda})) \leq 
2\cdot\mathrm{Tr}(P (\Lambda^2\otimes \overline{\Lambda Q\Lambda})) + 
2\cdot\mathrm{Tr}(P (R\Lambda^2 R\otimes \overline{\Lambda Q\Lambda})).
$$
For the first term, using $\sqrt{\epsilon} Q \preceq B$,
$$
\mathrm{Tr}(P (\Lambda^2\otimes \overline{\Lambda Q\Lambda})) = \mathrm{Tr}(A^\top \overline{\Lambda Q\Lambda})
= \mathrm{Tr}(A\Lambda Q\Lambda) \leq \frac{1}{\sqrt{\epsilon}} \mathrm{Tr}(A\Lambda B\Lambda) = 
\frac{\mathcal{E}_\Lambda(P)}{\sqrt{\epsilon}}=\sqrt{\epsilon}.
$$
For the second term, using 
$\overline{\Lambda Q\Lambda} \preceq \bar{\Lambda}^2$ and
$RBR \preceq \sqrt{\epsilon} I$, 
$$
\mathrm{Tr}(P (R\Lambda^2 R\otimes \overline{\Lambda Q\Lambda})) \leq \mathrm{Tr}(P (R\Lambda^2 R\otimes \bar{\Lambda}^2)) 
= \Tr[\Tr_\Bs(P(I\otimes \bar{\Lambda}^2)) R\Lambda^2 R]
= \mathrm{Tr}(BR\Lambda^2R) \leq \sqrt{\epsilon} \mathrm{Tr}(\Lambda^2)=\sqrt{\epsilon}.
$$
Putting things together, we have 
\begin{equation} \label{eq:A}
\mathrm{Tr}(P(Q\Lambda^2Q\otimes  \overline{\Lambda Q\Lambda})) \leq 2\sqrt{\epsilon} + 2\sqrt{\epsilon} = 4\sqrt{\epsilon}.
\end{equation}
Thus $\mathrm{Tr}(V \Lambda Q\Lambda) \leq 32\sqrt{\epsilon} = O(\epsilon^{1/2})$.
\end{proof}

The next lemma shows that $I-V$ accepts $\sigma_0=\Lambda R \Lambda$ with low probability.

\begin{lemma}\label{lem:v-lrl}
    $\Tr((I-V)\Lambda R \Lambda) = O(\epsilon^{1/2})$.
\end{lemma}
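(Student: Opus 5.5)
The plan is to write the error term as a difference and reuse the estimates already obtained in the proofs of \Cref{lem:alpha} and \Cref{lem:v-lql}. If $\epsilon>1/64$ the bound is trivial, because $\Tr((I-V)\Lambda R\Lambda)\leq \Tr(\Lambda R\Lambda)\leq 1$. So assume $\epsilon\leq 1/64$, which gives $\alpha\geq 1/8$ by \Cref{lem:alpha}.

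First I expand the target quantity:
\[
\Tr((I-V)\Lambda R\Lambda)=\Tr(R\Lambda^2)-\Tr(V\Lambda R\Lambda)=(1-\alpha)-\frac{1}{\alpha}\Tr\!\left(P\bigl(Q\Lambda^2Q\otimes\overline{\Lambda R\Lambda}\bigr)\right).
\]
Here the second equality uses \Cref{lem:tr-vy} with $Y=\Lambda R\Lambda\succeq 0$, together with $\Tr(\Lambda^2)=1$. It therefore suffices to lower bound $T_R:=\Tr(P(Q\Lambda^2Q\otimes\overline{\Lambda R\Lambda}))$.

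The key step is a complementation trick. Since $Q+R=I$, we have $\overline{\Lambda Q\Lambda}+\overline{\Lambda R\Lambda}=\bar\Lambda^2$. Hence
\[
T_Q+T_R=\Tr\!\left(P\bigl(Q\Lambda^2Q\otimes\bar\Lambda^2\bigr)\right)=\Tr\!\left(\Tr_\Bs\bigl((I\otimes\bar\Lambda^2)P\bigr)\,Q\Lambda^2Q\right)=\Tr(BQ\Lambda^2Q),
\]
where $T_Q:=\Tr(P(Q\Lambda^2Q\otimes\overline{\Lambda Q\Lambda}))$. By \Cref{lem:alpha}, $\Tr(BQ\Lambda^2Q)\geq \frac14-\sqrt\epsilon$. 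By inequality \eqref{eq:A} in the proof of \Cref{lem:v-lql}, $T_Q\leq 4\sqrt\epsilon$. Together these give $T_R\geq \frac14-5\sqrt\epsilon$.

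Finally I combine this with the elementary inequality $\alpha(1-\alpha)\leq\frac14$, i.e.\ $1-\alpha\leq\frac{1}{4\alpha}$:
\[
\Tr((I-V)\Lambda R\Lambda)\leq \frac{1}{4\alpha}-\frac{1}{\alpha}\Bigl(\frac14-5\sqrt\epsilon\Bigr)=\frac{5\sqrt\epsilon}{\alpha}\leq 40\sqrt\epsilon .
\]
The only delicate point is recognizing that the $\frac14$ in the mass exactly cancels against $\alpha(1-\alpha)\leq\frac14$, so that no separate control of $\alpha$ near $\frac12$ is needed. Everything else is bookkeeping with the partial-trace identities, and the one to double-check is the identity $T_Q+T_R=\Tr(BQ\Lambda^2Q)$.
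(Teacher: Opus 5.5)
Your proposal is correct and matches the paper's proof step for step. Both use the same expansion via \Cref{lem:tr-vy}, the same splitting $\overline{\Lambda R\Lambda}=\bar\Lambda^2-\overline{\Lambda Q\Lambda}$ combined with \Cref{lem:alpha} and \eqref{eq:A}, and the same cancellation through $\alpha(1-\alpha)\le\frac14$ to reach $5\sqrt\epsilon/\alpha\le 40\sqrt\epsilon$.
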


\begin{proof}
If $\epsilon> 1/64=\Omega(1)$, then the bound is trivial 
since $\Tr((I-V)\Lambda R \Lambda)$ is a probability and thus at most $1$.
Now assume $\epsilon \leq 1/64$.
By \Cref{lem:tr-vy} and \Cref{lem:alpha}, we have
$$
\mathrm{Tr}((I-V)\Lambda R\Lambda) = 
\mathrm{Tr}(\Lambda R\Lambda) - 
\mathrm{Tr}(V\Lambda R\Lambda)
= (1-\alpha) - 
\frac{1}{\alpha} \mathrm{Tr}(P(Q\Lambda^2 Q\otimes \overline{\Lambda R\Lambda}))
$$
By $\overline{\Lambda R\Lambda}=\bar{\Lambda}^2-\overline{\Lambda Q\Lambda}$, we have
$$
\mathrm{Tr}(P(Q\Lambda^2 Q\otimes \overline{\Lambda R\Lambda}))=
\mathrm{Tr}(P(Q\Lambda^2Q\otimes \bar{\Lambda}^2)) - \mathrm{Tr}(P(Q\Lambda^2Q\otimes  \overline{\Lambda Q\Lambda})).
$$
The first term
$$
\mathrm{Tr}(P(Q\Lambda^2Q\otimes \bar{\Lambda}^2)) = \mathrm{Tr}(BQ\Lambda^2Q)\geq \frac14 - \sqrt{\epsilon},
$$
where the last inequality uses \Cref{lem:alpha}.

By \eqref{eq:A}, the second term 
$$
\mathrm{Tr}(P(Q\Lambda^2Q\otimes  \overline{\Lambda Q\Lambda})) \leq 4\sqrt{\epsilon}.
$$
Thus 
$$
\mathrm{Tr}(P(Q\Lambda^2 Q\otimes \overline{\Lambda R\Lambda})) \geq \frac14 - 5\sqrt{\epsilon}.
$$
Plugging back gives
$$
\mathrm{Tr}((I-V)\Lambda R\Lambda) \leq \frac1{\alpha}\left(\alpha(1-\alpha) - \frac14+5\sqrt{\epsilon}\right) \leq \frac{5\sqrt{\epsilon}}{\alpha} \leq 40\sqrt{\epsilon},
$$
where the last inequality uses \Cref{lem:alpha}.
\end{proof}

\subsubsection{Proof of \texorpdfstring{\Cref{thm:weighted-stability}}{the Weighted Stability Theorem}} \label{sec:approx-proof}

We first introduce the weighted operator
\[
\tilde{P} := (\Lambda\otimes \bar{\Lambda})P(\Lambda\otimes \bar{\Lambda}).
\]
Note that its trace equals the weighted mass,
\[
\Tr(\tilde{P}) = \Tr((\Lambda^2\otimes \bar{\Lambda}^2)P) = m_\Lambda(P) = \frac14,
\]
and its partial traces are weighted marginals,
\[
\Tr_\As(\tilde{P}) = 
\bar{\Lambda}
\Tr_\As\!\left((\Lambda \otimes I)
P
(\Lambda \otimes I)
\right)
\bar{\Lambda}
= (\Lambda A \Lambda)^\top,
\quad\text{and similarly,}\quad
\Tr_\Bs(\tilde{P}) = 
\Lambda B \Lambda.
\]

We then prove two helper lemmas.
The first lemma bounds the leakage of $\tilde{P}$ outside the space $\mathrm{L}(U^\perp, U)$ specified by the candidate projector $P'$.

\begin{lemma} \label{lem:upper-right}
$
\mathrm{Tr}(\tilde{P}(I-P')) = O(\sqrt{\epsilon}).
$
\end{lemma}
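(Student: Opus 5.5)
We need to bound $\Tr(\tilde P(I-P'))$ where $P'=\Pi_U\otimes\bar\Pi_{U^\perp}=Q\otimes\bar R$. The plan is to mirror the unweighted argument from the technical overview. First, write $I-P' = I\otimes I - Q\otimes\bar R$ and decompose it as
\[
I-P' = R\otimes I + Q\otimes \bar Q \preceq R\otimes I + I\otimes\bar Q .
\]
This is the analogue of deleting the edges whose tail lies in the low-degree side together with the edges whose head lies in the high-degree side. Since $\tilde P\succeq 0$, it follows that
\[
\Tr(\tilde P(I-P'))\le \Tr(\tilde P(R\otimes I))+\Tr(\tilde P(I\otimes \bar Q)) .
\]

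For the first term, we use the partial-trace identity $\Tr_\Bs(\tilde P)=\Lambda B\Lambda$ recorded just above. This gives $\Tr(\tilde P(R\otimes I))=\Tr(R\Lambda B\Lambda)$. We cannot directly use $RBR\preceq\sqrt\epsilon I$ here, because $\Lambda$ mixes $U$ and $U^\perp$. Instead, write $\Tr(\Lambda R\Lambda B)$ and split $B=QBQ+RBR$, which is valid because $B$ commutes with $Q$ and $R$ by construction. The $RBR$ part is at most $\sqrt\epsilon\,\Tr(\Lambda R\Lambda)\le\sqrt\epsilon$. The $QBQ$ part is at most $\Tr(Q\Lambda R\Lambda Q)=\|R\Lambda Q\|_F^2=\delta$, using $B\preceq I$. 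This is $O(\sqrt\epsilon)$ by \Cref{lem:delta-bound}. So the approximate $\Lambda$-invariance is exactly what makes this term small.

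For the second term, use $\Tr_\As(\tilde P)=(\Lambda A\Lambda)^\top$. This gives $\Tr(\tilde P(I\otimes\bar Q))=\Tr(\Lambda A\Lambda Q)=\Tr(A\,\Lambda Q\Lambda)$, which is the in-degree mass landing on the high-out-degree side. Since $\sqrt\epsilon\,Q\preceq B$ and $A\succeq0$, we get
\[
\Tr(A\Lambda Q\Lambda)\le \epsilon^{-1/2}\Tr(A\Lambda B\Lambda)=\epsilon^{-1/2}\mathcal E_\Lambda(P)=\sqrt\epsilon .
\]
This is the same estimate already used in the proof of \Cref{lem:v-lql}.

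Adding the two terms gives $\Tr(\tilde P(I-P'))\le 2\sqrt\epsilon+\delta=O(\sqrt\epsilon)$. The only delicate step is the first term: one has to avoid the naive bound and instead commute the projector past $\Lambda$ at the cost of $\delta$. This is where \Cref{lem:delta-bound} is needed. The case $\epsilon>1/64$ is trivial, since $\Tr(\tilde P)=1/4$.
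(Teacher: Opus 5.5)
Your proof is correct and is essentially the paper's argument. The paper writes the same inequality in coordinates: with $y_{ij}=(v_i\otimes\bar v_j)^\dagger\tilde P(v_i\otimes\bar v_j)$ it bounds $\sum_{i>t\text{ or } j\le t}y_{ij}\le\sum_{i>t}r_i+\sum_{j\le t}c_j$, which is exactly your $I-P'\preceq R\otimes I+I\otimes\bar Q$, since all three operators are diagonal in the basis $v_i\otimes\bar v_j$. Your two estimates match the paper's row-sum bound $\delta+\sqrt\epsilon$ and column-sum bound $\sqrt\epsilon$, giving the same total $2\sqrt\epsilon+\delta$.
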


\begin{proof}

For $i, j\in [N]$, we define the variable
\[
y_{ij} := (v_i\otimes \bar{v}_j)^\dagger \tilde{P} (v_i\otimes \bar{v}_j) \in [0,1].
\]
Compute its $j$-th column sum
\[
c_j := \sum_{i} y_{ij} 
= \sum_i (v_i\otimes \bar{v}_j)^\dagger \tilde{P} (v_i\otimes \bar{v}_j)
= \bar{v}_j^\dagger \Tr_\As(\tilde{P}) \bar{v}_j
= \overline{v_j^\dagger \Lambda A \Lambda v_j}
= v_j^\dagger \Lambda A \Lambda v_j,
\]
Then we can express energy as
$$
\mathcal{E}_\Lambda(P) = \mathrm{Tr}\left(\Lambda A\Lambda \sum_j \rho_jv_jv_j^\dagger\right)=\sum_j \rho_j\mathrm{Tr}\left(\Lambda A\Lambda v_jv_j^\dagger\right)=\sum_j \rho_j c_j.
$$

Similarly, define the $i$-th row sum
$$
r_i := \sum_j y_{ij} 
= \sum_j (v_i\otimes \bar{v}_j)^\dagger \tilde{P} (v_i\otimes \bar{v}_j)
= v_i^\dagger \Tr_\Bs(\tilde{P}) v_i
= v_i^\dagger \Lambda B\Lambda v_i.
$$

By $P'=\Pi_{\mathrm{L}(U^\perp,U)} = \sum_{i\leq t, j > t} (v_i\otimes \bar{v}_j)(v_i\otimes \bar{v}_j)^\dagger$, we have
\[
\mathrm{Tr}(\tilde{P}(I-P'))=
\mathrm{Tr}\left(
    \tilde{P}
    \left(I-\sum_{i\leq t, j > t} (v_i\otimes \bar{v}_j)(v_i\otimes \bar{v}_j)^\dagger\right)
\right)
= \sum_{i > t\text{ or }j\leq t}
(v_i\otimes \bar{v}_j)^\dagger\tilde{P}(v_i\otimes \bar{v}_j)
= \sum_{i > t\text{ or }j\leq t} y_{ij}.
\]
Then observe that 
$$
\sum_{i > t\text{ or }j\leq t} y_{ij} \leq \sum_{j\leq t}c_j + \sum_{i>t} r_i.
$$

For the first sum, as $\rho_j \geq \sqrt{\epsilon}$ for all $j\leq t$, we have
$$
   \mathcal{E}_\Lambda(P)=\sum_{j} \rho_j c_j \geq \sum_{j\leq t} \rho_j c_j \geq \sqrt{\epsilon} \sum_{j\leq t} c_j.
$$
Thus
\begin{equation} \label{eq:col-sum}
   \sum_{j\leq t} c_j \leq \frac{\mathcal{E}_\Lambda(P)}{\sqrt{\epsilon}} = \sqrt{\epsilon}.
\end{equation}

For the second sum, 
\[
   \sum_{i> t} r_i 
    = \sum_{i> t} v_i^\dagger \Lambda B\Lambda v_i
    = \Tr\left(\Pi_{U^\perp} \Lambda B \Lambda \Pi_{U^\perp}\right).
\]
As $B\preceq I$ and $\rho_j \leq \sqrt{\epsilon}$ for all $j>t$, we have
$B \preceq \Pi_U + \sqrt{\epsilon} \Pi_{U^\perp}$. Then
\begin{align}
\notag\Tr\left(\Pi_{U^\perp} \Lambda B \Lambda \Pi_{U^\perp}\right) &\leq
\Tr\left(\Pi_{U^\perp} \Lambda \Pi_U \Lambda \Pi_{U^\perp}\right) + 
\sqrt{\epsilon} \Tr\left(\Pi_{U^\perp} \Lambda \Pi_{U^\perp }\Lambda \Pi_{U^\perp}\right) \\
\notag&= \delta + \sqrt{\epsilon} \Tr\left(\Pi_{U^\perp} \Lambda \Pi_{U^\perp }\Lambda \Pi_{U^\perp}\right) \\
\label{eq:row-sum}&\leq \delta + \sqrt{\epsilon} \mathrm{Tr}(\Lambda^2) 
\leq \delta + \sqrt{\epsilon}.
\end{align}

Combining \eqref{eq:col-sum} and \eqref{eq:row-sum}, we get
\[
\sum_{i > t\text{ or }j\leq t} y_{ij} \leq 
\sum_{j\leq t}c_j + \sum_{i>t} r_i
\leq 2\sqrt{\epsilon} + \delta = O(\sqrt{\epsilon}),
\]
where the last inequality is by \Cref{lem:delta-bound}.
\end{proof}

The second lemma shows that
the candidate split
$U\oplus U^\perp$ has nearly balanced weights.

\begin{lemma} \label{lem:tr-lambda-pi}
    Define $\eta := \mathrm{Tr}(\Lambda^2 \Pi_U)$. Then
\[
\left|\eta - \frac12\right| 
= O(\epsilon^{1/4}).
\]
\end{lemma}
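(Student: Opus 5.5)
The plan is to sandwich the mass of $\tilde P$ on the cut space $\mathrm{L}(U^\perp,U)$ between a lower bound coming from \Cref{lem:upper-right} and an upper bound of the form $\eta(1-\eta)$. Together these force $\eta(1-\eta)$ to be within $O(\sqrt\epsilon)$ of $1/4$, and hence $\eta$ to be within $O(\epsilon^{1/4})$ of $1/2$.

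\emph{Lower bound.} Since $\Tr(\tilde P)=m_\Lambda(P)=\frac14$, \Cref{lem:upper-right} gives
\[
\Tr(\tilde P P') = \Tr(\tilde P) - \Tr(\tilde P(I-P')) \geq \frac14 - O(\sqrt\epsilon).
\]

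\emph{Upper bound.} Because $0\preceq P\preceq I$, conjugating by $\Lambda\otimes\bar\Lambda$ gives $\tilde P\preceq \Lambda^2\otimes\bar\Lambda^2$. Next I would use the product form $P'=\Pi_{\mathrm{L}(U^\perp,U)}=\Pi_U\otimes\bar\Pi_{U^\perp}$, which is the same computation as \eqref{eq:pil-pipi}. As $P'$ is PSD, we get
\[
\Tr(\tilde P P') \leq \Tr\bigl((\Lambda^2\otimes\bar\Lambda^2)(\Pi_U\otimes\bar\Pi_{U^\perp})\bigr) = \Tr(\Lambda^2\Pi_U)\,\overline{\Tr(\Lambda^2\Pi_{U^\perp})} = \eta(1-\eta).
\]
The last equality uses $\Tr(\Lambda^2)=1$, and that these traces are real.

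\emph{Conclusion.} Combining the two bounds gives $\frac14-\eta(1-\eta)=(\eta-\frac12)^2\le O(\sqrt\epsilon)$, and taking square roots yields $|\eta-\frac12|=O(\epsilon^{1/4})$.

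I do not expect a real obstacle. All of the difficulty, namely the approximate $\Lambda$-invariance bound $\delta=O(\sqrt\epsilon)$, is already contained in \Cref{lem:upper-right} through \Cref{lem:delta-bound}. The only points needing care are the operator inequality $\tilde P\preceq\Lambda^2\otimes\bar\Lambda^2$ and the tensor-product form of $P'$. The loss from $\sqrt\epsilon$ to $\epsilon^{1/4}$ is unavoidable in this approach, because it comes from the quadratic behaviour of $\eta(1-\eta)$ near its maximum.
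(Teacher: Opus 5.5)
Your proposal is correct and follows the paper's proof essentially step for step. You lower-bound $\Tr(\tilde P P')$ by $\frac14-O(\sqrt\epsilon)$ via \Cref{lem:upper-right}, upper-bound it by $\Tr\bigl((\Lambda^2\otimes\bar\Lambda^2)P'\bigr)=\eta(1-\eta)$ using the product form of $P'$, and conclude from $(\eta-\frac12)^2=O(\sqrt\epsilon)$; your justification via $\tilde P\preceq\Lambda^2\otimes\bar\Lambda^2$ and $P'\succeq 0$ is exactly the reasoning the paper leaves implicit.
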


\begin{proof}
By \Cref{lem:upper-right} and $\mathrm{Tr}(\tilde{P}P')\leq \mathrm{Tr}(\tilde{P})=1/4$, we have
$$
\frac{1}{4} - 
O(\sqrt{\epsilon})
\leq \mathrm{Tr}(\tilde{P}P') \leq \frac14.
$$
Then compute
\begin{align*}
\mathrm{Tr}(\tilde{P}P') 
&= \mathrm{Tr}((\Lambda\otimes\bar{\Lambda}) P (\Lambda\otimes \bar{\Lambda}) P') \\
&\leq \mathrm{Tr}((\Lambda^2\otimes \bar{\Lambda}^2) P') \\
&= \mathrm{Tr}((\Lambda^2\otimes\bar{\Lambda}^2)(\Pi_U\otimes \bar{\Pi}_{U^\perp})) 
= \mathrm{Tr}(\Lambda^2 \Pi_U) \overline{\mathrm{Tr}(\Lambda^2\Pi_{U^\perp})}  = \eta (1-\eta).
\end{align*}
Plugging back, we have
$$
\frac14 - O(\sqrt{\epsilon}) \leq \eta(1-\eta) = \frac14 - \left(\eta-\frac12\right)^2
\quad\Longrightarrow\quad
\left|\eta-\frac12\right| = O(\epsilon^{1/4}).  \qedhere
$$
\end{proof}

Next, using the helper lemmas, we bound the closeness of $\Lambda$ and $\Lambda'$.
\begin{lemma} \label{lem:lambda-close}
$
\|\Lambda-\Lambda'\|_F^2
= O(\epsilon^{1/4}).
$
\end{lemma}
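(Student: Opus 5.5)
We bound $\|\Lambda-\Lambda'\|_F^2$ by passing through the block-diagonal truncation $D:=X+Y=Q\Lambda Q+R\Lambda R$, where $Q=\Pi_U$, $R=\Pi_{U^\perp}$. By the triangle inequality,
\[
\|\Lambda-\Lambda'\|_F^2\le 2\|\Lambda-D\|_F^2+2\|D-\Lambda'\|_F^2 .
\]
The first term is the off-diagonal part:
\[
\|\Lambda-D\|_F^2=\|Q\Lambda R\|_F^2+\|R\Lambda Q\|_F^2=2\delta=O(\sqrt\epsilon)
\]
by \Cref{lem:delta-bound}. So everything reduces to showing that the rescaling of the two diagonal blocks is mild, i.e., that $\Tr(X^2)$ and $\Tr(Y^2)$ are both $\tfrac12+O(\epsilon^{1/4})$.

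For this, we compute $\Tr(X^2)=\Tr(Q\Lambda Q\Lambda Q)$. Since $\Lambda^2=\Lambda Q\Lambda+\Lambda R\Lambda$, we get
\[
\eta=\Tr(Q\Lambda^2)=\Tr(Q\Lambda Q\Lambda Q)+\Tr(Q\Lambda R\Lambda Q)=\Tr(X^2)+\delta .
\]
Similarly, $\Tr(Y^2)=(1-\eta)-\delta$. By \Cref{lem:tr-lambda-pi} and \Cref{lem:delta-bound}, both quantities equal $\tfrac12+O(\epsilon^{1/4})$. In particular they are bounded below by, say, $1/4$ once $\epsilon$ is below a small absolute constant, which also shows $X,Y\neq0$ and that $\Lambda'$ is well defined. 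For larger $\epsilon$ the claim is trivial, since $\|\Lambda-\Lambda'\|_F^2\le 4$. (Existence of $t$ is also fine: if no $\rho_i\ge\sqrt\epsilon$, then $B\preceq\sqrt\epsilon I$, which gives $1/4=\Tr(B\Lambda^2)\le\sqrt\epsilon$, impossible for small $\epsilon$.)

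Next we bound $D-\Lambda'$. Since $X\perp Y$ in Frobenius inner product,
\[
\|D-\Lambda'\|_F^2=\Bigl(1-\tfrac{1}{\sqrt{2\Tr(X^2)}}\Bigr)^2\Tr(X^2)+\Bigl(1-\tfrac{1}{\sqrt{2\Tr(Y^2)}}\Bigr)^2\Tr(Y^2).
\]
Writing $\Tr(X^2)=\tfrac12+\gamma$ with $|\gamma|=O(\epsilon^{1/4})$, each coefficient satisfies $\bigl|1-(1+2\gamma)^{-1/2}\bigr|=O(|\gamma|)$ for $|\gamma|\le1/4$. Hence this term is $O(\epsilon^{1/2})$, in fact better than needed. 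Combining gives $\|\Lambda-\Lambda'\|_F^2=O(\sqrt\epsilon)+O(\epsilon^{1/2})$, which is certainly $O(\epsilon^{1/4})$ for $\epsilon\le1$. Note $\epsilon\le1$ always holds, since energy is a probability by \Cref{lem:op-energy}.

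The only delicate point is the bookkeeping identity linking $\Tr(X^2)$ to $\eta$ and $\delta$, since the balance estimate of \Cref{lem:tr-lambda-pi} is stated for $\Lambda^2$ rather than for the truncated blocks. Once that identity is in place, the rest is elementary, apart from handling the large-$\epsilon$ regime by the trivial bound.
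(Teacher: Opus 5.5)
Your proof is correct and follows the same route as the paper: you split $\Lambda$ into the blocks $X$, $Y$, $Q\Lambda R$, $R\Lambda Q$, use $\Tr(X^2)=\eta-\delta$ and $\Tr(Y^2)=1-\eta-\delta$, and charge $2\delta$ to the off-diagonal part. The paper uses the exact orthogonal decomposition where you use the factor-$2$ triangle inequality, and it bounds $(\sqrt{x}-1/\sqrt2)^2$ by the cruder $(\sqrt a-\sqrt b)^2\le|a-b|$; your bound is quadratic in $|x-\tfrac12|$, so you get the sharper (and valid) $\|\Lambda-\Lambda'\|_F^2=O(\sqrt\epsilon)$, though this does not improve the final rates in the proof of \Cref{thm:q4}, which are dominated by the square-root loss in \Cref{lem:project-povm}.
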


\begin{proof}
Define 
$
Q:=\Pi_U
$
and
$
R:=\Pi_{U^\perp}
$.
Then
\[
\Lambda
=
Q\Lambda Q+R\Lambda R+Q\Lambda R+R\Lambda Q
=
X+Y+Q\Lambda R+R\Lambda Q.
\]
First compute the Frobenius masses of \(X\) and \(Y\). Since
\[
Q\Lambda^2Q
=
Q\Lambda(Q+R)\Lambda Q
=
Q\Lambda Q\Lambda Q+Q\Lambda R\Lambda Q,
\]
we have
\[
\eta
:=
\operatorname{Tr}(\Lambda^2 Q)
=
\operatorname{Tr}(Q\Lambda^2Q)
=
\operatorname{Tr}(X^2)+\operatorname{Tr}(Q\Lambda R\Lambda Q).
\]
By 
\(
\operatorname{Tr}(Q\Lambda R\Lambda Q) =
\|Q\Lambda R\|_F^2 =
\delta,
\)
we have 
\(
\operatorname{Tr}(X^2)=\eta-\delta.
\)
Similarly,
\(
\operatorname{Tr}(Y^2)=1-\eta-\delta.
\)
Let
\[
x:=\operatorname{Tr}(X^2)=\eta-\delta,
\qquad
y:=\operatorname{Tr}(Y^2)=1-\eta-\delta.
\]
Note that \(\{Q\Lambda Q, R\Lambda R, Q\Lambda R, R\Lambda Q\}\) are pairwise orthogonal in Frobenius inner product. Thus
\begin{align*}
\|\Lambda-\Lambda'\|_F^2
&=
\left\|
\left(1-\frac{1}{\sqrt{2x}}\right)X
+
\left(1-\frac{1}{\sqrt{2y}}\right)Y
+
Q\Lambda R+R\Lambda Q
\right\|_F^2 \\
&=
\left(1-\frac{1}{\sqrt{2x}}\right)^2\|X\|_F^2
+
\left(1-\frac{1}{\sqrt{2y}}\right)^2\|Y\|_F^2
+
\|Q\Lambda R\|_F^2 + \|R\Lambda Q\|_F^2 \\
&=
\left(\sqrt{x}-\frac{1}{\sqrt2}\right)^2
+
\left(\sqrt{y}-\frac{1}{\sqrt2}\right)^2
+
2\delta.
\end{align*}
Using
\(
(\sqrt{a}-\sqrt{b})^2\le |a-b|
\) for any $a,b\geq 0$, 
we get
\begin{align*}
\|\Lambda-\Lambda'\|_F^2
&\le
\left|x-\frac12\right|
+
\left|y-\frac12\right|
+
2\delta \\
&=
\left|\eta-\delta-\frac12\right|
+
\left|1-\eta-\delta-\frac12\right|
+
2\delta \\
&= \left|\left(\eta-\frac12\right)-\delta\right|
+
\left|\left(\eta-\frac12\right)+\delta\right|
+
2\delta 
\leq 2 \left|\eta-\frac12\right| + 4\delta.
\end{align*}
By \Cref{lem:tr-lambda-pi} and \Cref{lem:delta-bound}, we have
\[
\|\Lambda-\Lambda'\|_F^2
\le
2 O(\epsilon^{1/4}) 
+ 4 O(\sqrt{\epsilon}) = O(\epsilon^{1/4}), \qedhere
\]
\end{proof}

Finally, we bound the closeness of $P$ and $P'$.

\begin{lemma} \label{lem:p-close}
    $
    \|P-P'\|_\Lambda
    \leq
    O(\epsilon^{1/4}).
    $
\end{lemma}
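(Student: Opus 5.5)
We need to bound $\|P-P'\|_\Lambda=\|(\Lambda\otimes\bar\Lambda)(P-P')(\Lambda\otimes\bar\Lambda)\|_1 = \|\tilde P - \tilde P'\|_1$, where $\tilde P' := (\Lambda\otimes\bar\Lambda)P'(\Lambda\otimes\bar\Lambda)$. The plan is to insert the intermediate operator $P'\tilde P P'$ and use the triangle inequality:
\[
\|\tilde P-\tilde P'\|_1\le \|\tilde P - P'\tilde PP'\|_1 + \|P'\tilde PP' - \tilde P'\|_1 .
\]
We bound the two terms separately.

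For the first term, we use the standard gentle-measurement argument. Since $\tilde P\succeq 0$ and $\Tr(\tilde P(I-P'))=O(\sqrt\epsilon)$ by \Cref{lem:upper-right}, write $\tilde P = (P'+(I-P'))\tilde P(P'+(I-P'))$. The diagonal piece $(I-P')\tilde P(I-P')$ has trace norm $O(\sqrt\epsilon)$. For each cross term, Hölder (Cauchy--Schwarz for $\tilde P^{1/2}$) gives
\[
\|P'\tilde P(I-P')\|_1\le \|\tilde P^{1/2}P'\|_2\,\|\tilde P^{1/2}(I-P')\|_2\le \sqrt{\tfrac14}\cdot O(\epsilon^{1/4}).
\]
So the first term is $O(\epsilon^{1/4})$.

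For the second term, both $P'\tilde PP'$ and $\tilde P'$ need to be compared inside $\mathrm{L}(U^\perp,U)$. First replace $\Lambda$ by its block-diagonal part $D:=X+Y=Q\Lambda Q+R\Lambda R$. The off-diagonal part $\Lambda - D$ has $\|\Lambda-D\|_F^2=2\delta=O(\sqrt\epsilon)$ by \Cref{lem:delta-bound}, so $\|\Lambda\otimes\bar\Lambda - D\otimes\bar D\|_2=O(\epsilon^{1/4})$. Since $\|P'\|_\infty,\|P\|_\infty\le1$ and $\|\Lambda\otimes\bar\Lambda\|_2=1$, a telescoping Hölder estimate of the form $\|ASA-BSB\|_1\le \|A-B\|_2\|S\|_\infty(\|A\|_2+\|B\|_2)$ lets us swap $\Lambda\otimes\bar\Lambda$ for $D\otimes\bar D$ in $\tilde P'$ at cost $O(\epsilon^{1/4})$. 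Because $D$ preserves $U$ and $U^\perp$, $D\otimes \bar D$ commutes with $P'=\Pi_U\otimes\bar\Pi_{U^\perp}$. Hence
\[
(D\otimes\bar D)P'(D\otimes\bar D) = P'(D\otimes\bar D)^2P' \succeq P'(D\otimes\bar D)P(D\otimes\bar D)P',
\]
using $P\preceq I$. The same swap is also applied inside $P'\tilde PP'$.

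Thus, up to $O(\epsilon^{1/4})$ errors, the second term is the trace norm of a PSD difference, and so equals a difference of traces. The trace of $(D\otimes\bar D)P'(D\otimes\bar D)$ is $\Tr(X^2)\Tr(Y^2)=(\eta-\delta)(1-\eta-\delta)\le \tfrac14$. The trace of $P'\tilde PP'$ is $\Tr(\tilde PP')\ge \tfrac14-O(\sqrt\epsilon)$, again by \Cref{lem:upper-right}. So the difference is $O(\epsilon^{1/4})$. Here we use that the trace norm of a PSD operator equals its trace, after absorbing the swap errors via the triangle inequality.

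Summing the pieces gives $\|P-P'\|_\Lambda=O(\epsilon^{1/4})$. The main obstacle is the second term: getting a clean PSD ordering requires the commuting block-diagonal $D$, and bookkeeping the swap errors carefully in trace norm; the rest is routine.
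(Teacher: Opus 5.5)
Your proof is correct. Its first half is exactly the paper's: you compress $\tilde P$ to $P'\tilde PP'$ and bound the $(I-P')$ blocks using \Cref{lem:upper-right} and Cauchy--Schwarz.

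The second half takes a different route. The paper compares $P'\tilde PP'$ with $T:=P'(\Lambda^2\otimes\bar\Lambda^2)P'$. The ordering $P'\tilde PP'\preceq T$ follows directly from $\tilde P\preceq\Lambda^2\otimes\bar\Lambda^2$ by conjugating with $P'$, with no invariance of $U$ needed, and $\Tr(T)=\eta(1-\eta)\le\frac14$. All of the non-invariance is then isolated in the single estimate $\|T-\tilde P'\|_1\le 8\sqrt\delta$ (\Cref{lem:T-tP}). That estimate is proved by expanding $\Lambda Q\Lambda-Q\Lambda^2Q$ into blocks and using the product form $P'=Q\otimes\bar R$.

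You instead pinch $\Lambda$ to $D=Q\Lambda Q+R\Lambda R$ in both operators, at cost $O(\sqrt\delta)$ per swap via the generic bound $\|ASA-BSB\|_1\le\|A-B\|_2\|S\|_\infty(\|A\|_2+\|B\|_2)$. You then obtain the ordering from the commutation of $D\otimes\bar D$ with $P'$. Your comparator $P'(D^2\otimes\bar D^2)P'$ has trace $\Tr(X^2)\Tr(Y^2)\le\frac14$. The trace you must compare it to is that of $P'(D\otimes\bar D)P(D\otimes\bar D)P'$, not $\Tr(\tilde PP')$ itself. These differ by at most the swap error, as you indicate, so the bound still closes.

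Contrary to your closing remark, the commuting $D$ is not what makes a clean PSD ordering possible; the paper's $T$ has it for free. Invariance, exact or approximate, is only needed to relate the comparator to $\tilde P'$.

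Your route is more modular: one perturbation estimate works for any $0\preceq S\preceq I$, with no block bookkeeping. The paper's route uses one fewer perturbation step and gives an explicit constant. Both give $O(\sqrt\delta)+O(\epsilon^{1/4})=O(\epsilon^{1/4})$.
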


\begin{proof}
Define 
$$
\tilde{P}:=(\Lambda\otimes \bar{\Lambda})P(\Lambda\otimes \bar{\Lambda}),
\quad \tilde{P}':=(\Lambda\otimes \bar{\Lambda})P'(\Lambda\otimes \bar{\Lambda}),
\quad
T:=P'(\Lambda^2\otimes \bar{\Lambda}^2)P'.
$$
Then 
\[
\|P-P'\|_\Lambda = 
\|\tilde{P}-\tilde{P}'\|_1 \leq
\|\tilde{P}-T\|_1 + 
\|T-\tilde{P}'\|_1.
\]

\paragraph{Bound $\|\tilde{P}-T\|_1$.} Decompose space
$\mathbb{C}^{N}\otimes \mathbb{C}^N$ into $\Pi_1:=P'$ and $\Pi_2:=(I-P')$. Then
we can write 
$$
\tilde{P} = \begin{bmatrix}
P_{11} & P_{12} \\
P_{21} & P_{22}
\end{bmatrix},
$$
where $P_{ij}:= \Pi_{i} \tilde{P} \Pi_j$.
Then 
$$
\|\tilde{P}-T\|_1 \leq
\|P_{11}-T\|_1 + \|P_{22}\|_1 + \|P_{12}\|_1 + \|P_{21}\|_1.
$$
We bound each term in the following. 
Since $\tilde{P} = (\Lambda\otimes \bar{\Lambda})P(\Lambda\otimes \bar{\Lambda}) \preceq \Lambda^2\otimes\bar{\Lambda}^2$, we have
$$
0\preceq P_{11}= P'\tilde{P}P'\preceq P'(\Lambda^2\otimes\bar{\Lambda}^2)P'=T
\quad\Longrightarrow\quad
T-P_{11}\succeq 0.
$$
Thus the first term
$$
\|P_{11}-T\|_1=\mathrm{Tr}(T-P_{11}) 
= \mathrm{Tr}(T)-\mathrm{Tr}(P_{11}).
$$
By $P'=\Pi_{\mathrm{L}(U^\perp,U)} = \Pi_U\otimes \bar{\Pi}_{U^\perp}$, we have
$$
\mathrm{Tr}(T) 
=\mathrm{Tr}(P'(\Lambda^2\otimes \bar{\Lambda}^2))
=\Tr(\Pi_U \Lambda^2) \overline{\Tr(\Pi_{U^\perp} \Lambda^2)}
= \eta (1-\eta) \leq \frac{1}{4},
$$
and 
$$
\mathrm{Tr}(P_{11}) = \mathrm{Tr}(P'\tilde{P}P')= \mathrm{Tr}(\tilde{P}P') \geq \frac{1}{4}- O(\sqrt{\epsilon}),
$$
where the last inequality is from \Cref{lem:upper-right}. Thus
$$
\|P_{11}-T\|_1 \leq \frac14 - \left(\frac14 - O(\sqrt{\epsilon})\right) = O(\sqrt{\epsilon}).
$$

For the second term, as $P_{22}\succeq 0$, 
$$
\|P_{22}\|_1=\mathrm{Tr}(P_{22}) = \mathrm{Tr}((I-P')\tilde{P})
= O(\sqrt{\epsilon}),
$$
where the last inequality is from \Cref{lem:upper-right}. 

For the third term, $\tilde{P}\succeq 0$ implies that 
\(
\|P_{12}\|_1 \leq \sqrt{\mathrm{Tr}(P_{11})\mathrm{Tr}(P_{22})}.
\)
By
\(
\mathrm{Tr}(P_{11}) \leq \frac14
\) and
\(
\mathrm{Tr}(P_{22}) = O(\sqrt{\epsilon}),
\)
we have
\(
\|P_{12}\|_1 = O(\epsilon^{1/4}).
\)
Finally, since $P_{21}=P_{12}^\dagger$, the fourth term $\|P_{21}\|_1=\|P_{12}\|_1 = O(\epsilon^{1/4})$.

Putting things together, we have 
\begin{equation} \label{eq:tP-T}
\|\tilde{P}-T\|_1 \leq 
2 O(\sqrt{\epsilon}) + 2 O(\epsilon^{1/4}) = O(\epsilon^{1/4}).
\end{equation}

\paragraph{Bound $\|T-\tilde{P}'\|_1$.} By \Cref{lem:T-tP}, we have
\begin{equation} \label{eq:T-tP}
\|T-\tilde{P}'\|_1 \leq 8\sqrt{\delta}. 
\end{equation}

Combining \eqref{eq:tP-T}, \eqref{eq:T-tP}, and \Cref{lem:delta-bound}, we have
\[
\|P-P'\|_\Lambda =
\|\tilde{P}-\tilde{P}'\|_1 \le
O(\epsilon^{1/4}) + 8\sqrt{\delta} = O(\epsilon^{1/4}).  \qedhere
\]
\end{proof}

\begin{lemma} \label{lem:T-tP}
\(
\|T-\tilde{P}'\|_1 \leq 8\sqrt{\delta}. 
\)
\end{lemma}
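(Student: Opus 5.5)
We need to bound $\|T-\tilde P'\|_1$ where $T=P'(\Lambda^2\otimes\bar\Lambda^2)P'$ and $\tilde P'=(\Lambda\otimes\bar\Lambda)P'(\Lambda\otimes\bar\Lambda)$, with $P'=Q\otimes\bar R$ for $Q=\Pi_U$, $R=\Pi_{U^\perp}$. The plan is to factor both operators as tensor products and reduce the estimate to single-system trace-norm bounds.

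\textbf{Step 1 (factorization).} Since $P'=Q\otimes\bar R$, we have
\[
T=Q\Lambda^2Q\otimes\overline{R\Lambda^2R}
\qquad\text{and}\qquad
\tilde P'=\Lambda Q\Lambda\otimes\overline{\Lambda R\Lambda}.
\]
Set $a:=Q\Lambda^2Q$, $a':=\Lambda Q\Lambda$, $b:=R\Lambda^2R$, and $b':=\Lambda R\Lambda$; all four are PSD. By the telescoping identity $a\otimes\bar b-a'\otimes\bar b'=(a-a')\otimes\bar b+a'\otimes(\bar b-\bar b')$ and multiplicativity of the trace norm under tensor products,
\[
\|T-\tilde P'\|_1\le \|a-a'\|_1\Tr(b)+\Tr(a')\|b-b'\|_1\le \|a-a'\|_1+\|b-b'\|_1,
\]
because all the traces are at most $\Tr(\Lambda^2)=1$.

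\textbf{Step 2 (single-system bound).} We bound $\|Q\Lambda^2Q-\Lambda Q\Lambda\|_1$. Write $\Lambda=X+Y+C+C^\dagger$, where $C:=Q\Lambda R$ has $\|C\|_F^2=\delta$. We have $\Lambda Q=Q\Lambda Q+R\Lambda Q=X+C^\dagger$. Hence
\[
\Lambda Q\Lambda=(\Lambda Q)(\Lambda Q)^\dagger=(X+C^\dagger)(X+C)=X^2+XC+C^\dagger X+C^\dagger C.
\]
Also $Q\Lambda^2Q=(Q\Lambda)(\Lambda Q)=(X+C)(X+C^\dagger)=X^2+CC^\dagger$. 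The difference is therefore
\[
CC^\dagger-C^\dagger C-XC-C^\dagger X.
\]
By Hölder's inequality,
\[
\|CC^\dagger\|_1=\|C^\dagger C\|_1=\delta,\qquad \|XC\|_1\le\|X\|_F\|C\|_F\le\sqrt\delta,
\]
using $\|X\|_F\le\|\Lambda\|_F=1$. Hence $\|a-a'\|_1\le 2\delta+2\sqrt\delta\le 4\sqrt\delta$, since $\delta\le1$. The same argument with $R$ in place of $Q$, using $Y$ and $C^\dagger$, gives $\|b-b'\|_1\le4\sqrt\delta$.

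\textbf{Step 3 (combine).} Steps 1 and 2 give $\|T-\tilde P'\|_1\le 8\sqrt\delta$. The main point to check carefully is the bookkeeping in Step 2: identifying $\Lambda Q$ with $X+C^\dagger$ (rather than $X+C$), and handling the conjugates in the second tensor factor, which do not affect trace norms. No step appears to be a serious obstacle.
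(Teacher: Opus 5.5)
Your proposal is correct and follows the paper's proof essentially step for step. It uses the same tensor factorization $T=Q\Lambda^2Q\otimes\overline{R\Lambda^2R}$, $\tilde P'=\Lambda Q\Lambda\otimes\overline{\Lambda R\Lambda}$, the same telescoping split with traces bounded by $\Tr(\Lambda^2)=1$, and the same four-term expansion: your $CC^\dagger$, $C^\dagger C$, $XC$, $C^\dagger X$ are exactly the paper's $Q\Lambda R\Lambda Q$, $R\Lambda Q\Lambda R$, $Q\Lambda Q\Lambda R$, $R\Lambda Q\Lambda Q$, bounded by $2\delta+2\sqrt\delta\le 4\sqrt\delta$. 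Writing $\Lambda Q\Lambda=(X+C^\dagger)(X+C)$ and $Q\Lambda^2Q=(X+C)(X+C^\dagger)$ is only a tidier way of organizing that computation.
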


\begin{proof}
Let $Q:=\Pi_U$ and $R:=\Pi_{U^\perp}$. By
$
P'= \Pi_{\mathrm{L}(U^\perp, U)} = Q\otimes \bar{R}
$,
we have
\[
T = P'(\Lambda^2\otimes \bar{\Lambda}^2)P' = 
(Q\Lambda^2 Q)\otimes \overline{R\Lambda^2 R}, 
\quad
\tilde{P}' = (\Lambda\otimes \bar{\Lambda})P'(\Lambda\otimes \bar{\Lambda}) =
(\Lambda Q\Lambda)\otimes \overline{\Lambda R\Lambda}.
\]
Compute
\[
T-\tilde{P}'=
\left(Q\Lambda^2 Q - \Lambda Q\Lambda\right)\otimes \overline{R\Lambda^2 R} + 
(\Lambda Q\Lambda)\otimes (\overline{R\Lambda^2 R - \Lambda R\Lambda}).
\]
By triangle inequality and $\|X\otimes Y\|_1 = \|X\|_1 \|Y\|_1$, we have
\begin{equation} \label{eq:T-tP-decomp}
\|T-\tilde{P}'\|_1 \leq
\|Q\Lambda^2 Q - \Lambda Q\Lambda\|_1 \|R\Lambda^2 R\|_1 + 
\|\Lambda Q\Lambda\|_1 \|R\Lambda^2 R - \Lambda R\Lambda\|_1.
\end{equation}

First, notice that 
\begin{equation}  \label{eq:Q-R-norm}
\|R\Lambda^2 R\|_1 = \Tr(R\Lambda^2 R) \leq \Tr(\Lambda^2) = 1, \quad
\|\Lambda Q\Lambda\|_1 = \Tr(\Lambda Q\Lambda) \leq \Tr(\Lambda^2) = 1.
\end{equation}
Next, we bound
\(
\|Q\Lambda^2 Q -\Lambda Q\Lambda\|_1.
\)
Write
\begin{align*}
\Lambda Q\Lambda &= (Q+R)\Lambda Q\Lambda(Q+R) = Q\Lambda Q\Lambda Q + R\Lambda Q\Lambda R + Q\Lambda Q\Lambda R + R\Lambda Q\Lambda Q, \\
Q\Lambda^2 Q &= Q\Lambda(Q+R)\Lambda Q = Q\Lambda Q\Lambda Q + Q\Lambda R\Lambda Q.
\end{align*}
Subtracting them gives
\[
\Lambda Q\Lambda - Q\Lambda^2 Q = 
R\Lambda Q\Lambda R + Q\Lambda Q\Lambda R + R\Lambda Q\Lambda Q - Q\Lambda R\Lambda Q.
\]
Then by H\"older's inequality $\|XY\|_1 \leq \|X\|_F \|Y\|_F$, definition $\delta=\|Q\Lambda R\|_F^2 = \|R\Lambda Q\|_F^2$,
and the fact that $\|Q\Lambda Q\|_F, \|R\Lambda R\|_F\leq \|\Lambda\|_F = 1$, we have
\begin{align*}
\|R\Lambda Q\Lambda R\|_1 &\leq \|R\Lambda Q\|_F \|Q\Lambda R\|_F = \delta, 
&\|Q\Lambda Q\Lambda R\|_1 &\leq \|Q\Lambda Q\|_F \|Q\Lambda R\|_F \leq \sqrt{\delta}, \\
\|R\Lambda Q\Lambda Q\|_1 &\leq \|R\Lambda Q\|_F \|Q\Lambda Q\|_F \leq \sqrt{\delta}, 
&\|Q\Lambda R\Lambda Q\|_1 &\leq \|Q\Lambda R\|_F \|R\Lambda Q\|_F = \delta.
\end{align*}
Combining them gives
\begin{equation} \label{eq:Q-Lambda-Q}
\|Q\Lambda^2 Q - \Lambda Q\Lambda\|_1 \leq 2\delta + 2\sqrt{\delta} \leq 4\sqrt{\delta},
\end{equation}
where the last inequality is from $\delta\leq 1$.
Similarly, we can show that
\begin{equation} \label{eq:R-Lambda-R}
\|R\Lambda^2 R - \Lambda R\Lambda\|_1 \leq 4\sqrt{\delta}.
\end{equation}

Plugging \eqref{eq:Q-R-norm}, \eqref{eq:Q-Lambda-Q}, and \eqref{eq:R-Lambda-R} into \eqref{eq:T-tP-decomp}, we have
\[
\|T-\tilde{P}'\|_1 \leq 4\sqrt{\delta} + 4\sqrt{\delta} = 8\sqrt{\delta}. \qedhere
\]
\end{proof}

Now we are ready to prove \Cref{thm:weighted-stability}.

\begin{proof}[Proof of \Cref{thm:weighted-stability}]
    If $\epsilon=0$, take $P'=P$ and $\Lambda'=\Lambda$.

    Suppose next that $\epsilon\geq 1/16 = \Omega(1)$. 
    Since $N\geq 2$, we can take
    any valid pair $(P',\Lambda')$ such that $\Es_{\Lambda'}(P')=0$ and $m_{\Lambda'}(P')=1/4$.
    Then the required approximation bounds follow since
    \[
        \|\Lambda-\Lambda'\|_F^2
        =2-2\Tr(\Lambda\Lambda')\leq2 = O(\epsilon^{1/4}),
        \qquad
        \|P-P'\|_\Lambda
        \leq m_\Lambda(P)+m_\Lambda(P')\leq 2 = O(\epsilon^{1/4}).
    \]

    Now assume $0<\epsilon<1/16$. We claim that the threshold
    \(
        t:=\max\{j\in[N]:\rho_j\geq\sqrt{\epsilon}\}
    \)
    is well defined and satisfies $t<N$. Indeed, if the defining set were
    empty, then $B\prec\sqrt{\epsilon}I$, giving
    \begin{equation} \label{eq:con1}
        \frac14=m_\Lambda(P)=\Tr(\Lambda B\Lambda)
        <\sqrt{\epsilon}<\frac14,
    \end{equation}
    a contradiction. If $t=N$, then $B\succeq\sqrt{\epsilon}I$, and hence
    \begin{equation} \label{eq:con2}
        \epsilon=\Tr(\Lambda A\Lambda B)
        \geq\sqrt{\epsilon}\Tr(\Lambda A\Lambda)
        =\frac14\sqrt{\epsilon},
    \end{equation}
    which implies $\epsilon\geq1/16$, also a contradiction. Thus $1\leq t<N$.

    It remains to check that the normalization factors
    $\Tr(X^2)$ and $\Tr(Y^2)$ 
    in the definition of
    $\Lambda'$ are nonzero. Let $Q:=\Pi_U$ and $R:=\Pi_{U^\perp}$. If
    $\Tr(X^2)=0$, then $X=Q\Lambda Q=0$. Since $\Lambda\succeq0$, this implies
    $\Lambda=R\Lambda R$,
    and thus $\Lambda = R\Lambda = \Lambda R$.
    Since 
    $RBR\prec \sqrt{\epsilon} I$,
    we would have
    \[
        \frac14=m_\Lambda(P)=\Tr(\Lambda^2B) = 
        \Tr(\Lambda^2 \cdot RBR)
        \leq \sqrt{\epsilon} \Tr(\Lambda^2) = \sqrt{\epsilon} < \frac14,
    \]
    a contradiction. Similarly, if $\Tr(Y^2)=0$, then
    $\Lambda = Q\Lambda = \Lambda Q$. Thus by $QBQ\succeq \sqrt{\epsilon} Q$, we would have
    \[
        \epsilon=\Tr(\Lambda A\Lambda B)
        =\Tr(\Lambda A \Lambda \cdot QBQ)
        \geq\sqrt{\epsilon}\Tr(\Lambda A\Lambda)
        =\frac14\sqrt{\epsilon},
    \]
    which implies $\epsilon\geq1/16$, again a contradiction. Therefore
    $\Tr(X^2),\Tr(Y^2)>0$, and the pair $(P',\Lambda')$ constructed in
    \Cref{sec:ze-candidate} is well defined. The conclusion now follows from
    \Cref{fact:ze-candidate,lem:lambda-close,lem:p-close}.
\end{proof}

 \section{One-Way One-Round Quantum Coloring of Directed Cycles}
\label{sec:one-way}

In this section, we return to the quantum-LOCAL model and study one-way
one-round anonymous quantum algorithms for the directed-cycle $q$-coloring problem. 
The nodes
are arranged on a directed $n$-cycle
\[v_1\to v_2\to\cdots\to v_n\to v_1.\] 
They are initially identical and execute the same quantum algorithm.
During the single communication round, each node sends one quantum message
to its successor and receives one from its predecessor; it then performs a local
measurement and outputs a color in $[q]$. 

We express the probability of a local
collision exactly in terms of the $\Lambda$-energy of the corresponding
measurement operators. This yields a general characterization for every $q$, and
our main impossibility result for $q=4$.

\subsection{General Reduction} \label{sec:reduction}

We first present a general statement on the 
energy of POVMs, which is equivalent to the impossibility of one-way one-round quantum anonymous
algorithms for $q$-coloring directed cycles.

\begin{theorem} \label{thm:reduction}
    For every fixed integer $q\ge 2$,
    the following two statements are equivalent:
    \begin{enumerate}[
    label=(Q\arabic*),
    ref=(Q\arabic*),
    leftmargin=1.5cm,
    ]
    \item\label{item:povm} There exists a universal constant $C_q > 0$ such that
    for every choice of local dimension $N$, PSD matrix $\Lambda \in \mathbb{C}^{N\times N}$ with
    $\|\Lambda\|_F=1$, and $q$-outcome POVM $\mathcal{M}=\{P_i\}_{i\in[q]}$
    on $\mathbb{C}^N\otimes \mathbb{C}^N$,
    we have
    \[
    \sum_{i\in[q]} \Es_\Lambda(P_i) \geq C_q.    
    \]
    \item\label{item:algo} One-way one-round anonymous quantum algorithms
    cannot $q$-color directed cycles with high probability.
    \end{enumerate}
\end{theorem}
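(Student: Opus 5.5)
The plan is to prove both directions through one exact identity. For any one-way one-round anonymous algorithm, after canonicalization, there exist $N$, a normalized PSD $\Lambda$ and a $q$-outcome POVM $\{P_i\}$ on $\mathbb C^N\otimes\mathbb C^N$ such that for every $i\in\mathbb Z_n$
\[
\Pr[c_i=c_{i+1}]=\sum_{a\in[q]}\Es_\Lambda(P_a).
\]
Conversely, every such triple $(N,\Lambda,\{P_a\})$ is realized by some algorithm. The converse is direct. Each node prepares $\ket{\Psi}=\V(\bar\Lambda)$ on $\A_i\B_i$, sends $\B_i$ forward, and measures $\{P_a\}$ on $\B_{i-1}\otimes\A_i$. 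The two adjacent nodes $v_i,v_{i+1}$ act on $\B_{i-1}\A_i\B_i\A_{i+1}$. Here $\A_i\B_i$ is in the pure state $\ket\Psi$, while $\B_{i-1}$ and $\A_{i+1}$ are the marginals $\rho=\Lambda^2$ and $\bar\rho=\bar\Lambda^2$ of independent copies. Under the ordering $(\B_{i-1},\A_i),(\B_i,\A_{i+1})$ of the two measured systems, the joint state is exactly $\rho\otimes\ket\Psi\bra\Psi\otimes\bar\rho$. \Cref{lem:op-energy} then gives the collision probability as $\Tr[(P\otimes P)(\rho\otimes\ket\Psi\bra\Psi\otimes\bar\rho)]$ summed over colors, which is the claimed energy sum. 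I need to check the register ordering and conjugation convention carefully here, since this is where $\V(\bar\Lambda)$ versus $\V(\Lambda)$ matters.

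For the forward direction of the identity, I take an arbitrary algorithm. Each node's pre-communication state on (kept register, sent register) can be purified into the kept register, so I may assume it is a pure bipartite state $\ket{\phi}$ on $\A\otimes\B$ with finite dimensions. I pad dimensions to a common $N$ and take the Schmidt decomposition. The local Schmidt-basis unitaries act on $\A_i$ and on $\B_{i-1}$ inside node $v_i$'s measurement, so they can be absorbed into the POVM. This leaves $\ket\phi=\sum_j\lambda_j\ket{j}\ket{j}=\V(\bar\Lambda)$ with $\Lambda=\mathrm{diag}(\lambda_j)$ real and $\|\Lambda\|_F=1$. The local operation after receiving is an arbitrary channel followed by a measurement, which is a POVM $\{P_a\}$ on $\B_{i-1}\A_i$. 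The same computation as before gives the identity, with the same value for every edge by symmetry.

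Now the equivalence. (Q1)$\Rightarrow$(Q2): every edge has collision probability at least $C_q$. By $1$-dependence, the collision events on the edges $(3j,3j+1)$ for $0\le j<\lfloor n/3\rfloor$ are independent, since those edges are pairwise separated by at least one vertex. So success probability is at most $(1-C_q)^{\lfloor n/3\rfloor}$, which is not $1-1/\mathrm{poly}(n)$. (Q2)$\Rightarrow$(Q1), by contrapositive: if (Q1) fails, then for every $k$ there is a triple with total energy at most $1/k^2$. I pick $k=n^{3}$, say, and run the realized algorithm. A union bound over the $n$ edges gives failure probability at most $n\cdot n^{-6}$, so the algorithm colors with high probability, contradicting (Q2). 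The local dimension depends on $n$, which is allowed since nodes know $n$ and messages are unbounded but finite. Anonymity is harmless in this direction as well.

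The main obstacle I expect is the forward canonicalization: justifying that an arbitrary one-way one-round algorithm reduces to a pure message state in Schmidt form with a single POVM. This needs purification of the initial state (including any mixed or ancilla parts, kept locally), the correct handling of complex conjugation so that the state equals $\V(\bar\Lambda)$ rather than $\V(\Lambda)$, and care with Schmidt-basis absorption so that the same POVM serves every node. Once that is in place, the rest is \Cref{lem:op-energy} together with the independence and union-bound arguments.
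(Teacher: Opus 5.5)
Your proposal is correct and follows the paper's proof essentially step for step. You canonicalize to a pure shared state $\V(\bar\Lambda)$ with a single POVM, read off $\Pr[c_i=c_{i+1}]=\sum_a\Es_\Lambda(P_a)$ from \Cref{lem:op-energy}, use $1$-dependence on $\lfloor n/3\rfloor$ separated edges for (Q1)$\Rightarrow$(Q2), and run the realized algorithm with a union bound for the converse. The only cosmetic difference is that you rotate both Schmidt bases to make $\Lambda$ diagonal, as in the technical overview, whereas the formal proof absorbs a single unitary on the $\A$ side and keeps a general PSD $\Lambda$; both are valid.
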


\begin{proof}
    We first show that \ref{item:povm} implies \ref{item:algo}. 
    Suppose \ref{item:povm} holds for $q$. 
    Since the local computation is unbounded, by purifying private randomness and intermediate measurements, 
    each node may store the private randomness, and intermediate outcomes
    coherently in the private workspace. Then their effects can be absorbed into
    a pure bipartite state and a final POVM. Therefore without loss of generality, 
    we may assume any one-way one-round quantum $q$-coloring algorithm has the following form
    \begin{enumerate}
    \item Each node $v_\ell$ prepares an identical bipartite state
    $\ket{\Psi}_{\A_\ell \B_\ell}$ and sends register $\B_\ell$ to $v_{\ell+1}$.
    \item Each node $v_\ell$ performs the same $q$-outcome POVM measurement
    $\mathcal{M}=\{P_i:i\in[q]\}$ on the bipartite system $\B_{\ell-1}\A_\ell$
    and outputs the color $c_\ell\in[q]$.
    \end{enumerate}
    
    We write the Schmidt decomposition of $\ket{\Psi}$ as
    \[
    \ket{\Psi}=\sum_{t\in[N]}\lambda_t\ket{\psi_t}_\A\ket{\varphi_t}_\B,
    \]
    where $\lambda_t\ge 0$ are Schmidt coefficients satisfying $\sum_t \lambda_t^2=1$.
    
    Define a local unitary $U\in U(N)$ that maps $\{\ket{\bar{\varphi}_t}\}$ 
    to $\{\ket{\psi_t}\}$. Since each node $v_\ell$ can apply $I\otimes U$
    locally on $\B_{\ell-1}\A_\ell$ right after communication,
    we may absorb $I\otimes U$ into $\mathcal{M}$, and assume without loss of generality that the shared bipartite state is
    \begin{equation}\label{eq:schmidt-canonical}
    \ket{\Psi}=\sum_{t\in[N]}\lambda_t\ket{\bar{\varphi}_t,\varphi_t} = \V(\bar{\Lambda})\in \mathbb{C}^N\otimes \mathbb{C}^N,
    \end{equation}
    where $\Lambda := \sum_{t\in[N]}\lambda_t\ket{\varphi_t}\bra{\varphi_t}$
    is a PSD matrix with $\|\Lambda\|_F=1$.
    
    For two adjacent nodes $v_\ell$ and $v_{\ell+1}$,
    they hold the registers $\B_{\ell-1}\A_\ell \B_\ell \A_{\ell+1}$ after communication. 
    Their joint state is 
    \[
    \sigma_{\B_{\ell-1}\A_\ell \B_\ell \A_{\ell+1}}:=\Tr_{\As}\!\left(\ket{\Psi}\bra{\Psi}\right)_{\B_{\ell-1}}
    \otimes \ket{\Psi}\bra{\Psi}_{\A_\ell \B_\ell}
    \otimes \Tr_{\Bs}\!\left(\ket{\Psi}\bra{\Psi}\right)_{\A_{\ell+1}}
    = \Lambda^2 \otimes \ket{\Psi}\bra{\Psi} \otimes \bar{\Lambda}^2,
    \]

    Then the probability of both nodes outputting the same color $i\in[q]$ is
    \[
    \Pr[c_\ell=c_{\ell+1}=i] = 
    \Tr\!\left(
    (P_i\otimes P_i)\sigma
    \right) = \Es_\Lambda(P_i),
    \]
    where the last equality follows from \Cref{lem:op-energy}. 

    Thus by \ref{item:povm}, we have the local collision probability
    \[
    \Pr[c_\ell=c_{\ell+1}] = \sum_{i\in[q]} \Pr[c_\ell=c_{\ell+1}=i] = \sum_{i\in[q]} \Es_\Lambda(P_i) \geq C_q,
    \]
    which is lower bounded by a constant independent of cycle length $n$. 
    
    Since the algorithm is symmetric on the cycle,
    $\Pr[c_\ell=c_{\ell+1}]$ is the same for all $\ell \in [n]$.
    As the algorithm is one-way one-round, collision events on edges whose
    pairwise cyclic distances are at least $3$ are independent. Selecting
    $\lfloor n/3\rfloor$ such edges, the algorithm produces a proper
    $q$-coloring with probability at most $(1-C_q)^{\lfloor n/3\rfloor}=o(1)$.

    For the converse direction, suppose \ref{item:povm} does not hold. Then
    for any $\epsilon>0$, there exists a $q$-outcome POVM
    $\mathcal{M}=\{P_i\}_{i\in[q]}$ and a PSD matrix $\Lambda$ with
    $\|\Lambda\|_F=1$ such that 
    \[ 
    \sum_{i\in[q]} \Es_\Lambda(P_i) \leq  \epsilon.
    \]
    Construct a one-way one-round quantum algorithm\footnote{
        Although the failure of \ref{item:povm} only asserts the
        existence of $(\{P_i\}, \Lambda)$, 
        the induced algorithm can be made uniform: 
        If \ref{item:povm} fails,
        then there exists a finite-dimensional pair $(\{P_i\}, \Lambda)$ 
        of total energy at most $\epsilon/2$.
        Since we allow unbounded local computation,
        and the energy is computable and continuous, 
        the algorithm can exhaustively search over dense enough approximations
        of $(\{P_i\}, \Lambda)$ to find a candidate pair
        with total energy at most $\epsilon$. 
    }
    by preparing the state 
    $\V(\bar{\Lambda})$, sending one register to the neighbor, and performing the POVM $\mathcal{M}$ at each node. 
    Then the local failure probability is exactly $\sum_{i\in[q]} \Es_\Lambda(P_i) \leq \epsilon$.
    By setting $\epsilon = 1/n^2$ and applying the union bound,
    the algorithm produces a proper $q$-coloring of the $n$-cycle with probability at least $1-n\cdot 1/n^2 = 1-1/n$.
    Thus it solves the directed-cycle $q$-coloring problem.
\end{proof}

\subsection{Impossibility of \texorpdfstring{$\leq 3$}{3 or Fewer} Colors}

One-way one-round quantum $q$-coloring is known to be impossible for $q\leq 3$,
from a bounded-dependence perspective \cite{gall2022non,holroyd2017finitary}. 
Namely, stationary $1$-dependent $q$-coloring processes do not exist for $q\leq 3$. 
As a corollary, Statement \ref{item:povm} holds for $q\leq 3$.

\begin{lemma}[Section 2 of \cite{gandolfi1989extremal}] \label{lem:1-dep}
    Given a stationary $1$-dependent process $(Y_t\in \{0,1\})_{t\in \mathbb{Z}}$, 
    if 
    \[
    \alpha:=\Pr[Y_t=1] \in \left[1/4,1/3\right],
    \]
    then
    \[
    \Pr[Y_t=1,Y_{t+1}=1] \geq 
    \frac{(1-2\sqrt{1-3\alpha})(1+\sqrt{1-3\alpha})^2}{27}.
    \]
\end{lemma}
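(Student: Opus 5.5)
The plan is to follow the classical renewal/generating-function argument for $1$-dependent processes. Write $p:=\alpha=\Pr[Y_t=1]$ and $q:=\Pr[Y_t=1,Y_{t+1}=1]$. Define $u_n:=\Pr[Y_1=\cdots=Y_n=0]$, with $u_0=1$ and $u_{-1}=u_{-2}:=1$ as a convention for small indices. Let $A_n$ denote the event $\{Y_1=\cdots=Y_n=0\}$. The first step is to derive a linear recurrence inequality for $u_n$ using only stationarity and $1$-dependence.

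\textbf{Step 1: the recurrence inequality.} Start from $u_{n-1}-u_n=\Pr[A_{n-1},\,Y_n=1]$. Split the event $\{A_{n-2},Y_n=1\}$ according to the value of $Y_{n-1}$:
\[
\Pr[A_{n-2},Y_n=1]=\Pr[A_{n-1},Y_n=1]+\Pr[A_{n-2},Y_{n-1}=1,Y_n=1].
\]
By $1$-dependence, $\Pr[A_{n-2},Y_n=1]=p\,u_{n-2}$. For the last term, drop the constraint at time $n-2$ and use $1$-dependence again:
\[
\Pr[A_{n-2},Y_{n-1}=Y_n=1]\le \Pr[A_{n-3},Y_{n-1}=Y_n=1]=q\,u_{n-3}.
\]
Combining these gives
\[
u_n\le u_{n-1}-p\,u_{n-2}+q\,u_{n-3}\qquad(n\ge 3),
\]
with $u_n\ge 0$ for all $n$.

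\textbf{Step 2: nonnegativity forces a real root.} Let $g(x):=x^3-x^2+px-q$. The claim is that a nonnegative sequence satisfying this inequality forces $g$ to have a root in $(0,1]$. I would argue via generating functions. Put $F(z)=\sum_n u_nz^n$, which has radius of convergence at least $1$, and $h(z):=1-z+pz^2-qz^3$. Then $h(z)F(z)=c(z)-D(z)$, where $c$ is a polynomial fixed by the initial terms and $D$ has nonnegative coefficients (the slack in the inequality).

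Suppose, for contradiction, that $g$ has no root in $(0,1]$. Equivalently, $h$ has no root in $[1,\infty)$, so $h>0$ on $[0,\infty)$. Then I would apply Pringsheim's theorem to $F$. Since $u_n\ge0$, the singularity of $F$ on its circle of convergence lies at the positive real point $R$, so $F(x)\to\infty$ as $x\uparrow R$. Meanwhile $hF=c-D$ stays bounded above on $[0,R)$ while $h(R)>0$, which is a contradiction. If instead $R=\infty$, I would use $u_n\ge0$ with $h>0$ to get $F(x)\le c(x)/h(x)$ for all $x>0$. A comparison of degrees, using $u_n>0$ for many $n$ when $p<1$, then gives a contradiction. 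Alternatively, $F$ would be entire with only finitely many nonzero terms, which contradicts the recurrence unless trivial cases hold.

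I expect this step to be the main obstacle. The difficulty is making the positivity-forces-real-root argument airtight when only an inequality holds, including the boundary cases: $R=\infty$, $q=0$, and the small-index conventions. If the Pringsheim route gets messy, the fallback is a direct Perron--Frobenius-style argument. In that argument I would pick $x_0$ minimizing $g(x)/x^3$ on $(0,1]$, note that $g>0$ there, and show that $w_n:=u_nx_0^{-n}$ satisfies a contracting inequality which eventually forces some $w_n<0$.

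\textbf{Step 3: the algebra.} Once $g$ must have a root in $(0,1]$, we need $q\ge \min\{x^3-x^2+px : x \text{ a local minimum point}\}$ in the relevant range. For $p\in[1/4,1/3]$, the critical points of $x^3-x^2+px$ are the roots of $3x^2-2x+p=0$. Set $s:=\sqrt{1-3p}\in[0,1/2]$. The local minimum is at $x=(1+s)/3$, and a direct substitution gives the value
\[
\frac{(1-2s)(1+s)^2}{27}.
\]
The cubic $g$ is increasing beyond this point and its value at $0$ is $-q\le0$. So a positive root of $g$ exists only if $q$ is at least this local-minimum value, or, if $g$ has a root below the local maximum, the value at the local maximum point is also at least $q$, which is a weaker condition. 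This yields the stated bound. As a sanity check, the bound vanishes at $\alpha=1/4$ and equals $1/27$ at $\alpha=1/3$.
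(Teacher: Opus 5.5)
The paper does not prove this lemma. It is imported from Section~2 of \cite{gandolfi1989extremal}, so your argument has to stand on its own. Step~1 is correct: both uses of $1$-dependence are legitimate, and $u_n\le u_{n-1}-pu_{n-2}+qu_{n-3}$ holds.

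\textbf{The gap is in Steps~2--3.} The property you aim to establish, that $g(x)=x^3-x^2+px-q$ has a root in $(0,1]$, carries no information. Since $g(0)=-q$ and $g(1)=p-q\ge 0$ (because $q\le p$), the intermediate value theorem gives such a root whenever $q>0$. Your Perron--Frobenius fallback starts from the same hypothesis, $g>0$ on $(0,1]$, which can only hold when $q=0$.

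Consequently the inference in Step~3 is false. Take $p=0.3$ and $q=0.001$: then $g(0)<0<g(0.1)=0.02$, while the claimed bound is about $0.0236$. The ``or'' clause you dismiss as ``a weaker condition'' is exactly the loophole. With $f(x)=x^3-x^2+px$, the equation $f(x)=q$ has a solution on the first increasing branch $(0,x_-]$ for every $0<q\le f(x_-)$. So ``$q\ge f(x_+)$ or $q\le f(x_-)$'' does not imply $q\ge f(x_+)$. As written, your argument gives at most $q>0$ for $p>1/4$.

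\textbf{What is actually needed.} Nonnegativity must be shown to force the positive real root of $g$ to be \emph{dominant in modulus}, not merely to exist. For $p\in(1/4,1/3]$ and $q<f(x_+)$, the only real root is some $\lambda<x_-\le 1/3$. The nonreal pair has real part $(1-\lambda)/2>1/3$, hence strictly larger modulus, and this is where the contradiction must come from.

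Pringsheim alone does not deliver this. It only says the radius $R$ is a singular point. A series with nonnegative coefficients can stay bounded at its radius, so $F(x)\to\infty$ is not available. Moreover, the singularity could come from the unknown slack series $D$ rather than from a zero of $h$.

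\textbf{One way to close it.} Pass to $H:=1+zF$. Your inequalities say exactly that $h(z)H(z)=1-zD(z)$, where $D$ has nonnegative coefficients.

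\begin{enumerate}
\item Add the extra input $u_{m+n+1}\le u_mu_n$, which also follows from $1$-dependence. By Fekete's lemma, $H_n\ge R^{-n}$, hence $H(x)\to\infty$ as $x\uparrow R$.
\item Suppose $h$ has a nonreal root $\omega$ with $h>0$ on $[0,|\omega|]$; this is exactly what happens when $q<f(x_+)$.
\item If $R\le|\omega|$, then $H\le 1/h$ is bounded on $[0,R)$, a contradiction.
\item If $R>|\omega|$, then $\omega D(\omega)=1$. But $|\omega D(\omega)|\le|\omega|D(|\omega|)=1-h(|\omega|)H(|\omega|)<1$, again a contradiction.
\end{enumerate}

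Only with this dominance statement in hand does your computation of $f(x_+)=(1-2s)(1+s)^2/27$ finish the proof.
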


Using the above lemma, we show that any operator with
mass significantly larger than $1/4$ must have large energy.

\begin{lemma} \label{coro:weighted-large-energy}
    Given any PSD $P \in \mathrm{L}(\mathbb{C}^N\otimes \mathbb{C}^N)$ such that
    $0\preceq P\preceq I$ and any PSD matrix $\Lambda \in \mathbb{C}^{N\times
    N}$ with $\|\Lambda\|_F=1$, if 
    $m_\Lambda(P) = 1/4 + \epsilon$ for some $\epsilon\geq 0$,
    then we have $\Es_\Lambda(P)\geq \min\{2\epsilon/9, 1/54\}$.    
\end{lemma}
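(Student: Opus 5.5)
The plan is to read $P$ operationally as a $2$-outcome measurement, so that mass and energy become the marginal and adjacent-pair probabilities of a stationary $1$-dependent $\{0,1\}$-valued process on $\mathbb Z$. Then \Cref{lem:1-dep} does the work, followed by an elementary inequality.

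\emph{Step 1 (the process).} Since $0\preceq P\preceq I$, $\{P,I-P\}$ is a POVM. Run the one-way one-round protocol of \Cref{thm:reduction} on a path or line:
\begin{enumerate}
\item each node prepares $\V(\bar\Lambda)$ and forwards the $\B$ register to its successor;
\item each node measures $\{P,I-P\}$ on its registers $\B_{t-1}\A_t$;
\item it sets $Y_t=1$ on outcome $P$.
\end{enumerate}
On every finite window $\{a,\dots,b\}$ this gives a well-defined law, and these laws are consistent under restriction, since measurements on disjoint systems commute and the outcomes outside a window marginalize out. Kolmogorov extension therefore yields a process $(Y_t)_{t\in\mathbb Z}$. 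It is stationary because all nodes are identical. It is $1$-dependent because sets of nodes at distance $\ge 2$ act on disjoint tensor factors of a product state. By \Cref{lem:op-energy},
\[
\Pr[Y_t=1]=m_\Lambda(P)=\tfrac14+\epsilon
\quad\text{and}\quad
\Pr[Y_t=Y_{t+1}=1]=\Es_\Lambda(P).
\]

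\emph{Step 2 (case $\alpha:=\tfrac14+\epsilon\le\tfrac13$).} Put $s=\sqrt{1-3\alpha}\in[0,\tfrac12]$, so $\epsilon=(\tfrac14-s^2)/3$. \Cref{lem:1-dep} gives
\[
\Es_\Lambda(P)\ge \frac{(1-2s)(1+s)^2}{27}=\frac{1-3s^2-2s^3}{27}.
\]
The target $2\epsilon/9$ equals $2(\tfrac14-s^2)/27$. So it suffices to show
\[
\tfrac12-s^2-2s^3=(\tfrac12-s)(1+2s+2s^2)\ge0,
\]
which holds for $s\le\tfrac12$.

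\emph{Step 3 (case $\alpha>\tfrac13$).} Here \Cref{lem:1-dep} does not apply directly, so rescale $P$. Set $c:=\frac{1/3}{m_\Lambda(P)}\in(0,1)$. Then $0\preceq cP\preceq I$ and $m_\Lambda(cP)=\tfrac13$. By \Cref{def:energy}, $A$ and $B$ are both linear in $P$, so $\Es_\Lambda(cP)=c^2\Es_\Lambda(P)\le\Es_\Lambda(P)$. Applying Step 2's lemma to $cP$ at $\alpha=\tfrac13$ (so $s=0$) gives
\[
\Es_\Lambda(P)\ge \Es_\Lambda(cP)\ge \tfrac1{27}\ge\tfrac1{54}.
\]
Together the two cases give the stated $\min\{2\epsilon/9,1/54\}$.

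The main obstacle is Step 1: making the infinite-line process and its $1$-dependence rigorous, since \Cref{lem:1-dep} is stated on $\mathbb Z$. The concrete check is that the joint law of any finitely many outcomes is given by a finite product state, so consistency and independence across gaps of length $\ge 2$ can be read off directly. The rest is algebra.
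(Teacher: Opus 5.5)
Your proposal is correct and follows the paper's proof. It uses the same stationary $1$-dependent process built from $\{I-P,P\}$ and $\V(\bar\Lambda)$, applies \Cref{lem:1-dep} when the mass lies in $[1/4,1/3]$, and otherwise rescales to $cP$ with $c=1/(3m_\Lambda(P))$, using that the energy is quadratic in $P$. The differences are minor: your exact factorization $(\tfrac12-s)(1+2s+2s^2)\ge 0$ replaces the paper's estimate $\sqrt{1-x}\le 1-x/2$, and applying \Cref{lem:1-dep} directly at $\alpha=1/3$ gives the slightly stronger bound $1/27$ where the paper gets $1/54$.
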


\begin{proof}
    Consider the two-outcome POVM $\mathcal{M}:=\{P_0=I-P, P_1=P\}$ and
    the bipartite quantum state $\ket{\Psi} = \V(\bar{\Lambda})$.
    Define the process $(Y_t\in\{0,1\})_{t\in\mathbb{Z}}$ by the following algorithm:
    \begin{enumerate}
        \item Each node $t\in \mathbb{Z}$ prepares state $\ket{\Psi}$ in register $\A_{t}\B_{t}$, and 
            sends $\B_{t}$ to node $t+1$.        
        \item Each node $t \in\mathbb{Z}$ applies $\mathcal{M}$ to registers $\B_{t-1}\A_{t}$ and outputs the measurement outcome $Y_{t}\in\{0,1\}$.
    \end{enumerate}
    Since this is a one-way one-round algorithm where
    each node runs the same algorithm, $(Y_t)_{t\in\mathbb{Z}}$ is stationary and $1$-dependent. By
    \Cref{lem:op-energy},
    \[
        \Pr[Y_t=1] = m_{\Lambda}(P) = \frac14 + \epsilon,
        \quad
        \Pr[Y_t=Y_{t+1}=1] = \Es_\Lambda(P).
    \]
    Suppose first that $\epsilon\leq 1/12$. Then
    $\Pr[Y_t=1] = 1/4 + \epsilon \in [1/4,1/3]$, so
    \Cref{lem:1-dep} and the inequality
    $\sqrt{1-x}\leq 1-\frac{x}{2}$ for $x\in[0,1]$ give
    \[
    \Es_\Lambda(P) \geq 
    \frac{(1-2\sqrt{1-3(1/4+\epsilon)})(1+\sqrt{1-3(1/4+\epsilon)})^2}{27}
    \geq \frac{(1-\sqrt{1-12\epsilon})\cdot 1}{27} 
    \geq \frac{(1 - (1-6\epsilon))}{27} = \frac29 \epsilon. 
    \]
    Now suppose that $\epsilon>1/12$, and set
    $c:=1/(3m_\Lambda(P))<1$. The operator $P':=cP$ satisfies
    $0\preceq P'\preceq I$ and $m_\Lambda(P')=1/3$. Applying the preceding
    case to $P'$ and 
    observing that $\Lambda$-energy is quadratic in $P$, 
    we obtain
    \[
        c^2\Es_\Lambda(P)=\Es_\Lambda(P')\geq \frac29\left(\frac13-\frac14\right) = \frac{1}{54}.
    \]
    Hence $\Es_\Lambda(P)\geq 1/54$, completing the proof.
\end{proof}

\begin{corollary} \label{coro:3-coloring}
    Statement \ref{item:povm} in \Cref{thm:reduction} holds for $q\leq 3$. 
\end{corollary}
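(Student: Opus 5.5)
The plan is to use averaging over the $q\le 3$ outcomes to find one POVM element of large $\Lambda$-mass, and then apply \Cref{coro:weighted-large-energy} to that element.

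Fix $N$, a PSD $\Lambda$ with $\|\Lambda\|_F=1$, and a $q$-outcome POVM $\{P_i\}_{i\in[q]}$ on $\mathbb{C}^N\otimes\mathbb{C}^N$ with $q\le 3$. First I would record that the masses sum to one. By linearity of $m_\Lambda$ and $\sum_i P_i=I$,
\[
\sum_{i\in[q]} m_\Lambda(P_i)=\Tr\bigl(\Lambda^2\otimes\bar\Lambda^2\bigr)=\Tr(\Lambda^2)^2=1 .
\]
Averaging then gives an index $i$ with $m_\Lambda(P_i)\ge 1/q\ge 1/3$.

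Next, write $m_\Lambda(P_i)=1/4+\epsilon$ with $\epsilon\ge 1/12$. Each $P_i$ satisfies $0\preceq P_i\preceq I$, so \Cref{coro:weighted-large-energy} applies and yields
\[
\Es_\Lambda(P_i)\ge \min\{2\epsilon/9,\,1/54\}\ge \min\{1/54,\,1/54\}=1/54 .
\]
Energies are nonnegative, since each is a sum of squared Frobenius norms by \Cref{fact:energy}. Hence $\sum_{j}\Es_\Lambda(P_j)\ge \Es_\Lambda(P_i)\ge 1/54$.

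This establishes Statement \ref{item:povm} for $q\le3$ with $C_q=1/54$. The constant is independent of $N$ and $\Lambda$, as the statement requires. The degenerate case $q=1$ is covered by the same argument, and $q=2$ gives an even larger mass of at least $1/2$.

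I do not expect any real obstacle here, because all the analytic content is in \Cref{lem:1-dep} and \Cref{coro:weighted-large-energy}. The only care needed is to check that the "$\epsilon>1/12$" branch of \Cref{coro:weighted-large-energy} covers masses strictly above $1/3$, for example $1/2$ when $q=2$. That branch rescales $P$ by $c<1$ and uses the quadratic scaling of the energy, so it does.
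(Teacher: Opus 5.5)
Your proposal is correct and matches the paper's proof. Averaging gives an outcome of $\Lambda$-mass at least $1/3$, and \Cref{coro:weighted-large-energy} then gives energy at least $1/54$; the only cosmetic difference is that the paper pads a $2$-outcome POVM with a zero operator instead of averaging over $q$ outcomes directly.
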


\begin{proof}
    As a $2$-outcome POVM can be viewed as a $3$-outcome POVM by adding an extra zero operator, 
    it suffices to show that for any $3$-outcome POVM $\{P_i\}_{i\in[3]}$, and any PSD matrix $\Lambda\in \mathbb{C}^{N\times N}$ with $\|\Lambda\|_F=1$, we have
    \[
    \sum_{i\in[3]} \Es_\Lambda(P_i) = \Omega(1).
    \]
    Notice that 
    \[
    \sum_{i\in[3]} m_{\Lambda}(P_i) = 
    \Tr\left(
    \sum_{i\in[3]} P_i (\Lambda^2\otimes \bar{\Lambda}^2)
    \right) = \Tr(\Lambda^2\otimes \bar{\Lambda}^2) = 1.
    \]
    By averaging, there exists $i^*\in[3]$ such that $m_\Lambda(P_{i^*})\geq 1/3$.
    Then by \Cref{coro:weighted-large-energy}, we have
    \[
    \sum_{i\in[3]} \Es_\Lambda(P_i) \ge \Es_\Lambda(P_{i^*}) \geq \frac{1}{54}. \qedhere
    \]
\end{proof}

\Cref{coro:3-coloring} also admits a direct operator-theoretic proof, without
relying on bounded-dependence arguments. See Appendix~\ref{appx:3-coloring} for
details. We remark that the above argument stops working from $q=4$ onward,
as $1$-dependent $4$-coloring processes do exist \cite{holroyd2018finitely}.

\subsection{Impossibility of 4 Colors}

Our main result is that Statement \ref{item:povm} holds for $q=4$, and hence
one-way one-round quantum $4$-coloring of directed cycles is impossible by \Cref{thm:reduction}. Formally,

\begin{theorem} \label{thm:q4}
    For any $4$-outcome POVM $\{P_i\}_{i\in[4]}$ 
    on $\mathbb{C}^N\otimes \mathbb{C}^N$ and any PSD matrix $\Lambda\in \mathbb{C}^{N\times N}$ with $\|\Lambda\|_F=1$, we have
    \[
    \sum_{i\in[4]} \Es_\Lambda(P_i) = \Omega(1).
    \]
    Consequently, one-way one-round
    anonymous quantum algorithms cannot solve directed-cycle $4$-coloring.
\end{theorem}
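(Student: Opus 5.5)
I would argue by contradiction. Suppose $\sum_{i\in[4]}\Es_\Lambda(P_i)=\epsilon$ with $\epsilon$ smaller than a suitable absolute constant. Since $\sum_i m_\Lambda(P_i)=1$, some color, say $P_1$, has $m_\Lambda(P_1)\geq 1/4$.

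\textbf{Step 1: normalize the heaviest color.} By \Cref{coro:weighted-large-energy}, $\Es_\Lambda(P_1)\leq\epsilon$ forces $m_\Lambda(P_1)=1/4+\eta$ with $\eta\leq 9\epsilon/2$. I would scale $P_1$ down by $c=1/(4m_\Lambda(P_1))$, putting the removed part $(1-c)P_1$ into $P_2$. This costs only $O(\epsilon)$ in $\Lambda$-norm and, by quadratic continuity, $O(\epsilon)$ in energy. We now have mass exactly $1/4$ and energy $O(\epsilon)$.

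\textbf{Step 2: apply stability and round.} I would apply \Cref{thm:weighted-stability} to get $(P',\Lambda')$ with
\[
\|\Lambda-\Lambda'\|_F=O(\epsilon^{1/8}),\qquad \|P_1-P'\|_\Lambda=O(\epsilon^{1/4}).
\]
Here $P'=\Pi_{\mathrm L(U^\perp,U)}$, and $\Lambda'$ is block diagonal with weight $1/2$ on each side. Then:
\begin{itemize}
\item By \Cref{lem:weighted-lip-psd}, switching from $\Lambda$ to $\Lambda'$ changes each color's energy by $O(\epsilon^{1/8})$.
\item I need $\|P_1-P'\|_{\Lambda'}$ to be small. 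I would bound it by $\|P_1-P'\|_\Lambda$ plus a term from $\|\Lambda-\Lambda'\|_F$, using $\|(\Lambda\otimes\bar\Lambda)X(\Lambda\otimes\bar\Lambda)\|_1$ with $\|X\|_\infty\leq 2$ and a telescoping H\"older estimate. This gives $O(\epsilon^{1/8})$.
\item \Cref{lem:project-povm}, applied under $\Lambda'$, then yields a POVM $\{P_1'=P',P_2',P_3',P_4'\}$ with each $\|P_i-P_i'\|_{\Lambda'}=O(\epsilon^{1/16})$.
\end{itemize}

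\textbf{Step 3: energy is continuous in the $\Lambda'$-norm.} I must show $|\Es_{\Lambda'}(P)-\Es_{\Lambda'}(P'')|\leq O(\|P-P''\|_{\Lambda'})$ for $0\preceq P,P''\preceq I$. Writing $\Es$ as a bilinear form $\Tr(\Lambda A\Lambda B)$, the difference splits as $\Tr(\Lambda A_D\Lambda B)+\Tr(\Lambda A''\Lambda B_D)$, where $D=P-P''$. The matrix $\Lambda A_D\Lambda=\Tr_\As(\tilde D)^\top$ has trace norm at most $\|D\|_{\Lambda'}$, and $B\preceq I$. The other term is handled symmetrically. This shows the total $\Lambda'$-energy of the new POVM is $O(\epsilon^{1/16})$.

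\textbf{Step 4: restrict to the block $\mathrm L(U)$.} In the basis $U\oplus U^\perp$, every $P_i'$ with $i\geq 2$ is supported on $\mathrm L(U^\perp,U)^\perp$. So $\{(\Pi_U\otimes\bar\Pi_U)P_i'(\Pi_U\otimes\bar\Pi_U)\}_{i=2}^4$ is a $3$-outcome POVM on $\mathrm L(U)$, because $P_1'$ vanishes there. Take $\Lambda_U:=\sqrt2\,\Lambda'|_U$, which has unit Frobenius norm. Because $\Lambda'$ preserves $U$, the upper-left block of $\Lambda' M_k\Lambda' M_\ell\Lambda'$ equals $\Lambda'X_k\Lambda'X_\ell\Lambda'$, as in the unweighted block computation. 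Hence, via \Cref{fact:energy}, the restricted $\Lambda_U$-energy is at most $8$ times the $\Lambda'$-energy. \Cref{coro:3-coloring} then gives $\Omega(1)\leq O(\epsilon^{1/16})$, a contradiction for small $\epsilon$.

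The main obstacle is Steps 2–3: converting between the $\Lambda$- and $\Lambda'$-norms and proving energy continuity in that seminorm. The restriction in Step 4 also needs care, since the restricted $X_k$ need not be orthogonal; \Cref{fact:energy} handles that.
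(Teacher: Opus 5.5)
Your proposal is correct and follows the paper's proof in all essentials. The shared steps are: normalize the heaviest color to mass $1/4$, apply \Cref{thm:weighted-stability}, round with \Cref{lem:project-povm}, control energies via \Cref{lem:weighted-lip-psd} and continuity of energy in the weighted trace norm, then restrict to $\mathrm L(U)$ and invoke the three-color bound. You differ in two places.
\begin{itemize}
\item You convert to the $\Lambda'$-norm before rounding. This costs an extra H\"older estimate and gives the weaker exponent $\epsilon^{1/16}$. The paper instead rounds in the $\Lambda$-norm and applies the Lipschitz lemma to the rounded POVM.
\item You inline \Cref{lem:special} using the explicit candidate pair $(P',\Lambda')$.
\end{itemize}

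One caveat on the second point. The block-diagonal form of $(P',\Lambda')$ comes from the construction inside the proof of \Cref{thm:weighted-stability}, not from its statement. It is unavailable when the rescaled $P_1$ already has zero energy, because then the theorem just returns $(P_1,\Lambda)$. In that case you should simply cite \Cref{lem:special}, which handles a singular $\Lambda$ by restricting to its support.

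You also omit two routine points. You should note $N\ge 2$, since $N=1$ is vacuous for small $\epsilon$. The algorithmic consequence follows from \Cref{thm:reduction}.
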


By the equivalence between anonymous quantum algorithms and quantum-LOCAL
algorithms as discussed in \Cref{sec:prelim-local}, we immediately have the following corollary.

\begin{corollary}
    One-way one-round quantum-LOCAL algorithms cannot
    solve directed-cycle $4$-coloring.
\end{corollary}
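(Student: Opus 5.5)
We argue by contradiction. Suppose $\sum_{i\in[4]}\Es_\Lambda(P_i)=\epsilon$ with $\epsilon$ below a small absolute constant; we show $\epsilon$ must exceed a fixed positive constant. The first step is to isolate a color class of essentially extremal mass. Since $\sum_i m_\Lambda(P_i)=1$, some color, say $P_1$, has $m_\Lambda(P_1)\geq 1/4$. By \Cref{coro:weighted-large-energy}, $\Es_\Lambda(P_1)\leq\epsilon$ then forces $m_\Lambda(P_1)\leq 1/4+O(\epsilon)$. We rescale to $\hat P_1:=cP_1$ with $c=1/(4m_\Lambda(P_1))=1-O(\epsilon)$, so that the mass is exactly $1/4$, and absorb the slack $(1-c)P_1$ into $P_2$. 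This keeps a valid POVM and changes each energy by $O(\epsilon)$, since $\Lambda$-energy is a bilinear form in $P$ with $\Es_\Lambda\leq 1$ on POVM elements.

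Next we apply \Cref{thm:weighted-stability} to $(\hat P_1,\Lambda)$, which yields $(P',\Lambda')$ with $P'=\Pi_{\mathrm L(U^\perp,U)}$, $\Lambda'$ block diagonal with respect to $U\oplus U^\perp$, $\Tr(\Lambda'^2\Pi_U)=1/2$, and
\[
\|\Lambda-\Lambda'\|_F^2=O(\epsilon^{1/4}),\qquad \|\hat P_1-P'\|_\Lambda=O(\epsilon^{1/4}).
\]
Two substitutions follow. First, we replace $\Lambda$ by $\Lambda'$ in every energy; by \Cref{lem:weighted-lip-psd} each energy moves by $O(\epsilon^{1/8})$. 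Second, we apply \Cref{lem:project-povm} with $Z=P'$ to obtain a POVM $\{P'_i\}$ with $P'_1=P'$ and $\|P_i-P'_i\|_\Lambda=O(\epsilon^{1/8})$.

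To control the resulting energy change, we need continuity of $\Es$ in $P$ with respect to $\|\cdot\|_\Lambda$, and it must hold under the weight $\Lambda'$ rather than $\Lambda$. This is the step I expect to be the main obstacle. I would first prove a bound of the form
\[
|\Es_\Lambda(P)-\Es_\Lambda(P')|\leq O\bigl(\|P-P'\|_\Lambda^{1/2}\bigr)
\]
for $0\preceq P,P'\preceq I$, via the operational form of \Cref{lem:op-energy}. There, $\Es_\Lambda(P)=\Tr[(P\otimes P)\sigma]$, and changing one factor costs at most the trace distance of $P$'s action on the marginals. The difficulty is that only the two outer marginals $\rho\otimes\bar\rho$ are controlled by $\|\cdot\|_\Lambda$, while the middle register sits in the entangled $\ket\Psi$. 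So I would bound $\Tr[(D\otimes P)\sigma]$ by Cauchy--Schwarz, separating $D$ on $\B\A$ with reduced state $\rho\otimes\bar\rho$; this should lose only a square root. We then switch the weight from $\Lambda$ to $\Lambda'$ using $\|\Lambda-\Lambda'\|_F$ small, since the $\Lambda$-norm is continuous in $\Lambda$ for bounded operators. The net effect is that $\sum_i\Es_{\Lambda'}(P'_i)\leq\epsilon+O(\epsilon^{c})$ for some $c>0$.

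Finally we perform the exact reduction from 4 to 3 colors. Since $\Lambda'$ preserves $U$ and $P'_1=\Pi_U\otimes\bar\Pi_{U^\perp}$, the operators $P'_2,P'_3,P'_4$ sum to $I-P'_1$, which restricts to the identity on $\mathrm L(U)$. Compressing them to $\mathrm L(U)\cong\V(U\otimes\bar U)$ therefore gives a 3-outcome POVM $\{\tilde P_i\}$ there. Set $\tilde\Lambda:=\sqrt2\,\Pi_U\Lambda'\Pi_U$, which is normalized because the $U$-block carries weight $1/2$. Using \Cref{fact:energy} with $M_k$ in block form, the upper-left block of $\Lambda'M_k\Lambda'M_\ell\Lambda'$ equals $X\Lambda'_U X'\Lambda'_U X''$-type products, because $\Lambda'$ is block diagonal. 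Hence $\Es_{\Lambda'}(P'_i)\geq\frac18\Es_{\tilde\Lambda}(\tilde P_i)$, and \Cref{coro:3-coloring} gives $\sum_{i\geq2}\Es_{\tilde\Lambda}(\tilde P_i)\geq 1/54$. Therefore $\epsilon+O(\epsilon^{c})\geq 1/432$, which contradicts the assumption that $\epsilon$ is sufficiently small.
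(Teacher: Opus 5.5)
\textbf{Missing step: from algorithms to the energy inequality.} You prove the energy inequality, i.e.\ statement \ref{item:povm} of \Cref{thm:reduction} for $q=4$. But the corollary is about one-way one-round quantum-LOCAL algorithms, and your proposal never passes from the inequality to such algorithms. That passage is the entire content of the paper's proof of this corollary. Two ingredients are missing.

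First, identifiers. A quantum-LOCAL algorithm may use its unique identifier, so different nodes prepare different states and apply different measurements. There is then no single pair $(\Lambda,\{P_i\})$ to which your bound applies. The paper removes identifiers by having each node sample one uniformly from $[n^{C_0}]$ and then run the given algorithm. The identifiers are distinct with probability at least $1-n^{2-C_0}$, so a high-probability ID-based algorithm becomes a high-probability anonymous one (\Cref{sec:prelim-local}).

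Second, for anonymous algorithms you need the direction \ref{item:povm}$\Rightarrow$\ref{item:algo} of \Cref{thm:reduction}. This means purifying to a common pure state in Schmidt form $\V(\bar\Lambda)$ with a common POVM, and identifying $\Pr[c_\ell=c_{\ell+1}]=\sum_i\Es_\Lambda(P_i)$ via \Cref{lem:op-energy}. It also means using independence of collisions on edges at pairwise distance at least $3$ to bound the success probability by $(1-C)^{\lfloor n/3\rfloor}$. Without these steps, your contradiction $\epsilon+O(\epsilon^{c})\geq 1/432$ says nothing about algorithms.

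\textbf{The continuity step would fail as sketched.} Inside the energy argument, your outline matches the paper's proof of \Cref{thm:q4} and \Cref{lem:special}. Normalizing the heavy color by rescaling, using that small energy forces small excess mass, is a tidy variant of the paper's splitting. However, the step you single out as the main obstacle does not work your way. Any Cauchy--Schwarz bound that isolates $D=P-P'$ against its marginal $\Lambda^2\otimes\bar\Lambda^2$ yields a quantity like $\Tr[D^2(\Lambda^2\otimes\bar\Lambda^2)]$ or $\Tr[|D|(\Lambda^2\otimes\bar\Lambda^2)]$, and $\|D\|_\Lambda$ does not control it. For example, take $N=2$, $\Lambda=\mathrm{diag}(\lambda_1,\lambda_2)$, $v=\V(E_{11})$, $w=\V(E_{12})$ and $D=vw^\dagger+wv^\dagger$, which is a difference of two rank-one projectors. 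Then $\|D\|_\Lambda=2\lambda_1^3\lambda_2\to0$ while both quantities equal $\lambda_1^2\to1$ as $\lambda_2\to0$.

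The entangled middle register is not actually an obstruction. Write $\Es_\Lambda(P)=\Tr(\Lambda A\Lambda B)$ with $\Lambda A\Lambda=\Tr_{\As}(\tilde P)^\top$, where $\tilde P=(\Lambda\otimes\bar\Lambda)P(\Lambda\otimes\bar\Lambda)$ and $0\preceq A,B\preceq I$. H\"older's inequality and contractivity of the trace norm under partial trace then give the linear bound $|\Es_\Lambda(P)-\Es_\Lambda(P')|\leq 2\|P-P'\|_\Lambda$. If you perturb $P$ first at weight $\Lambda$, and only afterwards move to $\Lambda'$ on $\{P_i'\}$ via \Cref{lem:weighted-lip-psd} as the paper does, you also avoid comparing $\|\cdot\|_\Lambda$ with $\|\cdot\|_{\Lambda'}$.

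\textbf{Minor point.} The cut-space form of $(P',\Lambda')$ that you use comes from the construction inside the proof of \Cref{thm:weighted-stability}, not from its statement. It is not available when $\Es_\Lambda(\hat P_1)=0$. In that case you need \Cref{thm:ze-mm} after restricting to the support of $\Lambda$, as in \Cref{lem:special}.
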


The remainder of this section proves the above theorem. 
\Cref{sec:unrobust} first proves a special case
where $P_1$ has zero energy and maximal mass, 
and \Cref{sec:robust} proves the general case.

\subsubsection{Warm-Up: One Color with Exact Zero Energy} \label{sec:unrobust}

We first consider a special case where one POVM operator $P_1$ has zero
energy and maximal mass. By the structural theorem for zero-energy
operators (\Cref{thm:ze-mm}), we know that there exists 
an orthogonal decomposition 
\[
\mathbb{C}^N = U\oplus U^\perp
\]
such that $P_1$ is the orthogonal projector onto the off-diagonal matrix space $\mathrm{L}(U^\perp,U)$.
Then by projecting the other three operators $P_2, P_3, P_4$
onto diagonal block $\mathrm{L}(U,U)$, we obtain a $3$-outcome POVM on $U\otimes \bar{U}$,
which reduces to the $3$-coloring case (\Cref{coro:3-coloring}).

\begin{lemma} \label{lem:special}
    Given any $4$-outcome POVM $\{P_i\}_{i\in[4]}$ on $\mathbb{C}^N\otimes
    \mathbb{C}^N$ and any PSD matrix $\Lambda\in \mathbb{C}^{N\times N}$ with
    $\|\Lambda\|_F=1$, if $P_1$ satisfies
    \[
    m_\Lambda(P_1) = \frac14
    \quad\text{and}\quad
    \Es_\Lambda(P_1) = 0,
    \]
    then we have
    \[
    \sum_{i=1}^4 \Es_\Lambda(P_i) = \Omega(1).
    \]
\end{lemma}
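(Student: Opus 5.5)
The plan is to reduce to the three-color bound of \Cref{coro:3-coloring} on the diagonal block $\mathrm{L}(U,U)$ supplied by \Cref{thm:ze-mm}.

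\textbf{Step 1: reduce to $\Lambda\succ 0$.} \Cref{thm:ze-mm} needs $\Lambda\succ0$, while \Cref{lem:special} does not assume it. Let $S:=\mathrm{supp}(\Lambda)$ and $\Pi_S$ its projector. Since $\Lambda=\Pi_S\Lambda\Pi_S$, \Cref{fact:energy} gives
\[
\Lambda M_k\Lambda M_\ell\Lambda=\Lambda(\Pi_SM_k\Pi_S)\Lambda(\Pi_SM_\ell\Pi_S)\Lambda .
\]
The same holds for mass. So replacing each $P_i$ by its compression $(\Pi_S\otimes\bar\Pi_S)P_i(\Pi_S\otimes\bar\Pi_S)$, viewed as an operator on $S\otimes\bar S$, changes neither masses nor energies. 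The compressed operators still sum to the identity on $S\otimes\bar S$, so they form a $4$-outcome POVM there. Hence we may assume $\Lambda\succ0$ on $\mathbb{C}^{N}$.

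\textbf{Step 2: structure of $P_1$ and the compressed POVM.} Apply \Cref{thm:ze-mm} to $P_1$. This gives $\mathbb{C}^N=U\oplus U^\perp$ with $P_1=\Pi_{\mathrm{L}(U^\perp,U)}$, $\Tr(\Lambda^2\Pi_U)=\tfrac12$, and $\Lambda$ commuting with $\Pi_U$. Let $K:=\Pi_U\otimes\bar\Pi_U$, the projector onto $\V(\mathrm{L}(U,U))$. Since $\mathrm{L}(U,U)\perp\mathrm{L}(U^\perp,U)$, we have $KP_1K=0$. Define $P_i^U:=KP_iK$ for $i=2,3,4$, restricted to $U\otimes\bar U$; these form a $3$-outcome POVM on $U\otimes\bar U$. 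Set $\Lambda_U:=\sqrt2\,\Pi_U\Lambda\Pi_U$, a PSD operator on $U$. By $\Lambda$-invariance, $\Tr(\Lambda_U^2)=2\Tr(\Lambda^2\Pi_U)=1$.

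\textbf{Step 3: compression does not increase energy.} This is the key step, and it is where \ref{P3} is essential. Fix $i\in\{2,3,4\}$ and write $P_i=\sum_k\V(M_k)\V(M_k)^\dagger$. Then
\[
P_i^U=\sum_k\V(\Pi_UM_k\Pi_U)\V(\Pi_UM_k\Pi_U)^\dagger ,
\]
so \Cref{fact:energy} applies to this (not necessarily orthogonal) family. Because $\Lambda$ commutes with $\Pi_U$,
\[
\Pi_U\Lambda\Pi_U\,(\Pi_UM_k\Pi_U)\,\Pi_U\Lambda\Pi_U\,(\Pi_UM_\ell\Pi_U)\,\Pi_U\Lambda\Pi_U=\Pi_U(\Lambda M_k\Lambda M_\ell\Lambda)\Pi_U .
\]
Compression does not increase the Frobenius norm. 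Moreover, energy is homogeneous of degree $6$ in the weight matrix. Together these give
\[
\Es_\Lambda(P_i)\;\ge\;\Es_{\Pi_U\Lambda\Pi_U}(P_i^U)=\Big(\tfrac{1}{\sqrt2}\Big)^{6}\Es_{\Lambda_U}(P_i^U)=\tfrac18\,\Es_{\Lambda_U}(P_i^U).
\]

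\textbf{Step 4: conclude.} Identify $U$ with $\mathbb{C}^{\dim U}$. Then \Cref{coro:3-coloring} (via the bound $1/54$ from \Cref{coro:weighted-large-energy}) applies to the $3$-outcome POVM $\{P_i^U\}$ with the normalized weight $\Lambda_U$. Therefore
\[
\sum_{i=1}^4\Es_\Lambda(P_i)\;\ge\;\tfrac18\sum_{i=2}^4\Es_{\Lambda_U}(P_i^U)\;\ge\;\tfrac{1}{432}.
\]
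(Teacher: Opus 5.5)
Your overall route is the same as the paper's: restrict to the support of $\Lambda$, invoke \Cref{thm:ze-mm}, compress $P_2,P_3,P_4$ to $\mathrm{L}(U,U)$, rescale by $\sqrt2$ at a cost of $1/8$, and apply the three-color bound. However, the identity in Step~3 does not follow from the reason you give. Commutation of $\Lambda$ with $\Pi_U$ only lets you rewrite the left-hand side as $\Pi_U\Lambda M_k\Pi_U\Lambda M_\ell\Lambda\Pi_U$. So the two sides differ by the cross term
\[
\Pi_U\Lambda M_k\Pi_{U^\perp}\Lambda M_\ell\Lambda\Pi_U=\Lambda\,(\Pi_UM_k\Pi_{U^\perp})\,\Lambda\,(\Pi_{U^\perp}M_\ell\Pi_U)\,\Lambda ,
\]
which is nonzero in general. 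In fact, ``compression to $\mathrm{L}(U,U)$ does not increase energy'' is false for arbitrary operators, even when $\Lambda$ is a scalar. Take $N=2$, $\Lambda=I/\sqrt2$, $U=\mathrm{span}\{e_1\}$ and $M=\frac12\left(\begin{smallmatrix}1&1\\-1&-1\end{smallmatrix}\right)$. The rank-one projector $\V(M)\V(M)^\dagger$ has zero $\Lambda$-energy because $M^2=0$. Its compression $\frac12 e_1e_1^\dagger$ has positive energy. So \ref{P3} alone is not what makes Step~3 work.

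The missing ingredient is \ref{P1}, used a second time. Since $P_1=\Pi_{\mathrm{L}(U^\perp,U)}$ and $\sum_iP_i=I$, each $P_i$ with $i\ge2$ satisfies $0\preceq P_i\preceq I-P_1$. Hence $P_iP_1=0$, and every $M_k$ in the spectral decomposition of $P_i$ lies in $\mathrm{L}(U^\perp,U)^\perp$, i.e.\ $\Pi_UM_k\Pi_{U^\perp}=0$. This is exactly the block that is nonzero in the counterexample above. Its vanishing kills the cross term and makes your displayed identity valid.

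This is the paper's observation that each $M_k^{(i)}$ lies in $W^\perp$ and so has zero upper-right block. Consequently the upper-left block of $\Lambda M_k\Lambda M_\ell\Lambda$ is $\Lambda_U\hat M_k\Lambda_U\hat M_\ell\Lambda_U$. With this step added, the rest of your argument, including the constant $1/432$, goes through.
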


\begin{proof} 
    If $\Lambda$ is not full rank, let $\Pi_{\Lambda}$ be the orthogonal
    projector onto the support of $\Lambda$. Then we have
    $\Lambda=\Pi_\Lambda\Lambda = \Lambda\Pi_\Lambda$.
    For each $P_i$, define 
    \[
    P_i' := (\Pi_\Lambda \otimes \bar{\Pi}_\Lambda) P_i (\Pi_\Lambda \otimes \bar{\Pi}_\Lambda).
    \]
    Then one can verify that 
    \(
    m_\Lambda(P_i) = m_\Lambda(P_i')
    \)
    and 
    \(
    \Es_\Lambda(P_i) = \Es_\Lambda(P_i').
    \)
    Moreover, $\{P'_i\}$ forms a POVM on the support of $\Lambda\otimes \bar{\Lambda}$, as
    \[
    \sum_{i=1}^4 P_i' = (\Pi_\Lambda\otimes \bar{\Pi}_\Lambda) \left(\sum_{i=1}^4 P_i\right) (\Pi_\Lambda\otimes \bar{\Pi}_\Lambda)
    = \Pi_\Lambda\otimes \bar{\Pi}_\Lambda.
    \]
    Thus without loss of generality, we assume $\Lambda\succ 0$ in the following.

    As $m_\Lambda(P_1) = 1/4$ and $\Es_\Lambda(P_1) = 0$, by \Cref{thm:ze-mm},
    there exists an orthogonal decomposition $\mathbb{C}^N=U\oplus U^\perp$ such that 
    \emph{(i)} $P_1 = \Pi_{W}$ where the linear space 
    \[
    W := \mathrm{L}(U^\perp,U) = 
    \left\{    
    \left(\begin{array}{c|c}
            0 & X \\ \hline
            0 & 0
        \end{array}\right)
    : X \in \mathbb{C}^{|U|\times |U^\perp|}
    \right\}
    \]
    in the basis of $U\oplus U^\perp$, 
    \emph{(ii)} $\Tr(\Lambda^2\Pi_U)=\Tr(\Lambda^2\Pi_{U^\perp})=1/2$, and
    \emph{(iii)} $U, U^\perp$ are $\Lambda$-invariant.

    By $P_1 = \Pi_W$ and $\sum_{i\in[4]} P_i=I$, we have that $P_2 + P_3 + P_4 =
    \Pi_{W^\perp}$, and then for each $i\geq 2$, we have $P_i\preceq \Pi_{W^\perp}$,
    i.e., the support of $P_i$ is contained in $W^\perp$. Notice that in the
    basis of $U\oplus U^\perp$, 
    \[
    W^\perp = \left\{
    \left(\begin{array}{c|c}
            X & 0 \\ \hline
            Y & Z
    \end{array}\right)
    :X\in \mathbb{C}^{|U|\times |U|}, Y\in \mathbb{C}^{|U^\perp|\times |U|}, Z\in \mathbb{C}^{|U^\perp|\times |U^\perp|}
    \right\}.
    \]
    Thus if we write the spectral decomposition of $P_i$ ($i\geq 2$) as
    \[
    P_i = \sum_{k} \V\bigl(M_k^{(i)}\bigr)\V\bigl(M_k^{(i)}\bigr)^\dagger, 
    \]
    then each $M_k^{(i)}$ is in $W^\perp$, and thus can be written in block form
    \[
    M_k^{(i)} = \left(\begin{array}{c|c}
            \hat{M}_k^{(i)} & 0 \\ \hline
            \star & \star
    \end{array}\right)    
    \]
    where $\hat{M}_k^{(i)}\in \mathbb{C}^{|U|\times |U|}$. 
    Now define the reduced operators
    \[
    \hat{P}_i := \sum_{k} \V\bigl(\hat{M}_k^{(i)}\bigr)\V\bigl(\hat{M}_k^{(i)}\bigr)^\dagger
    \in \mathrm{L}(U\otimes \bar{U}).
    \]
    Equivalently, $\hat{P}_i = \Pi_{\mathrm{L}(U,U)} P_i \Pi_{\mathrm{L}(U,U)}$, so $0\preceq \hat{P}_i\preceq I$. 
    As $\Pi_{\mathrm{L}(U,U)} P_1 = \Pi_{\mathrm{L}(U,U)}\Pi_W = 0$, 
    and $\sum_{i\in[4]} P_i = I$, we have
    \[
    \sum_{i=2}^4 \hat{P}_i = \Pi_{\mathrm{L}(U,U)},
    \]
    and thus $\{\hat{P}_i\}_{i=2}^4$ is a $3$-outcome POVM on $U\otimes \bar{U}$.

    By $U$ and $U^\perp$ being $\Lambda$-invariant, $\Lambda$ can be written in the basis of $U\oplus U^\perp$ as
    \[
    \Lambda = \left(\begin{array}{c|c}
            \Lambda_U & 0 \\ \hline
            0 & \Lambda_{U^\perp}
    \end{array}\right),
    \]
    where $\Lambda_U\in \mathbb{C}^{|U|\times |U|}$ and
    $\Lambda_{U^\perp}\in \mathbb{C}^{|U^\perp|\times |U^\perp|}$ are 
    restrictions of $\Lambda$ on $U$ and $U^\perp$ respectively.
    Moreover, by $\Tr(\Lambda^2\Pi_U)=\Tr(\Lambda^2\Pi_{U^\perp})=\frac12$, we have
    \(
    \Tr(\Lambda_U^2)=\Tr(\Lambda_{U^\perp}^2)=\frac12.
    \)
    Then we define the reduced weight matrix
    \[
    \hat{\Lambda} := \sqrt{2}\Lambda_U \in \mathrm{L}(U),
    \]
    which satisfies $\|\hat{\Lambda}\|_F^2=2 \Tr(\Lambda_U^2) = 1$. Since
    \[
    \sum_{i=2}^4 m_{\hat{\Lambda}}\bigl(\hat{P}_i\bigr) =
    \Tr\left(\sum_{i=2}^4 \hat{P}_i \bigl(\hat{\Lambda}^2\otimes \bar{\hat{\Lambda}}^2\bigr)\right)
    = \Tr(\hat{\Lambda}^2\otimes \bar{\hat{\Lambda}}^2) = 1,
    \]
    by averaging, there exists $j\in\{2,3,4\}$ such that
    $m_{\hat{\Lambda}}\bigl(\hat{P}_{j}\bigr) \geq \frac13$. 
    By \Cref{coro:weighted-large-energy}, we have
    \[
    \Es_{\hat{\Lambda}}\bigl(\hat{P}_{j}\bigr) = \Omega(1).
    \]
    
    Finally, we show that $\Es_\Lambda(P_{j}) \geq \Es_{\hat{\Lambda}}(\hat{P}_{j})/8$.
    Observe that in the block form, for any $k,\ell$, we have
    \[
    \Lambda M_k^{(j)} \Lambda M_\ell^{(j)} \Lambda
    = \left(\begin{array}{c|c}
            \Lambda_U \hat{M}_k^{(j)} \Lambda_U \hat{M}_\ell^{(j)} \Lambda_U & 0 \\ \hline
            \star & \star
    \end{array}\right).
    \]
    Then
    \[
    \|\Lambda M_k^{(j)} \Lambda M_\ell^{(j)} \Lambda\|_F^2
    \geq \|\Lambda_U \hat{M}_k^{(j)} \Lambda_U \hat{M}_\ell^{(j)} \Lambda_U\|_F^2 
    = \frac18 \|\hat{\Lambda} \hat{M}_k^{(j)} \hat{\Lambda} \hat{M}_\ell^{(j)} \hat{\Lambda}\|_F^2.
    \]
    Thus by \Cref{fact:energy}, 
    \[
    \mathcal{E}_\Lambda(P_{j})
    = \sum_{k,\ell} \|\Lambda M_k^{(j)} \Lambda M_\ell^{(j)} \Lambda\|_F^2
    \geq \frac18 \sum_{k,\ell} \|\hat{\Lambda} \hat{M}_k^{(j)} \hat{\Lambda} \hat{M}_\ell^{(j)} \hat{\Lambda}\|_F^2
    = \frac18 \Es_{\hat{\Lambda}}(\hat{P}_{j})
    = \Omega(1).
    \]
    By $\sum_{i=1}^4 \Es_\Lambda(P_i)\geq \Es_\Lambda(P_j)=\Omega(1)$, we conclude the proof.
\end{proof}

\subsubsection{General Case: Proof of \texorpdfstring{\Cref{thm:q4}}{the Four-Color Impossibility}} \label{sec:robust}

The high-level idea for the general case is to use
the approximation theorem (\Cref{thm:weighted-stability}) to reduce the 
problem to the special case: If the energy of $P_1$ is already large, then we are done; otherwise, we
can approximate $P_1$ by a nearby operator $P_1'$ with exactly zero energy, and
then invoke the special case.

\begin{proof}[Proof of \Cref{thm:q4}]
    The case $N=1$ is trivial since $\{P_i\}$ 
    and $\Lambda$ become scalars such that
    $\sum_{i=1}^4 P_i = 1$ and $\Lambda=1$. 
    Then $\sum_{i=1}^4 \Es_\Lambda(P_i) = \sum_{i=1}^4 
    P_i^2 \geq (\sum_{i=1}^4 P_i)^2/4 = 1/4$.
    
    Now assume $N\geq 2$.
    By $\sum_{i=1}^4 P_i=I$, we have 
    $\sum_{i=1}^4 m_\Lambda(P_i)=\Tr(\Lambda^2\otimes \bar{\Lambda}^2)=1$.
    By averaging and without loss of generality, assume $m_\Lambda(P_1)\geq 1/4$.
    We claim that we can further assume $m_\Lambda(P_1)=1/4$. Indeed, 
    let $m_{\Lambda}(P_1) = 1/4 + \epsilon$, and consider two cases:
    \begin{enumerate}
        \item If $\epsilon=\Omega(1)$, then by \Cref{coro:weighted-large-energy}, we have
        \[
        \Es_\Lambda(P_1) \geq \min\left\{\frac29 \epsilon,\frac1{54}\right\} = \Omega(1),
        \]
        and thus 
        \(
            \sum_{i=1}^4 \Es_\Lambda(P_i) \geq \Es_\Lambda(P_1) = \Omega(1).
        \)
        \item If $\epsilon=o(1)>0$, then decompose $P_1 = \hat{P}_1 + R$,
        where $\hat{P}_1, R\succeq 0$, $m_\Lambda(\hat{P}_1)=1/4$ and $m_\Lambda(R)=\epsilon$.
        Then define 
        \[
        \hat{P}_2 := P_2 + R, \quad
        \hat{P}_3 := P_3, \quad
        \hat{P}_4 := P_4,
        \]
        so that $\{\hat{P}_i\}_{i\in[4]}$ forms a POVM, 
        with $m_\Lambda(\hat{P}_1)=1/4$. Note that 
        $\mathcal{E}_\Lambda(\hat{P}_1)\leq \mathcal{E}_\Lambda(P_1)$
        by $\hat{P}_1 \preceq P_1$,
        \begin{align*}
        \mathcal{E}_\Lambda(\hat{P}_2)
        &= 
        \mathcal{E}_\Lambda(P_2) + 
        \Tr(\Lambda A_{P_2}\Lambda B_{R}) + 
        \Tr(\Lambda A_{R}\Lambda B_{P_2}) + 
        \Tr(\Lambda A_{R}\Lambda B_{R})\\
        &\leq \mathcal{E}_\Lambda(P_2) + 
        \Tr(\Lambda B_{R}\Lambda) + 
        \Tr(\Lambda A_{R}\Lambda) +
        \Tr(\Lambda A_{R}\Lambda) 
        = \mathcal{E}_\Lambda(P_2) + 3\epsilon 
        = \mathcal{E}_\Lambda(P_2) + o(1),
        \end{align*}
        and $\mathcal{E}_\Lambda(\hat{P}_3)=\mathcal{E}_\Lambda(P_3)$,
        $\mathcal{E}_\Lambda(\hat{P}_4)=\mathcal{E}_\Lambda(P_4)$.
        Then we have
        \[
        \sum_{i=1}^4 \Es_\Lambda(P_i) \geq \sum_{i=1}^4 \Es_\Lambda(\hat{P}_i) - o(1).
        \]
        Thus $\sum_{i=1}^4 \Es_\Lambda(\hat{P}_i)=\Omega(1)$ implies $\sum_{i=1}^4 \Es_\Lambda(P_i)=\Omega(1)$.
    \end{enumerate}
    Now assume $m_\Lambda(P_1)=1/4$. If $\Es_\Lambda(P_1)=\Omega(1)$, then 
    $\sum_{i=1}^4 \Es_\Lambda(P_i) \geq \Es_\Lambda(P_1) = \Omega(1)$ and we are done.
    Otherwise, let $\Es:=\Es_\Lambda(P_1)=o(1)$.
    By \Cref{thm:weighted-stability}, there exist $Z$ and
    $\Lambda'$ such that
    \[
    \mathcal{E}_{\Lambda'}(Z) = 0,\quad 
    m_{\Lambda'}(Z)=\frac14,\quad
    \|P_1-Z\|_\Lambda = O(\Es^{1/4}),\quad\text{and}\quad
    \|\Lambda-\Lambda'\|_F^2 = O(\Es^{1/4}).
    \]
    Then by \Cref{lem:project-povm}, there exists POVM $\{P_i'\}_{i\in[4]}$ such that 
    \[
    P_1'=Z, \quad\text{and}\quad
    \|P_i-P_i'\|_\Lambda = 
    O(\Es^{1/8}) \quad\text{for each }i\in[4].
    \]
    
    As $\Es_{\Lambda'}(P_1') = 0$ and $m_{\Lambda'}(P_1')=1/4$, 
    by applying \Cref{lem:special}, we have
    \begin{equation} \label{eq:lppp}
    \sum_{i=1}^4 \Es_{\Lambda'}(P_i') = \Omega(1).
    \end{equation}

    Then by \Cref{lem:weighted-lip-psd} and $\|\Lambda-\Lambda'\|_F^2 = O(\Es^{1/4})$, we have for each $i \in [4]$, 
    \[
    |\Es_\Lambda(P_i') - 
    \Es_{\Lambda'}(P_i')| 
    \leq 6 \|\Lambda-\Lambda'\|_F = O(\Es^{1/8}),
    \]
    and combining with \eqref{eq:lppp} gives
    \begin{equation} \label{eq:lpp}
    \sum_{i=1}^4 \Es_\Lambda(P_i') \geq \sum_{i=1}^4 \Es_{\Lambda'}(P_i') - O(\Es^{1/8}) = \Omega(1).         
    \end{equation}
    Finally, for each $i\in[4]$, we have
    \[
    |\mathcal{E}_{\Lambda}(P_i)-\mathcal{E}_{\Lambda}(P_i')| = 
    |\mathrm{Tr}(\Lambda A\Lambda B) - \mathrm{Tr}(\Lambda A'\Lambda B')| 
    \leq |\mathrm{Tr}((\Lambda A\Lambda-\Lambda A'\Lambda)B)| + 
    |\mathrm{Tr}(A'(\Lambda B\Lambda-\Lambda B'\Lambda))|,
    \]
    where $A=\Tr_\As((\Lambda^2\otimes I)P_i)^\top, B=\Tr_\Bs((I\otimes \bar{\Lambda}^2)P_i)$, 
    and $A', B'$ are defined similarly for $P_i'$.
    
    For the first term, by H\"older's inequality, we have
    \[
        |\mathrm{Tr}((\Lambda A\Lambda-\Lambda A'\Lambda)B)| \leq 
        \|\Lambda A\Lambda - \Lambda A'\Lambda\|_1 \|B\|_\infty.
    \]
    As $B=\mathrm{Tr}_B(P_i(I\otimes \bar{\Lambda}^2))\preceq \mathrm{Tr}_B(I\otimes
    \bar{\Lambda}^2)=\mathrm{Tr}(\bar{\Lambda}^2)I=I$, we have $\|B\|_\infty\leq 1$.
    
    Note that 
    \[
    \Lambda A\Lambda = \mathrm{Tr}_\As((\Lambda\otimes \bar{\Lambda})P_i (\Lambda\otimes \bar{\Lambda}))^\top
    \quad\text{and}\quad
    \Lambda A'\Lambda = \mathrm{Tr}_\As((\Lambda\otimes \bar{\Lambda}) P_i' (\Lambda\otimes \bar{\Lambda}))^\top.
    \]
    By contractivity of trace norm under partial trace, we have
    \[
    \|\Lambda A\Lambda-\Lambda A'\Lambda\|_1 \leq \|(\Lambda\otimes \bar{\Lambda})(P_i-P_i')(\Lambda\otimes \bar{\Lambda})\|_1 
    = \|P_i-P_i'\|_\Lambda =    
    O(\Es^{1/8}).
    \]
    Then
    \[
    |\mathrm{Tr}((\Lambda A\Lambda-\Lambda A'\Lambda)B)| = O(\Es^{1/8}).
    \]
    Similarly, the second term $|\mathrm{Tr}(A'(\Lambda B\Lambda-\Lambda B'\Lambda))|=O(\Es^{1/8})$.
    
    Thus for each $i\in[4]$, we have
    \(
    |\mathcal{E}_{\Lambda}(P_i)-\mathcal{E}_{\Lambda}(P_i')| = 
    O(\Es^{1/8}) = o(1).
    \)
    Combining with \eqref{eq:lpp} gives
    \[
    \sum_{i=1}^4 \Es_\Lambda(P_i) \geq \sum_{i=1}^4 \Es_{\Lambda}(P_i') - o(1) = \Omega(1).
    \]
    
    Therefore, Statement \ref{item:povm}
    in \Cref{thm:reduction}
    holds for $q=4$, and by
    \Cref{thm:reduction}, no one-way one-round quantum anonymous algorithm
    can $4$-color directed cycles with high probability.
\end{proof}

\section*{AI Disclosure}
The main results and their proofs were obtained without the use of AI. We have then used AI tools for literature review, proofreading, and to simplify arguments and improve their exposition.

\bibliography{ref}
\bibliographystyle{alpha}

\appendix

\section{Missing Proofs}

\subsection{Proof of \texorpdfstring{\Cref{lem:project-povm}}{Perturbation Lemma}}
\label{appx:xy}

We first show that 
two matrices are close up to a unitary if their Gram matrices are close, which is a consequence of Uhlmann's theorem. 

\begin{lemma} \label{lem:xy}
    Given $X, Y \in \mathbb{C}^{N\times N}$, there exists a unitary $U$ such that
    \[
    \|X-UY\|_F^2 \leq \|X^\dagger X - Y^\dagger Y\|_1.
    \]
\end{lemma}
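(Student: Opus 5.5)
We want a unitary $U$ with $\|X-UY\|_F^2\le \|X^\dagger X-Y^\dagger Y\|_1$. The plan is to use polar decompositions and reduce to a scalar inequality about positive square roots. Write $X=V|X|$ and $Y=W|Y|$ with $V,W$ unitary (square matrices always admit unitary polar factors), where $|X|=(X^\dagger X)^{1/2}$ and $|Y|=(Y^\dagger Y)^{1/2}$. Choose $U:=VW^\dagger$. Then $UY=V|Y|$, so by unitary invariance of the Frobenius norm
\[
\|X-UY\|_F^2=\bigl\|\,|X|-|Y|\,\bigr\|_F^2 .
\]
This matches the Uhlmann viewpoint: $X$ and $Y$ are purifications of $X^\dagger X$ and $Y^\dagger Y$, and aligning the polar parts maximizes the overlap.

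It remains to prove the Powers--Størmer-type inequality $\|A^{1/2}-B^{1/2}\|_F^2\le\|A-B\|_1$ for PSD $A=X^\dagger X$ and $B=Y^\dagger Y$. Set $a=A^{1/2}$ and $b=B^{1/2}$, and let $D:=a-b$ be Hermitian with spectral projectors $\Pi_\pm$ onto its positive and nonpositive eigenspaces.

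The key identity is $A-B=\tfrac12\bigl((a-b)(a+b)+(a+b)(a-b)\bigr)=\tfrac12(D S+S D)$ with $S:=a+b\succeq 0$. Let $J:=\Pi_+-\Pi_-$, which is a Hermitian unitary commuting with $D$, so that $\|J\|_\infty=1$. Hölder's inequality gives
\[
\|A-B\|_1\ge \Tr\bigl(J(A-B)\bigr)=\tfrac12\Tr\bigl(J DS+JSD\bigr)=\Tr\bigl(|D|\,S\bigr),
\]
using $JD=DJ=|D|$ and cyclicity.

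Finally we show $S\succeq \pm D$: indeed $S-D=2b\succeq0$ and $S+D=2a\succeq0$. Hence $\Pi_+ S\Pi_+\succeq \Pi_+ D\Pi_+=\Pi_+|D|\Pi_+$ and $\Pi_- S\Pi_-\succeq -\Pi_- D\Pi_-=\Pi_-|D|\Pi_-$. Since $|D|$ is block diagonal in $\Pi_\pm$,
\[
\Tr(|D|S)=\Tr\bigl(|D|\Pi_+S\Pi_+\bigr)+\Tr\bigl(|D|\Pi_-S\Pi_-\bigr)\ge \Tr(|D|^2)=\|D\|_F^2 .
\]
Combining the three displays yields the lemma. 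The only delicate step is this last one: the trace against $J$ and the comparison of $S$ with $\pm D$ on each spectral block. Everything else is routine.
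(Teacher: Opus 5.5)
Your proof is correct, and it takes a more elementary route than the paper's. The paper invokes Uhlmann's theorem to obtain a unitary with $\mathrm{Re}\,\Tr(X^\dagger UY)=F(X^\dagger X,Y^\dagger Y)=\bigl\|\,|X|\,|Y|\,\bigr\|_1$, handling $X=0$ or $Y=0$ separately so the states can be normalized. It then cites the Powers--St{\o}rmer inequality and closes the gap with $\Tr(|X|\,|Y|)\le\bigl\|\,|X|\,|Y|\,\bigr\|_1$.

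You instead align the polar factors directly with $U=VW^\dagger$, which gives $\mathrm{Re}\,\Tr(X^\dagger UY)=\Tr(|X|\,|Y|)$. This is in general smaller than the fidelity. But the paper's chain of inequalities passes through exactly $\bigl\|\,|X|-|Y|\,\bigr\|_F^2$ anyway, so the optimality supplied by Uhlmann is discarded and never needed.

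You also prove Powers--St{\o}rmer from scratch rather than citing it: you write $A-B=\tfrac12(DS+SD)$, test against the Hermitian unitary $J=\Pi_+-\Pi_-$, and use $S\succeq\pm D$ on the two spectral blocks of $D$. Each step checks out.

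Your version gives a self-contained argument with an explicit unitary, no degenerate cases, and no quantum-information black boxes. The paper's version gives the sharper identity $\|X-UY\|_F^2=\|X\|_F^2+\|Y\|_F^2-2F(X^\dagger X,Y^\dagger Y)$, which is optimal over unitaries but not required for the lemma as stated or for its use in the POVM perturbation lemma.
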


\begin{proof}
Let $\alpha = \Tr(X^\dagger X)$ and $\beta = \Tr(Y^\dagger Y)$.
If $\alpha=0$, then $X=0$ and by setting $U=I$, we have
\[
\|X-UY\|_F^2 = \|Y\|_F^2 =
\Tr(Y^\dagger Y) = 
\|Y^\dagger Y\|_1 =
\|X^\dagger X - Y^\dagger Y\|_1.
\]
Similarly, if $\beta=0$, then $U=I$ also works.

Now assume $\alpha, \beta > 0$.
Define the quantum states
\(
\rho := X^\dagger X/\alpha, 
\sigma := Y^\dagger Y/\beta. 
\)
Then $X/\sqrt{\alpha}$ and $Y/\sqrt{\beta}$ are purifications of $\rho$ and $\sigma$ respectively.
By Uhlmann's theorem \cite{nielsen2010quantum}, we have
\[
\max_{\text{unitary }U}\left|\Tr\left[
    \left(\frac{X}{\sqrt{\alpha}}\right)^\dagger U
    \left(\frac{Y}{\sqrt{\beta}}\right) 
\right]\right| = 
F(\rho, \sigma),
\]
where $F(\rho, \sigma) := \|\sqrt{\rho}\sqrt{\sigma}\|_1$ is the fidelity between $\rho$ and $\sigma$.
As fidelity is homogeneous, we have 
\(
F(\rho, \sigma) = \frac{1}{\sqrt{\alpha\beta}} F(X^\dagger X, Y^\dagger Y)
\), and thus
\[
\max_{\text{unitary }U}\left|\Tr\left[
    X^\dagger U
    Y\right]\right| = F(X^\dagger X, Y^\dagger Y).
\]
Moreover, as we optimize over all unitaries, we have
\[
\max_{\text{unitary }U} \mathrm{Re}\left(\Tr\left[X^\dagger UY\right]\right) =
\max_{\text{unitary }U} \left|\Tr\left[X^\dagger UY\right]\right| =
F(X^\dagger X, Y^\dagger Y).
\]

Let $U$ be the unitary that achieves $\mathrm{Re}(\Tr(X^\dagger UY)) = F(X^\dagger X, Y^\dagger Y)$.
Expand the Frobenius norm
\begin{align*}
\|X-UY\|_F^2 = \Tr(X^\dagger X) + \Tr(Y^\dagger Y) - 2\mathrm{Re}(\Tr(X^\dagger UY)) 
&= \Tr(X^\dagger X) + \Tr(Y^\dagger Y) - 2 F(X^\dagger X, Y^\dagger Y).
\end{align*}
By the Powers--St{\o}rmer inequality \cite[Lemma 4.1]{powers1970free}, we have
\[
\|X^\dagger X - Y^\dagger Y\|_1 \geq \|\sqrt{X^\dagger X} - \sqrt{Y^\dagger Y}\|_F^2.
\]
The right-hand side can be expanded as
\[
\|\sqrt{X^\dagger X} - \sqrt{Y^\dagger Y}\|_F^2
= \Tr(X^\dagger X) + \Tr(Y^\dagger Y) - 2\Tr(\sqrt{X^\dagger X}\sqrt{Y^\dagger Y}),
\]
and the cross term can be bounded as
\[
\Tr(\sqrt{X^\dagger X}\sqrt{Y^\dagger Y}) 
\leq \|I\|_\infty 
\| \sqrt{X^\dagger X}\sqrt{Y^\dagger Y}\|_1 
= \|\sqrt{X^\dagger X}\sqrt{Y^\dagger Y}\|_1
= F(X^\dagger X, Y^\dagger Y).
\]
Thus
\[
\|X^\dagger X - Y^\dagger Y\|_1 \geq 
\Tr(X^\dagger X) + \Tr(Y^\dagger Y) - 2 F(X^\dagger X, Y^\dagger Y)
= \|X-UY\|_F^2. \qedhere
\]
\end{proof}

Now we prove \Cref{lem:project-povm}.

\begin{proof}[Proof of \Cref{lem:project-povm}]
Define $R := I - P_1$ and $S := I - Z$. 
Then we have
\[
\|R-S\|_\Lambda = 
\|P_1-Z\|_\Lambda = \epsilon.
\]
For $2\leq i \leq 4$, define
\[
M_i := \sqrt{R^{+}}P_i\sqrt{R^{+}} + \frac13\Pi_{\ker(R)}.
\]
Then $\{M_i\}$ satisfies
\[
0\preceq M_i\preceq I, 
\quad
P_i=R^{1/2}M_iR^{1/2}
\quad\text{and}\quad
\sum_{i=2}^4 M_i=I.
\]
Let $D:= \Lambda\otimes \bar{\Lambda}$, $X:=R^{1/2} D$ and $Y:=S^{1/2} D$.
By \Cref{lem:xy}, there exists a unitary $U$ such that
\[
\|X-UY\|_F^2 \leq \|X^\dagger X - Y^\dagger Y\|_1
= \|D(R-S)D\|_1 = \|R-S\|_\Lambda = \epsilon.
\]
Therefore,
$$
\|R^{1/2} D - US^{1/2} D\|_F = \|X-UY\|_F 
 \leq \sqrt{\epsilon}.
$$
Then define the new POVM elements as 
$$
P_1'=Z, \quad\text{and}\quad
P_i'=S^{1/2} U^\dagger M_i U S^{1/2}
\quad\text{for}\quad i\geq 2.
$$
The set
$\{P_i'\}_{i\in[4]}$ forms a POVM since $P_i'\succeq 0$ and 
\[
\sum_{i=1}^4 P_i' = 
Z + S^{1/2} U^\dagger \left(\sum_{i=2}^4 M_i\right) U S^{1/2}
= Z + S^{1/2} U^\dagger U S^{1/2} = Z + S = I.
\]
It remains to bound $\|P_i-P_i'\|_\Lambda=\|D(P_i - P_i')D\|_1$ for $2\leq i\leq 4$.
We compute
\begin{align*}
D(P_i-P_i')D
&= D(R^{1/2}M_iR^{1/2}-S^{1/2}U^\dagger M_i U S^{1/2})D \\
&= X^\dagger M_i X - (UY)^\dagger M_i (UY) 
\end{align*}
By triangle inequality and H\"older's inequality, we have
\begin{align*}
\|D(P_i-P_i')D\|_1 
&\leq \|(X-UY)^\dagger M_i X\|_1 + \|(UY)^\dagger M_i (X-UY)\|_1 \\
&\leq \|X-UY\|_F \|M_i X\|_F + \|(UY)^\dagger M_i\|_F \|X-UY\|_F
\end{align*}
Note that 
$
\|M_i X\|_F 
= \|M_i R^{1/2} D\|_F 
\leq \|D\|_F = 1
$, 
and similarly, 
$\|(UY)^\dagger M_i\|_F \leq 1$. 
Combining with $\|X-UY\|_F \leq \sqrt{\epsilon}$, we have
\[
\|D(P_i-P_i')D\|_1 \leq 2 \|X-UY\|_F \leq 2\sqrt{\epsilon} = O(\sqrt{\epsilon}). \qedhere
\]
\end{proof}

\subsection{Proof of \texorpdfstring{\Cref{lem:weighted-lip-psd}}{Energy Lip Lemma}} \label{appx:weighted-lip-psd}

\begin{proof}[Proof of \Cref{lem:weighted-lip-psd}]
    Let $\Delta := \Lambda-\Lambda'$. 
    Define  $A:=\Tr_\As((\Lambda^2\otimes I)P)^\top, B:=\Tr_\Bs((I\otimes \bar{\Lambda}^2)P)$, 
    with $A', B'$ defined similarly for $\Lambda'$. Then we have
    \begin{align}
    \notag|\Es_{\Lambda}(P)-\Es_{\Lambda'}(P)| 
    &= |\Tr(\Lambda A\Lambda B) - \Tr(\Lambda' A'\Lambda' B')| \\
    \notag&= |\Tr(\Lambda A\Lambda B) 
    - \Tr(\Lambda' A'\Lambda' B)
    + \Tr(\Lambda' A'\Lambda' B)
    - \Tr(\Lambda' A'\Lambda' B')| \\
    \notag&\leq |\Tr((\Lambda A\Lambda - \Lambda' A'\Lambda')B)| + |\Tr(\Lambda' A'\Lambda'(B-B'))| \\
    \label{eq:lal}&\leq \|\Lambda A\Lambda - \Lambda' A'\Lambda'\|_1 \|B\|_\infty + \|\Lambda' A'\Lambda'\|_1 \|B-B'\|_\infty,
    \end{align}
    where the last inequality follows from H\"older's inequality $|\Tr(XY)|\leq \|X\|_1\|Y\|_\infty$.
    
    First, observe that $A', B\preceq I$, and thus $\|B\|_\infty\leq 1$, 
    and $\|\Lambda' A'\Lambda'\|_1 = \Tr(\Lambda' A'\Lambda') \leq \Tr(\Lambda'^2) = 1$.

    Next, we bound $\|\Lambda A\Lambda - \Lambda' A'\Lambda'\|_1$. 
    By definition, we have
    \[
    \Lambda A\Lambda - \Lambda' A'\Lambda' = \Tr_\As\left[(\Lambda \otimes \bar{\Lambda}) P(\Lambda \otimes \bar{\Lambda}) - (\Lambda' \otimes \bar{\Lambda}') P(\Lambda' \otimes \bar{\Lambda}')\right]^\top.
    \]
    Then
    \begin{align*}
    \|\Lambda A\Lambda - \Lambda' A'\Lambda'\|_1
    &\leq \|(\Lambda \otimes \bar{\Lambda}) P(\Lambda \otimes \bar{\Lambda}) - (\Lambda' \otimes \bar{\Lambda}') P(\Lambda' \otimes \bar{\Lambda}')\|_1 \\
    &\leq \|(\Lambda\otimes \bar{\Lambda} - \Lambda'\otimes \bar{\Lambda}') P (\Lambda \otimes \bar{\Lambda})\|_1 
    + \|(\Lambda'\otimes \bar{\Lambda}') P (\Lambda \otimes \bar{\Lambda} - \Lambda'\otimes \bar{\Lambda}')\|_1 \\
    &\leq \|\Lambda\otimes \bar{\Lambda} - \Lambda'\otimes \bar{\Lambda}'\|_F \|P(\Lambda \otimes \bar{\Lambda})\|_F
    + \|(\Lambda'\otimes \bar{\Lambda}')P\|_F \|\Lambda \otimes \bar{\Lambda} - \Lambda'\otimes \bar{\Lambda}'\|_F,
    \end{align*}
    where the first inequality follows from contractivity of trace norm under partial trace,
    the second is by triangle inequality, and the last inequality is by H\"older's inequality $\|XY\|_1\leq \|X\|_F\|Y\|_F$.
    Notice that 
    \[
    \|P(\Lambda \otimes \bar{\Lambda})\|_F =
    \sqrt{\Tr((\Lambda \otimes \bar{\Lambda}) P^2 (\Lambda \otimes \bar{\Lambda}))}
    \leq \sqrt{\Tr(\Lambda^2 \otimes \bar{\Lambda}^2)} = 1,
    \]
    and similarly $\|(\Lambda'\otimes \bar{\Lambda}')P\|_F \leq 1$. Thus
    \[
    \|\Lambda A\Lambda - \Lambda' A'\Lambda'\|_1 \leq 2 \|(\Lambda\otimes \bar{\Lambda}) - (\Lambda'\otimes \bar{\Lambda}')\|_F.
    \]
    Moreover,
    \[
    \|\Lambda \otimes \bar{\Lambda} - \Lambda' \otimes \bar{\Lambda}'\|_F
    \leq \|(\Lambda - \Lambda') \otimes \bar{\Lambda}\|_F + \|\Lambda' \otimes (\overline{\Lambda - \Lambda'})\|_F
    = \|\Delta\|_F \|\Lambda\|_F + \|\Lambda'\|_F \|\Delta\|_F
    = 2 \|\Delta\|_F. 
    \]
    Thus we have $\|\Lambda A\Lambda - \Lambda' A'\Lambda'\|_1 \leq 4 \|\Delta\|_F$.

    Finally, we bound $\|B-B'\|_\infty$. By definition, we have
    \[
    B-B' = \Tr_\Bs\left[(I\otimes \bar{\Lambda}^2)P - (I\otimes \bar{\Lambda}'^2)P\right] = \Tr_\Bs\left[(I\otimes (\bar{\Lambda}^2-\bar{\Lambda}'^2))P\right].
    \]
    To bound its $\infty$-norm, consider any unit vectors $x, y \in \mathbb{C}^N$. We have
    \begin{align*}
        |\braket{x, (B-B')y}| =
        |\Tr((B-B') y x^\dagger)| 
        &=
        |\Tr[\Tr_{\mathcal{B}}((yx^\dagger \otimes (\bar{\Lambda}^2-\bar{\Lambda}'^2)) P)]| \\
        &= 
        |\Tr[(yx^\dagger \otimes (\bar{\Lambda}^2-\bar{\Lambda}'^2)) P]| \leq
        \|yx^\dagger \otimes (\bar{\Lambda}^2-\bar{\Lambda}'^2)\|_1 
        =\|(\Lambda^2-\Lambda'^2)\|_1.
    \end{align*} 
    Notice that 
    \[
    \|\Lambda^2-\Lambda'^2\|_1 \leq
    \|(\Lambda - \Lambda')\Lambda\|_1 + \|\Lambda'(\Lambda - \Lambda')\|_1
    \leq \|\Delta\|_F \|\Lambda\|_F + \|\Lambda'\|_F \|\Delta\|_F
    = 2 \|\Delta\|_F.
    \]
    Thus we have $\|B-B'\|_\infty \leq 2 \|\Delta\|_F$.

    Plugging the above bounds into \eqref{eq:lal}, we have
    \(
    |\Es_{\Lambda}(P)-\Es_{\Lambda'}(P)| \leq 4 \|\Delta\|_F + 2 \|\Delta\|_F = 6 \|\Delta\|_F. 
    \)
\end{proof}

\section{An Alternative Proof for \texorpdfstring{$\leq 3$}{3 or Fewer} Colors}
\label{appx:3-coloring}

This appendix provides an operator-theoretic proof 
of \Cref{coro:3-coloring}, without 
relying on bounded-dependence arguments.

\begin{theorem} \label{thm:q23}
    Fix $q\in\{2,3\}$. For any POVM $\{P_i\}_{i\in[q]}$ on
    $\mathbb{C}^N\otimes \mathbb{C}^N$, and any PSD 
    $\Lambda\in\mathbb{C}^{N\times N}$ with $\|\Lambda\|_F=1$, we have
    \[
    \sum_{i=1}^q \Es_\Lambda(P_i) \geq C_q,
    \]
    where
    \[
    C_q =
    \begin{cases}
        1/8 & \text{if } q=2,\\
        1/2916 & \text{if } q=3.
    \end{cases}
    \]
\end{theorem}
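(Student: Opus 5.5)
The plan is to reprove the supersaturation half of the paper's weighted stability machinery directly, so that no bounded-dependence input (\Cref{lem:1-dep}) is needed. First, as in \Cref{coro:3-coloring}, $\sum_i m_\Lambda(P_i)=1$, so some $P:=P_{i^*}$ has $m:=m_\Lambda(P)\ge 1/q$, that is, $m\ge 1/2$ for $q=2$ and $m\ge 1/3$ for $q=3$. Set $\epsilon:=\Es_\Lambda(P)$. It then suffices to show that $m\ge 1/4+\gamma$ forces $\epsilon\ge c\gamma^{k}$ for explicit $c$ and $k$, and to read off $C_q$.

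For this I reuse the candidate construction of \Cref{sec:ze-candidate} verbatim with threshold $\theta=\sqrt\epsilon$. The eigen-split of $B=\Tr_\Bs((I\otimes\bar\Lambda^2)P)$ gives $U\oplus U^\perp$, with $Q=\Pi_U$, $R=\Pi_{U^\perp}$ and $P'=Q\otimes\bar R$. The key point is that none of the upper bounds in \Cref{sec:approx-invariance,sec:approx-proof} actually use $m=1/4$ except through lower bounds on $m$.
\begin{itemize}
\item \Cref{lem:alpha} becomes $\alpha=\Tr(Q\Lambda^2)\ge m-\sqrt\epsilon$.
\item \Cref{lem:v-lql} gives $\Tr(V\Lambda Q\Lambda)\le 4\sqrt\epsilon/\alpha$.
\item \Cref{lem:v-lrl} gives $\Tr((I-V)\Lambda R\Lambda)\le\alpha^{-1}\bigl(\alpha(1-\alpha)-m+5\sqrt\epsilon\bigr)\le \alpha^{-1}\bigl(\tfrac14-m+5\sqrt\epsilon\bigr)$.
\end{itemize}
Since this last quantity is a nonnegative probability, we immediately get $m-\tfrac14\le 5\sqrt\epsilon$. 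This alone gives the supersaturation statement $\epsilon\ge (m-1/4)^2/25$, without even invoking \Cref{lem:delta-bound}. As a cross-check, the same inequality also follows from \Cref{lem:upper-right} together with the bound $\Tr(\tilde P P')\le \Tr(P(\Lambda Q\Lambda\otimes\overline{\Lambda R\Lambda}))$-type estimates used in \Cref{lem:tr-lambda-pi}: there $m\le \eta(1-\eta)+O(\sqrt\epsilon)\le 1/4+O(\sqrt\epsilon)$. I will use whichever gives the cleaner constant.

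The main obstacle is bookkeeping. The inequality chain above was written in the regime $\epsilon\le 1/64$ and with $\alpha\ge 1/8$. So I will split into two cases. If $\sqrt\epsilon\ge \gamma/5$ we are done. Otherwise $\epsilon$ is small, $\alpha\ge m-\sqrt\epsilon>0$, the threshold index $t$ is well defined by the argument around \eqref{eq:con1}--\eqref{eq:con2}, and the nonnegativity argument applies. One must also double-check the step $\alpha(1-\alpha)\le 1/4$ and the inequality $Q\Lambda^2Q\preceq 2\Lambda^2+2R\Lambda^2R$, since these feed the constants.

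Plugging in $\gamma=1/4$ for $q=2$ and $\gamma=1/12$ for $q=3$ yields explicit constants. The stated value $1/2916=1/(54^2)$ suggests the intended chain is $\epsilon\ge(\gamma/\text{const})^2$ with const $=4.5$ for $q=3$. For $q=2$ I would instead use a sharper direct argument to reach $1/8$: expand $\Es_\Lambda(P_1)+\Es_\Lambda(P_2)=\Es_\Lambda(I)-F(P_1,P_2)-F(P_2,P_1)$, where $F(P,P')=\Tr(\Lambda A_P\Lambda B_{P'})$ is the bilinear form and $\Es_\Lambda(I)=1$. I would then bound the cross terms by $\le 7/8$ using $A_{P_i},B_{P_i}\preceq I$ and Cauchy--Schwarz on $\Tr(\Lambda A\Lambda B)$. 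I expect tuning these constants to exactly $1/8$ and $1/2916$ to be the fiddliest part. The qualitative $\Omega(1)$ bound is robust.
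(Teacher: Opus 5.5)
Your main chain is sound, and it is a genuinely different route from the paper's that avoids bounded dependence. It runs as follows, with $U$ cut out of the spectrum of $B$ at $\sqrt\epsilon$:
\begin{itemize}
\item the proof of \Cref{lem:alpha} gives $\Tr(BQ\Lambda^2Q)\ge m-\sqrt\epsilon$;
\item \eqref{eq:A} gives $\Tr(P(Q\Lambda^2Q\otimes\overline{\Lambda Q\Lambda}))\le 4\sqrt\epsilon$;
\item $P\preceq I$ gives $\Tr(P(Q\Lambda^2Q\otimes\overline{\Lambda R\Lambda}))\le\alpha(1-\alpha)\le\frac14$.
\end{itemize}
The last inequality is all that the nonnegativity of $\Tr((I-V)\Lambda R\Lambda)$ encodes, so neither $V$ nor \Cref{lem:delta-bound} is needed. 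For $\epsilon=0$, use an arbitrary threshold $\theta>0$ and let $\theta\to0$. Hence $\Es_\Lambda(P)\ge(m-\frac14)^2/25$.

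But the statement fixes $C_2=1/8$ and $C_3=1/2916$, and this bound only gives $1/400$ and $1/3600$. The ``const $=4.5$'' is a guess, not an argument. For $q=3$ the constant is recoverable within your framework if you stop discarding slack. The quantity $\beta:=\Tr(BR\Lambda^2R)\le\sqrt\epsilon\,(1-\alpha)$ enters once through \Cref{lem:alpha} and twice through \eqref{eq:A}. So the same steps give
\[
m\le(1-\alpha)(\alpha+3\sqrt\epsilon)+2\sqrt\epsilon\le\tfrac14+\tfrac72\sqrt\epsilon+\tfrac94\epsilon
\]
by AM--GM. Then $m\ge\frac13$ forces $\epsilon\ge\bigl((2\sqrt{13}-7)/9\bigr)^2\approx5.5\cdot10^{-4}>1/2916$.

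The real gap is $q=2$. Your chain gives only $1/400$ at $m=\frac12$ (about $1/214$ with the sharpening above). Your direct plan cannot work as described. Your expansion is an identity, $F(P_1,P_2)+F(P_2,P_1)=1-\Es_\Lambda(P_1)-\Es_\Lambda(P_2)$, so bounding the cross terms by $7/8$ is just the claim restated.

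The information you propose to use is $0\preceq A_{P_i},B_{P_i}\preceq I$, $A_{P_1}+A_{P_2}=B_{P_1}+B_{P_2}=I$, $\Tr(\Lambda^2A_{P_i})=\Tr(\Lambda^2B_{P_i})=m_i$, and Cauchy--Schwarz. All of it is satisfied by $\Lambda=I/\sqrt N$, $A_{P_1}=B_{P_2}=\Pi_V$, $A_{P_2}=B_{P_1}=\Pi_{V^\perp}$ with $\dim V=N/2$. There both cross terms equal $\frac12$, so the total energy would be $0$.

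What excludes such marginals is a constraint coupling $A$ and $B$ through positivity of $\tilde P=(\Lambda\otimes\bar{\Lambda})P(\Lambda\otimes\bar{\Lambda})$ on the bipartite space. This is the missing idea, and it is the content of \Cref{lem:q2}:
\begin{itemize}
\item $\Tr[\tilde P((I-B)\otimes(I-A^\top))]\ge0$ and $\Tr[\tilde P(B\otimes A^\top)]\le m^2$ give $x+y\le m+m^2$, where $x=\Tr(\Lambda A\Lambda A)$ and $y=\Tr(\Lambda B\Lambda B)$. Your example violates this, since there $x=y=\frac12$.
\item Set $R=\Lambda^{1/2}A\Lambda^{1/2}$ and $S=\Lambda^{1/2}B\Lambda^{1/2}$. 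Combining the previous inequality with $\|R+S-2m\Lambda\|_F^2=x+y+2\Es_\Lambda(P)-4m^2\ge0$ yields $\Es_\Lambda(P)\ge m(3m-1)/2\ge\frac18$.
\end{itemize}
The paper's $1/2916$ likewise comes not from stability but from the refinement \Cref{lem:q3}, namely $\Es_\Lambda(P)\ge m^4(4m-1)^2/4$ on $[\frac13,\frac12]$, evaluated at $m=\frac13$.
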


\begin{proof}
    Since $\sum_{i=1}^q P_i = I$, we have
    $\sum_{i=1}^q m_\Lambda(P_i) = \Tr(\Lambda^2\otimes \bar{\Lambda}^2) = 1$.
    By averaging, there exists an index $j\in[q]$ such that
    $m:=m_\Lambda(P_j) \geq 1/q$. Consider the following two cases:
\begin{itemize}
        \item If $q=2$, then $m\geq 1/2$. By \Cref{lem:q2},
    \[
    \Es_\Lambda(P_j) \geq \frac{m(3m-1)}{2} \geq \frac{1}{8}.
    \]
    \item Suppose $q=3$. If $m\in[1/3,1/2]$, then \Cref{lem:q3} implies
    \[
    \Es_\Lambda(P_j)
    \geq \frac{m^4(4m-1)^2}{4}
    \geq \frac{(1/3)^4(4/3-1)^2}{4}
    = \frac{1}{2916}.
    \]
    If instead $m>1/2$, then \Cref{lem:q2} gives
    \[
    \Es_\Lambda(P_j) \geq \frac{m(3m-1)}{2} > \frac{1}{8} > \frac{1}{2916}.
    \]
    \end{itemize}    
    Therefore, in both cases,
    \(
    \sum_{i=1}^q \Es_\Lambda(P_i)
    \geq \Es_\Lambda(P_j)
    \geq C_q.
    \)
\end{proof}

The following two technical lemmas lower bound the energy of a single
element $P$ with $m_\Lambda(P)\geq 1/3$. The first lemma gives a positive lower
bound whenever $m_\Lambda(P)>1/3$. 

\begin{lemma}\label{lem:q2}
Given PSD $P\in \mathrm{L}(\mathbb{C}^N\otimes \mathbb{C}^N)$ with
$0\preceq P\preceq I$, and 
PSD $\Lambda\in\mathbb{C}^{N\times N}$ with $\|\Lambda\|_F=1$, 
if $m:=m_\Lambda(P)\geq 1/3$, then
\[
\Es_\Lambda(P)\geq \frac{m(3m-1)}{2}.
\]
\end{lemma}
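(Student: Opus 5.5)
The approach is to work in the $(A,B)$ formulation of \Cref{def:energy}. The two facts used throughout are $\mathcal{E}_\Lambda(P)=\Tr(\Lambda A\Lambda B)$ and $0\preceq A,B\preceq I$, together with the marginal identities $\Tr(\Lambda A\Lambda)=\Tr(\Lambda B\Lambda)=m$. The target $m(3m-1)/2=\tfrac32 m^2-\tfrac12 m$ is quadratic in $m$. This suggests combining two estimates: a second-moment bound of order $m^2$, obtained from Cauchy--Schwarz on the marginals, and a linear correction $-\tfrac12 m$, coming from the constraint $P\preceq I$.

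I would first carry out the whole argument in the classical weighted-graph picture of \Cref{sec:examples} and only then transcribe it. There, write $a_y$ and $b_y$ for the weighted in- and out-degrees, so that $\mathcal{E}=\sum_y w_y a_y b_y$ and $m=\sum_y w_ya_y=\sum_y w_yb_y$. Counting $2$-walks edge by edge gives
\[
2\mathcal{E}=\sum_{(x,y)\in E}w_xw_y(a_x+b_y).
\]
The key per-edge inequality I would aim for is
\[
a_x+b_y\ \geq\ b_x+a_y-1 \quad\text{for every edge } (x,y).
\]
The heuristic is that the out-neighbourhood of $x$ and the in-neighbourhood of $y$ cannot be simultaneously large without creating walks through the edge. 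Summing this over edges, the right-hand side becomes $\sum_x w_xb_x^2+\sum_y w_ya_y^2-m$. By Cauchy--Schwarz each of the two square sums is at least $m^2$, so $2\mathcal{E}\geq 2m^2-m$. That is slightly weaker than the target. The remaining $\tfrac12 m^2$ would have to come from additionally keeping the term $\sum_{(x,y)} w_xw_y a_x b_y$, or from the variance slack in Cauchy--Schwarz; I would tune the combination until the constant $3/2$ emerges.

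Next comes the quantum transcription. I would use the eigenbasis $\{v_i\}$ of $B$ and the flows $y_{ij}=(v_i\otimes\bar v_j)^\dagger\tilde P(v_i\otimes\bar v_j)$ exactly as in the proof of \Cref{lem:upper-right}; there the row and column sums were $v_i^\dagger\Lambda B\Lambda v_i$ and $v_j^\dagger\Lambda A\Lambda v_j$. The edge sum $\sum_{(x,y)} w_xw_y(a_x+b_y)$ becomes $\Tr(\tilde P\,(\cdot))$ applied to operators of the form $A\otimes I+I\otimes \bar B$, suitably weighted. The squared-marginal terms become $\Tr(\Lambda B\Lambda B)\ge \Tr(\Lambda^2B)^2=m^2$ and its analogue for $A$, which hold by operator Cauchy--Schwarz since $\Tr(\Lambda^2)=1$.

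The main obstacle is the per-edge inequality. Classically it should follow from inclusion--exclusion on neighbourhoods. In the noncommutative setting it must be an operator inequality in which $P\preceq I$ is used twice, via $P\otimes P\preceq I$ acting on the three-party state $\rho\otimes\ket\Psi\bra\Psi\otimes\bar\rho$ of \Cref{lem:op-energy}. The first thing I would try is to bound $\Tr[(P\otimes I)(I\otimes P)\,\sigma]$ from below by $\Tr[(P\otimes I)\sigma]+\Tr[(I\otimes P)\sigma]-1$, using $(I-P)\otimes(I-P)\succeq 0$, and then to use the exact identity $\Tr[(P\otimes I)\sigma]=\Tr(\Lambda^2B)=m$ together with its counterpart.
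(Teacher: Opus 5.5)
\textbf{There is a genuine gap.} The route you describe cannot reach $m(3m-1)/2$, and its key step is false as stated.

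\emph{The per-edge inequality fails.} The inequality $a_x+b_y\ge b_x+a_y-1$ does not hold edge by edge. Take $n\ge 3$ vertices of weight $1/n$, with edges $x\to v$ for all $v\ne x$ and $v\to y$ for all $v\ne y$. On the edge $(x,y)$ you get $a_x=b_y=0$ but $b_x+a_y-1=1-2/n$. The walks forced by a large $N^+(x)$ and $N^-(y)$ run through middle vertices $z$ with $x\to z\to y$, not through the edge $(x,y)$. Classically only the edge-summed version holds, because $b_x+a_y-1$ is at most the weight of such $z$, and this sums over edges to at most $\mathcal{E}$. You give no noncommutative proof of that summed version. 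The quantum step you suggest, testing $(I-P)\otimes(I-P)\succeq 0$ against $\rho\otimes\ket{\Psi}\bra{\Psi}\otimes\bar\rho$, yields only the union bound $\mathcal{E}_\Lambda(P)\ge 2m-1$.

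\emph{Even the summed inequality is too weak.} Grant $2\mathcal{E}\ge x+y-m$, with $x:=\Tr(\Lambda A\Lambda A)$ and $y:=\Tr(\Lambda B\Lambda B)$. The estimates $x,y\ge m^2$ then give only $\mathcal{E}\ge m^2-m/2$. This vanishes at $m=1/2$, where the lemma must deliver $1/8$, and is negative on $[1/3,1/2)$. No tuning of Cauchy--Schwarz slack fixes this. Regular configurations ($A=B=mI$, e.g.\ $\Lambda=I/\sqrt N$ with a regular digraph) have $x+y=2m^2$ exactly, and there your inequality is far from tight since $\mathcal{E}=m^2$. What is missing is an inequality that forces energy when the second moments are \emph{small}.

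\textbf{The missing idea: reverse the role of $x+y$.} Set $R:=\Lambda^{1/2}A\Lambda^{1/2}$ and $S:=\Lambda^{1/2}B\Lambda^{1/2}$. Then $\mathcal{E}=\Tr(RS)$, $x=\Tr(R^2)$, $y=\Tr(S^2)$ and $m=\Tr(\Lambda R)=\Tr(\Lambda S)$. Apply Cauchy--Schwarz to the total degree, i.e.\ $\|R+S-2m\Lambda\|_F^2\ge 0$. This gives $2\mathcal{E}\ge 4m^2-(x+y)$, so you now need an \emph{upper} bound on $x+y$.

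That upper bound comes from the correct per-edge inequality $(1-b_x)(1-a_y)\ge 0$, i.e.\ $b_x+a_y-1\le b_xa_y$. Its noncommutative form is $\Tr[\tilde P((I-B)\otimes(I-A^\top))]\ge 0$, with $\tilde P:=(\Lambda\otimes\bar\Lambda)P(\Lambda\otimes\bar\Lambda)$. This expands to $m-x-y+\Tr[\tilde P(B\otimes A^\top)]\ge 0$. Since $\tilde P\preceq\Lambda^2\otimes\bar\Lambda^2$, the last term is at most $m^2$. Hence $x+y\le m+m^2$ and $\mathcal{E}\ge(3m^2-m)/2$.

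So the operator to test against is $(I-B)\otimes(I-A^\top)$, not $(I-P)\otimes(I-P)$. The hypothesis $P\preceq I$ enters through $A,B\preceq I$ and through $\tilde P\preceq\Lambda^2\otimes\bar\Lambda^2$.
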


\begin{proof}
Define matrices
\[
A:=\Tr_\As\left(P(\Lambda^2\otimes I)\right)^\top,
\qquad
B:=\Tr_\Bs\left(P(I\otimes \bar{\Lambda}^2)\right),
\qquad
\tilde P := (\Lambda\otimes \bar{\Lambda})P(\Lambda\otimes \bar{\Lambda}).
\]
Then we have
\(0\preceq \tilde P\preceq \Lambda^2\otimes \bar{\Lambda}^2\), 
\(\Lambda A\Lambda = \Tr_{\mathcal A}(\tilde P)^\top\),
and 
\(\Lambda B\Lambda = \Tr_{\mathcal B}(\tilde P)\).
Thus
\[
\Tr(\Lambda^2A)=\Tr(\Lambda^2B)=\Tr(\tilde P)=m.
\]
Define 
\[
\Es:=\Es_\Lambda(P)
\qquad
x:=\Tr(\Lambda A\Lambda A),
\qquad
y:=\Tr(\Lambda B\Lambda B).
\]
If we write
\[
R:=\Lambda^{1/2}A\Lambda^{1/2},
\qquad
S:=\Lambda^{1/2}B\Lambda^{1/2},
\]
then
\[
\Es=\Tr(RS),
\qquad
x=\Tr(R^2),
\qquad
y=\Tr(S^2),
\qquad
m=\Tr(\Lambda R)=\Tr(\Lambda S).
\]

Consider
\[
\epsilon:=\|R+S-2m\Lambda\|_F^2.
\]
Expanding the square and using $\Tr(\Lambda^2)=\|\Lambda\|_F^2=1$ gives
\begin{equation} \label{eq:epsilon}
\epsilon = \Tr(R^2)+\Tr(S^2)+2\Tr(RS)-4m\Tr(\Lambda R)-4m\Tr(\Lambda S)+4m^2\Tr(\Lambda^2)
= x+y+2\Es-4m^2.
\end{equation}
Since $\epsilon\ge 0$, we obtain
\begin{equation}\label{eq:lower-Es}
\Es\ge \frac{4m^2-(x+y)}{2}.
\end{equation}

It remains to bound $x+y$. Observe
\[
\Tr\bigl(\tilde P(I\otimes A^\top)\bigr)
= \Tr\bigl(\Tr_{\mathcal A}(\tilde P)A^\top\bigr)
= \Tr\bigl((\Lambda A\Lambda)^\top A^\top\bigr)=x,
\]
and similarly $\Tr\bigl(\tilde P(B\otimes I)\bigr)=y$. 
Next, compute 
\begin{align}
\notag\Tr\left[\tilde P\left((I-B)\otimes (I-A^\top)\right)\right] &=
\notag\Tr\left(\tilde{P}\right) - \Tr\left[\tilde P\left(I\otimes A^\top\right)\right] - \Tr\left[\tilde P\left(B\otimes I\right)\right] + \Tr\left[\tilde P\left(B\otimes A^\top\right)\right]  \\
\label{eq:PIAIB-form}&=m - x - y + \Tr\left[\tilde P\left(B\otimes A^\top\right)\right] 
\leq m - x - y + m^2,
\end{align}
where the last inequality is by $0\preceq \tilde P\preceq \Lambda^2\otimes \bar{\Lambda}^2$, and thus
\[
\Tr\bigl(\tilde P(B\otimes A^\top)\bigr)
\le \Tr\bigl((\Lambda^2\otimes \bar{\Lambda}^2)(B\otimes A^\top)\bigr)
= \Tr(\Lambda^2B)\overline{\Tr(\Lambda^2A)}=m^2.
\]
Since $(I-B)\otimes (I-A^\top)\succeq 0$ by 
$0\preceq A,B\preceq I$, and $\tilde P\succeq 0$, 
we have
\[
\Tr\left[\tilde P\left((I-B)\otimes (I-A^\top)\right)\right]\geq 0.
\]
Combining with \eqref{eq:PIAIB-form}, we have
\begin{equation}\label{eq:xy}
x+y\le m+m^2.
\end{equation}
Finally, by \eqref{eq:lower-Es} and \eqref{eq:xy}, we conclude that
\[
\Es\ge \frac{4m^2-(m+m^2)}{2}=\frac{m(3m-1)}{2}. \qedhere
\]
\end{proof}

To handle the case of $m=1/3$, we need a more refined analysis: we show that
when \eqref{eq:lower-Es} is almost tight, i.e. $\epsilon$ is close to zero, another inequality \eqref{eq:xy} must
have a large slack, which together give a positive lower bound when $m=1/3$.
Formally, we have the following lemma.

\begin{lemma} \label{lem:q3}
Given PSD $P\in \mathrm{L}(\mathbb{C}^N\otimes \mathbb{C}^N)$ with
$0\preceq P\preceq I$, and 
PSD $\Lambda\in\mathbb{C}^{N\times N}$ with $\|\Lambda\|_F=1$, 
if
    \(
    m:=\Tr\bigl(P(\Lambda^2\otimes \bar{\Lambda}^2)\bigr) \in [1/3,1/2],
    \)
    then
    \[
    \Es_{\Lambda}(P) \geq \frac{m^4(4m-1)^2}{4}.
    \]
\end{lemma}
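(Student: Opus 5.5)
The plan is to reuse the bookkeeping from the proof of \Cref{lem:q2} and show that its two inequalities cannot both be nearly tight when $m\in[1/3,1/2]$. Keep $\tilde P$, $A$, $B$, $R=\Lambda^{1/2}A\Lambda^{1/2}$, $S=\Lambda^{1/2}B\Lambda^{1/2}$, $x=\Tr(R^2)$, $y=\Tr(S^2)$ and $\epsilon=\|R+S-2m\Lambda\|_F^2$. Also record the two slacks exactly:
\[
s:=\Tr\bigl[\tilde P\bigl((I-B)\otimes(I-A^\top)\bigr)\bigr]\ge 0,
\qquad
T:=\Tr\bigl[\tilde P(B\otimes A^\top)\bigr]\le m^2 .
\]
Then \eqref{eq:epsilon} and \eqref{eq:PIAIB-form} combine into the exact identity
\[
2\Es_\Lambda(P)=\epsilon+(3m^2-m)+s+(m^2-T).
\]
Since $3m^2-m\ge 0$ on $[1/3,1/2]$, every term on the right is nonnegative. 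So it suffices to show that $\epsilon$ and $s+(m^2-T)$ cannot both be small.

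\textbf{Case split on $\epsilon$.} If $\epsilon\ge\tau$ for a threshold $\tau=\tau(m)$ chosen at the end, then $\Es_\Lambda(P)\ge\tau/2$ and we are done. Otherwise $R+S\approx 2m\Lambda$. Writing $D:=2mI-A-B$, this says $\|\Lambda^{1/2}D\Lambda^{1/2}\|_F^2=\epsilon<\tau$, i.e.\ $A+B\approx 2mI$ in the $\Lambda$-weighted sense. In this regime, substitute
\[
I-B=(1-2m)I+A+D,
\qquad
I-A^\top=(1-2m)I+B^\top+D^\top
\]
into $s$ and expand. The terms without $D$ evaluate by partial traces of $\tilde P$. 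One has $\Tr[\tilde P(A\otimes I)]=\Tr(\Lambda B\Lambda A)=\Es_\Lambda(P)$, and likewise $\Tr[\tilde P(I\otimes B^\top)]=\Es_\Lambda(P)$. Together with $\Tr\tilde P=m$, this gives
\[
s\ \ge\ (1-2m)^2m+2(1-2m)\Es_\Lambda(P)+\Tr[\tilde P(A\otimes B^\top)]-(\text{terms linear or quadratic in }D).
\]
The key structural point is that $s$ then contains the fresh term $\Tr[\tilde P(A\otimes B^\top)]$, and this feeds back into the identity for $2\Es_\Lambda(P)$.

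\textbf{Main obstacle.} The hard step is controlling the $D$-error terms, and additionally showing that $\Tr[\tilde P(A\otimes B^\top)]+(m^2-T)$ is bounded below by a power of $m$ with a factor $(4m-1)$. The factor $(1-2m)^2m$ vanishes at $m=1/2$, so it alone does not suffice. For the errors, I would use $0\preceq\tilde P\preceq\Lambda^2\otimes\bar\Lambda^2$ and Hölder/Cauchy--Schwarz to bound each term like $\Tr[\tilde P(D\otimes X)]$, with $0\preceq X\preceq I$, by $O(\sqrt\epsilon)$. For the main term, note that $T$ and $\Tr[\tilde P(A\otimes B^\top)]$ are both tested against $\tilde P$. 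With $B\approx 2mI-A$ on the support, their sum is roughly $\Tr[\tilde P(A\otimes 2mI)]+\Tr[\tilde P(2mI\otimes B^\top)]-\Tr[\tilde P(A\otimes A^\top)]-\dots$, which reduces to combinations of $m$, $\Es_\Lambda(P)$ and $x,y$. I then intend to use Cauchy--Schwarz $x\ge(\Tr\Lambda R)^2=m^2$ (same for $y$) against $x+y\le m+m^2-s+(T-m^2)$; this is where I expect a lower bound of order $m^2(4m-1)$ to appear.

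\textbf{Closing the argument.} Once I have $s+(m^2-T)\ge c\,m^2(4m-1)-O(\sqrt\epsilon)$, choose $\tau$ so that $O(\sqrt\tau)$ is half of $c\,m^2(4m-1)$, namely $\tau\asymp m^4(4m-1)^2$. Both cases then give $\Es_\Lambda(P)\gtrsim m^4(4m-1)^2$, and the constant $1/4$ comes out of tracking the Cauchy--Schwarz constants. If the direct expansion of $s$ is too lossy, the fallback is to keep both slacks and run a two-variable optimization: minimize $\epsilon+s+(m^2-T)$ subject to the constraint linking them. That constraint comes from $\|R+S-2m\Lambda\|_F\ge|\langle R+S-2m\Lambda,Z\rangle|$ for a suitably chosen test matrix $Z$ with $\|Z\|_F\le 1$ built from $\Lambda$ and $R$.
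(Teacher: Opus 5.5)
\textbf{Gap: the step that produces $m^2(4m-1)$ is missing.} Your setup is sound and follows the paper's route. The identity $2\Es_\Lambda(P)=\epsilon+(3m^2-m)+s+(m^2-T)$ is exact. Expanding $s$ after substituting $I-B=(1-2m)I+A+D$ and $I-A^\top=(1-2m)I+B^\top+D^\top$ is precisely the paper's expansion of $L=\Tr[\hat P((\Lambda-S)\otimes(\Lambda-R)^\top)]$, with your $\Lambda^{1/2}D\Lambda^{1/2}$ equal to the paper's $-\Delta$.

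The problem is what you do next. You treat $3m^2-m$ as merely nonnegative and discard it. That reduces you to proving $s+(m^2-T)\ge c\,m^2(4m-1)-O(\sqrt\epsilon)$, which at $m=1/2$ requires a positive lower bound on $\Tr[\tilde P(A\otimes B^\top)]+(m^2-T)$. You never establish this. The one concrete inequality you name also points the wrong way: pairing $x,y\ge m^2$ with the identity $x+y=m+m^2-s-(m^2-T)$ gives $s+(m^2-T)\le m-m^2$, an \emph{upper} bound on the slacks. Your claim that the constant $1/4$ ``comes out of tracking Cauchy--Schwarz constants'' is therefore unsupported.

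\textbf{The fix: keep $3m^2-m$.} The obstacle disappears once you note $(3m^2-m)+(1-2m)^2m=m^2(4m-1)$. Substitute your expansion of $s$ into your identity and move $2(1-2m)\Es_\Lambda(P)$ to the left. This gives
\[
4m\,\Es_\Lambda(P)=\epsilon+m^2(4m-1)+\Tr\bigl[\tilde P(A\otimes B^\top)\bigr]+(m^2-T)+E_D ,
\]
where $E_D$ collects the $D$-terms. The two middle terms are nonnegative and can simply be dropped.

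To control $E_D$, use $\|P\|_\infty\le 1$ together with $\|\Lambda D\Lambda\|_1\le\|\Lambda^{1/2}\|_4^2\,\|\Lambda^{1/2}D\Lambda^{1/2}\|_F=\sqrt\epsilon$. The bound $\tilde P\preceq\Lambda^2\otimes\bar\Lambda^2$ alone does not handle the indefinite $D$. This gives $|\Tr[\tilde P(D\otimes X)]|\le\sqrt\epsilon\,\Tr(\Lambda X\Lambda)$, hence $E_D\ge-2(1-m)\sqrt\epsilon-\epsilon$. Therefore $4m\,\Es_\Lambda(P)\ge m^2(4m-1)-2(1-m)\sqrt\epsilon$. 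This is marginally stronger than the paper, which also discards $2(1-2m)\Es_\Lambda(P)$.

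No case split on $\epsilon$ is needed. Your identity already gives $\epsilon\le 2\Es_\Lambda(P)$, and with $4m\le 2$ you obtain the paper's inequality $2\Es\ge m^2(4m-1)-2(1-m)\sqrt{2\Es}$. Solve this quadratic in $\sqrt{2\Es}$ and use $(1-m)^2+m^2(4m-1)=1-2m+4m^3\le\tfrac12$ on $[1/3,1/2]$. This yields exactly $m^4(4m-1)^2/4$.
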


\begin{proof}
    Adopt the notation introduced in the proof of \Cref{lem:q2}, and define
    \[
    \Delta := R + S - 2m \Lambda.
    \quad\Longrightarrow\quad
    R + S = 2m \Lambda + \Delta.
    \]
    Then by \eqref{eq:epsilon}, we have $\|\Delta\|_F^2 = \epsilon = x+y+2\Es-4m^2$.
    Set $a:=1-2m\geq0$. The definition of $\Delta$ gives the identities
    \begin{equation} \label{eq:l-r-l-s}
        \Lambda-S=a\Lambda+R-\Delta,
        \qquad
        \Lambda-R=a\Lambda+S-\Delta.
    \end{equation}

    Define another auxiliary matrix
    \(
    \hat{P} := (\Lambda^{1/2}\otimes \bar{\Lambda}^{1/2})
    P (\Lambda^{1/2}\otimes \bar{\Lambda}^{1/2}).
    \)
    Notice the relation
    \[
    \tilde{P} = 
    (\Lambda\otimes \bar{\Lambda}) P (\Lambda\otimes \bar{\Lambda}) 
    =(\Lambda^{1/2}\otimes \bar{\Lambda}^{1/2}) \hat{P}(\Lambda^{1/2}\otimes \bar{\Lambda}^{1/2}).
    \]
    We will estimate the quantity
    \[
        L:=\Tr\left[\hat{P}((\Lambda-S)\otimes(\Lambda-R)^\top)\right].
    \]

    First, we upper bound $L$. By cyclicity of trace,
    \begin{align}
    \notag \Tr\left[\hat{P} ((\Lambda - S) \otimes (\Lambda - R)^\top)\right]
    &= \Tr\left[
    \hat{P}
    \bigl((\Lambda - \Lambda^{1/2}B\Lambda^{1/2}) \otimes (\Lambda - \Lambda^{1/2}A\Lambda^{1/2})^\top\bigr)
    \right] \\
    \notag &= \Tr\left[
    (\Lambda^{1/2}\otimes\bar{\Lambda}^{1/2})
    \hat{P}
    (\Lambda^{1/2}\otimes\bar{\Lambda}^{1/2})
    \left((I - B) \otimes (I - A^\top)\right)
    \right] \\
    \label{eq:lhs}  &= \Tr\left[\tilde{P} ((I - B) \otimes (I - A^\top))\right] 
    \leq m - x - y + m^2,
    \end{align}
    where the last inequality is from \eqref{eq:PIAIB-form}.

    We next lower bound $L$ by applying identities in \eqref{eq:l-r-l-s} as
    \begin{align*}
    L={}&\Tr\left[\hat P((a\Lambda+R-\Delta)\otimes(a\Lambda+S-\Delta)^\top)\right]\\
    ={}&a^2\Tr\left[\hat P(\Lambda\otimes\bar\Lambda)\right]\\
    &+a\Tr\left[\hat P(\Lambda\otimes\bar S)\right]
      +a\Tr\left[\hat P(R\otimes\bar\Lambda)\right]
      +\Tr\left[\hat P(R\otimes\bar S)\right]\\
      &-a\Tr\left[\hat P(\Lambda\otimes\bar\Delta)\right]
      -a\Tr\left[\hat P(\Delta\otimes\bar\Lambda)\right]
      -\Tr\left[\hat P(R\otimes\bar\Delta)\right]
      -\Tr\left[\hat P(\Delta\otimes\bar S)\right]
      +\Tr\left[\hat P(\Delta\otimes\bar\Delta)\right].
    \end{align*}
    The first term $\Tr[\hat P(\Lambda\otimes\bar\Lambda)]=\Tr(\tilde P)=m$.
    The next three terms in the second line are nonnegative 
    as $a\geq0$ and $\Lambda,R,S,\hat P\succeq0$. It remains to bound
    the absolute values of error terms in the third line. 

    We write $D:=\Lambda^{1/2}\Delta\Lambda^{1/2}$. 
    As $\frac14+\frac12+\frac14=1$, the generalized H\"older's
    inequality gives
    \begin{equation*} 
    \|D\|_1
    \leq
    \|\Lambda^{1/2}\|_4 \, \|\Delta\|_2 \, \|\Lambda^{1/2}\|_4
    = \|\Delta\|_2 = \|\Delta\|_F = \sqrt{\epsilon}.
    \end{equation*}
    For every PSD $X$, by H\"older's inequality and $0\preceq P\preceq I$, 
    \begin{align*}
    \left|\Tr\left[\hat P(X\otimes\bar\Delta)\right]\right|
    &=\left|\Tr\left[P\bigl((\Lambda^{1/2}X\Lambda^{1/2})
      \otimes\bar D\bigr)\right]\right|
      \leq \|P\|_\infty \|\Lambda^{1/2}X\Lambda^{1/2}\otimes D\|_1
      \leq \Tr(\Lambda X) \sqrt{\epsilon},
    \end{align*}
    and similarly $\left|\Tr\left[\hat P(\Delta\otimes\bar X)\right]\right|
    \leq \Tr(\Lambda X) \sqrt{\epsilon}$.
    Since $\Tr(\Lambda^2)=1$ and
    $\Tr(\Lambda R)=\Tr(\Lambda S)=m$, the first four error terms are
    bounded in absolute value by $2(a+m)\sqrt\epsilon=2(1-m)\sqrt\epsilon$.
    Finally, the last error term
    \[
        \left|\Tr\left[\hat P(\Delta\otimes\bar\Delta)\right]\right|=
        \left|\Tr\left[P( D\otimes\bar D)\right]\right|
        \leq \|P\|_\infty \|D\otimes\bar D\|_1
        \leq\|D\|_1^2\leq\epsilon.
    \]
    Consequently,
    \begin{equation} \label{eq:rhs}
        L\geq (1-2m)^2m-2(1-m)\sqrt\epsilon-\epsilon.
    \end{equation}

    Combining \eqref{eq:lhs} and \eqref{eq:rhs}, we have
    \[
        m-x-y+m^2\geq(1-2m)^2m-2(1-m)\sqrt\epsilon-\epsilon.
    \]
    Substituting $\epsilon=x+y+2\Es-4m^2$ and simplifying gives
    \begin{equation} \label{eq:es-bound}
        2\Es\geq4m^3-m^2-2(1-m)\sqrt\epsilon.
    \end{equation}
    Moreover, by \eqref{eq:xy} and the assumption $m\geq 1/3$,
    \[
    \epsilon = x + y + 2\Es - 4m^2
    \leq m + m^2 + 2\Es - 4m^2
    = m(1-3m) + 2\Es
    \leq 2\Es.
    \]
    Plugging this into \eqref{eq:es-bound}, we obtain
    \begin{equation} \label{eq:q3-bound}
    2\Es \geq 4m^3 - m^2 - 2(1-m)\sqrt{2\Es}.
    \end{equation}

    Finally, we lower bound $\Es$ using \eqref{eq:q3-bound}.
    Let 
    $u:=\sqrt{2\Es}$, $c:=1-m$, and $b:=4m^3-m^2=m^2(4m-1)$.
    Then \eqref{eq:q3-bound} can be written as
    \(
    u^2 + 2c u - b \geq 0.
    \)
    Since $u\geq 0$, it follows that
    \[
    u \geq -c + \sqrt{c^2 + b}
    = \frac{b}{c + \sqrt{c^2 + b}}
    \geq \frac{b}{2\sqrt{c^2+b}}.
    \]
    Therefore,
    \[
        \Es=\frac{u^2}{2}\geq\frac{b^2}{8(c^2+b)}.
    \]
    For $m\in[1/3,1/2]$,
    \[
        c^2+b=1-2m+4m^3
        =\frac12-\frac{(1-2m)(4m^2+2m-1)}{2}
        \leq\frac12.
    \]
    Here the final inequality follows because both factors in the subtracted
    term are nonnegative on $[1/3,1/2]$.
    Consequently,
    \[
        \Es\geq\frac{b^2}{4}
        =\frac{m^4(4m-1)^2}{4}. \qedhere
    \]
\end{proof}

\end{document}